\newif\ifdraft\draftfalse
\newcommand\anthony[1]{{\color{blue}
[#1 - \textbf{Anthony}]}}
\newcommand\matt[1]{{\color{red}
[#1 - \textbf{Matt}]}}
\newcommand\chihduo[1]{{\color{green}
[#1 - \textbf{Chih-Duo}]}}
\newcommand\alchanged[1]{{\color{blue}{#1}}}
\newcommand\mhchanged[1]{{\color{red}{#1}}}
\newcommand\anthony[1]{}
\newcommand\matt[1]{}
\newcommand\chihduo[1]{}
\newcommand\alchanged[1]{#1}
\newcommand\mhchanged[1]{#1}
\newcommand{\Seq}{\sigma}
\newcommand{\CSSgraph}{\ensuremath{\mathcal{G}}}
\newcommand{\selNodes}{\ensuremath{S}}
\newcommand{\propNodes}{\ensuremath{P}}
\newcommand{\selsBucket}{\ensuremath{X}}
\newcommand{\propsBucket}{\ensuremath{\overline{Y}}}
\newcommand{\CSSedges}{\ensuremath{E}}
\newcommand{\CSSvertices}{\ensuremath{V}}
\newcommand{\edgeOrder}{\prec}
\newcommand{\nodeWeight}{\text{\texttt{wt}}}
\newcommand{\Weight}{\text{\texttt{wt}}}
\newcommand{\inCSSgraph}{\ensuremath{(\selNodes,\propNodes,\CSSedges,\edgeOrder,\nodeWeight)}}
\newcommand{\covering}{\mathcal{C}}
\newcommand{\bucket}{\overline{B}}
\newcommand{\biclique}{B}
\newcommand{\incovering}[2]{\ensuremath{\{\bucket_i\}_{i=#1}^{#2}}}
\newcommand{\inbucket}{(\selsBucket,\propsBucket)}
\newcommand{\inbiclique}{(\selsBucket,\propsSet)}
\newcommand{\Trim}[1]{{#1}_{\downarrow}}
\newcommand{\inSeq}[3]{\ensuremath{#1[#3 \to #2]}} 
\newcommand{\Index}{\text{index}}
\newcommand{\propOrder}{\vartriangleleft}
\newcommand{\nssatts}{\nspaces\times\atts}
\newcommand{\satcss}{\textsc{SatCSS}\xspace}
\newcommand{\OMIT}[1]{}
\newcommand\sizeof[1]{\left|#1\right|}
\newcommand\mytt[1]{\text{\textup{\texttt{#1}}}}
\renewcommand\iff\Leftrightarrow
\newcommand{\iffdef}{\ensuremath{\stackrel{\text{def}}{\Leftrightarrow}}}
\newcommand\mathttt[1]{\text{\texttt{#1}}}
\newcommand{\synalt}{\ \bigm\vert\ } 
\newcommand{\Lang}{\mathcal{L}}
\newcommand{\ialphabet}{\Sigma}
\newcommand{\posN}{\mathbb{N}_{>0}}
\newcommand{\N}{\mathbb{N}}
\newcommand{\Z}{\mathbb{Z}}
\newcommand{\posZ}{\Z_{> 0}}
\newcommand{\sem}[1]{[\![#1 ]\!]}
\newcommand\numof{n}
\newcommand\numofalt{m}
\newcommand\runlen{\ell}
\newcommand\brac[1]{\left({#1}\right)}
\newcommand\tup[1]{\left({#1}\right)}
\newcommand\eseq{\varepsilon}
\newcommand\idxi{i}
\newcommand\idxj{j}
\newcommand\cha{a}
\newcommand\set[1]{\left\{#1\right\}}
\newcommand\setcomp[2]{\left\{{#1}\ \middle|\ {#2}\right\}}
\newcommand\ap[2]{{#1}\mathord{\brac{#2}}}
\newcommand\abs[1]{\left|#1\right|}
\newcommand\finfuns[2]{\ap{\mathcal{F}_{\text{fin}}}{#1, #2}}
\newcommand{\defn}[1]{\emph{#1}}
\newcommand{\NP}{\ensuremath{\mathsf{NP}}}
\newcommand\tree{T}
\newcommand\node{\eta}
\newcommand\treedom{D}
\newcommand\treelab{\lambda}
\newcommand\treedir{\iota}
\newcommand\treelabs{\Sigma}
\newcommand\treedescendant{\mathop{\textcircled{\hspace{.2pt}\raisebox{.35pt}{\scriptsize $\gg$}}}}
\newcommand\treechild{\mathop{\textcircled{\hspace{.7pt}\raisebox{.35pt}{\scriptsize $>$}}}}
\newcommand\treeneighbour{\mathop{\textcircled{\raisebox{.35pt}{\scriptsize $+$}}}}
\newcommand\treesibling{\mathop{\textcircled{\raisebox{.35pt}{\scriptsize $\sim$}}}}
\newcommand\treeset[4]{\ap{\mathrm{Trees}}{#1, #2, #3, #4}}
\newcommand\treelabns{\treelab_{\texttt{S}}}
\newcommand\treelabele{\treelab_{\texttt{E}}}
\newcommand\treelabatts{\treelab_{\texttt{A}}}
\newcommand\treelabpclss{\treelab_{\texttt{P}}}
\newcommand\css{\varphi}
\newcommand\csssim{\sigma}
\newcommand\cssconds{\Theta}
\newcommand\csscond{\theta}
\newcommand\csstype{\tau}
\newcommand\csssimnoneg{\ensuremath{\sigma_{\neg}}}
\newcommand\csscondnoneg{\ensuremath{\theta_{\neg}}}
\newcommand\coefa{\alpha}
\newcommand\coefb{\beta}
\newcommand\nspaces{\mathrm{NS}}
\newcommand\atts{A}
\newcommand\attfun{f_A}
\newcommand\eles{\text{\scshape Ele}}
\newcommand\alphabet{\Gamma}
\newcommand\cspace{\text{\textvisiblespace}}
\newcommand\cdash{\mathord{\text{-}}}
\newcommand\id{\iota}
\newcommand\ns{s}
\newcommand\ele{e}
\newcommand\csslang{l}
\newcommand\pcls{p}
\newcommand\pslink{\operatorname{\mytt{:link}}}
\newcommand\psvisited{\operatorname{\mytt{:visited}}}
\newcommand\pshover{\operatorname{\mytt{:hover}}}
\newcommand\psactive{\operatorname{\mytt{:active}}}
\newcommand\psfocus{\operatorname{\mytt{:focus}}}
\newcommand\pstarget{\operatorname{\mytt{:target}}}
\newcommand\psenabled{\operatorname{\mytt{:enabled}}}
\newcommand\psdisabled{\operatorname{\mytt{:disabled}}}
\newcommand\pschecked{\operatorname{\mytt{:checked}}}
\newcommand\psindeterminate{\operatorname{\mytt{:indeterminate}}}
\newcommand\psroot{\operatorname{\mytt{:root}}}
\newcommand\psfirstchild{\operatorname{\mytt{:first-child}}}
\newcommand\pslastchild{\operatorname{\mytt{:last-child}}}
\newcommand\psfirstoftype{\operatorname{\mytt{:first-of-type}}}
\newcommand\pslastoftype{\operatorname{\mytt{:last-of-type}}}
\newcommand\psonlychild{\operatorname{\mytt{:only-child}}}
\newcommand\psonlyoftype{\operatorname{\mytt{:only-of-type}}}
\newcommand\psempty{\operatorname{\mytt{:empty}}}
\newcommand\psfirstline{\operatorname{\mytt{::first-line}}}
\newcommand\psfirstletter{\operatorname{\mytt{::first-letter}}}
\newcommand\psbefore{\operatorname{\mytt{::before}}}
\newcommand\psafter{\operatorname{\mytt{::after}}}
\newcommand\cssdescendant{\ensuremath{\mathop{\mytt{>\!>}}}}
\newcommand\csschild{\ensuremath{\mathop{\mytt{>}}}}
\newcommand\cssneighbour{\ensuremath{\mathop{\mytt{+}}}}
\newcommand\csssibling{\ensuremath{\mathop{\raisebox{-.2em}{\mytt{\~}}}}}
\newcommand\att{a}
\newcommand\attundef{\bot}
\newcommand\attval{v}
\newcommand\attvalalt{u}
\newcommand\isany{\mytt{*}}
\newcommand\genop{o}
\newcommand\selectors{\text{\scshape Sel}}
\newcommand\nodeselectors{\text{\scshape NSel}}
\newcommand\opis{\mathop{\mytt{=}}}
\newcommand\ophas{\mathop{\mytt{\raisebox{.25em}{\texttildelow}=}}}
\newcommand\opbegin{\mathop{\mytt{|\!=}}}
\newcommand\opstrbegin{\mathop{\mytt{\textasciicircum=}}}
\newcommand\opstrend{\mathop{\mytt{\$=}}}
\newcommand\opstrsub{\mathop{\mytt{*=}}}
\newcommand\attop{\mathit{op}}
\newcommand\langatt{\mytt{lang}}
\newcommand\classatt{\mytt{class}}
\newcommand\idatt{\mytt{id}}
\newcommand\csspipe{\mytt{|}}
\newcommand\qele[2]{#1\mathord{:}#2}
\newcommand\qatt[2]{#1\mathord{:}#2}
\newcommand\isclass[1]{\mytt{.}#1}
\newcommand\isid[1]{\mytt{\#}#1}
\newcommand\isanyns[1]{\mytt{(}#1\csspipe\mytt{*)}}
\newcommand\isele[1]{#1}
\newcommand\iselens[2]{\mytt{(}#1\csspipe{#2}\mytt{)}}
\newcommand\pslang[1]{\operatorname{\mytt{:lang}}\mathord{\mytt{(}#1\mytt{)}}}
\newcommand\psnthchild[2]{\operatorname{\mytt{:nth-child(}{#1}\mytt{n + }{#2}\mytt{)}}}
\newcommand\psnthlastchild[2]{\operatorname{\mytt{:nth-last-child(}{#1}\mytt{n + }{#2}\mytt{)}}}
\newcommand\psnthoftype[2]{\operatorname{\mytt{:nth-of-type(}{#1}\mytt{n + }{#2}\mytt{)}}}
\newcommand\psnthlastoftype[2]{\operatorname{\mytt{:nth-last-of-type(}{#1}\mytt{n + }{#2}\mytt{)}}}
\newcommand\cssneg[1]{\operatorname{\mytt{:not(}{#1}\mytt{)}}}
\newcommand\hasatt[1]{\mytt{[}{#1}\mytt{]}}
\newcommand\opatt[3]{\mytt{[}{#1}\ {#2}\ {#3}\mytt{]}}
\newcommand\attis[2]{\opatt{#1}{\opis}{#2}}
\newcommand\atthas[2]{\opatt{#1}{\ophas}{#2}}
\newcommand\attbegin[2]{\opatt{#1}{\opbegin}{#2}}
\newcommand\attstrbegin[2]{\opatt{#1}{\opstrbegin}{#2}}
\newcommand\attstrend[2]{\opatt{#1}{\opstrend}{#2}}
\newcommand\attstrsub[2]{\opatt{#1}{\opstrsub}{#2}}
\newcommand\hasattns[2]{\mytt{[}{#1}\csspipe{#2}\mytt{]}}
\newcommand\opattns[4]{\mytt{[}{#1}\csspipe{#2}\ {#3}\ {#4}\mytt{]}}
\newcommand\attisns[3]{\opattns{#1}{#2}{\opis}{#3}}
\newcommand\atthasns[3]{\opattns{#1}{#2}{\ophas}{#3}}
\newcommand\attbeginns[3]{\opattns{#1}{#2}{\opbegin}{#3}}
\newcommand\attstrbeginns[3]{\opattns{#1}{#2}{\opstrbegin}{#3}}
\newcommand\attstrendns[3]{\opattns{#1}{#2}{\opstrend}{#3}}
\newcommand\attstrsubns[3]{\opattns{#1}{#2}{\opstrsub}{#3}}
\newcommand\cssaut{\mathcal{A}}
\newcommand\astates{Q}
\newcommand\atrans{\Delta}
\newcommand\astate{q}
\newcommand\ainitstate{q^{\text{in}}}
\newcommand\arrchild{\downarrow}
\newcommand\arrneighbour{\rightarrow}
\newcommand\arrsibling{\rightarrow_+}
\newcommand\arrlast{\circ}
\newcommand\arrgen{d}
\newcommand\weakord{\lesssim}
\newcommand\afinstate{q_f}
\newcommand\atrant{t}
\newcommand\atran[2]{\xrightarrow[#2]{#1}}
\newcommand\aaccepts[3]{\tup{#1, #2} \in \ap{\Lang}{#3}}
\newcommand\selaut[1]{\cssaut_{#1}}
\newcommand\midstate[1]{\bullet_{#1}}
\newcommand\selstate[1]{\circ_{#1}}
\newcommand\fakeele{\tau}
\newcommand\nullele{\bot}
\newcommand\nodeset{\mathit{Nodes}}
\newcommand\elemap{\theta}
\newcommand\elesof[1]{\eles_{#1}}
\newcommand\selsof[1]{S_{#1}}
\newcommand\numsels[1]{\sizeof{#1}_\csssim}
\newcommand\finelesof[1]{\finof{\eles_{#1}}}
\newcommand\finof[1]{\ap{\downharpoonleft}{#1}}
\newcommand\pres{\theta}
\newcommand\ptrue{\top}
\newcommand\pfalse{\bot}
\newcommand\xvar{\denotevar{x}}
\newcommand\assignment{\rho}
\newcommand\nvar{\denotevar{n}}
\newcommand\pclss{P}
\newcommand\consistent{\mathrm{Consistent}}
\newcommand\consistentnums{\mathrm{Consistent}_n}
\newcommand\consistentids{\mathrm{Consistent}_i}
\newcommand\consistentpseudo{\mathrm{Consistent}_p}
\newcommand\shiftvar{\denotevar{\delta}}
\newcommand\totnumof{N}
\newcommand\presof[1]{\pres_{#1}}
\newcommand\denotevar[1]{\overline{#1}}
\newcommand\nsvar[1]{\denotevar{s}_{#1}}
\newcommand\elevar[1]{\denotevar{e}_{#1}}
\newcommand\pclsvar[2]{\denotevar{#2}_{#1}}
\newcommand\numvar[1]{\denotevar{n}_{#1}}
\newcommand\numtypevar[2]{\denotevar{n}^{#2}_{#1}}
\newcommand\lnumvar[1]{\denotevar{N}_{#1}}
\newcommand\lnumtypevar[2]{\denotevar{N}^{#2}_{#1}}
\newcommand\csssimpres[2]{\ap{\mathrm{Pres}}{{#1}, {#2}}}
\newcommand\nomatch[3]{\ap{\mathrm{NoMatch}}{#1, #2, #3}}
\newcommand\astatevar[1]{\denotevar{\astate}_{#1}}
\newcommand\tranpres[1]{\ap{\mathrm{Tran}}{#1}}
\newcommand\tranprest[2]{\ap{\mathrm{Tran}}{#1, #2}}
\newcommand\shiftvartype[1]{\denotevar{\delta}_{#1}}
\newcommand\noatts[1]{\ap{\mathrm{NoAtts}}{#1}}
\newcommand\eword{\varepsilon}
\newcommand\mdash{\mbox{-}}
\newcommand\consset{C}
\newcommand\wordposvec{\vec{x}}
\newcommand\attvalbound{N}
\newcommand\numstates{N_s}
\newcommand\numcons{N_c}
\newcommand\aut{\mathcal{A}}
\newcommand\startchar{\hat{\phantom{x}}}
\newcommand\lastchar{\$}
\newcommand\nullch{0}
\newcommand\conssat{S}
\newcommand\consviol{V}
\newcommand\factor{\preceq}
\newcommand\attspres[2]{\ap{\mathrm{AttsPres}}{{#1}, {#2}}}
\newcommand\attspresns[4]{\ap{\mathrm{AttsPres}_{\qatt{#1}{#2}}}{{#3}, {#4}}}
\newcommand\attspresnulls[1]{\ap{\mathrm{Nulls}}{#1}}
\newcommand\wordpos[4]{x_{#3, #4}^{\qatt{#1}{#2}}}
\newcommand\consaut[1]{\aut_{#1}}
\newcommand\consprefixes[1]{\ap{\mathrm{Prefs}}{#1}}
\newcommand\atranch[1]{\xrightarrow{#1}}
\newcommand\consrunst[3]{\tup{#1, #2, #3}}
\newcommand\idcons{\consset_{\idatt}}
\newcommand\poscons{\consset_{\text{pos}}}
\newcommand\tset{\mathrm{Ts}}
\newcommand\btarget{b_{\mathrm{targ}}}
\newcommand\broot{b_{\mathrm{root}}}
\newcommand\bsibling{b_{\mathrm{sib}}}
\newcommand\totnumvar{\denotevar{N}}
\newcommand\attsof[1]{\ap{\text{Atts}}{#1}}
\newcommand\posof[1]{\ap{\text{Pos}}{#1}}
\newcommand\psof[1]{\ap{\text{Pseudo}}{#1}}
\newcommand\totnumtypevar[1]{\denotevar{N}_{#1}}
\newcommand\hasedge[1]{\ap{\mathrm{HasEdge}}{#1}}
\newcommand\vble[1]{\overline{#1}}
\newcommand\selpropord[1]{\rho_{#1}}
\newcommand\excludevble[1]{\vble{x}_{#1}}
\newcommand\orderablesub[2]{\ap{\mathrm{OrderableSub}}{#1, #2}}
\newcommand\orderablesubcand[3]{\ap{\mathrm{OrderableSub}}{#1, #2, #3}}
\newcommand\edgeslast[2]{\CSSedges^{#1}_{#2}}
\newcommand\propord[2]{\ll^{#1}_{#2}}
\newcommand\orderable[2]{\ap{\mathrm{Orderable}}{#1, #2}}
\newcommand\inpos{\vble{j}}
\newcommand\mbiclique{M}
\newcommand\nummbicliques{\mu}
\newcommand\bicliquevble{\vble{i}_M}
\newcommand\numexcludes{\chi}
\newcommand\propsSet{Y}
\newcommand\selsSet{X}
\newcommand\forbidden{\mathcal{F}}
\newcommand\forbiddenfst{\mathcal{F}^{1\text{st}}}
\newcommand\mbicliqueset{\mathcal{M}}
\newcommand\unorderablebicliques{\chi}
\newcommand\orderablebicliques{\Omega}
\newcommand\candnodes{\Delta}
\newenvironment{compactitem}{
    \begin{itemize}
}{
    \end{itemize}
}
\newcommand\softfmlabucket[2]{\fmla^{#1}_{#2}}
\newcommand\fmlanode[2]{\varfmla^{#1}_{#2}}
\newcommand\fmla{\varphi}
\newcommand\varfmla{\psi}
\newcommand\hardcons{\Pi_H}
\newcommand\softcons{\Pi_S}
\newcommand\wght{\omega}
\newcommand\fmlavalid{\fmla_{\mathrm{vld}}}
\newcommand\fmlaordering{\fmla_{\mathrm{ord}}}
\newcommand\softconsbucket{\softcons^{\bucket}}
\newcommand\softconssels{\softcons^{\mathrm{Sels}}}
\newcommand\softconsprops{\softcons^{\mathrm{Props}}}
\newcommand\textand{\text{ and }}
\newenvironment{namedtheorem}[2]{%
    \expandafter\gdef\csname reftheorem#1\endcsname{%
        Theorem~\ref{#1} (#2)%
    }%
    \begin{theorem}[{#2}] \label{#1}%
}{%
    \end{theorem}%
}
\newcommand\reftheorem[1]{\expandafter\csname reftheorem#1\endcsname}
\newenvironment{namedlemma}[2]{%
    \expandafter\gdef\csname reflemma#1\endcsname{%
        Lemma~\ref{#1} (#2)%
    }%
    \begin{lemma}[{#2}] \label{#1}%
}{%
    \end{lemma}%
}
\newcommand\reflemma[1]{\expandafter\csname reflemma#1\endcsname}
\newcommand\refproperty[1]{\expandafter\csname refproperty#1\endcsname}
\newenvironment{nameddefinition}[2]{%
    \expandafter\gdef\csname refdefinition#1\endcsname{%
        Definition~\ref{#1} (#2)%
    }%
    \begin{definition}[{#2}] \label{#1}%
}{%
    \end{definition}%
}
\newcommand\refdefinition[1]{\expandafter\csname refdefinition#1\endcsname}
\newenvironment{namedproposition}[2]{%
    \expandafter\gdef\csname refproposition#1\endcsname{%
        Proposition~\ref{#1} (#2)%
    }%
    \begin{proposition}[{#2}] \label{#1}%
}{%
    \end{proposition}%
}
\newcommand\refproposition[1]{\expandafter\csname refproposition#1\endcsname}
\newcommand\unoptimisedtimeouts{19}
\newcommand\numcompleted{43}
\newcommand\numtimedout{29}
\newcommand\numbenchmarks{72}
\newcommand\nobicliquesnumcompleted{45}
\newcommand\nobicliquesnumtimedout{27}
\newcommand\fullexcnumcompleted{36}
\newcommand\fullexcnumtimedout{36}
\newcommand\unlimbicliquesnumcompleted{42}
\newcommand\unlimbicliquesnumtimedout{30}
\newcommand\nothreadnumcompleted{40}
\newcommand\nothreadnumtimedout{32}
\newcommand\noordnumcompleted{50}
\newcommand\noordnumtimedout{22}
\newcommand\satcssmedian{3.79}
\newcommand\satcssthird{6.90}
\newcommand\miniuptomedian{3.29}
\newcommand\miniuptothird{5.45}
\newcommand\afterminimedian{4.70}
\newcommand\afterminithird{8.26}
\newcommand\cleancssmedian{3.03}
\newcommand\cleancssthird{5.26}
\newcommand\aftercleancssmedian{6.10}
\newcommand\aftercleancssthird{9.50}
\newcommand\gainsmedian{2.80}
\newcommand\gainsthird{5.03}
\newcommand\ratiomedian{48}
\newcommand\ratiothird{136}
\newcommand\pctzerobadbicliques{58.4}
\newcommand\meanbadbicliques{0.39}
\newcommand\maxpctbadbicliques{5.84}
\newenvironment{to-do}
{ \rule{1ex}{1ex}\hspace{\stretch{1}} \bfseries}
{ \hspace{\stretch{1}}\rule{1ex}{1ex} \vspace{1ex}}
\newenvironment{compactenum}{
    \begin{enumerate}
}{
    \end{enumerate}
}
\newtheorem{theorem}{Theorem}[section]
\newtheorem{remark}[theorem]{Remark}
\newtheorem{proposition}[theorem]{Proposition}
\newtheorem{definition}[theorem]{Definition}
\newtheorem{lemma}[theorem]{Lemma}
\newcommand\acmeasychair[2]{#2}
\begin{document}

\title{CSS Minification via Constraint Solving
\\ {\large (Technical Report)}}
\titlerunning{CSS Minification via Constraint Solving}

\author{
    Matthew Hague\inst{1}
    \and
    Anthony W. Lin\inst{2,3}
    \and
    Chih-Duo Hong\inst{3}
}
\authorrunning{M. Hague, A. W. Lin, and C.-H. Hong}
\institute{
  Royal Holloway, University of London, UK
  \and
  Technische Universit\"{a}t Kaiserslautern, Germany
  \and
  University of Oxford, UK
}

\maketitle

\begin{abstract}
    Minification is a widely-accepted technique which aims at reducing the size of
the code transmitted over the web. This paper concerns the problem of
semantic-preserving minification of Cascading Style Sheets (CSS) --- the de
facto language for styling web documents --- based on merging similar rules.

The cascading nature of CSS makes the semantics of CSS files sensitive to the
ordering of rules in the file. To automatically identify rule-merging
opportunities that best minimise file size, we reduce the rule-merging problem
to a problem concerning ``CSS-graphs'', i.e., node-weighted bipartite graphs
with a dependency ordering on the edges, where weights capture the number of
characters.

Constraint solving plays a key role in our approach. Transforming a CSS file into
a CSS-graph problem requires us to extract the dependency ordering on the edges
(an NP-hard problem), which requires us to solve the selector intersection
problem. To this end, we provide the first full formalisation of CSS3 selectors
(the most stable version of CSS) and reduce their selector intersection problem
to satisfiability of quantifier-free integer linear arithmetic, for which
highly-optimised SMT-solvers are available. To solve the above NP-hard graph
optimisation problem, we show how Max-SAT solvers can be effectively employed.
We have implemented our rule-merging algorithm, and tested it against
approximately $70$ real-world examples (including examples from each of the top 20 most popular websites).
We also used our benchmarks to compare our tool
against six well-known minifiers (which implement other optimisations).
Our experiments suggest that our tool produced larger savings. A substantially 
better minification rate was shown when our tool is used together with these 
minifiers.

\end{abstract}

\section{Introduction}

\emph{Minification} \cite[Chapter 12]{Souders-book} is a widely-accepted
technique in the web programming literature that aims at decreasing the size of
the code transmitted over the web, which can directly improve the response-time
performance of a website. Page load time is of course crucial for users'
experience, which impacts the performance of an online business
and is increasingly being included as a ranking factor by search engines
\cite{google-rank}.
Minification bears a resemblance to traditional code compilation.
In particular, it is applied only \emph{once} right before deploying the
website (therefore, its computation time does \emph{not} impact the page load
time). However, they differ in at least two ways. First,
the source and target languages for minification are the same (high-level)
languages.  The code to which minification can be applied is typically
JavaScript or CSS, but it can also be HTML, XML, SVG, etc.
Second, minification
applies various semantic-preserving
transformations with the objective of reducing the size of the code.

This paper concerns the problem of minifying CSS (Cascading Style Sheets),
which is the de facto language for styling web documents (HTML, XML, etc.)
as developed and maintained by the World Wide Web Constortium (W3C)~\cite{CSS}.
We will minify the CSS without reference to the documents it may be designed to style.
We refer the reader to Section~\ref{sec:related} for a discussion of document-independent and document-dependent related work.

A CSS file consists of a list of CSS \emph{rules}, each containing a
list of \emph{selectors} --- each selecting nodes in the \emph{Document Object
Model} (DOM), which is a tree structure representing the document --- and a
list of \emph{declarations}, each assigning values to selected nodes'
display attributes (e.g. \texttt{blue} to the property \texttt{color}).
Real-world CSS files can easily have many rules (in the order of
magnitude of 1000), e.g., see the statistics for popular sites \cite{website-ranking} on
\url{http://cssstats.com/}.
As Souders wrote in his book \cite{Souders-book}, which is widely regarded as
the bible of web performance optimisation:
\begin{quotation}
    The greatest potential for [CSS file] size savings comes from optimizing CSS
    --- merging identical classes, removing unused classes, etc.
    This is a complex problem, given the order-dependent nature of CSS (the
    essence of why it's called \emph{cascading}). This area warrants further
    research and tool development.
\end{quotation}
More and more CSS minifiers 
have been, and are being, developed. These include \texttt{YUI Compressor} \cite{yui},
\texttt{cssnano} \cite{cssnano}, \texttt{minify} \cite{minify},
\texttt{clean-css} \cite{cleancss},
\texttt{csso} \cite{csso}, and \texttt{cssmin} \cite{cssmin}, to name a few.
Such CSS minifiers apply various syntactic transformations, typically
including removing whitespace characters and comments, and using abbreviations
(e.g. \verb+#f60+ instead of \verb+#ff6600+ for the colour orange).

\OMIT{
Together with HTML and JavaScript, it has been a cornerstone language for the
web since its inception in 1996. Among others, the language facilitates a clear
separation between presentation and web content,
and plays a pivotal role in the design of a semantically meaningful HTML.
}

\OMIT{
Real-world CSS files can easily have many rules (in the order of
magnitude of 1000), e.g., see the statistics for various popular sites on
\url{http://cssstats.com/}. On the other hand, it is also well-known that
CSS files are often ``bloated''. 
Preliminary studies
\cite{Cilla,Mesbah-refactoring} suggest that rule redundancies are severe
problems in real-world CSS files, wherein 40--70\% (resp.~40--90\%) of rules in
deployed industrial HTML5 applications are estimated to be unused
(resp.~contain duplications).
Not only such redundancies unnecessarily increases a web page size (hence,
bandwidth requirement), they also have a negative impact on
browsers' performance (e.g.\ parsing, rendering, and node selections),
e.g., see \cite{Souders-book,MB10}.
This motivates many researchers and programmers to develop tools that can
automatically \emph{eliminate} these redundancies, e.g., see
\cite{csslint,csstidy,Cilla,Mesbah-refactoring,HLO15,Geneves12,Geneves-style,UnCSS,cssnano,Souders-book},
among others.
}

In this paper, we propose a new class of CSS 
transformations
based on
merging similar or duplicate rules in the CSS file
(thereby removing repetitions across multiple rules in
the file) that could reduce file size while preserving the rendering
information.
%
\OMIT{
Refactoring~\cite{refactoring} can be understood as the process of applying
multiple small-step
semantic-preserving transformations on a piece of code typically with
the goal of improving readability, design, and maintanability. In the context
of CSS, refactoring has been
studied by researchers and programmers with the goal of improving maintanability
and reducing the file size (a.k.a. \emph{minification}), e.g.,
see~\cite{Mesbah-refactoring,Geneves-style}.
In this paper, we consider CSS refactoring opportunities which
\emph{merge similar rules} in the CSS file (thereby, removing
repetitions across multiple rules in the file) without affecting how an
HTML document is rendered.
}
\OMIT{
These minification opportunities can in fact be construed as
\defn{rewrite rules} that can be applied in a nondeterministic manner (since
more than one
}
To illustrate the type of transformations that we focus on
in this paper (a more detailed
introduction is given in Section \ref{sec:example}), consider the following simple CSS file.
\begin{center}
\begin{minted}{css}
    #a { color:red; font-size:large }
    #c { color:green }
    #b { color:red; font-size:large }
\end{minted}
\end{center}
The selector \verb+#a+ selects a node with \emph{ID} \verb+a+ (if
it exists).
Note that the first and the third rules above have the same property
declarations. Since one ID can be associated with at most one node in a
DOM-tree, we
can merge these rules resulting in the following equivalent file.
\begin{center}
\begin{minted}{css}
    #a, #b { color:red; font-size:large }
    #c { color:green }
\end{minted}
\end{center}
Identifying such a rule-merging-based minification opportunity ---
which we shall call \defn{merging opportunity} for short --- in general is
non-trivial since a CSS file is sensitive to the ordering
of rules that may match the same node, i.e., the problem mentioned in Souders'
quote above. For example, let us assume that the three selectors
\verb+#a+, \verb+#b+, and \verb+#c+ in our CSS example instead were \verb+.a+,
\verb+.b+,
and \verb+.c+, respectively. The selector \verb+.a+ selects
nodes with \emph{class} \verb+a+. Since a class can be associated with
multiple nodes in a DOM-tree, the above merging opportunity could change how a
page is
displayed and therefore would not be valid, e.g., consider a page with
an element that has two classes, \texttt{.b} and \texttt{.c}.
This element would be
displayed as red by the original file (since red appears later), but as green by the file after applying
the transformation. Despite this, in this case we would still be able to
merge the
\emph{subrules} of the first and the third rules (associated with the
\verb+font-size+ property) resulting in the following smaller equivalent file:
\begin{center}
\begin{minted}{css}
    .a { color:red }
    .c { color:green }
    .b { color:red }
    .a, .b { font-size:large }
\end{minted}
\end{center}
\OMIT{
On the other hand, if \verb+.a+, \verb+.b+, and \verb+.c+ instead

\begin{center}
\begin{minted}{css}
    .a { color:red; font-size:large }
    .c { color:green }
    .b { color:red; font-size:large }
\end{minted}
\end{center}
One may be tempted to merge the first and the third rules, resulting
in the following file.
\begin{center}
\begin{minted}{css}
    .a, .b { color:red; font-size:large }
    .c { color:green }
\end{minted}
\end{center}
However, this could change how a page is displayed, e.g., consider a page with
an element that has the classes \texttt{.b} and \texttt{.c}, which is
displayed as red by the original file, but as green by the file after applying
refactoring. On the other hand, if \verb+.a+, \verb+.b+, and \verb+.c+ instead
were \verb+#a+ (i.e. \emph{the} node with id \verb+a+), \verb+#b+, and
\verb+#c+,
the aforementioned refactoring would be valid since each node in a document
can have at most one id.
}
This example suggests two important issues. First, identifying a
merging opportunity in a general way requires a means of
checking whether two given selectors may \emph{intersect}, i.e., select the
same node in \emph{some} document tree.
Second, a given CSS file could have a large number of merging opportunities at
the same time (the above example has at least two). Which one shall we pick?
There are multiple answers to this question. Note first that merging rules
can be iterated multiple times to obtain increasingly smaller CSS files.
In fact, we can define an optimisation problem which, given a CSS file, computes
a sequence of
applications of semantic-preserving merging rules that yields a \emph{globally
minimal}
CSS file. Finding a provably globally minimal CSS file is a computationally
difficult problem that seems well beyond the reach of current technologies
(constraint solving, and otherwise). Therefore, we propose to use the simple
\emph{greedy strategy}: while there is a merging opportunity that can make
the CSS file smaller, apply an \emph{optimal} one, i.e., a merging opportunity
that reduces the file size the most\footnote{Not to be confused with
the smallest CSS file that is equivalent with the original file, which in
general cannot be obtained by applying a single merging}.
There are now two problems to address. First, can we efficiently find an
optimal merging opportunity for a given CSS file? This paper provides a positive
answer to this question by exploiting a state-of-the-art constraint solving
technology. Second, there is a
potential issue of getting stuck at a \emph{local minimum} (as is
common with any optimisation method based on gradient descent). Does the greedy
strategy produce a meaningful space saving? As we will see in this paper,
the greedy approach could already produce space savings that are beyond
the reach of current CSS minification technologies.
\OMIT{
We leave it for future work
how one could use metaheuristics like simulated annealing and genetic algorithms
to
}
\OMIT{
In addition, such a refactoring-based method (if can be done efficiently) could
be used
to perform CSS minification by a simple greedy algorithm which at
each step finds and applies a rule refactoring that best minimises the CSS file
until no rule refactoring can further reduce the file size.
We note that this approach is not guaranteed to find globally the smallest
possible equivalent CSS file since it might get stuck in local minima.
}
\OMIT{
a key difficulty that we address in this paper is how to find such a
and how to do
it efficiently, both of which we fully address in this paper.
It is also
equally important to ask whether rule refactoring can have a non-negligible
impact in CSS minification. Our paper provides a positive answer to this
question.
}

\OMIT{
Whenever there are multiple ways of merging/combining similar rules, how do
we choose \emph{the best} refactoring opportunity? The answer to this question
very much depends on the goal of refactoring. In
\cite{cssnano,Mesbah-refactoring}, the \emph{metric of file size} is used, i.e.,
the best refactoring step is one that best minimises the file size. Using this
an algorithm that applies the best refactoring step at any given time (i.e.\ %
in a greedy manner), one obtains a simple CSS minification algorithm.
In fact, it was argued in~\cite{Mesbah-refactoring} that a refactoring step
that best minimises the file size is often one that best improves readability
and maintainability of the CSS files. Although other possible metrics are
possible, we will adopt the metric of file size since it has a direct
application in CSS minification. [In fact, it is possible to adapt our
methods in this paper other metrics, e.g., counting the number of rules.]
}

\subsection{Contributions}
We first formulate a \emph{general class of semantic-preserving transformations}
on CSS files that captures the above notion of merging ``similar'' rules in a
CSS file. Such a program
transformation has a clean graph-theoretic formulation (see Section
\ref{sec:css2graph}). Loosely speaking, a CSS rule corresponds to a biclique
(complete bipartite graph) $B$, whose edges connect nodes representing
selectors and nodes representing property declarations.
Therefore, a CSS file $F$ corresponds to a sequence
of bicliques that covers all of the edges in the
bipartite graph $\CSSgraph$ which is constructed by taking the (non-disjoint)
union of all bicliques in $F$.
Due to the cascading nature of CSS, the file $F$ also gives rise to an
(implicit) ordering $\edgeOrder$ on the edges of $\CSSgraph$. Therefore, any
new CSS file $F'$ that we produce from $F$ must also be a valid covering of the
edges of $\CSSgraph$ and respect the order $\edgeOrder$. As we will see,
the above notion of merging opportunity can be defined as a pair $(\bucket,j)$ of
a new rule $\bucket$ and a position $j$ in the file, and that applying
this transformation entails inserting $\bucket$ in position $j$ and removing
all redundant nodes (i.e. either a selector or a property declaration) in
rules at position $i < j$.

Several questions remain. First is how to compute the edge order $\edgeOrder$.
The core difficulty of this problem is to determine
whether two CSS selectors can be matched by the same node in some document tree
(a.k.a.~the \emph{selector intersection problem}).
Second, among the multiple potential merging opportunities, how do we
automatically compute a rule-merging opportunity that best minimises the size
of the CSS file.
We provide solutions to these questions in this paper.

\paragraph{Computing the edge order $\edgeOrder$.} In order to handle
the selector intersection problem, we first provide a complete formalisation of
CSS3 selectors \cite{CSS3sel} (currently the most stable version of
CSS selectors). We then give a polynomial-time reduction from the selector
intersection problem to satisfiability over quantifier-free theory of
integer linear arithmetic, for which highly optimised SMT-solvers (e.g. Z3 \cite{Z3})
are available. To achieve this reduction, we provide a chain of polynomial-time
reductions. First, we develop a new class of
\defn{automata over data trees} \cite{data-aut-survey}, called \emph{CSS automata},
which can capture the expressivity of CSS selectors. This reduces the
selector intersection problem to the language intersection problem
for CSS automata.
Second, unlike the case for CSS selectors, the languages recognised by CSS
automata
enjoy closure under intersection. (\verb+.b .a+ and \verb+.c .a+ are
individually CSS selectors, however their conjunction is not a CSS selector\footnote{%
    The conjunction can be expressed by the \emph{selector group}
    \texttt{.b .c .a, .c .b .a, .b.c .a}.
}.)
This reduces language intersection of CSS automata to language non-emptiness
of CSS automata. To finish this chain of reductions, we provide a reduction
from the problem of language non-emptiness of CSS automata to satisfiability
over quantifier-free theory of integer linear arithmetic. The last reduction
is technically involved and requires insights from logic and automata theory,
which include several small model lemmas (e.g. the sufficiency of considering
trees with a small number of node labels that can be succinctly encoded).
This is despite the fact that CSS selectors may force the smallest satisfying
tree to be exponentially big.

\paragraph{Formulation of the ``best'' rule-merging opportunity and how to find
it.} Since our objective is to minimise file size, we may equip
the graph $\CSSgraph$ by a \emph{weight function} $\nodeWeight$
which maps each node to the number of characters used to define it (recall that
a node is either a selector or a property declaration). Hence, the function
$\nodeWeight$ allows us to define the size of a CSS file $F$ (i.e. a covering of
$\CSSgraph$ respecting $\edgeOrder$) by taking the sum of weights
$\nodeWeight(v)$ ranging over all nodes $v$ in $F$. The goal, therefore,
is to find a merging opportunity $(\bucket,j)$ of $F$ that produces $F'$ with a
minimum file size. We show how this problem can be formulated as a (partially
weighted) Max-SAT instance \cite{maxsat_benchmarks} in such a way that several existing
Max-SAT solvers (including Z3 \cite{Z3maxsat} and MaxRes \cite{maxres}) can
handle it efficiently.
This Max-SAT encoding is non-trivial: the naive encoding causes Max-SAT solvers to require prohibitively long run times even on small examples.
A naive encoding would allow the Max-SAT solver to consider any rule constructed from the nodes of the CSS file, and then contain clauses that prohibit edges that do not exist in the original CSS file (as these would introduce new styling properties).
Our encoding forces the Max-SAT solver to only consider rules that do not introduce new edges.
We do this by enumerating all \emph{maximal} bicliques in the graph $\CSSgraph$ (maximal with respect to set inclusion) and developing an encoding that allows the Max-SAT solver to only consider rules that are contained within a maximal biclique.
We employ the algorithm from \cite{K10} for enumerating all
maximal bicliques in a bipartite graph, which runs in time polynomial in the
size of the input and output. Therefore, to make this algorithm run efficiently,
the number of maximal bicliques in the graph $\CSSgraph$ cannot be very large.
Our benchmarking (using approximately $70$ real-world examples including CSS files from each of the top
20 websites \cite{website-ranking}) suggests that this is the case for graphs $\CSSgraph$
generated by CSS files (with the maximum ratio between the number of bicliques
and the number of rules being $2.05$).
Our experiments suggest that the
combination of the enumeration
algorithm of \cite{K10} and Z3 \cite{Z3maxsat} makes the problem of finding
the best merging opportunity for CSS files practically feasible.

\paragraph{Experiments} We have implemented our CSS minification algorithm in
the tool \satcss which
greedily searches for and applies the best merging opportunity
to a given CSS file until no more rule-merging can reduce file size.
The source code, experimental data, and a disk image is available on Figshare~\cite{HHL18}.
The source code is also available at the following URL.
\begin{center}
    \url{https://github.com/matthewhague/sat-css-tool}
\end{center}
Our tool utilises Z3 \cite{Z3,Z3maxsat} as a backend solver for Max-SAT and SMT
over integer linear arithmetic. We have tested our tool on around $70$ examples from
real-world websites (including examples from each of the top 20 websites~\cite{website-ranking})
with promising experimental results. We found that \satcss
(which only performs rule-merging) yields larger savings on our benchmarks
in comparison to six popular CSS minifiers~\cite{minifier-list}, which support many other
optimisations but not rule-merging.
More precisely, when run individually, \satcss reduced the file size by
a third quartile of \satcssthird\% and a median value of \satcssmedian\%.
The six mainstream minifiers achieved savings with
third quantiles and medians up to \miniuptothird\% and \miniuptomedian\%, respectively.
Since these are orthogonal minification opportunities, one might suspect that
applying these optimisations together could further improve the minification
performance, which is what we discover during our experiments.
More precisely, when we run our tool after running any one of these minifiers,
the third quartile of the savings can be increased to \afterminithird\% and the median to \afterminimedian\%.
The additional gains obtained by our tool on top of the six minifiers
(as a percentage of the original file size)
have a third quartile of \gainsthird\% and a median value of \gainsmedian\%,
\anthony{which supports the hypothesis that our optimisation method is
orthogonal to the optimisations performed by existing minifiers.}
Moreover, the ratios of the percentage of savings made by \satcss to
the percentage of savings made by the six minifiers have third quartiles of
at least \ratiothird\% and medians of at least \ratiomedian\%.
In fact, in the case of \texttt{cleancss} which has a third quartile saving of
\cleancssthird\% and median saving of \cleancssmedian\%, applying \satcss thereafter results in
a third quartile saving of \aftercleancssthird\% and median saving of \aftercleancssmedian\%.
These figures clearly indicate a
substantial proportion of extra space savings made by \satcss.
See Table~\ref{tab:percentile-tables} and Figure~\ref{fig:box-plots}
for more detailed statistics.

\OMIT{
In this paper, we first formulate a class of
formalise the above intuitive notion of refactoring in
a graph-theoretic terminology.
In this paper, we provide a CSS refactoring algorithm that merges/combines
similar rules in a CSS file with the goal of minification. To overcome the
aforementioned problem of checking equalities of CSS files with respect to
specific HTML documents (which is unsound for general HTML5 applications with
JavaScript), we consider only refactoring opportunities that
preserve the semantics of the CSS files with respect to \emph{all} DOM trees.
Our algorithm identifies the best such refactoring opportunity (i.e.\ that best
minimises the resulting file size and does not change the semantics of the
original CSS file) at any given step, which is applied
repeatedly, e.g., until no more rule merging can be performed, or a timeout has
been reached. 
The algorithm uses the state-of-the-art SMT and Max-SAT solvers and
runs in a reasonable amount of time (e.g.\ fairly instant in most cases, but
can take up to 45--60 seconds per iteration for harder instances generated from
large files). In the context of CSS
minification, this rough statistics is reasonable since minification should be
performed only once at the deployment stage. We detail our technical
contributions below:
}
\OMIT{
A careful look into the literature reveals that there are several main
causes for this difficulty: (1) the lack of a good abstraction of the problem,
(2) the lack of a more complete specification of the CSS selector language,
and (3) the lack of a robust and efficient tool for reasoning about CSS rules.
\underline{The main contributions of the paper} are as follows:
}
\OMIT{
\begin{enumerate}
    \item We provide a \emph{clean and complete formalisation} of
        the CSS3 selector language (the most stable
        version of CSS selectors). A precise semantics of CSS selectors is
        crucial to prevent unsound refactoring opportunities. For example,
        if the selector \texttt{.c} in the above example of CSS file is changed
        to \texttt{:not(.b)}, then the first and the third rules could
        again be merged since no node can simultaneously be matched by
        \texttt{.b} and \texttt{:not(.b)}.
    \item Hence, a crucial subproblem is the
        \defn{intersection problem} of CSS selectors, i.e., given two CSS
        selectors, is there some node in some HTML tree that matches these two selectors.
        Although this is shown to be NP-hard, we provide an efficient reduction
        to SMT over integer linear arithmetic, for which highly optimised
        solvers are available (e.g.\ Z3~\cite{Z3,SMT-CACM}). Incidentally,
        this reduction also shows that the problem is NP-complete, which
        is surprising since CSS selectors may have complex features (e.g.\ tree
        traversals, string constraints, counting constraints, and negations)
        that typically lead to logical formalisms with a much higher
        computational complexity
	\cite{Counting-free,Geneves12,GLSG15,Diego-thesis,two-variable-trees}.
    \item We reduce the refactoring problem to the problem of finding an
        \emph{optimal edge-covering in a CSS-graph}, i.e., node-weighted
        bipartite graphs with a dependency ordering on the edges. Since we perform
        CSS minification, we use weights to encode the number of characters in
        a selector/property. The dependency ordering on the edges is derived
        by checking intersection of different selectors in a CSS file.
    \item Although the above graph problem is NP-hard,
        we show how
        \emph{Max-SAT solvers can be exploited} to solve it
        problem in a reasonable amount of time even for industrial examples
        (e.g.\ large examples from Amazon, and GitHub websites). Our tool
        can be used to perform reductions before/after applying
        existing CSS minifiers (e.g.\ cssnano~\cite{cssnano}) which
        performs other types of CSS minification.
    \item Due to the abundance of CSS files on the web, our
        technique also yields an \emph{automatic way of generating
        Max-SAT instances} (i.e.\ in Industrial category~\cite{maxsat_benchmarks}),
        which we believe would be valuable for
        benchmarking in Max-SAT solving community.
\end{enumerate}
}
\OMIT{
Combined with tools like
cssnano used in the preprocessing step (which, among others, discover
all ``easy'' minification opportunities),
our tool makes 4--5\% extra space saving on average (around 2--3 KBs).
Although these numbers sound comparatively small, it has not escaped our
attention that the impact on the bandwidth for a
web page with a large number of visitors could be non-negligible.
Interestingly,
our tool also enables cssnano to make further space saving (in the
final CSS file that our tool outputs), e.g.,
using the \texttt{mergeLongHand} rule of~\cite{cssnano}.
}



\OMIT{
(including the lack of a more complete
specification of CSS/JavaScript, and the lack of algorithmic techniques for
taking advantage of computationally difficult minification opportunities).
}
\OMIT{
In this paper, we consider one type of CSS minification opportunities that
is based on \emph{refactoring} repetitions across multiple rules in a CSS file
(more on this in the next paragraph).
Such a minification method has been applied in various restricted settings by
several researchers and developers of CSS minifiers
(e.g.\ \cite{cssnano,Mesbah-refactoring}), but the most general version of
the method is currently beyond the capability of existing technologies.
}

\OMIT{
for several reasons including the lack of a full specification of CSS
selector language, and the lack of a robust automatic tool for reasoning about
CSS rules.
In this paper, we provide a solution to these
with the view towards
\emph{exploiting the power of state-of-the-art algorithmic verification
techniques}
(including SAT and SMT solvers).

In particular, we propose a minification
technique based on \emph{refactoring} repetitions across multiple rules in a
CSS file.
}


\subsection{Organisation}
Section~\ref{sec:example} provides a gentle and more detailed introduction (by
example) to the problem of CSS minification via rule-merging. Preliminaries
(notation, and basic terminologies) can be found in Section~\ref{sec:prelim}.
In Section~\ref{sec:css2graph} we formalise the rule-merging problem in terms
of what we will call CSS-graphs. 
In Section~\ref{sec:edgeOrder}, we provide a formalisation of CSS3 selectors
and an outline of an algorithm for solving their intersection problem, from
which we can extract the edge order of a CSS-graph
that is required when reducing the rule-merging problem to the
edge-covering problem of CSS-graphs.
Since the algorithm solving the selector intersection problem is rather
intricate, we dedicate one full section (Section \ref{sec:intersection}) for it.
In Section
\ref{sec:graph2maxsat} we show how Max-SAT solvers can be exploited to solve
the rule-merging problem of CSS-graphs.
We describe our experimental results in
Section~\ref{sec:experiments}. 
Note that a subset of Max-SAT instances generated using our tool was used in
MaxSAT Evaluation 2017 \cite{HH17}.
In Section~\ref{sec:related} we
give a detailed discussion of related work.
Finally, we conclude in Section~\ref{sec:conclusion}.
Additional details can be found in
the appendix.
The Python code and benchmarks for our tool have been included in the supplementary material, with a brief user guide.
A full artefact, including virtual machine image will be included when appropriate.
These are presently available from the URLs above.

\section{CSS Rule-Merging: Example and Outline}
\label{sec:example}

In this section, we provide a gentler and more detailed introduction to CSS
minification via merging rules, 
while elaborating on the difficulties of the problem.
We also give a general overview of the algorithm, which may serve as a guide to the article.
We begin by giving a basic introduction to HTML and CSS.
Our formal models of HTML and CSS are given in Section~\ref{sec:edgeOrder}.

\subsection{HTML and CSS}
\label{sec:html-example}

In this section we give a simple example to illustrate HTML and CSS.
Note, in this article our analysis takes as input a CSS file only.
We cover HTML here to aid the description of CSS.

An HTML document is given in Figure~\ref{fig:simple-html}.
The format is similar to (but not precisely) XML, and is organised into a tree structure.
The root node is the \texttt{html} \emph{element} (or \emph{tag}), and its two children are the \texttt{head} and \texttt{body} nodes.
These contain page header information (particularly the title), which is not directly displayed, and the main page contents respectively.

The \texttt{body} node contains two children which represent the page content.
The first child is a \texttt{div} element, which is used to group together parts of the page for the convenience of the developer.
In this case, the developer has chosen to collect all parts of the page containing the displayed heading of the page into one \texttt{div}.
This \texttt{div} has an \emph{id} which takes the value \texttt{heading}.
Values specified in this way are called \emph{attributes}.
\defn{ID} attributes uniquely identify a given node in the HTML, that is, no two nodes should have the same ID.
The \texttt{div} has two children, which are an image (\texttt{img}) element and a textual heading (\texttt{h1}).
The source of the image is given by \texttt{src}, which is also an attribute.
The image also has a \texttt{class} attribute which takes the value \texttt{banner}.
The class attribute is used to distinguish nodes which have the same tag.
For example, the developer may want an image with the class \texttt{banner} to be displayed differently to an image with the class \texttt{illustration}.

The second child of the \texttt{body} node is an unordered list (\texttt{ul}).
This lists contains two food items, with appropriate ID and class attributes.

\begin{figure}
    \begin{minted}{html}
        <html>
            <head><title>Shopping List</title></head>
            <body>
                <div id="heading">
                    <img class="banner" src="banner.jpg"/>
                    <h1>An Example HTML Document</h1>
                </div>
                <ul>
                    <li id="apple" class="fruit">Apple</li>
                    <li id="broccoli">Broccoli</li>
                </ul>
            </body>
        </html>
    \end{minted}
    \caption{\label{fig:simple-html}A simple HTML document.}
\end{figure}

To emphasize the tree structure, we give in Figure~\ref{fig:html-tree} the same HTML document drawn as a tree.
We will give a refined version of this diagram in Section~\ref{sec:edgeOrder} when defining DOM trees formally.

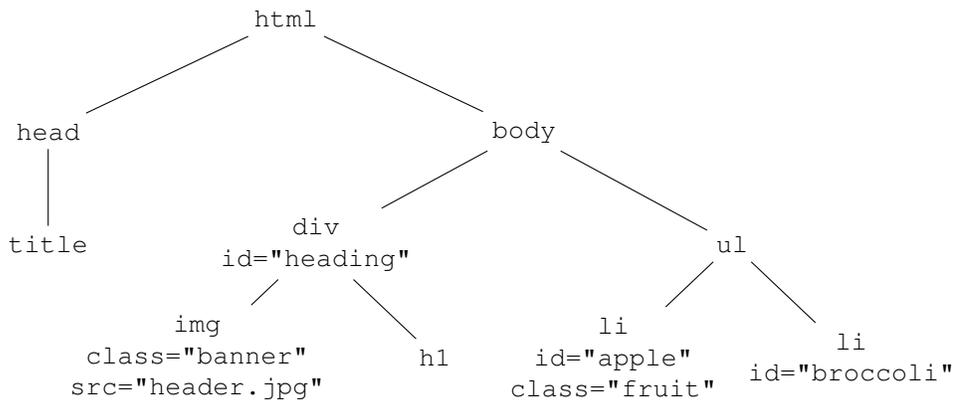
\begin{figure}
    \begin{tikzpicture}[level 1/.style={sibling distance=40ex},
                        level 2/.style={sibling distance=35ex},
                        level 3/.style={sibling distance=20ex},
                        every node/.style={align=center}]
        \node {\texttt{html}}
            child {node {\texttt{head}}
                child {node {\texttt{title}}}
            }
            child {node {\texttt{body}}
                child {node {\texttt{div} \\
                             \texttt{id="heading"}}
                    child  {node {\texttt{img} \\
                                  \texttt{class="banner"} \\
                                  \texttt{src="header.jpg"}}}
                    child  {node {\texttt{h1}}}
                }
                child {node {\texttt{ul}}
                    child {node {\texttt{li} \\
                                 \texttt{id="apple"} \\
                                 \texttt{class="fruit"}}}
                    child {node {\texttt{li} \\
                                 \texttt{id="broccoli"}}}
                }
            };
    \end{tikzpicture}
    \caption{\label{fig:html-tree}The HTML file in Figure~\ref{fig:simple-html} drawn as a tree.}
\end{figure}

A CSS file consists of a sequence of \defn{rules}, each of the form
\begin{center}
    \emph{selectors} \{ \emph{declarations} \}
\end{center}
where \emph{selectors} contains one or more (node-)selectors (separated by
commas) and
\emph{declarations} contains a sequence of (visual) property declarations
(separated by semicolons).
An example CSS file containing two rules is given in Figure~\ref{fig:css-example}.

\begin{figure}
    \begin{minted}{css}
        img.banner { width: 100% }
        #heading h1 { font-size: 30pt; font-weight: bold }
    \end{minted}
    \caption{\label{fig:css-example}An example CSS file.}
\end{figure}

The semantics of a rule is simple: if a node can be matched by at least one
of the selectors, then label the node by all the visual properties
in the rule.
Both rules in Figure~\ref{fig:css-example} have one selector.
The selector in the first rule \texttt{img.banner} matches all \texttt{img} elements which have the class \texttt{banner}.
The notation \texttt{.} is used to indicate a class.
Thus, this rule will match the \texttt{img} node in our example document and will specify that the image is to be wide.

The second selector demonstrates that selectors may reason about the structure of the tree.
The \texttt{\#heading} means that the match should begin at the node with ID \texttt{heading}.
Then, the space means the match should move to any descendent of the \texttt{\#heading} node, i.e., any node contained within the \texttt{div} with ID \texttt{heading} in our example page.
Next, the \texttt{h1} is used to choose \texttt{h1} elements.
Thus, this selector matches any node with element \texttt{h1} that occurs below a node with ID \texttt{heading}.
The text of any such match should be rendered in a $30$pt bold font.

There are many other features of selectors.
These are described in full in Section~\ref{sec:edgeOrder}.

\subsection{CSS Rule-Merging by Example}

We will now discuss more advanced features of CSS files and give an example of 
rule-merging.
Figure \ref{fig:simpleEx} contains a simple CSS file (with four rules).

\begin{figure}[h]
    \begin{mdframed}[style=codebox]
    \begin{minipage}{.75\linewidth}
    \begin{center}
    \begin{minted}{css}
#apple { color:blue; font-size:small }
.fruit, #broccoli { color:red; font-size:large }
#orange { color:blue }
#tomato { color:red; font-size:large;
          background-color:lightblue }
    \end{minted}
    \end{center}
    \end{minipage}
    \end{mdframed}
    \caption{A simple example of a CSS file.\label{fig:simpleEx}}
\end{figure}

The sequence of rules in a file is applied to a node $v$
in a ``cascading'' fashion (hence the name Cascading Style Sheets). That is,
read the rules from top to bottom and
check if $v$ can be matched by at least one selector in the rule.
If so, assign the properties to $v$ (perhaps overriding previously
assigned properties, e.g., color) provided that
the selectors matching $v$ in the current rule have higher ``values'' (a.k.a.
\defn{specificities}) than the
selectors previously matching $v$. Intuitively, the specificity
of a selector \cite{CSS3sel}
can be calculated by taking a weighted sum of the number of
classes, IDs, tag names, etc. in the selector, where IDs have higher
weights than classes, which in turn have higher weights than tag names.

For example, let us apply this CSS file to a node matching \texttt{.fruit}
and \texttt{\#apple}. In this case, the selectors in the first
(\texttt{\#apple}) and the second rules (\texttt{.fruit}) are applicable.
However, since \texttt{\#apple}
has a higher specificity than \texttt{.fruit}, the node gets labels
\texttt{color:blue} and \texttt{font-size:small}.
%

Two syntactically different CSS files could be ``semantically equivalent'' in
the sense that, on each DOM tree $T$, both CSS files will precisely yield the
same
tree $T'$ (which annotates $T$ with visual information). In this paper,
we only look at semantically equivalent CSS files that are obtained by
merging similar rules.
Given a CSS rule $R$, we say that it is \defn{subsumed} by a CSS file $F$
if each possible selector/property combination in $R$ occurs in some rule in
$F$. Notice that a rule can be subsumed in a CSS file $F$ without occurring
in $F$ as one of the rules.
For example, the rule $R_1$
\begin{center}
\begin{minted}{css}
    .fruit, #broccoli, #tomato { color:red; font-size:large }
\end{minted}
\end{center}
is subsumed in our CSS file example (not so if
\texttt{background-color:lightblue} were added to $R_1$). A \defn{(rule-)merging
opportunity} consists
of a CSS rule subsumed in the CSS file and a position in the
file to insert the rule into.
An example of a merging opportunity in our CSS file example
is the rule $R_1$ and the end of the file as the insertion position. This
results in a new (bigger) CSS file show in Figure~\ref{fig:bigger-css}.

\begin{figure}
    \begin{subfigure}{\linewidth}
        \begin{mdframed}[style=codebox]
        \begin{minipage}{.75\linewidth}
        \begin{center}
        \begin{minted}{css}
#apple { color:blue; font-size:small }
.fruit, #broccoli { color:red; font-size:large }
#orange { color:blue }
#tomato { color:red; font-size:large;
          background-color:lightblue }
.fruit, #broccoli, #tomato { color:red; font-size:large }
        \end{minted}
        \end{center}
        \end{minipage}
        \end{mdframed}
        \caption{\label{fig:bigger-css}The CSS file in Figure~\ref{fig:simpleEx} with a rule inserted at the end.}
    \end{subfigure}
    \begin{subfigure}{\linewidth}
        \vspace{2ex}
        \begin{mdframed}[style=codebox]
        \begin{minipage}{.75\linewidth}
        \begin{center}
        \begin{minted}{css}
#apple { color:blue; font-size:small }
#orange { color:blue }
#tomato { background-color:lightblue }
.fruit, #broccoli, #tomato { color:red; font-size:large }
        \end{minted}
        \end{center}
        \end{minipage}
        \end{mdframed}
        \caption{\label{fig:trimmed-css}The result of trimming the CSS file in Figure~\ref{fig:bigger-css}.}
    \end{subfigure}
    \caption{\label{fig:trim-example}An example of insertion and trimming.}
\end{figure}

We then ``trim'' this resulting CSS file by eliminating ``redundant''
subrules. For example, the second rule is redundant since it is completely
contained in the new rule $R_1$. Also, notice that the subrule:
\begin{center}
\begin{minted}{css}
    #tomato { color:red; font-size:large }
\end{minted}
\end{center}
of the fourth rule in the original file is also completely redundant since it
is contained in $R_1$. Removing these, we obtain the trimmed version of
the CSS file, which is shown in Figure~\ref{fig:trimmed-css}.
This new CSS file contains fewer bytes.

We may apply another round of rule-merging by
inserting the rule
\begin{center}
\begin{minted}{css}
    #apple, #orange { color:blue }
\end{minted}
\end{center}
at the end
of the second rule (and trim). The resulting file is even smaller.
Computing such merging opportunities
(i.e.\ which yields maximum space saving) is difficult.
\OMIT{
We provide a clean formalisation of the problem in terms of node-weighted
edge-ordered bipartite graph covering in Section~\ref{sec:css2graph}, and show
how Max-SAT solvers can be employed to effectively tackle this problem in
Section~\ref{sec:graph2maxsat}.
}

Two remarks are in order. Not all merging opportunities yield
a smaller CSS file. For example, if \texttt{\#tomato} is replaced by the fruit
vegetable
\begin{center}
    \texttt{\#vigna\_unguiculata\_subsp\_sesquipedalis}
\end{center}
the first
merging opportunity above would have resulted in a larger file, which we should
\emph{not} apply. The second remark is related to the issue of
\defn{order dependency}. Suppose we add the selector \texttt{.vegetable} to
the third rule in the original file, resulting in the following rule
\begin{center}
\begin{minted}{css}
    .vegetable, #orange { color:blue }
\end{minted}
\end{center}
Any node 
labeled by
\texttt{.fruit} and \texttt{.vegetable} but no IDs will be assigned
\texttt{color:blue}. However, using the first merging opportunity
above yields
\OMIT{
the file:
\begin{center}
\begin{minted}{css}
    #apple { color:blue; font-size:small }
    #orange, .vegetable { color:blue }
    #tomato { background-color:lightblue }
    .fruit, #broccoli, #tomato { color:red; font-size:large }
\end{minted}
\end{center}
}
a CSS file
which assigns \texttt{color:red} to fruit vegetables, i.e.,
\emph{not} equivalent to the original file.
To tackle the issue of order
dependency, we need to account for selectors specificity and
whether
two selectors with the same specificity
can \defn{intersect} (i.e.\ can be matched by a node in some DOM tree).
Although for our examples the problem of selector intersection is quite obvious,
this is not the case for CSS selectors in general.
For example, the following two selectors are from a real-world CSS example found on The Guardian website.
\begin{verbatim}
    .commercial--masterclasses .lineitem:nth-child(4)
    .commercial--soulmates:nth-child(n+3)
\end{verbatim}
Interested readers unfamiliar with CSS may refer to Section~\ref{sec:css-selectors-def} for a complete definition of selectors.
The key feature is the use of \texttt{:nth-child}.
In the first selector it is used to only match nodes that are the fourth child of some node.
For a node to match the second selector, there must be some $n \geq 0$ such that the node is the $(n + 3)$th child of some node.
I.e.~the third, fourth, fifth, etc.
These two selectors have a non-empty intersection, while changing \texttt{n+3} to \texttt{2n+3} would yield the
intersection empty!
\OMIT{
To make matters worse, there are
no sufficiently complete formalisations of CSS selectors, let alone
automatic methods for checking selector intersection. We address these problems
in Section \ref{sec:edgeOrder}
by providing a complete formalisation of the CSS3 selector language and an
efficient reduction of the intersection problem to SMT over integer linear
arithmetic.
}

\subsection{CSS Rule-Merging Outline}

The component parts of our algorithm are given in the schematic diagram in Figure~\ref{fig:schematic}.

From an input CSS file, the first step is to construct a formal model of the CSS that we can manipulate.
This involves extracting the selector and property declaration pairs that appear in the file.
Then the edge ordering, which records the order in which selector/declaration pairs should appear, needs to be built.
To compute the edge ordering, it is important to be able to test whether two selectors may match the same node in some document tree.
Thus, computing the edge ordering requires an algorithm for computing whether the intersection of two selectors is empty.
This process is described in detail in Section~\ref{sec:edgeOrder} and Section~\ref{sec:intersection}.

Once a model has been constructed, it is possible to systematically search for
semantics-preserving rule-merging opportunities that can be applied to the file to reduce its overall size.
The search is formulated as a MaxSAT problem and a MaxSAT solver is used to find
the merging opportunity that represents the largest size saving.
This encoding is described in Section~\ref{sec:graph2maxsat}.
If a merging opportunity is found, it is applied and the search begins for 
another merging opportunity.
If none is found, the minimised CSS file is output.

\begin{figure}
\centering
    \tikzstyle{decision} = [diamond, draw, text width=4.5em, text badly centered]
    \tikzstyle{block} = [rectangle, draw, text width=5em, text centered, rounded corners, minimum height=4em]
    \tikzstyle{line} = [draw, -latex']
    \tikzstyle{cloud} = [draw, ellipse, minimum height=2em]

    \begin{tikzpicture}[node distance=.75cm and 1.5cm, auto]
        \node [cloud] (cssfile) {CSS File};
        \node [block, below=of cssfile] (buildmodel) {Build CSS Model};
        \node [block, below=of buildmodel] (findmerger) {Find Merging Opp.};
        \node [block, right=of buildmodel] (edgeorder) {Construct Edge Order};
        \node [block, right=of edgeorder] (intersection) {Selector Intersection};
        \node [decision, below=of findmerger] (isgood) {Merging Opp. Exists?};
        \node [block, right=of findmerger] (maxsat) {MaxSAT};
        \node [block, left=of isgood] (apply) {Apply Merging Opp.};
        \node [cloud, below=of isgood] (output) {Minified CSS File};
        \path [line] (cssfile) -- (buildmodel);
        \path [line,dashed] (edgeorder) -- node {uses} (buildmodel);
        \path [line,dashed] (intersection) -- node {uses} (edgeorder);
        \path [line] (buildmodel) -- (findmerger);
        \path [line,dashed] (maxsat) -- node {uses} (findmerger);
        \path [line] (findmerger) -- (isgood);
        \path [line] (isgood) -- node {yes} (apply);
        \path [line] (isgood) -- node {no} (output);
        \path [line] (apply) |- (findmerger);
    \end{tikzpicture}
    \caption{\label{fig:schematic} A Schematic Diagram of Our Approach}
\end{figure}
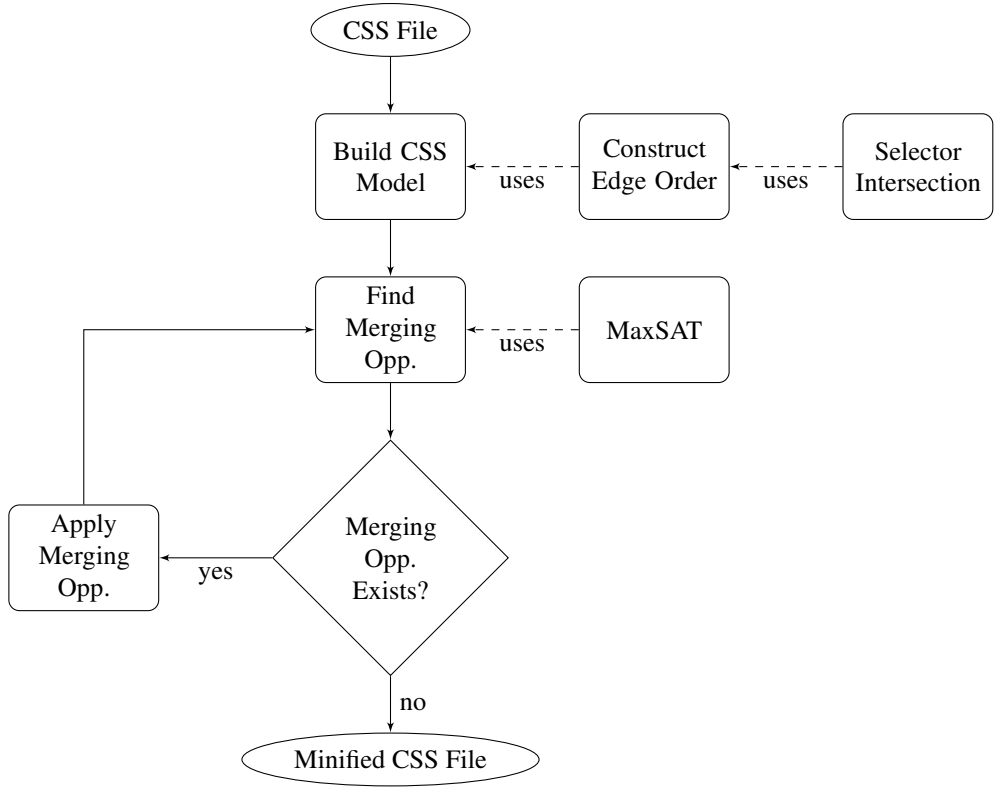

\section{Preliminaries}
\label{sec:prelim}

\subsection{Maths}
As usual, $\Z$ denotes the set of all integers. We use $\N$ to denote the set
$\{0,1,\ldots,\}$ of all natural numbers. Let $\posN = \N \setminus\{0\}$
denote the set of all positive integers.
For an integer $x$ we define $\abs{x}$ to be the absolute value of $x$.
For two integers $i,j$, we write $[i,j]$ to denote the set $\{i,\ldots,j\}$.
Similar notations for open intervals  will also be used for
integers (e.g. $(i,j)$ to mean $\{i+1,\ldots,j-1\}$).
For a set $S$, we will write $S^*$ (resp.~$S^+$)
to denote the set of sequences (resp.~non-empty sequences) of elements from
$S$. When the meaning is clear, if $S$ is a singleton $\{s\}$, we will denote
$\{s\}^*$ (resp.~$\{s\}^+$) by $s^*$ (resp.~$s^+$).
Given a (finite) sequence $\Seq = s_1,\ldots,s_n$, $i \in [0,n]$, and a
new element $s$, we write $\inSeq{\Seq}{i}{s}$ to denote the new sequence
$s_1,\ldots,s_i,s,s_{i+1},\ldots,s_n$, i.e., inserting the element $s$ right
after the position $i$ in $\Seq$.

\subsection{Trees}
\label{sec:prelim-trees}

We review a standard formal definition of
\emph{(rooted) unranked ordered trees}~\cite{GS97,Libkin-survey,Neven-survey}
from the database research community, which use it to model XML.
We will use this in Section~\ref{sec:edgeOrder} to define document trees.
``Unranked'' means that there is no restriction on the number of
children of a node, while ``ordered'' means that the children of each node are linearly
ordered from the left-most child to the right-most child.
An unranked ordered tree consists of a tree domain and a labelling, which we define below.

We identify a node by the unique path in the tree to the given node.
A \defn{tree domain} defines the set of nodes in the tree and is a set of sequences of natural numbers
$\treedom \subseteq \brac{\posN}^\ast$.
The empty sequence is the root node.
The sequence $1$ would be the first child of the root node, while $2$ would be the second child.
The sequence $21$ would denote the first child of the second child of the root node, and so on.
The tree in Figure~\ref{fig:html-tree} has
$\treedom = \set{\eseq, 1, 2, 11, 21, 22, 211, 212, 221, 222}$.

We require that $\treedom$ is both \defn{prefix-closed} and \defn{preceding-sibling closed}.
By prefix-closed we formally mean
$\node\treedir \in \treedom$
implies
$\node \in \treedom$;
this says that the parent of each node is also a node in the tree.
By preceding-sibling closed we formally mean
$\node\treedir \in \treedom$
implies
$\node\treedir' \in \treedom$
for all $\treedir' < \treedir$;
for example, this means that if a node has a second child, it also has a first.
Observe, we write $\node$ for a tree node (element of $\treedom$) and $\treedir$ for an element of $\posN$.

Our trees will also be labelled by items such as element names and attribute values.
In general, a \defn{$\treelabs$-labelled tree} is a pair
$\tree = \tup{\treedom, \treelab}$
where
    $\treedom$ is a tree domain, and
    $\treelab : \treedom \rightarrow \treelabs$ is a labelling function of the nodes of $\tree$ with items from a set of labels $\treelabs$.
A simple encoding of the tree in Figure~\ref{fig:html-tree} will have the labelling
\[
    \begin{array}{rcl}
        \ap{\treelab}{\eseq} &=& \text{\texttt{html}} \\
        \ap{\treelab}{1} &=& \text{\texttt{head}} \\
        \ap{\treelab}{2} &=& \text{\texttt{body}} \\
        \ap{\treelab}{11} &=& \text{\texttt{title}} \\
        \ap{\treelab}{21} &=& \text{\texttt{div id="heading"}} \\
        \ap{\treelab}{22} &=& \text{\texttt{ul}} \\
        \ap{\treelab}{211} &=& \text{\texttt{img class="banner" src="header.jpg"}} \\
        \ap{\treelab}{212} &=& \text{\texttt{h1}} \\
        \ap{\treelab}{221} &=& \text{\texttt{li id="apple" class="fruit"}} \\
        \ap{\treelab}{222} &=& \text{\texttt{li id="broccoli"}} \ .
    \end{array}
\]
Note, in Section~\ref{sec:edgeOrder}, we will use a more complex labelling of DOM trees to fully capture all required features.

Next we recall terminologies for relationships between nodes in trees.
To avoid notational clutter, we deliberately choose notation that resembles the
syntax of CSS, which we define in Section~\ref{sec:edgeOrder}.
In the following, take $\node, \node' \in \treedom$.
We write $\node \treedescendant \node'$
if $\node$ is a (strict) ancestor of $\node'$, i.e., there is some
$\node'' \in \posN^+$ such that $\node' = \node\node''$.
We write $\node \treechild \node'$ if $\node$ is the parent of $\node'$, i.e.,
there is some $\treedir \in \posN$ such that $\node' = \node \treedir$.
We write $\node \treeneighbour \node'$
if $\node$ is the direct preceding sibling of $\node'$, i.e., there is some
$\node''$ and $\treedir \in \posN$
such that
$\node = \node'' (\treedir - 1)$
and
$\node' = \node'' \treedir$.
We write
$\node \treesibling \node'$
if $\node$ is a preceding sibling of $\node'$, i.e.,
there is some $\node''$ and
$\treedir, \treedir' \in \posN$
with
$\treedir < \treedir'$
such that
$\node = \node'' \treedir$
and
$\node' = \node'' \treedir'$.

\OMIT{
A tree domain is a set
$\treedom \subseteq \brac{\N \setminus \set{0}}^\ast$
that is both prefix-closed and preceding-sibling closed.
That is
$\node\treedir \in \treedom$ implies both
$\node \in \treedom$,
and
$\node\treedir' \in \treedom$
for all $\treedir' < \treedir$.

A $\treelabs$-labelled tree is a pair
$\tree = \tup{\treedom, \treelab}$
where
    $\treedom$ is a tree domain, and
    $\treelab : \treedom \rightarrow \treelabs$ is a labelling function of the nodes of $\tree$.

In the following, take $\node, \node' \in \treedom$.
We write
$\node \treedescendant \node'$
if $\node$ is a (strict) ancestor of $\node'$.
That is, there is some
$\node'' \in \N^+$
such that
$\node' = \node\node''$.
We write
$\node \treechild \node'$
if $\node$ is the parent of $\node'$.
That is, there is some
$\treedir \in \N$
such that
$\node' = \node \treedir$.
We write
$\node \treeneighbour \node'$
if $\node$ is the direct previous sibling of $\node'$.
That is, there is some $\node''$ and
$\treedir \in \N$
such that
$\node = \node'' \treedir$
and
$\node' = \node'' (\treedir - 1)$.
We write
$\node \treesibling \node'$
if $\node$ is a previous sibling of $\node'$.
That is, there is some $\node''$ and
$\treedir, \treedir' \in \N$
with
$\treedir < \treedir'$
such that
$\node = \node'' \treedir'$
and
$\node' = \node'' \treedir$.
}

\subsection{Max-SAT}
In this paper,
we will reduce the problem of identifying an optimal merging opportunity to partial weighted Max-SAT \cite{maxsat_benchmarks}.
Partial weighted
Max-SAT formulas are boolean formulas in CNF with \emph{hard constraints} (a.k.a.\ clauses that must be satisfied) and \emph{soft constraints} (a.k.a. clauses that may be violated with a specified cost or weight).
A minimal-cost solution is the goal.
Note that our clauses will not be given in CNF, but standard
satisfiability-preserving conversions to CNF exist (e.g.\ see \cite{Bradley-book})
which straightforwardly extend to partial weighted Max-SAT.

We will present Max-SAT problems in the following form
$(\hardcons, \softcons)$
where
\begin{itemize}
\item
    $\hardcons$ are the hard constraints
    -- that is, a set of boolean formulas that \emph{must} be satisfied --
    and
\item
    $\softcons$ are the soft constraints
    -- that is a set of pairs
       $(\fmla, \wght)$
       where $\fmla$ is a boolean formula and $\wght \in \N$ is the weight of the constraint.
\end{itemize}
Intuitively, the weight of a soft constraint is the cost of not satisfying the constraint.
The partial weighted Max-SAT problem is to find an assignment to the boolean variables that satisfies all hard constraints and minimises the sum of the weights of unsatisfied soft constraints.

\subsection{Existential Presburger Arithmetic}

In this paper, we present several encodings into \emph{existential Presburger arithmetic}, also known as the \emph{quantifier-free theory of integer linear arithmetic}.
Here, we use extended existential Presburger formulas
$\exists x_1,\ldots,x_k.\varphi$
where $\varphi$ is a boolean combination of expressions
$\sum_{i=1}^k a_ix_i \sim b$
for constants
$a_1,\ldots,a_k,b\in \Z$
and
$\sim \in \set{\leq,\geq,<,>,=}$
with constants represented in binary.
A formula is satisfiable if there is an assignment of a non-negative integer to each variable
$x_1, \ldots, x_k$
such that $\varphi$ is satisfied.
For example, a simple existential Presburger formula is shown below
\[
    \exists x, y, z\ .\ 0 > 2y + z - x \land 0 > z - y
\]
which is satisfied by any assignment to the variables $x$, $y$, and $z$ such that
$x < 2y + z$ and $y > z$.
The assignment $x = 2, y = 3, z = 0$ is one such satisfying assignment.
Note, when writing existential Presburger formulas, we will allow formulas that do not strictly match the format specified above.
This is to aid readability and we will always be able to rewrite the formulas into the correct format.
The above example formula may be written
\[
   \exists x, y, z\ .\ x < 2y + z \land y < z \ .
\]

It is well-known that satisfiability of existential Presburger formulas is $\NP$-complete even with the above extensions (cf.~\cite{Sca84}).
Such problems can be solved efficiently by SMT solvers such as Z3.

\section{Formal definition of CSS rule-merging and minification}
\label{sec:css2graph}

The semantics of a CSS file can be formally modelled as a CSS-graph.
A \defn{CSS-graph} is a 5-tuple $\CSSgraph = \inCSSgraph$, where
$(\selNodes,\propNodes,\CSSedges)$ is a bipartite graph (i.e.\ the set
$\CSSvertices$ of vertices is partitioned into two sets $\selNodes$ and
$\propNodes$ with $\CSSedges \subseteq \selNodes \times \propNodes$),
$\edgeOrder\ \subseteq \CSSedges \times \CSSedges$ gives the order dependency
on the edges, and $\nodeWeight: \selNodes \cup \propNodes \to \posZ$ is a weight function
on the set of vertices. Vertices in $\selNodes$ are called \defn{selectors},
whereas vertices in $\propNodes$ are called \defn{properties}.
For example,
the CSS graph corresponding to the CSS file in Figure~\ref{fig:simpleEx} with the selector \texttt{.vegetable} added to the third rule is the following bipartite graph
\begin{center}
\bigskip
\begin{psmatrix}[rowsep=0.5ex]
    \rnode[r]{apple}{\texttt{\#apple}}  &  \rnode[l]{blue}{\texttt{color:blue}} \\
    \rnode[r]{fruit}{\texttt{.fruit}}  &  \rnode[l]{small}{\texttt{font-size:small}} \\
    \rnode[r]{broccoli}{\texttt{\#broccoli}}  &  \rnode[l]{red}{\texttt{color:red}} \\
    \rnode[r]{orange}{\texttt{\#orange}}  &  \rnode[l]{large}{\texttt{font-size:large}} \\
    \rnode[r]{vegetable}{\texttt{.vegetable}}  &  \rnode[l]{lightblue}{\texttt{background-color:lightblue}} \\
    \rnode[r]{tomato}{\texttt{\#tomato}}
    \ncline{-}{apple}{blue}
    \ncline{-}{apple}{small}
    \ncline{-}{fruit}{red}
    \ncline{-}{fruit}{large}
    \ncline{-}{broccoli}{red}
    \ncline{-}{broccoli}{large}
    \ncline{-}{orange}{blue}
    \ncline{-}{vegetable}{blue}
    \ncline{-}{tomato}{red}
    \ncline{-}{tomato}{large}
    \ncline{-}{tomato}{lightblue}
\end{psmatrix}
\bigskip
\end{center}
such that the weight of a node is $1+(\text{length of the text})$, e.g.,
$\nodeWeight(\text{\texttt{\#orange}}) = 1+7 = 8$. The reason for the extra $+1$ is
to account for the selector/property separators (i.e.\ commas or semi-colons),
as well as the character `\verb+{+' (resp.~`\verb+}+') at the end of the
sequence
of selectors (resp.~properties).
That is, in a rule, selectors are followed by a comma if another selector follows, or `\verb+{+' if it is the last selector, and properties are followed by a semi-color if another property follows, or `\verb+}+' if it is the last property declaration.
We refer to $\edgeOrder$ as the \defn{edge order} and it intuitively states that one edge should appear strictly before the other in any CSS file represented by the graph.
In this case we have
$(\texttt{.fruit}, \texttt{color:red}) \edgeOrder (\texttt{.vegetable}, \texttt{color:blue})$
because any node labeled by \texttt{.fruit} and \texttt{.vegetable} but no IDs should be assigned the property \texttt{color:blue}.
There are no other orderings since each node can have at most one ID\footnote{Strictly speaking,
this is only true if we are only dealing with namespace \texttt{html} (which
is the case almost always in web programming and so is a reasonable assumption
unless the user specifies otherwise). A node could have multiple IDs, each with
a different namespace. See Section \ref{sec:edgeOrder}.}
and \texttt{.fruit} and \texttt{.vegetable} are the selectors of the lowest specificity in the file.
More details on how to compute $\edgeOrder$ from a CSS file are given in Section~\ref{sec:edge-order-from-css}.

A \defn{biclique} in $\CSSgraph$ is a complete bipartite subgraph, i.e.,
a pair $\biclique = \inbiclique$ of a nonempty set $\selsBucket
\subseteq \selNodes$ of selectors and a nonempty set $\propsSet
\subseteq \propNodes$ of properties such that $\selsBucket \times \propsSet
\subseteq \CSSedges$ (i.e. each pair of a selector and a property in the rule
is an edge).
A \defn{(CSS) rule} is a pair $\bucket = (\biclique,\propOrder)$ of a
biclique and a total order $\propOrder$ on the set of properties.
The reason for the order on the properties, but not on the selectors, is
illustrated by the following example of a CSS rule:
\begin{center}
    \begin{minted}{css}
        .a, .b { color:red; color:rgba(255,0,0,0.5) }
    \end{minted}
\end{center}
That is, nodes matching \texttt{.a} \emph{or} \texttt{.b}
are assigned a semi-transparent red with solid red being defined as a
\emph{fallback} when the semi-transparent red is not supported by the document
reader. Therefore, swapping the order of the properties changes the semantics
of the rule, but swapping the order of the selectors does not.
We will often denote a rule as $(\selsBucket,\propsBucket)$ where
$\propsBucket = \{p_i\}_{i=1}^m$
if
$\propsSet = \{p_1,\ldots,p_m\}$ and $p_1 \propOrder \cdots \propOrder p_m$.

A \defn{covering} $\covering$ of $\CSSgraph$ is a sequence of rules that
\emph{covers} $\CSSgraph$ (i.e. the union of all the edges in $\covering$
equals $\CSSedges$). Given an edge $e \in \CSSedges$, the \defn{index}
$\Index(e)$ of $e$ is defined to be the index of the \emph{last} rule in the
sequence $\covering$ that contains $e$. We say that $\covering$ is \defn{valid}
if, for all two edges $e = (s,p), e' = (s',p')$ in $\CSSedges$
with $e \edgeOrder e'$, either of the following holds:
\begin{itemize}
    \item $\Index(e) < \Index(e')$
    \item 
        $\Index(e) = \Index(e')$ and, if $(\selsBucket,\{p_i\}_{i=1}^m)$ is
        the rule at position $\Index(e)$ in $\covering$, it is the
        case that $p = p_j$ and $p' = p_k$ with $j \leq k$.
\end{itemize}
In the example of Figure~\ref{fig:simpleEx} with the selector \texttt{.vegetable} in the third rule, we can verify that it is indeed a valid covering of itself by observing the only ordering is
$(\texttt{.fruit}, \texttt{color:red}) \edgeOrder (\texttt{.vegetable}, \texttt{color:blue})$
and we have
\[
    \Index((\mathttt{.fruit}, \mathttt{color:red})) = 2
    \quad\text{and}\quad
    \Index((\mathttt{.vegetable}, \mathttt{color:blue})) = 3 \ .
\]

This \emph{last-occurrence} semantics reflects the cascading aspect of
CSS. To relate this to the world of CSS, the original CSS file $F$ may be
represented by a CSS-graph $\CSSgraph$, but $F$ also turns out to be a valid
covering of $\CSSgraph$. In fact, the set of valid coverings of $\CSSgraph$
correspond to CSS files that are equivalent (up to reordering of selectors and
property declarations) to the original CSS file.
That is, if two files cover the same graph, then they will specify the same property declarations on any node of any given DOM.

To define the optimisation problem, we need to define the weight of a rule
and the weight of a covering. To this end, we extend the function
$\nodeWeight$ to rules and coverings by summing the weights of the nodes.
More precisely, given a rule $\bucket = \inbucket$, define
\[
    \Weight(\bucket) = \sum_{w \in \selsBucket \cup \propsBucket} \nodeWeight(w).
    \]
Similarly, given a covering $\covering = \incovering{1}{m}$, the weight
$\Weight(\covering)$ of $\covering$ is $\sum_{i=1}^m \Weight(\bucket_i)$.
It is easy to verify that the weight of a rule (resp.~covering) corresponds
to the number of non-whitespace characters in a CSS rule (resp.~file).
The \defn{minification problem} is, given a CSS-graph $\CSSgraph$, to compute a
valid covering with the minimum weight.

\paragraph{(Optimal) Rule-Merging Problem}
Given a CSS-graph $\CSSgraph$ and a covering $\covering$,
we define the \defn{trim} $\Trim{\covering}$ of $\covering$ to be the covering
$\covering'$ obtained by removing from each rule $\bucket = \inbucket$ (say
at position $i$) in
$\covering$ all nodes $v \in \selsBucket \cup \propsBucket$ that are not
incident
to at least one edge $e$ with $\Index(e) = i$ (i.e. the last occurrence of
$e$ in $\covering$). 
Such nodes $v$ may be removed since they do not affect
the validity of the covering $\covering$.

We can now explain formally the file size reduction shown in Figure~\ref{fig:trim-example}.
First observe in Figure~\ref{fig:trimmed-css} that the second rule
\begin{center}
\begin{minted}{css}
    .fruit, #broccoli { color:red; font-size:large }
\end{minted}
\end{center}
has been removed.
Consider the node \texttt{.fruit} and its incident edges
$(\mathttt{.fruit}, \mathttt{color:red})$
and
$(\mathttt{.fruit}, \mathttt{font-size:large})$.
Both of these edges have \Index\ $5$ ($\neq 2$) since they also appear in the last rule of Figure~\ref{fig:bigger-css}.
Thus we can remove the \texttt{.fruit} node from this rule.
A similar argument can be made for all nodes in this rule, which means that we remove them, leaving an empty rule (not shown).
In the fourth rule, the node \texttt{color:red} is incident only to
$(\mathttt{\#tomato}, \mathttt{color:red})$
which has \Index\ $5$ ($\neq 4$).
The situation is the same for the node \texttt{font-size:large}, thus both of these nodes are removed from the rule.

The trim $\Trim{\covering}$ can be
computed from $\covering$ and $\CSSgraph$ in polynomial time in a
straightforward way. More precisely, first build a hashmap for the index
function.  Second, go through all rules $\bucket$ in the covering $\covering$,
each node $v$ in $\bucket$, and each edge $e$
incident with $v$ checking if at least one such $e$ satisfies $\Index(e) = i$,
where $\bucket$ is the $i$th rule in $\covering$.
Note that
$\Trim{\covering}$ is
uniquely defined given $\covering$.

We define a \defn{(rule)-merging opportunity}
to
be a pair $(\bucket,j)$ of rule $\bucket$ and a number $j \in (0,|\covering|)$
such that $\inSeq{\covering}{j}{\bucket}$ is a valid covering
of $\CSSgraph$. The \defn{result} of applying this merging opportunity
is the covering $\Trim{\inSeq{\covering}{j}{\bucket}}$ obtained by trimming
$\inSeq{\covering}{j}{\bucket}$. The \defn{(rule)-merging problem} can be 
defined as
follows: given a CSS-graph $\CSSgraph$ and a valid covering $\covering$, find
a merging opportunity that results in a covering with the minimum weight.
This rule-merging problem is NP-hard even in the non-weighted version
\cite{Yan78,Peeters03}.


\section{CSS Selector Formalisation and its Intersection Problem}
\label{sec:edgeOrder}

In this section, we will show how to efficiently compute a
CSS-graph $\CSSgraph = \inCSSgraph$ from a given CSS file with the help of a
fast solver of quantifier-free theory of integer linear arithmetic, e.g., Z3
\cite{Z3}.
The key challenge is how to extract the order dependency $\edgeOrder$ from a
CSS file, which requires an algorithm for the
\defn{(selector-)intersection problem}, i.e., to check whether two given
selectors can be matched by the same element in \emph{some} document.
To address this, we provide a full formalisation of CSS3
selectors~\cite{CSS3sel} and a fast algorithm for the intersection
problem. Since our algorithm for the intersection problem is technically very
involved, we provide a short
intuitive explanation behind the algorithm in this section and leave the
details to Section \ref{sec:intersection}.

\subsection{Definition of Document Trees}
We define the semantics of CSS3 in terms of Document Object Models (DOMs),
which we also refer to as document trees.
The reader may find it helpful to recall the definition of trees from
Section~\ref{sec:prelim-trees}.

A document tree consists of a number of elements, which in turn may have sub-elements as children.
Each node has a \emph{type} consisting of an element name
(a.k.a. tag name) and a namespace. For example, an element
\texttt{p} in the default \texttt{html} namespace is a paragraph.
Namespaces commonly feature in programming languages (e.g. C++) to allow the use of multiple libraries whilst minimising the risk of overloading names.
For example, the HTML designers introduced a \texttt{div} element to represent a section of an HTML document.
Independent designers of an XML representation of mathematical formulas may represent division using elements also with the name \texttt{div}.
Confusion is avoided by the designers specifying a namespace in addition to the required element names.
In this case, the HTML designers would introduce the \texttt{div} element to the \texttt{html} namespace, while the mathematical \texttt{div} may belong to a namespace \texttt{math}.
Hence, there is no confusion between \texttt{html:div} and \texttt{math:div}.
As an aside, note that an HTML file may contain multiple namespaces, e.g., see \cite{html-namespace}.

Moreover, nodes may also be labelled by attributes, which take string values.
For example, an HTML \verb+img+ element has a \verb+src+ attribute specifying the source of the image.
Finally, a node may be labelled by a number of \emph{pseudo-classes}.
For example $\psenabled$ means that the node is enabled and the user may interact with it.
The set of pseudo-classes is fixed by the CSS specification.

We first the formal definition before returning to the example in Section~\ref{sec:html-example}.

\subsubsection{Formal Definition}

In the formal definition below we permit a possibly infinite set of element,
namespace, and attribute names. CSS stylesheets are \emph{not} limited to HTML
documents (which have a fixed set of element names, but not attribute names
since you can create custom \verb+data-*+ attributes), but they can also be
applied to documents of other types (e.g. XML) that permit custom element names,
namespaces, and attribute names.
Thus, the sets of possible names \emph{cannot} be fixed to finite sets from the
point of view of CSS selectors, which may be applied to any document.

When it is known that the set of elements or attribute names is fixed (e.g. when only considering HTML), it is possible to adapt our approach.
In particular, the small model property in Proposition~\ref{prop:boundedtypes} may be avoided, slightly simplifying the technique.

We denote the set of \defn{pseudo-classes} as
\[
    \pclss = %
    \set{ %
        \begin{array}{c} %
            \pslink, %
            \psvisited, %
            \pshover, %
            \psactive, %
            \psfocus, %
            \pstarget, %
            \\ %
            \psenabled, %
            \psdisabled, %
            \pschecked, %
            \psroot, %
            \psempty %
        \end{array} %
    } \ . %
\]
Then, given
    a possibly infinite set of \defn{namespaces} $\nspaces$,
    a possibly infinite set of \defn{element names} $\eles$,
    a possibly infinite set of \defn{attribute names} $\atts$,
    and a finite alphabet\footnote{See the notes at the end of
    the section.} $\alphabet$ containing the special characters $\cspace$ and $\cdash$ (space and dash respectively),
a \defn{document tree} is a $\ialphabet$-labelled tree
$\tup{\treedom, \treelab}$, where
\[
    \ialphabet := \brac{\nspaces \times \eles \times
                  \finfuns{\nspaces \times \atts}
                          {\alphabet^\ast}
                  \times
                  2^{\pclss}} \ .
\]
Here the notation
$\finfuns{\nssatts}{\alphabet^\ast}$
denotes the set of partial functions from $\brac{\nssatts}$
to $\alphabet^\ast$ whose domain is finite.
In other words,
each node in a document tree is labeled by a namespace, an element, a function
associating a finite number of namespace-attribute pairs with attribute values
(strings), and zero or more of the pseudo-classes.
For a function $\attfun \in \finfuns{\nssatts}{\alphabet^\ast}$
we say $\ap{\attfun}{\ns,\att} = \attundef$
when $\attfun$ is undefined over $\ns \in \nspaces$ and $\att \in \atts$,
where $\attundef \notin \alphabet^\ast$ is a special undefined value.
Furthermore, we assume special attribute names $\classatt, \idatt \in
\atts$ that will be used to attach classes (see later) and IDs to nodes.

When
$\ap{\treelab}{\node} = \tup{\ns,\ele, \attfun, \pclss}$
we define the following projections of the labelling function
\[
    \begin{array}{rcl}
        \ap{\treelabns}{\node} &=& \ns,\\ %
        \ap{\treelabele}{\node} &=& \ele,\\ %
        \ap{\treelabatts}{\node} &=& \attfun,\ \text{and} \\ %
        \ap{\treelabpclss}{\node} &=& \pclss .
    \end{array}
\]
We will use the following standard XML notation:
for an element name $\ele$, a namespace $\ns$, 
and an attribute name $\att$, let $\qele{\ns}{\ele}$ (resp.~$\qele{\ns}{\att}$)
denote the pair $(\ns,\ele)$ (resp.~$(\ns,\att)$). The notation helps to clarify
the role of namespaces as a means of providing contextual or scoping information
to an element/attribute.

There are several consistency constraints on the node labellings.
\begin{compactitem}
\item
    For each $\ns \in \nspaces$, there are \emph{no} two nodes in the tree
    with the same value of $\qele{\ns}{\idatt}$.
\item
    A node cannot be labelled by both $\pslink$ and $\psvisited$.
\item
    A node cannot be labelled by both $\psenabled$ and $\psdisabled$.
\item
    Only one node in the tree may be labelled $\pstarget$.
\item
    A node contains the label $\psroot$ iff it is the root node.
\item
    A node labelled $\psempty$ must have no children.
\end{compactitem}
From now on, we will tacitly assume that document trees satisfy these
consistency constraints. We write $\treeset{\nspaces}{\eles}{\atts}{\alphabet}$
for the set of such trees.

\subsubsection{Example}

Consider the HTML file in Figure~\ref{fig:simple-html}.
This file can be represented precisely by the tree in Figure~\ref{fig:example-dom-tree}.
In this tree, each node is first labelled by its type, which consists of its namespace and its element name.
In all cases, the namespace is the \texttt{html} namespace, while the element names are exactly as in the HTML file.
In addition, we label each node with the attributes and pseudo-classes with which they are labelled.
The \texttt{html:html} node is labelled with $\psroot$ since it is the root element of the tree.
The \texttt{html:div} node is labelled with the ID \texttt{heading}.
The \texttt{html:img} node is labelled with the attribute \texttt{class} with value \texttt{banner} and the attribute \texttt{html:src} with value \texttt{banner.jpg}, as well as the pseudo-class $\psempty$ indicating that the node has no contents.
The remaining nodes however are not labelled $\psempty$ since even the leaf nodes contain some text (which is not represented in our tree model as it is not matchable by a selector).
Hence, a node with no child may still be non-empty.

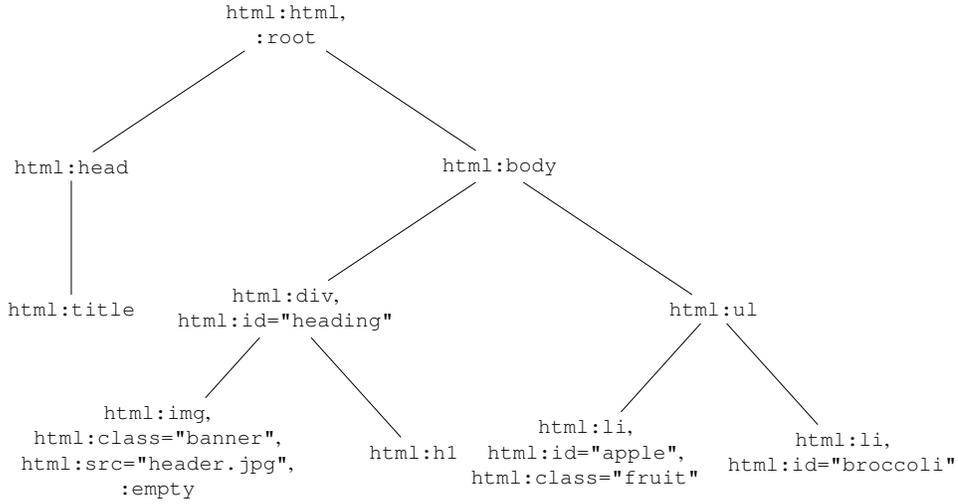
\begin{figure}
    \scalebox{.8}{
    \begin{tikzpicture}[level 1/.style={sibling distance=45ex},
                        level 2/.style={sibling distance=45ex},
                        level 3/.style={sibling distance=27ex},
                        level distance=15ex,
                        every node/.style={align=center}]
        \node {\texttt{html:html},\\
               \texttt{:root}}
            child {node {\texttt{html:head}}
                child {node {\texttt{html:title}}}
            }
            child {node {\texttt{html:body}}
                child {node {\texttt{html:div},\\
                             \texttt{html:id="heading"}}
                    child {node {\texttt{html:img},\\
                                 \texttt{html:class="banner"},\\
                                 \texttt{html:src="header.jpg"},\\
                                 \texttt{:empty}}}
                    child {node {\texttt{html:h1}}}
                }
                child {node {\texttt{html:ul}}
                    child {node {\texttt{html:li},\\
                                 \texttt{html:id="apple"},\\
                                 \texttt{html:class="fruit"}}}
                    child {node {\texttt{html:li},\\
                                 \texttt{html:id="broccoli"}}}
                }
            };
    \end{tikzpicture}
    }
    \caption{\label{fig:example-dom-tree}A DOM tree representation of the HTML file.}
\end{figure}

Formally, the DOM tree is
$\tup{\treedom, \treelab}$
where
$\treedom = \set{\eseq, 1, 2, 11, 21, 22, 211, 212, 221, 222}$
and
\[
    \begin{array}{rcl}
        \ap{\treelab}{\eseq} &=& \tup{\mathttt{html},
                                      \mathttt{html},
                                      \emptyset,
                                      \set{\psroot}} \\
        \ap{\treelab}{1} &=& \tup{\mathttt{html},
                                  \mathttt{head},
                                  \emptyset,
                                  \emptyset} \\
        \ap{\treelab}{2} &=& \tup{\mathttt{html},
                                  \mathttt{body},
                                  \emptyset,
                                  \emptyset} \\
        \ap{\treelab}{11} &=& \tup{\mathttt{html},
                                   \mathttt{title},
                                   \emptyset,
                                   \emptyset} \\
        \ap{\treelab}{21} &=& \tup{\mathttt{html},
                                   \mathttt{div},
                                   \tup{\mathttt{html}, \mathttt{id}}
                                   \mapsto
                                   \mathttt{heading},
                                   \emptyset} \\
        \ap{\treelab}{22} &=& \tup{\mathttt{html},
                                   \mathttt{ul},
                                   \emptyset,
                                   \emptyset} \\
        \ap{\treelab}{211} &=&  \tup{\mathttt{html},
                                     \mathttt{img},
                                     \tup{\begin{array}{l}
                                              \tup{\mathttt{html}, \mathttt{class}}
                                              \mapsto
                                              \mathttt{banner}, \\
                                              \tup{\mathttt{html}, \mathttt{src}}
                                              \mapsto
                                              \mathttt{header.jpg}
                                     \end{array}},
                                     \set{\psempty}} \\
        \ap{\treelab}{212} &=& \tup{\mathttt{html},
                                    \mathttt{h1},
                                    \emptyset,
                                    \emptyset} \\
        \ap{\treelab}{221} &=&  \tup{\mathttt{html},
                                     \mathttt{li},
                                     \tup{\begin{array}{l}
                                              \tup{\mathttt{html}, \mathttt{id}}
                                              \mapsto
                                              \mathttt{apple}, \\
                                              \tup{\mathttt{html}, \mathttt{class}}
                                              \mapsto
                                              \mathttt{fruit}
                                     \end{array}},
                                     \emptyset} \\
        \ap{\treelab}{222} &=& \tup{\mathttt{html},
                                    \mathttt{li},
                                    \tup{\mathttt{html}, \mathttt{id}}
                                    \mapsto
                                    \mathttt{broccoli},
                                    \emptyset} \ .
    \end{array}
\]

\subsection{Definition of CSS3 selectors}
\label{sec:css-selectors-def}

In the following sections we define CSS selectors syntax and semantics.
Informally, a CSS selector consists of \emph{node selectors} $\csssim$ --- which
match individual nodes in the tree --- combined using the operators
$\cssdescendant$, $\csschild$, $\cssneighbour$, and $\csssibling$.
These operators express the descendant-of, child-of, neighbour-of, and sibling-of relations respectively.
Note that the blank space character is used instead of $\cssdescendant$ in
CSS3, though we opt for the latter in the formalisation for the sake of
readability. So, for example, we use $\text{\texttt{.journal}} \cssdescendant
\text{\texttt{.science}}$ (i.e. choose all nodes with class \texttt{.science}
that is a descendant of nodes with class \texttt{.journal}) instead of
the standard syntax \verb+.journal .science+.
In addition, in order to distinguish syntax from meaning, we use slightly
different notation to their counterpart semantical operators
$\treedescendant$, $\treechild$, $\treeneighbour$, and $\treesibling$.

We remark that a comma (\texttt{,}) is not an operator in the CSS selector syntax.
Instead a \emph{selector group} is a comma-separated list of selectors that is matched if any of its selectors is matched.
A CSS rule thus consists of a selector group and a list of property declarations.
For the purposes of rule-merging it is desirable to treat selectors individually as it allows the most flexibility in reorganising the CSS file.
Hence we treat a selector group simply as a set of selectors that can be separated if needed.

A node selector $\csssim$ has the form $\csstype\cssconds$
where $\csstype$ constrains the \emph{type} of the node.
That is, $\csstype$ places restrictions on the element label of the node,
e.g., \verb+p+ for paragraph elements and \verb+*+ (or an empty string)
for all elements.
The rest of the selector is a set $\cssconds$ of \emph{simple}
selectors (written as a concatenation of strings representing these simple
selectors) that assert atomic properties of the node.
There are four types of simple selectors.
\OMIT{
The first form includes class and ID selectors $\isclass{\attval}$ and
$\isid{\attval}$, which assert that the node has a class or ID $\attval$
respectively. Note, a node may have several classes (given as a string by a
space-separated list of classes) but only one ID.
}

\smallskip
\noindent
\underline{\emph{Type 1}:} attribute selectors of the form
$\opattns{\ns}{\att}{\attop}{\attval}$
for some namespace $\ns$, attribute $\att$, operator
$\attop \in \set{\opis, \ophas, \opbegin,
                 \opstrbegin, \opstrend, \opstrsub}$,
and some string $\attval \in \alphabet^\ast$.
We may write $\opatt{\att}{\attop}{\attval}$ to mean that a node can be matched
by $\opattns{\ns}{\att}{\attop}{\attval}$ for some $\ns$.
The operators $\opis, \opstrbegin, \opstrend$, and $\opstrsub$
take their meaning from regular expressions.
That is, equals, begins-with, ends-with, and contains respectively.
The remaining operators are more subtle.
The $\ophas$ operator means the attribute is a string of space-separated values, one of which is $\attval$.
The $\opbegin$ operator is intended for use with language identification,
e.g., as in the attribute selector
$\opatt{\texttt{lang}}{\opbegin}{\texttt{"en-GB"}}$ to mean ``English'' as
spoken in Great Britain.
Thus $\opbegin$ asserts that either the attribute has value $\attval$ or is a string of the form
$\attval\mdash\attval'$
where $\mdash$ is the dash character, and $\attval'$ is some string.
Note that if the \texttt{lang} attribute value of a node is \verb+en-GB+,
the node also matches the simple selector
$\opatt{\texttt{lang}}{\opbegin}{\texttt{"en"}}$. In addition, recall that
$\classatt$ and $\idatt$ are two special attribute names. For this reason, CSS introduces
the shorthands $\isclass{\attval}$ and $\isid{\attval}$ for,
respectively, the simple selectors $\opatt{\texttt{class}}{\ophas}{\attval}$
and $\opatt{\texttt{id}}{\opis}{\attval}$, i.e., asserting that the node has a
class or ID $\attval$. An example of a valid CSS selector is
the selector \verb+h1.fruit.vegetable+, which chooses all nodes with class
\verb+fruit+ and \verb+vegetable+, and element name \verb+h1+ (which includes
the following two elements:
\verb+<h1 class="fruit vegetable">+ and \verb+<h1 class="vegetable fruit">+).

\smallskip
\noindent
\underline{\emph{Type 2}.}
attribute selectors of the form $\hasattns{\ns}{\att}$, asserting that the
attribute is merely defined on the node. As before, we may write
$\hasatt{\att}$ to mean that the node may be matched by $\hasattns{\ns}{\att}$
for some namespace $\ns$. As an example,
$\texttt{img}\hasatt{\texttt{alt}}$ chooses all \verb+img+ elements where the
attribute \verb+alt+ is defined.

\smallskip
\noindent
\underline{\emph{Type 3}.} pseudo-class label of a node, e.g.,
the selector $\psenabled$ ensures the node is currently enabled in the
document. There are several further kinds of pseudo-classes that assert
counting constraints on the children of a selected node. Of particular interest
are selectors such as $\psnthchild{\coefa}{\coefb}$, which
assert that the node has a particular position in the sibling order.
For example, \acmeasychair{}{the selector}\ $\psnthchild{2}{1}$ means there is some $n \geq 0$ such that the
node is the $(2n + 1)$st node in the sibling order.

\smallskip
\noindent
\underline{\emph{Type 4}.} negations $\cssneg{\csscond}$ of a simple selector
$\csscond$ with the condition that negations cannot be nested or apply to multiple atoms. For example,
\verb+:not(.fruit):not(.vegetable)+ is a valid selector, whereas
\verb+:not(:not(.vegetable))+ and
\verb+:not(.fruit.vegetable)+ are \emph{not} a valid selectors.

\subsubsection{Syntax}

Fix the sets $\nspaces$,
    $\eles$,
    $\atts$, and
    $\alphabet$.
We define $\selectors$ for the set of \defn{(CSS) selectors} and
$\nodeselectors$ for the set of \defn{node selectors}.
In the following $\synalt$ will be used to separate syntax alternatives, while $\csspipe$ is an item of CSS syntax.
The set $\selectors$ is the set of formulas $\css$ defined as:
\[
    \css ::= \csssim \synalt %
           \css \cssdescendant \csssim \synalt %
           \css \csschild \csssim \synalt %
           \css \cssneighbour \csssim \synalt %
           \css \csssibling \csssim %
\]
where $\csssim \in \nodeselectors$ is a \emph{node selector} with syntax
$
    \csssim ::=
    \csstype
    \cssconds
$
with $\csstype$ having the form
\[
    \csstype ::= \isany \synalt
    \isanyns{\ns} \synalt
    \isele{\ele} \synalt
    \iselens{\ns}{\ele}
\]
where
$\ns \in \nspaces$
and
$\ele \in \eles$
and $\cssconds$ is a (possibly empty) set of conditions $\csscond$ with syntax
\[
    \csscond ::= \csscondnoneg \synalt
               \cssneg{\csssimnoneg}
\]
where $\csscondnoneg$ and $\csssimnoneg$ are conditions that do not contain
negations, i.e.:
\begingroup
\allowdisplaybreaks
\begin{eqnarray*}
    \csssimnoneg & ::= & \isany \synalt
                   \isanyns{\ns} \synalt
                   \isele{\ele} \synalt
                   \iselens{\ns}{\ele} \synalt
                   \csscondnoneg \\
    \csscondnoneg & ::= &
        \hasattns{\ns}{\att} \synalt %
        \opattns{\ns}{\att}{\attop}{\attval} \synalt
        \hasatt{\att} \synalt %
        \opatt{\att}{\attop}{\attval} \synalt
        \\ %
        & & \pslink \synalt %
        \psvisited \synalt %
        \pshover \synalt %
        \psactive \synalt %
        \psfocus \synalt %
        \\ %
        & & \psenabled \synalt %
        \psdisabled \synalt %
        \pschecked \synalt %
        \\ %
        & & \psroot \synalt %
        \psempty \synalt %
        \pstarget \synalt %
        \\ %
        & & \psnthchild{\coefa}{\coefb} \synalt %
        \psnthlastchild{\coefa}{\coefb} \synalt %
        \\ %
        & & \psnthoftype{\coefa}{\coefb} \synalt %
        \psnthlastoftype{\coefa}{\coefb} %
        \\ %
        & & \psonlychild \synalt %
        \psonlyoftype \\
     \attop & ::= & \opis \synalt \ophas \synalt \opbegin \synalt \opstrbegin
     \synalt \opstrend \synalt \opstrsub
\end{eqnarray*}
\endgroup
where $\ns \in \nspaces$,
     $\ele \in \eles$,
     $\att \in \atts$,
     $\attval \in \alphabet^\ast$, and
     $\coefa, \coefb \in \Z$.
Whenever $\cssconds$ is the empty set, we will denote the
node selector $\csstype\cssconds$ as $\csstype$ instead of $\csstype\emptyset$.

\subsubsection{Semantics}
The semantics
of a selector is defined with respect to a document tree and a node in the tree.
More precisely, the semantics of CSS3 selectors $\css$ are defined inductively
with respect to a document tree $\tree = \tup{\treedom, \treelab}$
and a node $\node \in \treedom$ as follows.
(Note: (1) $\pcls$ ranges over the set $\pclss$ of pseudo-classes,
(2) $\attval\attval'$ is the concatenation of the strings $\attval$ and
$\attval'$, and (3) $\attval\mdash\attval'$ is the concatenation of $\attval$
and $\attval'$ with a ``$\mdash$'' in between.)
\begingroup
\allowdisplaybreaks
\begin{alignat*}{2}
        \tree, \node &\models \css \cssdescendant \csssim %
        &\quad\iffdef\quad %
        &\exists \node' \treedescendant \node \ . %
                \brac{\tree, \node' \models \css} %
                \textand
                \brac{\tree, \node \models \csssim} %
        \\ %
        \tree, \node &\models \css \csschild \csssim %
        &\quad\iffdef\quad %
        &\exists \node' \treechild \node \ . %
            \brac{\tree, \node' \models \css} %
            \textand
            \brac{\tree, \node \models \csssim} %
        \\ %
        \tree, \node &\models \css \cssneighbour \csssim %
        &\quad\iffdef\quad %
        &\exists \node' \treeneighbour \node \ . %
            \brac{\tree, \node' \models \css} %
            \textand %
            \brac{\tree, \node \models \csssim} %
        \\ %
        \tree, \node &\models \css \csssibling \csssim %
        &\quad\iffdef\quad %
        &\exists \node' \treesibling \node \ . %
            \brac{\tree, \node' \models \css} %
            \textand %
            \brac{\tree, \node \models \csssim} %
        \\
        \tree, \node &\models \csstype \cssconds %
        &\quad\iffdef\quad %
        &\brac{\tree, \node \models \csstype} %
        \textand %
        \forall \csscond \in \cssconds \ . %
            \brac{\tree, \node \models \csscond} %
        \\ %
        \tree, \node &\models \isanyns{\ns} %
        &\quad\iffdef\quad %
        &\ns = \ap{\treelabns}{\node} %
        \\ %
        \tree, \node &\models \isany %
        &\quad\iffdef\quad %
        &\top %
        \\ %
        \tree, \node &\models \iselens{\ns}{\ele} %
        &\quad\iffdef\quad %
        &\ns = \ap{\treelabns}{\node} %
        \land %
        \ele = \ap{\treelabele}{\node} \\ %
        \tree, \node &\models \isele{\ele} %
        &\quad\iffdef\quad %
        &\tree, \node \models \iselens{\ns}{\ele} \quad \text{for some
        $\ns \in \nspaces$}%
        \\ %
        \tree, \node &\models \pcls %
        &\quad\iffdef\quad %
        &\pcls \in \ap{\treelabpclss}{\node} %
        \\
        \tree, \node &\models \cssneg{\csscondnoneg} %
        &\quad\iffdef\quad %
        &\neg\brac{\tree, \node \models \csscondnoneg} %
        \\
        \tree, \node &\models \hasatt{\att} %
        &\quad\iffdef\quad %
            &\tree, \node \models \hasattns{\ns}{\att} %
            \quad \text{for some $\ns \in \nspaces$}
        \\ %
        \tree, \node &\models \opatt{\att}{\attop}{\attval}
        &\quad\iffdef\quad
            &\tree, \node \models \opattns{\ns}{\att}{\attop}{\attval}
            \quad \text{for some $\ns \in \nspaces$}
        \\
        \tree, \node &\models \hasattns{\ns}{\att} %
        &\quad\iffdef\quad
            &\ap{\ap{\treelabatts}{\node}}{\ns, \att} \neq \attundef %
        \\ %
        \tree, \node &\models \attisns{\ns}{\att}{\attval} %
        &\quad\iffdef\quad %
            &\ap{\ap{\treelabatts}{\node}}{\ns,\att} = \attval %
        \\ %
        \tree, \node &\models \attbeginns{\ns}{\att}{\attval} %
        &\quad\iffdef\quad %
        &\brac{ %
            \begin{array}{c} %
                \brac{\ap{\ap{\treelabatts}{\node}}{\ns,\att} = \attval} %
                \ \text{ or } \\ %
                \exists \attval' \ . %
                \brac{ %
                    \ap{\ap{\treelabatts}{\node}}{\ns, \att} = %
                    \attval \cdash \attval' %
                } %
             \end{array} %
        } %
        \\ %
        \tree, \node &\models \attstrbeginns{\ns}{\att}{\attval} %
        &\quad\iffdef\quad %
        &\exists \attval' \in \alphabet^\ast \ .\ %
            \ap{\ap{\treelabatts}{\node}}{\ns,\att} = \attval \attval' %
        \\ %
        \tree, \node &\models \attstrendns{\ns}{\att}{\attval} %
        &\quad\iffdef\quad %
        &\exists \attval' \in \alphabet^\ast \ .\ %
            \ap{\ap{\treelabatts}{\node}}{\ns,\att} = \attval' \attval %
        \\ %
        \tree, \node &\models \attstrsubns{\ns}{\att}{\attval} %
        &\quad\iffdef\quad %
        &\exists \attval_1, \attval_2 \in \alphabet^\ast \ .\ %
            \ap{\ap{\treelabatts}{\node}}{\ns,\att} = %
            \attval_1 \attval \attval_2 %
\end{alignat*}
\endgroup
with the missing attribute selector being (noting
$\attval\cspace\attval'$
is the concatenation of $\attval$ and $\attval'$ with the space character $\cspace$ in between)
\begin{eqnarray*}
    \tree, \node \models \atthas{\att}{\attval} & \iffdef &
                \ap{\ap{\treelabatts}{\node}}{\ns, \att} = \attval %
                \ \text{ or }
                \exists \attval'  \ . %
                    \brac{ %
                        \ap{\ap{\treelabatts}{\node}}{\ns, \att} = %
                        \attval \cspace \attval' %
                    } %
                \ \text{ or } \\
            & & \exists \attval' \ . %
                    \brac{ %
                        \ap{\ap{\treelabatts}{\node}}{\ns, \att} = %
                        \attval' \cspace \attval %
                    } %
                \ \text{ or }
                \exists \attval_1, \attval_2 \ . %
                \brac{ %
                    \ap{\ap{\treelabatts}{\node}}{\ns, \att} = %
                    \attval_1 %
                    \cspace \attval \cspace %
                    \attval_2 %
                } %
\end{eqnarray*}
then, for the counting selectors
\begingroup
\allowdisplaybreaks
\begin{eqnarray*}
        \tree, \node \models \psnthchild{\coefa}{\coefb}
        & \iffdef &
        \text{there is some $\numof \in \N$ such that $\node$ is the $\coefa\numof+\coefb$th} \\
        & & \text{child} \\
        \\
        \tree, \node \models \psonlychild & \iffdef &
        \text{the parent of $\node$ has precisely one child} \\
        \tree, \node \models \psnthoftype{\coefa}{\coefb} & %
        \iffdef & \text{there is some $\numof \in \N$ such that the parent of $\node$ has} \\
               & & \text{precisely $\coefa\numof + \coefb - 1$ children with namespace $\ap{\treelabns}{\node}$} \\
               & & \text{and element name $\ap{\treelabele}{\node}$ for some $n$ that are (strictly)} \\
               & & \text{preceding siblings of $\node$} \\
        \tree, \node \models \psonlyoftype & \iffdef &
        \text{the parent of $\node$ has precisely one child with} \\
              & & \text{namespace $\ap{\treelabns}{\node}$ and element name $\ap{\treelabele}{\node}$}
\end{eqnarray*}
\endgroup
Finally, the semantics of the remaining two selectors, which are
$\psnthlastchild{\coefa}{\coefb}$ and $\psnthlastoftype{\coefa}{\coefb}$,
is exactly the same as $\psnthchild{\coefa}{\coefb}$ and respectively
$\psnthoftype{\coefa}{\coefb}$, except with the sibling ordering
reversed
(i.e. the rightmost child of a parent is treated as the first).

\begin{remark}
Readers familiar with HTML may have expected more constraints in the
semantics. For example, if a node matches $\pshover$, then its parent should
also match $\pshover$.
However, this is part of the HTML5 specification, not of CSS3.
In fact, the CSS3 selectors specification explicitly states that a node matching $\pshover$ does not imply its parent must also match $\pshover$.
\end{remark}

\subsubsection{Divergences from full CSS}

Note that we diverge from the full CSS specification in a number of places.
However, we do not lose expressivity.
\begin{compactitem}
\item
    We assume each element has a namespace.
    In particular, we do not allow elements without a namespace. There is
        no loss of generality here since we can simply assume a ``null''
        namespace is used instead.
    Moreover, we do not support default name spaces and assume namespaces are explicitly given.

\item
    We did not include
    $\pslang{\csslang}$.
    Instead, we will assume (for convenience) that all nodes are labelled with a language attribute with some fixed namespace $\ns$.
    In this case,
    $\pslang{\csslang}$
    is equivalent\footnote{
        The CSS specification defines
        $\pslang{\csslang}$
        in this way.
        A restriction of the language values to standardised language codes is only a recommendation.
    } to
    $\attbeginns{\ns}{\langatt}{\csslang}$.

\item
    We did not include $\psindeterminate$ since it is not formally part of the CSS3 specification.

\item
    We omit the selectors $\psfirstchild$ and $\pslastchild$, as well as \\ $\psfirstoftype$ and $\pslastoftype$, since they are expressible using the other operators.

\item
    We omitted $\texttt{even}$ and $\texttt{odd}$ from the nth child operators since these are easily definable as $2n$ and $2n+1$.

\item
    We do not explicitly handle document fragments.
    These may be handled in a number of ways.
    For example, by adding a phantom root element (since the root of a document fragment does not match $\psroot$) with a fresh ID $\id$ and adjusting each node selector in the CSS selector to assert $\cssneg{\isid{\id}}$.
    Similarly, lists of document fragments can be modelled by adding several subtrees to the phantom root.
\item
    A CSS selector can be suffixed with a \emph{pseudo-element}
    which is of the form $\psfirstline$, $\psfirstletter$, $\psbefore$, and $\psafter$.
    Pseudo-elements are easy to handle and only provide a distraction
    to our presentation. For this reason, we relegate them into the appendix.

\item
    We define our DOM trees to use a finite alphabet $\alphabet$.
    Currently the CSS3 selectors specification uses Unicode as its alphabet for lexing.
    Although the CSS3 specification is not explicit about the finiteness of characters appearing in potential DOMs, since Unicode is finite~\cite{Unicode} (with a maximal possible codepoint) we feel it is reasonable to assume DOMs are also defined over a finite alphabet.
\end{compactitem}

\subsection{Solving the intersection problem}
\label{sec:solving-intersection}

We now address the problem of checking the intersection of two CSS selectors.
Let us write
\[
    \sem{\css} := \{ (\tree,\node) : \tree,\node \models \css \}
\]
to denote the set of pairs of tree and node satisfying the selector $\css$.
The \defn{intersection problem of CSS selectors} is to decide if
$\sem{\css} \cap \sem{\css'} \neq \emptyset$, for two given selectors
$\css$ and $\css'$. A closely related decision problem is the
\defn{non-emptiness problem of CSS selectors}, which is to decide if
$\sem{\css} \neq \emptyset$, for a given selector $\css$. The two problems
are \emph{not} the same since CSS selectors are not closed under intersection
(i.e. the conjunction of two CSS selectors is in general not a valid CSS
selector).

\OMIT{
Our main
result is that these two problems can be efficiently reduced to satisfiability
over quantifier-free theory over integer linear arithmetic, for which there is
a highly-optimised solver (e.g. Z3 \cite{Z3}).
}

\begin{namedtheorem}{thm:emptiness}{Non-Emptiness}
    The non-emptiness problem for CSS selectors
    is efficiently reducible to satisfiability over quantifier-free theory
    over integer linear arithmetic. Moreover, the problem is NP-complete.
\end{namedtheorem}

\begin{namedtheorem}{thm:intersection}{Intersection}
    The intersection problem for CSS selectors
    is efficiently reducible to satisfiability
    over quantifier-free theory over integer linear arithmetic. Moreover,
    the problem is NP-complete.
\end{namedtheorem}
\noindent
Recall from Section \ref{sec:prelim} that satisfiability over quantifier-free
theory over integer linear arithmetic is in NP and can be solved by a
highly-optimised SMT solver (e.g. Z3 \cite{Z3}).
The NP-hardness in the above theorems suggests that 
our SMT-based approach is theoretically optimal.
In addition,
our experiments with real-world selectors (see Section \ref{sec:experiments})
suggest that our SMT-based approach can be optimised to be fast in practice (see Appendix~\ref{sec:optimised-aut-emp}), with
each problem instance solved within moments.

\paragraph{Idea behind our SMT-based approach}
We now provide the idea behind the efficient reduction to quantifier-free
theory over integer linear arithmetic. Our reduction first goes via a
new class of tree automata (called CSS automata), which --- like CSS selectors
--- are symbolic representations of sets of pairs containing a document tree
and a node in the tree. We will call such sets \defn{languages
recognised by the automata}. Given a CSS selector $\css$, we can efficiently
construct a CSS automaton $\cssaut$ that can symbolically represent
$\sem{\css}$. Unlike CSS selectors, however, we will see that languages
recognised by CSS automata enjoy closure under intersection, which will
allow us to treat the intersection problem as the non-emptiness problem.
\OMIT{
The standard
technique for solving the intersection problem for two finite-state automata
is to perform the standard product construction of the two automata
\cite{Vardi95}, which works since finite-state automata are closed under
intersection. However, this is not the case for CSS selectors. Therefore,
we introduce the class of \emph{CSS automata} that can subsume the expressivity
of CSS selectors, but at the same time are closed under intersection.
}
More precisely,
a CSS automaton navigates a tree structure in a similar manner to a CSS
selector: transitions may only move down the tree or to a sibling, while
checking a number of properties on the visited nodes.
The difficulty of taking a direct intersection of two selectors is that
the two selectors may descend to different child nodes, and then meet again
after the application of a number of sibling combinators, i.e., their paths may
diverge and combine several times. CSS automata overcome this difficulty by
always descending to the \emph{first} child, and then move from
\emph{sibling to sibling}.
Thus, the intersection of CSS automata can be done with a straightforward
automata product construction, e.g., see \cite{Vardi95}.

Next, we show that the non-emptiness of CSS automata
can be decided in NP by a polynomial-time reduction to satisfiability of
quantifier-free theory of integer linear arithmetic.
In our experiments, we used an optimised version of the reduction, which is
detailed in Appendix~\ref{sec:optimised-aut-emp}.
For non-emptiness, ideally, we would like
to show that if a CSS automaton has a non-empty language, then it accepts
a small tree (i.e. with polynomially many nodes). This is unfortunately not
the case, as the reader can see in our NP-hardness proof idea below.
Therefore, we use a different strategy. First , we prove three ``small model
lemmas''.
The first is quite straightforward and shows that, to prove non-emptiness, it
suffices to consider a witnessing automata run of length $n$ for an automaton
with $n$ transitions (each automata transition allows some nodes to be skipped).
Second, we show that
it suffices to consider attribute selector values (i.e.\ strings) of length
linear in the size of the CSS automata.
This is non-trivial and uses a construction inspired by \cite{MW05}.
Third, we show that it suffices to consider trees whose sets of namespaces
and element names are linear in the size of the CSS automaton.
Our formula $\varphi$ attempts to
guess this automata run, the attribute selector values, element names, and
namespaces.
The global ID constraint (i.e.\ all the guessed IDs are distinct) can be easily
asserted in the formula.
So far, boolean variables are sufficient because the small model lemmas
allow us to do bit-blasting.
\emph{Where, then, do the integer variables come into play?} For each position
$i$ in the guessed path, we introduce an integer variable $\numvar{\idxi}$ to
denote that the node at position $i$ in the path is the $\numvar{\idxi}$th
child. This is necessary if we want to assert counting constraints like
$\psnthchild{\alpha}{\beta}$, which would be encoded in integer linear
arithmetic as
$\exists\nvar: \numvar{\idxi} = \coefa \nvar + \coefb$.

\paragraph{Proof Idea of NP-hardness}
We now provide an intuition on how to prove NP-hardness in the above theorems.
First, observe that the intersection is computationally at least as hard as the
non-emptiness problem since we can set the second selector to be $\isany$.
To prove NP-hardness of the non-emptiness, we give a polynomial-time reduction
from the NP-complete
problem of \emph{non-universality of unions of arithmetic progressions}
\cite[Proof of Theorem 6.1]{SM73}. Examples of arithmetic progressions are
$2\N + 1$ and $5\N + 2$, which are shorthands for the sets
$\{1,3,5,\ldots\}$ and $\{2,7,12,\ldots\}$, respectively. The aforementioned
non-universality problem allows an arbitrary number of arithmetic progressions
as part of the input and we are interested in checking whether the union
equals the entire set $\N$ of natural numbers.
As an example of the reduction, checking $\N \neq 2\N+1 \cup 5\N+2$ is
equivalent to the non-emptiness of
\begin{verbatim}
    :not(root):not(:nth-child(2n+2)):not(:nth-child(5n+3))
\end{verbatim}
which can be paraphrased as checking the existence of a tree with a node that
is neither the root, nor the $2n+2$nd child, nor the $5n+3$rd child.
Observe that we add 1 to the offset of the arithmetic progressions since
the selector \verb+:nth-child+ starts counting (the number of children) from 1,
not from 0.
A full NP-hardness proof is available in Appendix~\ref{sec:np-hardness-proof}.

\subsection{Extracting the edge order $\edgeOrder$ from a CSS file}
\label{sec:edge-order-from-css}

Recall that
our original goal is to compute a CSS-graph $\CSSgraph = \inCSSgraph$ from a
given CSS file.
The sets $\propNodes$, $\selNodes$, and $\CSSedges$, and the function
$\nodeWeight$ can be computed easily as explained in Section~\ref{sec:css2graph}.
We now show how to compute $\edgeOrder$ using the algorithm for checking
intersection of two selectors.
We present an intuitive ordering, before explaining how this may be relaxed while still preserving the semantics.

An initial definition of $\edgeOrder$ is simple to obtain:
we want to order $(s, p) \edgeOrder (s', p')$ whenever
    $(s', p')$ appears later in the CSS file than $(s, p)$,
    the selectors may overlap but are not distinguished by their specificity, and
    $p$ and $p'$ assign conflicting values to a given property name.
More formally, we first compute the specificity of all the selectors in the CSS file.
This can be easily computed in the standard way \cite{CSS3sel}.
Now, the relation $\edgeOrder$ can only relate two edges $(s,p),(s',p') \in \CSSedges$ satisfying
\begin{enumerate}
\item
    \label{itm:same-spec}
    $s$ and $s'$ have the same specificity,
\item
    \label{itm:same-name}
    we have $p \neq p'$ but the property names for $p$ and $p'$ are the same
    (e.g.\ %
     $p = \mbox{\texttt{color:blue}}$
     and
     $p' = \mbox{\texttt{color:red}}$
     with property name
     \texttt{color}), and
\item
    \label{itm:intersect}
        $s$ intersects with $s'$ (i.e. $\sem{s} \cap \sem{s'} \neq \emptyset$).
\end{enumerate}
If both (\ref{itm:same-spec}) and (\ref{itm:same-name}) are satisfied, Condition (\ref{itm:intersect}) can be checked by means of SMT-solver via the reduction in Theorem~\ref{thm:intersection}.
Supposing that Condition (\ref{itm:intersect}) holds, we simply compute the indices of the edges in the file: $m := \Index((s,p))$ and $m' := \Index((s',p'))$.
Recall $\Index(e)$ was defined formally in Section~\ref{sec:css2graph}.
We put $(s,p) \edgeOrder (s',p')$ iff $m < m'$.
There are two minor technical details with the keyword \verb+!important+ and \emph{shorthand property names}; see Appendix~\ref{sec:important}.

The ordering given above unfortunately turns out to be too conservative.
In the following, we give an example to demonstrate this, and propose a refinement to the ordering.
Consider the CSS file
\begin{center}
    \begin{minted}{css}
        .a { color:red; color:rgba(255,0,0,0.5) }
        .b { color:red; color:rgba(255,0,0,0.5) }
    \end{minted}
\end{center}
In this file, both nodes matching
\texttt{.a} and \texttt{.b}
are assigned a semi-transparent red with solid red being defined as a
\emph{fallback} when the semi-transparent red is not supported.
If the edge order is calculated as above, we obtain
\begin{equation}
\label{eq:bad-order}
    (\mbox{\texttt{.a}},
     \mbox{\texttt{color:rgba(255,0,0,0.5)}})
    \edgeOrder
    (\mbox{\texttt{.b}},
     \mbox{\texttt{color:red}})
\end{equation}
which prevents the obvious rule-merging
\begin{center}
    \begin{minted}{css}
        .a, .b { color:red; color:rgba(255,0,0,0.5) }
    \end{minted}
\end{center}
The key observation is that the fact that we also have
\begin{equation}
\label{eq:good-order}
    (\mbox{\texttt{.b}},
     \mbox{\texttt{color:red}})
    \edgeOrder
    (\mbox{\texttt{.b}},
     \mbox{\texttt{color:rgba(255,0,0,0.5)}})
\end{equation}
renders any violations of (\ref{eq:bad-order}) benign:
such a violation would give precedence to right-hand declaration
\texttt{color:rgba(255,0,0,0.5)}
over
\texttt{color:red}
for nodes matching both
\texttt{.a}
and
\texttt{.b}.
However, because of (\ref{eq:good-order}) this should happen anyway and we can omit (\ref{eq:bad-order}) from $\edgeOrder$.

Formally, the ordering we need is as follows.
If Conditions (\ref{itm:same-spec}-\ref{itm:intersect}) hold, we compute
    $m := \Index((s,p))$ and
    $m' := \Index((s',p'))$
and put $(s,p) \edgeOrder (s',p')$ iff
\begin{itemize}
\item
    $m < m'$, and
\item
    $(s', p)$ does not exist or $\Index((s', p)) < m'$.
\end{itemize}
That is, we omit
$(s,p) \edgeOrder (s',p')$
if $(s', p)$ appears later in the CSS file (that is, we have %
$\Index((s', p')) < \Index((s', p))$).
Note, we are guaranteed in this latter case to include
$(s', p') \edgeOrder (s', p)$
since $(s', p')$ and $(s', p)$ can easily be seen to satisfy the conditions for introducing
$(s', p') \edgeOrder (s', p)$.

\section{More Details on Solving Selector Intersection Problem}
\label{sec:intersection}

In the previous section, we have given the intuition behind the
efficient reduction from the CSS selector intersection problem to
quantifier-free theory over integer linear arithmetic, for which there is
a highly-optimised SMT-solver \cite{Z3}. In this section, we present this
reduction in full, which may be skipped on a first reading without affecting
the flow of the paper.

This section is structured as follows. We begin by defining CSS automata.
We then provide a semantic-preserving transformation from CSS selectors to CSS automata.
Next we show the closure of CSS automata under intersection.
The closure allows us to reduce the intersection problem of CSS automata to the non-emptiness problem of CSS automata.
Finally, we provide a reduction from non-emptiness
of CSS automata to satisfiability over quantifier-free integer linear
arithmetic. We will see that each such transformation/reduction runs in
polynomial-time, resulting in the complexity upper bound of NP, which is precise
due to the NP-hardness of the problem from the previous section.

\subsection{CSS Automata}

CSS automata are a kind of finite automata which navigate the tree structure of a document tree.
Transitions of the automata will contain one of four labels:
$\arrchild$, $\arrneighbour$, $\arrsibling$, and $\arrlast$.
Intuitively, these transitions perform the following operations.
$\arrchild$ moves to the first child of the current node.
$\arrneighbour$ moves to the next sibling of the current node.
$\arrsibling$ moves an arbitrary number of siblings to the right.
Finally, $\arrlast$ reads the node matched by the automaton.
Since CSS does not have loops, we require only self loops in our automata, which are used to skip over nodes (e.g.\ %
$\isclass{\attval} \csssibling \isclass{\attval'}$
may pass over many nodes between those matching
$\isclass{\attval}$
and those matching
$\isclass{\attval'}$).
We do not allow $\arrneighbour$ to label a loop -- this is for the purposes of the \NP\ proof: it can be more usefully represented as $\arrsibling$.

An astute reader may complain that $\arrsibling$ does not need to appear on a loop since it can already pass over an arbitrary number of nodes.
However, the product construction used for intersection becomes easier if $\arrsibling$ appears only on loops.
There is no analogue of $\arrsibling$ for $\arrchild$ because we do not need it: the use of $\arrsibling$ is motivated by selectors such as
$\psnthchild{\coefa}{\coefb}$
which count the number of siblings of a node.
No CSS selector counts the number of descendants/ancestors.

\paragraph{Formal Definition of CSS Automata}
    A \defn{CSS Automaton} $\cssaut$ is a tuple
    $\tup{\astates, \atrans, \ainitstate, \afinstate}$
    where
        $\astates$ is a finite set of states,
        $\atrans \subseteq \astates \times
                           \set{\arrchild, \arrneighbour, \arrsibling, \arrlast} \times
                           \nodeselectors \times
                           \astates$ is a transition relation,
        $\ainitstate \in \astates$ is the initial state, and
        $\afinstate \in \astates$ is the final state.
        Moreover,
        \begin{compactenum}
        \item
            (only self-loops)
            there exists a partial order $\weakord$ such that
            $\tup{\astate, \arrgen, \csssim, \astate'} \in \atrans$
            implies
            $\astate' \weakord \astate$,
        \item
            ($\arrsibling$ loops and doesn't check nodes)
            for all
            $\tup{\astate, \arrsibling, \csssim, \astate'} \in \atrans$
            we have
            $\astate = \astate'$
            and
            $\csssim = \isany$,
        \item
            ($\arrneighbour$ doesn't label loops)
            for all
            $\tup{\astate, \arrgen, \csssim, \astate} \in \atrans$
            we have
            $\arrgen \neq \arrneighbour$
            and
            $\csssim = \isany$,
        \item
            ($\arrlast$ checks last node only)
            for all
            $\tup{\astate, \arrgen, \csssim, \astate'} \in \atrans$
            we have
            $\astate' = \afinstate$
            iff
            $\arrgen = \arrlast$, and
        \item
            ($\afinstate$ is a sink)
            for all
            $\tup{\astate, \arrgen, \csssim, \astate'} \in \atrans$
            we have
            $\astate \neq \afinstate$.
        \end{compactenum}

We now define the semantics of CSS automata, i.e., given an automaton
$\cssaut$, which language $\Lang(\cssaut)$ they recognise.
Intuitively, the set $\Lang(\cssaut)$ contains the set of pairs of document
tree and node, which the automaton $\cssaut$ accepts. We will now define this
more formally. Write $\astate \atran{\arrgen}{\csssim} \astate'$
to denote a transition
$\tup{\astate, \arrgen, \csssim, \astate'} \in \atrans$.
A document tree $\tree = \tup{\treedom, \treelab}$ and node $\node \in \treedom$
is \defn{accepted} by a CSS automaton $\cssaut$
if there exists a sequence
\[
    \astate_0, \node_0,
    \astate_1, \node_1,
    \ldots,
    \astate_\runlen, \node_\runlen,
    \astate_{\runlen+1}
    \in
    \brac{\astates \times \treedom}^\ast \times \set{\afinstate}
\]
such that
$\astate_0 = \ainitstate$ is the initial state,
$\node_0 = \eseq$ is the root node,
$\astate_{\runlen+1} = \afinstate$ is the final state,
$\node_\runlen = \node$ is the matched node, and
for all $\idxi$, there is some transition
$\astate_\idxi \atran{\arrgen}{\csssim} \astate_{\idxi+1}$
with $\node_\idxi$ satisfying $\csssim$ and if $\idxi \leq \runlen$,
\begin{compactenum}
\item
    if
    $\arrgen = \arrchild$
    then
        $\node_{\idxi+1} = \node_\idxi 1$ (i.e.,
        the leftmost child of $\node_\idxi$),

\item
    if
    $\arrgen = \arrneighbour$
    then there is some $\node'$ and $\treedir$ such that
    $\node_\idxi = \node' \treedir$
    and
    $\node_{\idxi+1} = \node' (\treedir + 1)$, and

\item
    if
    $\arrgen = \arrsibling$
    then there is some $\node'$, $\treedir$ and $\treedir'$ such that
    $\node_\idxi = \node' \treedir$
    and
    $\node_{\idxi+1} = \node' \treedir'$
    and
    $\treedir' > \treedir$.
\end{compactenum}
Such a sequence is called an \emph{accepting run} of length $\runlen$.
The \defn{language $\Lang(\cssaut)$ recognised by $\cssaut$} is the set of
pairs $(\tree,\node)$ accepted by $\cssaut$.

\subsection{Transforming CSS Selectors to CSS Automata}

The following proposition shows that CSS automata are no less expressive than
CSS selectors.
\begin{proposition} \label{prop:css-to-aut}
    For each CSS selector $\css$, we may construct in
    polynomial-time a CSS automaton $\selaut{\css}$ such that
    $\Lang(\selaut{\css}) = \sem{\css}$.
\end{proposition}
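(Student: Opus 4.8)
The plan is to give a direct, compositional construction of $\selaut{\css}$ by chaining together one small ``gadget'' per operator occurring in $\css$. First I would put $\css$ into a flattened normal form: since the grammar for $\css$ is left-recursive through the four combinators, every selector can be written as $\css = \csssim_0\, c_1\, \csssim_1\, c_2 \cdots c_n\, \csssim_n$ with each $\csssim_i \in \nodeselectors$ and each $c_i \in \set{\cssdescendant, \csschild, \cssneighbour, \csssibling}$. Unfolding the inductive semantics (a routine induction on $n$) gives an equivalent ``flattened'' reading: $\tree, \node \models \css$ holds iff there are tree nodes $v_0, \ldots, v_n$ with $v_n = \node$, with $\tree, v_i \models \csssim_i$ for every $i$, and with $v_{i-1} \mathrel{R_{c_i}} v_i$ for every $i \ge 1$, where $R_{c}$ is the tree relation $\treedescendant$, $\treechild$, $\treeneighbour$ or $\treesibling$ corresponding to $c$; the leftmost node $v_0$ is otherwise unconstrained.

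Second, I would build $\selaut{\css}$ as a chain of gadgets, one per symbol of the normal form, arranged so that the partial order required by the automaton definition is just the left-to-right order along the chain. The initial state is a state $q$ equipped with the self-loops $q \atran{\arrchild}{\isany} q$ and $q \atran{\arrsibling}{\isany} q$; starting at the root, interleaving these two moves ($k$ times: descend to the first child, then slide right) reaches an arbitrary node of $\treedom$, which is the automaton's guess for $v_0$. For each combinator $c_i$ I splice in a gadget that walks from the node where the run left the previous gadget to a node related to it by $R_{c_i}$, putting the test $\csssim_{i-1}$ on the first (necessarily moving) transition out of that node — such tests cannot be performed ``in place'', since only moving transitions and the final $\arrlast$-transition may carry a non-trivial node selector. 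Concretely: for $c_i = \csschild$, a step $\atran{\arrchild}{\csssim_{i-1}}$ to the first child followed by an $\arrsibling$ self-loop (so any child is reachable); for $c_i = \cssdescendant$, the same $\arrchild$ step but landing in a fresh state carrying both an $\arrchild$ self-loop and an $\arrsibling$ self-loop; for $c_i = \cssneighbour$, a single step $\atran{\arrneighbour}{\csssim_{i-1}}$; and for $c_i = \csssibling$, that step followed by an $\arrsibling$ self-loop. Finally $\csssim_n$ is tested on an $\arrlast$-transition into the sink state $\afinstate$; when $n = 0$ this transition leaves $q$ directly, correctly accepting $(\tree,\node)$ exactly when $\node$ itself satisfies $\csssim_0$. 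This uses $O(n)$ states and transitions, each carrying $\isany$ or one of the $\csssim_i$ copied verbatim, so the construction is linear time.

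Third, I would discharge the five well-formedness conditions, all routine: the chain order (with $q$ largest, then the gadget states in order, then $\afinstate$) witnesses the only-self-loops condition; every $\arrsibling$-transition and every self-loop we introduced carries $\isany$, and no self-loop is labelled $\arrneighbour$; the $\arrlast$-transitions are exactly those entering $\afinstate$; and $\afinstate$ has no outgoing transitions. Correctness, $\Lang(\selaut{\css}) = \sem{\css}$, then follows by matching the two descriptions. From a witnessing family $v_0, \ldots, v_n$ one reads off an accepting run: navigate $q$ to $v_0$, walk gadget $c_i$ from $v_{i-1}$ to $v_i$ for each $i$ (each such step being one of the elementary tree-navigation facts used to define the gadget), and fire $\arrlast$ at $v_n = \node$; all tests succeed because $\tree, v_i \models \csssim_i$. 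Conversely, since no transition increases the chain order, an accepting run is a maximal block of steps at $q$ followed, for $i = 1, \ldots, n$, by a maximal block of steps at the end state of gadget $c_i$; letting $v_{i-1}$ be the node at which the run leaves that block recovers a witnessing family, with $\tree, v_i \models \csssim_i$ coming from the node selector on the transition along which the run leaves $v_i$ and $v_{i-1} \mathrel{R_{c_i}} v_i$ from the shape of gadget $c_i$.

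The step I expect to require the most care is the $\cssdescendant$ gadget. One must realise that, under the restriction that the only loops are self-loops and that there is no ``$\arrsibling$-style'' transition for $\arrchild$, reaching an arbitrary strict descendant of a node can only be simulated by a single state carrying both an $\arrchild$ self-loop and an $\arrsibling$ self-loop; and then verify that interleaving these two moves, started at the first child of $v_{i-1}$, reaches exactly the strict descendants of $v_{i-1}$ — neither too few (every descendant of the form $v_{i-1}\treedir_1 \cdots \treedir_m$ with $m \ge 1$ is reached by alternating ``descend to the first child'' and ``slide right'') nor too many (every node so reached is indeed a strict descendant). The other three combinators, the normal-form equivalence, the well-formedness checks, and the size bound are all bookkeeping.
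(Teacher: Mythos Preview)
Your proposal is correct and follows essentially the same approach as the paper: put $\css$ into the normal form $\csssim_1\,\genop_1\cdots\genop_{n-1}\,\csssim_n$, equip the initial state with $\arrchild,\arrsibling$ self-loops to reach an arbitrary starting node, chain one gadget per combinator (carrying the test $\csssim_{i}$ on the moving transition leaving the $i$th stage), finish with an $\arrlast$-transition labelled $\csssim_n$, and prove both inclusions by matching runs with witnessing families $v_0,\ldots,v_n$. The only difference is cosmetic: the paper inserts an extra intermediate state $\midstate{\idxi}$ (with a $\isany$-labelled $\arrneighbour$ or $\arrchild$ exit) so that each $\selstate{\idxi+1}$ is free of self-loops, whereas you let the self-loops of gadget $c_i$ live on the same state that serves as the entry of gadget $c_{i+1}$---a harmless compaction that leaves the correctness argument unchanged.
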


We show this proposition by giving a translation from a given CSS selector to a CSS automaton.
Before the formal definition, we consider the a simple example.
A more complex example is shown after the translation.

\subsubsection{Simple Example}

Consider the selector
\[
    \isele{\mytt{p}} \cssneighbour  \isclass{\mytt{a}}
\]
which selects a node that has a class $\mytt{a}$ and is directly a right neighbour of a node with element $\mytt{p}$.
Figure~\ref{fig:css-automata-example-easy} gives a CSS automaton representing the selector.
The automaton begins with a loop that can navigate down any branch of the tree using the $\arrchild$ and $\arrsibling$ transitions from $\selstate{1}$.
Then, since it always moves from the first child to the last, it will first see the node with the $\mytt{p}$.
When reading this node, it will move to the next child using $\arrneighbour$ before matching the node with class $\mytt{a}$, leading to the accepting state.

\begin{figure}
\begin{center}
    \psset{colsep=12ex,rowsep=6ex,arcangle=45,nodesep=2mm,loopsize=.5}
    \begin{psmatrix}
        \\
        \rnode{N1}{$\selstate{1}$} &
        \rnode{N2}{$\selstate{2}$} &
        \rnode{N3}{$\afinstate$}
        \\
        \nccircle{->}{N1}{-3.75ex}\nbput{$\arrsibling, \arrchild$}\naput{$\isany$}
        \ncline{->}{N1}{N2}\naput{$\arrneighbour$}\nbput{$\isele{\mytt{p}}$}
        \ncline{->}{N2}{N3}\naput{$\arrlast$}\nbput{$\isclass{\mytt{a}}$}
    \end{psmatrix}
    \caption{\label{fig:css-automata-example-easy}CSS Automaton for
             $\isele{\mytt{p}} \cssneighbour  \isclass{\mytt{a}}$.}
\end{center}
\end{figure}

\subsubsection{Formal Translation}

Given a CSS selector $\css$, we define $\selaut{\css}$ as follows.
We can write $\css$ uniquely in the form
\[
    \csssim_1\ \genop_1\ \csssim_2\ \genop_2\ \cdots\ \genop_{\numof-1}\ \csssim_\numof
\]
where each $\csssim_\idxi$ is a node selector, and each
$\genop_\idxi \in \set{\cssdescendant,
                       \csschild,
                       \cssneighbour,
                       \csssibling}$.
We will have a state $\selstate{\idxi}$ corresponding to each $\csssim_\idxi, \genop_\idxi$.
We define
\[
    \selaut{\css} = \tup{\astates, \eles, \atrans, \selstate{1}, \afinstate}
\]
where
$\astates =
 \setcomp{\selstate{\idxi}, \midstate{\idxi}}
         {1 \leq \idxi \leq \numof}
 \uplus
 \set{\afinstate}$
and we define the transition relation $\atrans$ to contain the following transitions.
The initial and final transitions are used to navigate from the root of the tree to the node matched by $\csssim_1$, and to read the final node matched by $\csssim_\numof$ (and the selector as a whole) respectively.
That is,
$\selstate{1} \atran{\arrchild, \arrsibling}{\isany} \selstate{1}$
and
$\selstate{\numof} \atran{\arrlast}{\csssim_\numof} \afinstate$.
We have further transitions for $1 \leq \idxi < \numof$ that are shown in Figure~\ref{fig:aut-from-sel}.
The transitions connect $\selstate{\idxi}$ to $\selstate{\idxi+1}$ depending on $\genop_\idxi$.
Figure~\ref{fig:aut-child} shows the child operator.
The automaton moves to the first child of the current node and moves to the right zero or more steps.
Figure~\ref{fig:aut-descendant} shows the descendant operator.
The automaton traverses the tree downward and rightward any number of steps.
The neighbour operator is handled in Figure~\ref{fig:aut-neighbour} by simply moving to the next sibling.
Finally, the sibling operator is shown in Figure~\ref{fig:aut-sibling}.

\begin{figure}
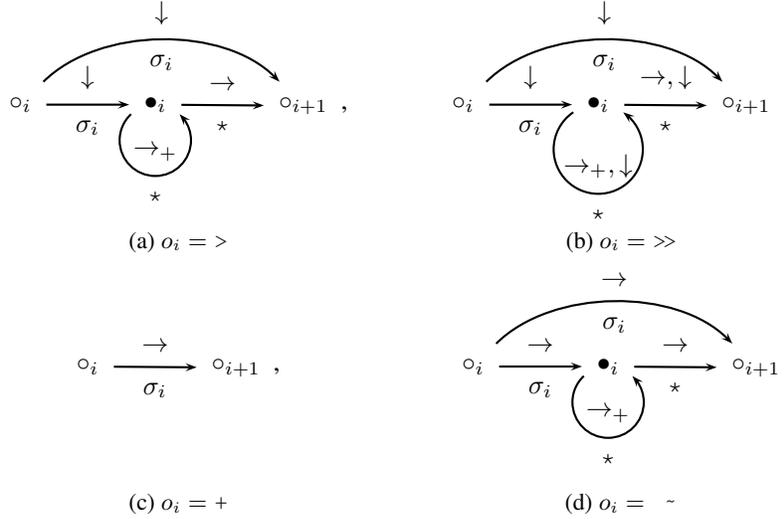

\centering
    \psset{rowsep=6ex,arcangle=45,nodesep=2mm,loopsize=.5}
    \begin{subfigure}{.4\textwidth}
    \centering
        \begin{psmatrix}
            \\
            \rnode{N1}{$\selstate{\idxi}$} &
            \rnode{N2}{$\midstate{\idxi}$} &
            \rnode{N3}{$\selstate{{\idxi+1}}$} \ ,
            \\
            \ncarc{->}{N1}{N3}\naput{$\arrchild$}\nbput{$\csssim_\idxi$}
            \ncline{->}{N1}{N2}\naput{$\arrchild$}\nbput{$\csssim_\idxi$}
            \nccircle{->}{N2}{-3ex}\nbput{$\arrsibling$}\naput{$\isany$}
            \ncline{->}{N2}{N3}\naput{$\arrneighbour$}\nbput{$\isany$}
        \end{psmatrix}
        \caption{\label{fig:aut-child}$\genop_\idxi = \csschild$}
    \end{subfigure}
    \begin{subfigure}{.4\textwidth}
    \centering
        \begin{psmatrix}
            \\
            \rnode{N1}{$\selstate{\idxi}$} &
            \rnode{N2}{$\midstate{\idxi}$} &
            \rnode{N3}{$\selstate{{\idxi+1}}$} \ ,
            \\
            \ncarc{->}{N1}{N3}\naput{$\arrchild$}\nbput{$\csssim_\idxi$}
            \ncline{->}{N1}{N2}\naput{$\arrchild$}\nbput{$\csssim_\idxi$}
            \ncline{->}{N2}{N3}\naput{$\arrneighbour,\arrchild$}\nbput{$\isany$}
            \nccircle{->}{N2}{-3.75ex}\nbput{$\arrsibling, \arrchild$}\naput{$\isany$}
        \end{psmatrix}
        \caption{\label{fig:aut-descendant}$\genop_\idxi = \cssdescendant$}
    \end{subfigure}
    \\
    \begin{subfigure}{.4\textwidth}
    \centering
        \begin{psmatrix}
            \\
            \rnode{N1}{$\selstate{\idxi}$} &
            \rnode{N2}{$\selstate{{\idxi+1}}$} \ ,
            \\
            \ncline{->}{N1}{N2}\naput{$\arrneighbour$}\nbput{$\csssim_\idxi$}
        \end{psmatrix}
        \caption{\label{fig:aut-neighbour}$\genop_\idxi = \cssneighbour$}
    \end{subfigure}
    \begin{subfigure}{.4\textwidth}
    \centering
        \begin{psmatrix}
            \\
            \rnode{N1}{$\selstate{\idxi}$} &
            \rnode{N2}{$\midstate{\idxi}$} &
            \rnode{N3}{$\selstate{{\idxi+1}}$}
            \\
            \ncarc{->}{N1}{N3}\naput{$\arrneighbour$}\nbput{$\csssim_\idxi$}
            \ncline{->}{N1}{N2}\naput{$\arrneighbour$}\nbput{$\csssim_\idxi$}
            \nccircle{->}{N2}{-3ex}\nbput{$\arrsibling$}\naput{$\isany$}
            \ncline{->}{N2}{N3}\naput{$\arrneighbour$}\nbput{$\isany$}
        \end{psmatrix}
        \caption{\label{fig:aut-sibling}$\genop_\idxi =\ \ \csssibling$}
    \end{subfigure}
    \caption{\label{fig:aut-from-sel}Converting selectors to CSS automata.}
\end{figure}

We prove the correctness of this construction in
Lemma~\ref{lem:aut-sound} (soundness) and Lemma~\ref{lem:aut-comp}
(completeness) in Appendix~\ref{sec:css-automata-proof}.

\subsubsection{Complex Example}

Figure~\ref{fig:css-automata-example-hard} gives an example of a CSS automaton representing the more complex selector
\[
    \isele{\mytt{div}} \cssdescendant \mytt{p} \csssibling \isclass{\mytt{b}}
\]
which selects a node that has class $\mytt{b}$, is a right sibling of a node with element $\mytt{p}$ and moreover is a descendent of a $\mytt{div}$ node.
This automaton again begins at $\selstate{1}$ and navigates until it finds the node with the $\mytt{div}$ element name.
The automaton can read this node in two ways.
The topmost transition covers the case where the $\mytt{p}$ node is directly below the $\mytt{div}$ node.
The lower transition allows the automaton to match the $\mytt{p}$ node and then use a loop to navigate to the descendent node that will match $\mytt{p}$.
Similarly, from $\selstate{2}$ the automaton can read a $\mytt{p}$ node and choose between immediately matching the node with class $\mytt{b}$ or navigating across several siblings (using the loop at state $\midstate{2}$) before matching $\mytt{.b}$ and accepting.

\begin{figure}
\centering
    \psset{colsep=12ex,rowsep=6ex,arcangle=45,nodesep=2mm,loopsize=.5}
    \begin{psmatrix}
        \\
        \rnode{N1}{$\selstate{1}$} &
        \rnode{N2}{$\midstate{1}$} &
        \rnode{N3}{$\selstate{2}$} &
        \rnode{N4}{$\midstate{2}$} &
        \rnode{N5}{$\selstate{3}$} &
        \rnode{N6}{$\afinstate$}
        \\
        \nccircle{->}{N1}{-3.75ex}\nbput{$\arrsibling, \arrchild$}\naput{$\isany$}
        \ncarc{->}{N1}{N3}\naput{$\arrchild$}\nbput{$\mytt{div}$}
        \ncline{->}{N1}{N2}\naput{$\arrchild$}\nbput{$\mytt{div}$}
        \ncline{->}{N2}{N3}\naput{$\arrneighbour,\arrchild$}\nbput{$\isany$}
        \nccircle{->}{N2}{-3.75ex}\nbput{$\arrsibling, \arrchild$}\naput{$\isany$}
        \ncarc{->}{N3}{N5}\naput{$\arrneighbour$}\nbput{$\mytt{p}$}
        \ncline{->}{N3}{N4}\naput{$\arrneighbour$}\nbput{$\mytt{p}$}
        \nccircle{->}{N4}{-3ex}\nbput{$\arrsibling$}\naput{$\isany$}
        \ncline{->}{N4}{N5}\naput{$\arrneighbour$}\nbput{$\isany$}
        \ncline{->}{N5}{N6}\naput{$\arrlast$}\nbput{$\mytt{.b}$}
    \end{psmatrix}
    \caption{\label{fig:css-automata-example-hard}CSS Automaton for
                    $\isele{\mytt{div}}
                     \cssdescendant
                     \mytt{p}\ \ 
                     \csssibling
                     \isclass{\mytt{b}}$}
\end{figure}

\subsection{Closure Under Intersection}

The problems of non-emptiness and intersection of CSS automata can be defined
in precisely the same way we defined them for CSS selectors.
One key property of CSS automata, which is not enjoyed by CSS selectors, is
the closure of their languages under intersection. This allows us to treat
the problem of intersection of CSS automata (i.e. the non-emptiness of
the intersection of two CSS automata languages) as the non-emptiness
problem (i.e. whether a given CSS automaton has an empty language).

\begin{proposition}
    Given two CSS automata $\cssaut_1$ and $\cssaut_2$, we may construct
    in polynomial-time an automaton $\cssaut_1 \cap \cssaut_2$ such that
    $\Lang(\cssaut_1) \cap \Lang(\cssaut_2) = \Lang(\cssaut_1 \cap
    \cssaut_2)$.
    \label{prop:aut-intersection}
\end{proposition}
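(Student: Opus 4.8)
The plan is to realise $\cssaut_1 \cap \cssaut_2$ as an essentially standard product automaton on the state set $\astates_1 \times \astates_2$, with initial state $\tup{\ainitstate_1,\ainitstate_2}$ and final state $\tup{\afinstate_1,\afinstate_2}$. The point that makes a plain product work here --- and which I would establish first --- is a rigidity property of CSS automata. I would begin by normalising runs so that every application of a $\arrsibling$ transition moves to the \emph{immediate} next sibling: a jump from $\node'\treedir$ to $\node'\treedir'$ with $\treedir'>\treedir$ is replaced by $\treedir'-\treedir$ unit applications of the same $\arrsibling$ self-loop, which is legal because tree domains are preceding-sibling closed. After this normalisation every transition of an accepting run moves the current node to its first child or to its immediate right sibling; hence, for a fixed matched node $\node$, the visited node sequence is \emph{uniquely determined} --- call it the canonical path $v_0 = \eseq, v_1, \ldots, v_N = \node$ (uniqueness is seen by reading it off backwards from $\node$). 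Thus two normalised accepting runs of $\cssaut_1$ and $\cssaut_2$ over the same $(\tree,\node)$ visit exactly this sequence, have the same length, and can be combined transition by transition; only their control components differ, which is precisely what the product tracks.

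Given this, the transition relation of $\cssaut_1 \cap \cssaut_2$ is obtained by pairing compatible transitions: (i) a $\arrchild$-transition of $\cssaut_1$ with a $\arrchild$-transition of $\cssaut_2$, yielding a $\arrchild$-transition; (ii) an $\arrlast$-transition of $\cssaut_1$ with an $\arrlast$-transition of $\cssaut_2$, yielding an $\arrlast$-transition into $\tup{\afinstate_1,\afinstate_2}$; and (iii) a ``sibling move'' ($\arrneighbour$ or $\arrsibling$) of $\cssaut_1$ with a sibling move of $\cssaut_2$, yielding a $\arrsibling$-transition if both are $\arrsibling$ and an $\arrneighbour$-transition otherwise. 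One never pairs a $\arrchild$ move with a sibling move, since they land on different nodes. The node selector on a combined transition is the conjunction $\csssim_1 \wedge \csssim_2$ of the two node selectors, which is again a node selector: merge the two positive type constraints into a single (more specific) one, discarding the transition if they are incompatible, and take the union of the remaining simple-selector conditions. Since an $\arrsibling$ move contributes only $\isany$, these conjunctions cause no trouble.

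I would then prove $\Lang(\cssaut_1 \cap \cssaut_2) = \Lang(\cssaut_1) \cap \Lang(\cssaut_2)$ in the two obvious directions. For $\subseteq$, project an accepting run of the product onto each coordinate: the projections are accepting runs of $\cssaut_1$ and $\cssaut_2$ (a product $\arrneighbour$-transition that came from an $\arrsibling$-loop of $\cssaut_1$ projects to a unit $\arrsibling$-jump, which is allowed), and the node-selector checks survive because $\csssim_1 \wedge \csssim_2$ implies each $\csssim_i$. For $\supseteq$, take accepting runs of $\cssaut_1$ and $\cssaut_2$ over $(\tree,\node)$, normalise both, use the rigidity property to see they traverse the same canonical path in lockstep, and pair their $i$-th transitions using rules (i)--(iii); the combined transitions lie in the product by construction, and their node selectors hold because they held in each run. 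Finally I would check that $\cssaut_1 \cap \cssaut_2$ satisfies the five well-formedness conditions of a CSS automaton: the product order $\tup{\astate_1,\astate_2} \weakord \tup{\astate_1',\astate_2'} \iff \astate_1 \weakord \astate_1' \wedge \astate_2 \weakord \astate_2'$ witnesses ``only self-loops''; a product transition is a self-loop only if both components are, which (since $\arrneighbour$ cannot loop in $\cssaut_1$ or $\cssaut_2$) forces both to be $\arrchild$- or both to be $\arrsibling$-loops checking $\isany$, so the $\arrsibling$-loop condition and the no-$\arrneighbour$-loop condition hold; and a product transition reaches $\tup{\afinstate_1,\afinstate_2}$ iff both components reach their final states iff it is labelled $\arrlast$, with $\tup{\afinstate_1,\afinstate_2}$ a sink since $\afinstate_1,\afinstate_2$ are. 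Polynomiality is immediate, since $\abs{\astates_1 \times \astates_2}$ and $\abs{\atrans} \le \abs{\atrans_1}\cdot\abs{\atrans_2}$ are polynomial and each node-selector conjunction is computed in polynomial time.

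I expect the only real obstacle to be the rigidity/lockstep step: once one unit-ises $\arrsibling$ and argues that both automata are then forced to walk the \emph{same} node sequence, the rest is a routine product construction plus the verification that $\arrneighbour$ stays off self-loops. The divergence and reconvergence of paths that makes a \emph{selector}-level product impossible (noted in Section~\ref{sec:solving-intersection}) has already been engineered away by the ``descend to the first child, then move from sibling to sibling'' design of CSS automata, so at the automaton level the product goes through smoothly.
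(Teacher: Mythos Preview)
Your proposal is correct and follows essentially the same route as the paper: the paper too builds the product on $\astates_1\times\astates_2$, synchronises on the direction label with the mixed case $\arrneighbour/\arrsibling$ yielding $\arrneighbour$, conjoins the node selectors (using exactly the type-merge you describe, with $\cssneg{\isany}$ for incompatible types), and proves correctness by normalising $\arrsibling$ steps to unit sibling moves so both runs traverse the same canonical root-to-$\node$ path in lockstep before pairing transitions. The well-formedness checks are also the same, though your use of the product partial order for the ``only self-loops'' condition is slightly cleaner than the paper's cycle-projection argument.
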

The construction of the CSS automaton $\cssaut_1 \cap \cssaut_2$ is by a
variant of the standard product construction \cite{Vardi95} for finite-state
automata over finite words, which run the two given automata in parallel
synchronised by the input word. Our construction runs the two CSS automata
$\cssaut_1$ and $\cssaut_2$ in parallel synchronised by the path that they
traverse. We first proceed with the formal definition and give an example afterwards.

\subsubsection{Formal Definition of Intersection}

We first define the intersection of two node selectors.
Recall node selectors are of the form
$
    \csstype \cssconds
$
where
$\csstype \in
 \setcomp{
     \isany,
     \isanyns{\ns},
     \isele{\ele},
     \iselens{\ns}{\ele}
 }{
     \ns \in \nspaces \land \ele \in \eles
 }$.
The intersection of two node selectors
$\csstype_1 \cssconds_1$
and
$\csstype_2 \cssconds_2$
should enforce all properties defined in $\cssconds_1$ and $\cssconds_2$.
In addition, both selectors should be able to agree on the namespace and element name of the node, hence this part of the selector needs to be combined more carefully.
Thus, letting
$\cssconds = \cssconds_1 \cup \cssconds_2$.
we define
\[
    \csstype_1 \cssconds_1 \cap \csstype_2 \cssconds_2 =
    \begin{cases} %
        \csstype_2 \cssconds %
        & %
        \csstype_1 = \isany %
        \\ %
        \csstype_1 \cssconds %
        & %
        \csstype_2 = \isany %
        \\ %
        \csstype_2 \cssconds %
        & %
        \csstype_1 = \isanyns{\ns} \land %
        \brac{ %
            \begin{array}{c} %
                \csstype_2 = \isanyns{\ns} %
                \ \lor \\ %
                \csstype_2 = \iselens{\ns}{\ele} %
            \end{array}
        } %
        \\ %
        \iselens{\ns}{\ele} \cssconds %
        & %
        \csstype_1 = \isanyns{\ns} \land \csstype_2 = \isele{\ele} %
        \\ %
        \csstype_1 \cssconds %
        & %
        \csstype_1 = \iselens{\ns}{\ele} \land %
        \brac{ %
            \begin{array}{c} %
                \csstype_2 = \iselens{\ns}{\ele} %
                \ \lor \\ %
                \csstype_2 = \isele{\ele} %
                \ \lor \\ %
                \csstype_2 = \isanyns{\ns} %
            \end{array} %
        } %
        \\ %
        \csstype_2 \cssconds %
        & %
        \csstype_1 = \isele{\ele} \land %
        \brac{ %
            \csstype_2 = \iselens{\ns}{\ele} \lor %
            \csstype_2 = \isele{\ele} %
        } %
        \\ %
        \iselens{\ns}{\ele} \cssconds %
        & %
        \csstype_1 = \ele \land \csstype_2 = \isanyns{\ns} %
        \\ %
        \cssneg{\isany} %
        & %
        \text{otherwise.}
    \end{cases} %
\]

We now define the automaton $\cssaut_1 \cap \cssaut_2$.
The intersection automaton synchronises transitions that move in the same direction
(by $\arrchild$, $\arrneighbour$, $\arrsibling$)
or both agree to match the current node at the same time
(with $\arrlast$).
In addition, we observe that a $\arrsibling$ can be used by one automaton while the other uses $\arrneighbour$.
Given
\[
    \cssaut_1 = \tup{\astates_1,
                     \eles,
                     \atrans_1,
                     \ainitstate_1, \afinstate^1}
    \quad \text{and} \quad
    \cssaut_2 = \tup{\astates_2,
                     \eles,
                     \atrans_2,
                     \ainitstate_2, \afinstate^2}
\]
we define
\[
    \cssaut_1 \cap \cssaut_2 =
    \tup{\astates_1 \times \astates_2,
         \eles,
         \atrans,
         \tup{\ainitstate_1, \ainitstate_2},
         \tup{\afinstate^1, \afinstate^2}}
\]
where (letting $\arrgen$ range over
$\set{\arrneighbour, \arrsibling, \arrchild, \arrlast}$) we set
$\atrans =$
\[
    \begin{array}{c}
        \setcomp{\tup{\astate_1, \astate_2}
                 \atran{\arrgen}{\csssim_1 \cap \csssim_2}
                 \tup{\astate'_1, \astate'_2}}
                {\astate_1 \atran{\arrgen}{\csssim_1} \astate'_1
                 \land
                 \astate_2 \atran{\arrgen}{\csssim_2} \astate'_2}
        \ \cup \\
        \setcomp{\tup{\astate_1, \astate_2}
                 \atran{\arrneighbour}{\csssim_1}
                 \tup{\astate'_1, \astate_2}}
                {\astate_1 \atran{\arrneighbour}{\csssim_1} \astate'_1
                 \land
                 \astate_2 \atran{\arrsibling}{\isany} \astate_2}
        \ \cup \\
        \setcomp{\tup{\astate_1, \astate_2}
                 \atran{\arrneighbour}{\csssim_2}
                 \tup{\astate_1, \astate'_2}}
                {\astate_1 \atran{\arrsibling}{\isany} \astate_1
                 \land
                 \astate_2 \atran{\arrneighbour}{\csssim_2} \astate'_2} \ .
    \end{array}
\]
We verify that this transition relation satisfies the appropriate conditions:
\begin{compactenum}
\item
    (only self-loops)
    for a contradiction, a loop in $\atrans$ that is not a self-loop can be projected to a loop of $\cssaut_1$ or $\cssaut_2$ that is also not a self-loop,
    e.g.\ if there exists
    $\tup{\astate_1, \astate_2}
     \atran{\arrgen}{\csssim}
     \tup{\astate'_1, \astate'_2}
     \atran{\arrgen'}{\csssim'}
     \tup{\astate_1, \astate_2}$
    that is not a self-loop, then either
    $\astate_1 \neq \astate'_1$
    or
    $\astate_2 \neq \astate'_2$,
    and thus we have a loop
    from $\astate_1$ to $\astate'_1$ to $\astate_1$ in $\cssaut_1$ or similarly for $\cssaut_2$,

\item
    ($\arrsibling$ loops and doesn't check nodes)
    $\arrsibling$ transitions are built from $\arrsibling$ transitions in $\cssaut_1$ and $\cssaut_2$, thus a violation in the intersection implies a violation in one of the underlying automata,

\item
    ($\arrneighbour$ doesn't label loops)
    $\arrneighbour$ transitions are built from at least one $\arrneighbour$ transition in $\cssaut_1$ or $\cssaut_2$, thus a violation in the intersection implies a violation in one of the underlying automata,

\item
    ($\arrlast$ checks last node only)
    similarly, a violation of this constraint in the intersection implies a violation in one of the underlying automata,

\item
    ($\afinstate$ is a sink)
    again, a violation of this constraint in the intersection implies a violation in the underlying automata.
\end{compactenum}

\subsubsection{Example of Intersection}

Recall the automaton in Figure~\ref{fig:css-automata-example-easy}
(equivalent to
$\isele{\mytt{p}} \cssneighbour  \isclass{\mytt{a}}$)
and the automaton in Figure~\ref{fig:css-automata-example-hard}
(equivalent to
$\isele{\mytt{div}} \cssdescendant \mytt{p} \csssibling \isclass{\mytt{b}}$).
The intersection of the two automata is given in Figure~\ref{fig:automata-intersection-example}.
Each state is a tuple
$\tup{\astate_1, \astate_2}$
where
    the first component $\astate_1$ represents the state of the automaton equivalent to
    $\isele{\mytt{p}} \cssneighbour  \isclass{\mytt{a}}$
    and the second component $\astate_2$ the automaton equivalent to
    $\isele{\mytt{div}} \cssdescendant \mytt{p} \csssibling \isclass{\mytt{b}}$.

In this example, accepting runs of the automaton will use only the top row of states.
The lower states are reached when the two automata move out of sync and can no longer reach agreement on the final node matched.
Usually this is by the first automaton matching a node labelled $\mytt{p}$, after which it must immediately accept the neighbouring node.
This leaves the second automaton unable find a match.
Hence, the first automaton needs to stay in state $\selstate{1}$ until the second has reached a near-final state.
Note, the two automata need not match the same node with element name $\mytt{p}$.

\begin{figure}
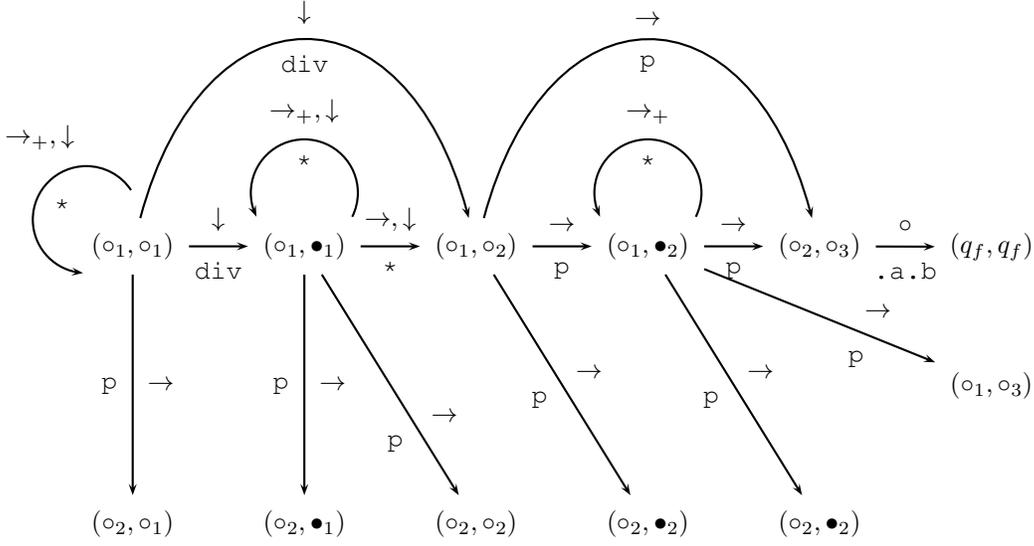

\centering
    \qquad
    \psset{colsep=7.5ex,rowsep=9ex,arcangle=45,nodesep=2mm,loopsize=.5}
    \begin{psmatrix}
        \\
        \\
        \rnode{N1}{$\tup{\selstate{1}, \selstate{1}}$} &
        \rnode{N2}{$\tup{\selstate{1}, \midstate{1}}$} &
        \rnode{N3}{$\tup{\selstate{1}, \selstate{2}}$} &
        \rnode{N4}{$\tup{\selstate{1}, \midstate{2}}$} &
        \rnode{N5}{$\tup{\selstate{2}, \selstate{3}}$} &
        \rnode{N6}{$\tup{\afinstate, \afinstate}$}
        \\
        & & & & &
        \rnode{N9}{$\tup{\selstate{1}, \selstate{3}}$}
        \\
        \rnode{N10}{$\tup{\selstate{2}, \selstate{1}}$} &
        \rnode{N11}{$\tup{\selstate{2}, \midstate{1}}$} &
        \rnode{N12}{$\tup{\selstate{2}, \selstate{2}}$} &
        \rnode{N7}{$\tup{\selstate{2}, \midstate{2}}$} &
        \rnode{N8}{$\tup{\selstate{2}, \midstate{2}}$} &
        \\
        \nccircle[angleA=60]{->}{N1}{4.5ex}\nbput[npos=.4]{$\arrsibling, \arrchild$}\naput{$\isany$}
        \ncarc[ncurv=1.5,arcangle=75]{->}{N1}{N3}\naput{$\arrchild$}\nbput{$\mytt{div}$}
        \ncline{->}{N1}{N2}\naput{$\arrchild$}\nbput{$\isele{\mytt{div}}$}
        \ncline{->}{N2}{N3}\naput{$\arrneighbour,\arrchild$}\nbput{$\isany$}
        \nccircle{->}{N2}{4.5ex}\nbput{$\arrsibling, \arrchild$}\naput{$\isany$}
        \ncarc[ncurv=1.5,arcangle=75]{->}{N3}{N5}\naput{$\arrneighbour$}\nbput{$\isele{\mytt{p}}$}
        \ncline{->}{N3}{N4}\naput{$\arrneighbour$}\nbput{$\isele{\mytt{p}}$}
        \nccircle{->}{N4}{4.5ex}\nbput{$\arrsibling$}\naput{$\isany$}
        \ncline{->}{N4}{N5}\naput{$\arrneighbour$}\nbput{$\isele{\mytt{p}}$}
        \ncline{->}{N5}{N6}\naput{$\arrlast$}\nbput{$\isclass{\mytt{a}}\isclass{\mytt{b}}$}
        \ncline{->}{N1}{N10}\naput{$\arrneighbour$}\nbput{$\isele{\mytt{p}}$}
        \ncline{->}{N2}{N11}\naput{$\arrneighbour$}\nbput{$\isele{\mytt{p}}$}
        \ncline{->}{N2}{N12}\naput[npos=.7]{$\arrneighbour$}\nbput[npos=.7]{$\isele{\mytt{p}}$}
        \ncline{->}{N3}{N7}\naput{$\arrneighbour$}\nbput{$\isele{\mytt{p}}$}
        \ncline{->}{N4}{N8}\naput{$\arrneighbour$}\nbput{$\isele{\mytt{p}}$}
        \ncline{->}{N4}{N9}\naput[npos=.7]{$\arrneighbour$}\nbput[npos=.7]{$\isele{\mytt{p}}$}
    \end{psmatrix}
    \caption{\label{fig:automata-intersection-example}The intersection of the automaton in Figure~\ref{fig:css-automata-example-easy} (in the first component) and the automaton in Figure~\ref{fig:css-automata-example-hard} (in the second component).}
\end{figure}

\subsection{Reducing Non-emptiness of CSS Automata to SMT-solving}

We will now provide a polynomial-time reduction from the non-emptiness of a
CSS automaton to satisfiability of quantifier-free theory over integer linear
arithmetic.
That is, given a CSS automaton $\cssaut$, our algorithm constructs a
quantifier-free
formula $\presof{\cssaut}$ over integer linear arithmetic such that $\cssaut$
recognises a non-empty language iff $\presof{\cssaut}$ is satisfiable.
The encoding is quite involved and requires three small model properties discussed earlier.
Once we have these properties we can construct the required formula of the quantifier-free theory over linear arithmetic.
We begin by discussing each of these properties in turn, and then provide the reduction.
The reduction is presented in a number of stages.
We show how to handle attribute selectors separately before handling the encoding of CSS automata.
The encoding of CSS automata is further broken down:
we first describe the variables used in the encoding, then we describe how to handle node selectors, finally we put it all together to encode runs of a CSS automaton.

For the remainder of the section, we fix a CSS automaton
$\cssaut = \tup{\astates, \atrans, \ainitstate, \afinstate}$ and show
how to construct the formula $\presof{\cssaut}$ in polynomial-time.

\subsubsection{Bounded Run Length}

The first property required is that the length of runs can be bounded.
That is, if the language of $\cssaut$ is not empty, there is an accepting run over some tree whose length is smaller than the derived bound.
We will construct a formula that will capture all runs of length up to this bound.
Thanks to the bound we know that if an accepting run exists, the formula will encode at least one, and hence be satisfiable.

\begin{namedproposition}{prop:boundedlen}{Bounded Runs}
    Given a CSS Automaton
    $\cssaut = \tup{\astates, \atrans, \ainitstate, \afinstate}$,
    if $\ap{\Lang}{\cssaut} \neq \emptyset$, there exists
    $\aaccepts{\tree}{\node}{\cssaut}$
    with an accepting run of length $\sizeof{\atrans}$.
\end{namedproposition}

This proposition is straightforward to obtain.
We exploit that any loop in the automaton is a self loop that only needs to be taken at most once.
For loops labelled $\arrchild$, a CSS formula cannot track the depth in the tree, so repeated uses of the loop will only introduce redundant nodes.
For loops labelled $\arrsibling$, selectors such as
$\psnthchild{\coefa}{\coefb}$
may enforce the existence of a number of intermediate nodes.
But, since $\arrsibling$ can cross several nodes, such loops also only needs to be taken once.
Hence, each transition only needs to appear once in an accepting run.
That is, if there is an accepting run of a CSS automaton with $\numof$ transitions, there is also an accepting run of length at most $\numof$.

\subsubsection{Bounding Namespaces and Elements}

It will also be necessary for us to argue that the number of namespaces and elements can be bounded linearly in the size of the automaton.
This is because our formula will need keep track of the number of nodes of each type appearing in the tree.
This is required for encoding, e.g., the pseudo-classes of the form
$\psnthoftype{\coefa}{\coefb}$.
By bounding the number of types, our formula can use a bounded number of variables to store this information.

We state the property below.
The proof is straightforward and appears in Appendix~\ref{sec:bounded-types-proof}.
Intuitively, since only a finite number of nodes can be directly inspected by a CSS automaton, all others can be relabelled to a dummy type unless their type matches one of the inspected nodes.

\begin{namedproposition}{prop:boundedtypes}{Bounded Types}
    Given a CSS Automaton
    $\cssaut = \tup{\astates, \atrans, \ainitstate, \afinstate}$
    if there exists
    $\aaccepts{\tree}{\node}{\cssaut}$
    with
    $\tree = \tup{\treedom,\treelab}$,
    then there exists some
    $\aaccepts{\tree'}{\node}{\cssaut}$
    where
    $\tree' = \tup{\treedom, \treelab'}$.
    Moreover, let $\finof{\eles}$ be the set of element names and $\finof{\nspaces}$ be the set of namespaces appearing in the image of $\treelab'$.
    Both the size of $\finof{\eles}$ and the size of $\finof{\nspaces}$ are bounded linearly in the size of $\cssaut$.
\end{namedproposition}

\subsubsection{Bounding Attribute Values}

We will need to encode the satisfiability of conjunctions of attribute selectors.
This is another potential source of unboundedness because the values are strings of arbitrary length.
We show that, in fact, if the language of the automaton is not empty, there there is a solution whose attribute values are strings of a length less than a bound polynomial in the size of the automaton.

The proof of the following lemma is highly non-trivial and uses techniques inspired by results in Linear Temporal Logic and automata theory.
To preserve the flow of the article, we present the proof in Appendix~\ref{sec:poly-string-solution}.

\begin{namedproposition}{prop:boundedatts}{Bounded Attributes}
    Given a CSS Automaton
    $\cssaut = \tup{\astates, \atrans, \ainitstate, \afinstate}$
    if there exists
    $\aaccepts{\tree}{\node}{\cssaut}$
    with
    $\tree = \tup{\treedom,\treelab}$,
    then there exists some bound $\attvalbound$ polynomial in the size of $\cssaut$ and some
    $\aaccepts{\tree'}{\node}{\cssaut}$
    where the length of all attribute values in $\tree'$ is bound by $\attvalbound$.
\end{namedproposition}

\matt{Right term for Presburger?}
Given such a bound on the length of values, we can use quantifier-free Presburger formulas to ``guess'' these witnessing strings by using a variable for each character position in the string.
Then, the letter in each position is encoded by a number.
This process is discussed in the next section.

\subsubsection{Encoding Attribute Selectors}

Before discussing the full encoding, we first show how our formula can encode attribute selectors.
Once we have this encoding, we can invoke it as a sub-routine of our main encoding whenever we have to handle attribute selectors.
It is useful for readability reasons to present this in its own section.

The first key observation is that we can assume each positive attribute selector that does not specify a namespace applies to a unique, fresh, namespace.
Thus, these selectors do not interact with any other positive attribute selectors and we can handle them easily.
Note, these fresh namespaces do not appear in
$\finof{\nspaces}$.

We present our encoding which works by identifying combinations of attribute selectors that must apply to the same attribute value.
That is, we discover how many attribute values are needed, and collect together all selectors that apply to each selector.
To that end, let $\attop$ range over the set of operators
$\set{\opis, \ophas, \opbegin, \opstrbegin, \opstrend, \opstrsub}$
and let $\csstype\cssconds$ be a node selector.
For each $\ns$ and $\att$, let
$\cssconds^\ns_\att$
be the set of conditions in $\cssconds$ of the form $\csscond$ or
$\cssneg{\csscond}$
where $\csscond$ is of the form
$\hasattns{\ns}{\att}$
or
$\opattns{\ns}{\att}{\attop}{\attval}$.
Recall we are encoding runs of a CSS automaton of length at most $\numof$.
For a given position $\idxi$ in the run, we define
$\attspres{\csstype\cssconds}{\idxi}$
to be the conjunction of the following constraints, where the encoding for
$\attspresns{\ns}{\att}{\cssconds}{\idxi}$
is presented below.
\newcommand\setneg[2]{\ap{\text{Neg}}{#1, #2}}
Since a constraint of the form
$\cssneg{\opatt{\att}{\attop}{\attval}}$
applies to all attributes $\att$ regardless of their namespace, we define for convenience
$\setneg{\ns}{\att} =
 \setcomp{\cssneg{\opattns{\ns}{\att}{\attop}{\attval}}}
                 {\cssneg{\opatt{\att}{\attop}{\attval}} \in \cssconds}$.

\begin{compactitem}
\item
    For each $\ns$ and $\att$ with $\cssconds^\ns_\att$ non-empty and containing at least one selector of the form
    $\hasattns{\ns}{\att}$
    or
    $\opattns{\ns}{\att}{\attop}{\attval}$,
    we enforce
    \[
        \attspresns{\ns}{\att}{
                \cssconds^\ns_\att
                \cup
                \setneg{\ns}{\att}
        }{\idxi}
    \]
    if
    $\cssneg{\hasatt{\att}} \notin \cssconds$ and
    $\cssneg{\hasattns{\ns}{\att}} \notin \cssconds$,
    else we assert false.

\item
    For each
    $\hasatt{\att} \in \cssconds$,
    let $\ns$ be fresh namespace.
    We assert
    \[
        \attspresns{\ns}{\att}{
                \set{\hasattns{\ns}{\att}}
                \cup
                \setneg{\ns}{\att}
        }{\idxi}
    \]
    and for each
    $\opatt{\att}{\attop}{\attval} \in \cssconds$
    we assert
    \[
        \attspresns{\ns}{\att}{
                \set{\opattns{\ns}{\att}{\attop}{\attval}}
                \cup
                \setneg{\ns}{\att}
        }{\idxi}
    \]
    whenever, in both cases,
    $\cssneg{\hasatt{\att}} \notin \cssconds$.
    If
    $\cssneg{\hasatt{\att}} \in \cssconds$
    in both cases we assert false.
\end{compactitem}

It remains to encode
$
    \attspresns{\ns}{\att}{\consset}{\idxi}
$
for some set of attribute selectors $\consset$ all applying to $\ns$ and $\att$.

We can obtain a polynomially-sized global bound $(\attvalbound - 1)$ on the length of any satisfying value of an attribute $\qatt{\ns}{\att}$ at some position $\idxi$ of the run from \refproposition{prop:boundedatts}\footnote{%
    Of course, we could obtain individual bounds for each $\ns$ and $\att$ if we wanted to streamline the encoding.
}.
Finally, we increment the bound by one to allow space for a trailing null character.

Once we have a bound on the length of a satisfying value, we can introduce variables
$\wordpos{\ns}{\att}{\idxi}{1},
 \ldots,
 \wordpos{\ns}{\att}{\idxi}{\attvalbound}$
for each character position of the satisfying value, and encode the constraints almost directly.
That is, letting $\csscond$ range over positive attribute selectors, we define\footnote{
    Note, we allow negation in this formula.  This is for convenience only as
    the formulas we negate can easily be transformed into existential Presburger.
}
\[
    \begin{array}{rcl} %
        \attspresns{\ns}{\att}{\consset}{\idxi} %
        &=& %
        \bigwedge\limits_{\csscond \in \consset} %
            \attspres{\csscond}{\wordposvec} %
        \ \land \\ %
        & & %
        \bigwedge\limits_{\cssneg{\csscond} \in \consset} %
            \neg \attspres{\csscond}{\wordposvec} %
        \ \land %
        \attspresnulls{\wordposvec} \ .%
    \end{array} %
\]
where
$\wordposvec
 =
 \wordpos{\ns}{\att}{\idxi}{1},
 \ldots,
 \wordpos{\ns}{\att}{\idxi}{\attvalbound}$
will be existentially quantified later in the encoding and whose values will range\footnote{
    Strictly speaking, Presburger variables range over natural numbers.
    It is straightforward to range over a finite number of values.
    That is, we can assume, w.l.o.g. that
    $\alphabet \uplus {\nullch} \subseteq \N$
    and the quantification is suitably restricted.
}
over $\alphabet \uplus \set{\nullch}$
where $\nullch$ is a null character used to pad the suffix of each word.
We define $\attspres{\csscond}{\wordposvec}$ for several $\csscond$, the
rest can be defined in the same way
(see Appendix~\ref{sec:attspres-missing}).
Letting
$\attval = \cha_1\ldots\cha_\numofalt$,
\[
    \begin{array}{rcl} %
        \attspres{\hasattns{\ns}{\att}}{\wordposvec} %
        &=& %
        \top %
        \\ %
        \attspres{\attisns{\ns}{\att}{\attval}}{\wordposvec} %
        &=& %
        \bigwedge\limits_{1 \leq \idxi \leq \numofalt} %
            \wordpos{\ns}{\att}{\idxi}{\idxj} = \cha_\idxj %
        \land %
        \wordpos{\ns}{\att}{\idxi}{\numofalt+1} = \nullch %
        \\ %
        \attspres{\attstrbeginns{\ns}{\att}{\attval}}{\wordposvec} %
        &=& %
        \bigwedge\limits_{1 \leq \idxj \leq \numofalt} %
            \wordpos{\ns}{\att}{\idxi}{\idxj} = \cha_\idxj %
        \\ %
        \attspres{\attstrsubns{\ns}{\att}{\attval}}{\wordposvec} %
        &=& %
        \bigvee\limits_{0 \leq \idxj \leq \attvalbound - \numofalt - 1} %
            \bigwedge\limits_{1 \leq \idxj' \leq \numofalt} %
            \wordpos{\ns}{\att}{\idxi}{\idxj + \idxj'} = \cha_\idxj %
    \end{array} %
\]
Finally, we enforce correct use of the null character
\[
    \attspresnulls{\wordposvec}
    =
    \bigvee\limits_{1 \leq \idxj \leq \attvalbound}
    \bigwedge\limits_{\idxj \leq \idxj' \leq \attvalbound}
        \wordpos{\ns}{\att}{\idxi}{\idxj'} = \nullch \ .
\]

\subsubsection{Encoding Non-Emptiness}

We are now ready to give the main encoding of the emptiness of a CSS automaton using the quantifier-free theory over integer linear arithmetic.
\matt{
    This is a weird thing to say because we use $\exists$ all the time...
}
This encoding makes use of a number of variables, which we explain intuitively below.
After describing the variables, we give the encoding in two parts:
first we explain how a single node selector can be translated into existential Presuburger arithmetic.
Once we have this translation, we give the final step of encoding a complete run of an automaton.

\paragraph{Variables Used in the Encoding}
Our encoding makes use of the following variables for $0 \leq \idxi \leq \numof$, representing the node at the $\idxi$th step of the run.
We use the overline notation to indicate variables.
\begin{compactitem}
\item
    $\astatevar{\idxi}$, taking any value in $\astates$, indicating the state of the automaton when reading the $\idxi$th node in the run,
\item
    $\nsvar{\idxi}$,
    taking any value in
    $\finof{\nspaces}$
    indicating the element tag (with namespace) of the $\idxi$th node read in the run,
\item
    $\elevar{\idxi}$,
    taking any value in
    $\finof{\eles}$
    indicating the element tag (with namespace) of the $\idxi$th node read in the run,
\item
    $\pclsvar{\idxi}{\pcls}$,
    for each pseudo-class
    $\pcls \in \pclss \setminus \set{\psroot}$
    indicating that the $\idxi$th node has the pseudo-class $\pcls$,
\item
    $\numvar{\idxi}$, taking a natural number indicating that the $\idxi$th node is the $\numvar{\idxi}$th child of its parent, and
\item
    $\numtypevar{\idxi}{\qele{\ns}{\ele}}$,
    for all
    $\ns \in \finof{\nspaces}$
    and
    $\ele \in \finof{\eles}$,
    taking a natural number variable indicating that there are
    $\numtypevar{\idxi}{\qele{\ns}{\ele}}$
    nodes of type $\qele{\ns}{\ele}$ strictly preceding the current node in the sibling order, and
\item
    $\lnumvar{\idxi}$, taking a natural number indicating that the current node is the $\lnumvar{\idxi}$th to last child of its parent, and
\item
    $\lnumtypevar{\idxi}{\qele{\ns}{\ele}}$,
    for all
    $\ns \in \finof{\nspaces}$
    and
    $\ele \in \finof{\eles}$,
    taking a natural number variable indicating that there are
    $\lnumtypevar{\idxi}{\qele{\ns}{\ele}}$
    nodes of type $\qele{\ns}{\ele}$ strictly following the current node in the sibling order, and
\item
    $\wordpos{\ns}{\att}{\idxi}{\idxj}$ as used in the previous section for encoding the character at the $\idxj$th character position of the attribute value for $\qatt{\ns}{\att}$ at position $\idxi$ in the run\footnote{
        Recall $\ns$ is not necessarily in
        $\finof{\nspaces}$
        as it may be some fresh value.
    }.
\end{compactitem}
Note, we do not need a variable for $\psroot$ since it necessarily holds uniquely at the $0$th position of the run.

\paragraph{Encoding Node Selectors}
We define the encoding of node selectors below using the variables defined in the previous section.
Note, this translation is not correct in isolation: global constraints such as ``no ID appears twice in the tree'' will be enforced later.
The encoding works by translating each part of the selector directly.
For example, the constraint $\isele{\ele}$ simply checks that
$\elevar{\idxi} = \ele$.
Even in the more complex cases of selectors such as
$\psnthchild{\coefa}{\coefb}$
we are able to use a rather direct translation of the semantics:
$\exists \nvar . \xvar = \coefa \nvar + \coefb$.
For the case of
$\psnthoftype{\coefa}{\coefb}$
we have to consider all possible namespaces $\ns$ and element names $\ele$ that the node could take, and use the
$\numtypevar{\idxi}{\qele{\ns}{\ele}}$
variables to do the required counting.

In our presentation we allow ourselves to negate existentially quantified formulas of the form
$
    \exists \nvar . \xvar = \coefa \nvar + \coefb
$
where $\xvar$ is a variable, and $\coefa$ and $\coefb$ are constants.
Although this is not strictly allowed in existential Presburger arithmetic, it is not difficult to encode correctly.
For completeness, we provide the encoding of such negated formulas in Appendix~\ref{sec:negating-pos-formulas}.

In the following, let
$\noatts{\cssconds}$
be $\cssconds$ \emph{less} all selectors of the form
$\hasattns{\ns}{\att}$,
$\opattns{\ns}{\att}{\attop}{\attval}$,
$\hasatt{\att}$,
or
$\opatt{\att}{\attop}{\attval}$,
or
$\cssneg{\hasattns{\ns}{\att}}$,
$\cssneg{\opattns{\ns}{\att}{\attop}{\attval}}$,
$\cssneg{\hasatt{\att}}$,
or
$\cssneg{\opatt{\att}{\attop}{\attval}}$.

\begin{definition}[$\csssimpres{\csssim}{\idxi}$]
    Given a node selector $\csstype\cssconds$, we define
    \[
        \csssimpres{\csstype\cssconds}{\idxi} %
        = %
        \brac{ %
            \begin{array}{c} %
                \csssimpres{\csstype}{\idxi} %
                \land \\ %
                \brac{ %
                    \bigwedge\limits_{\csscond \in \noatts{\cssconds}} %
                        \csssimpres{\csscond}{\idxi} %
                } %
                \land \\ %
                \attspres{\csstype\cssconds}{\idxi} %
            \end{array} %
        } %
    \]
    where we define
    $\csssimpres{\csscond}{\idxi}$
    as follows:
    \allowdisplaybreaks
    \[
        \begin{array}{rcl} %
            \csssimpres{\isany}{\idxi} %
            &=& %
            \ptrue %
            \\ %
            \csssimpres{\isanyns{\ns}}{\idxi} %
            &=& %
            \brac{\nsvar{\idxi} = \ns} %
            \\ %
            \csssimpres{\isele{\ele}}{\idxi} %
            &=& %
            \brac{\elevar{\idxi} = \ele} %
            \\ %
            \csssimpres{\iselens{\ns}{\ele}}{\idxi} %
            &=& %
            \brac{ %
                \nsvar{\idxi} = \ns %
                \land %
                \elevar{\idxi} = \ele %
            } %
            \\ %
            \csssimpres{\cssneg{\csssimnoneg}}{\idxi} %
            &=& %
            \neg \csssimpres{\csssimnoneg}{\idxi} %
            \\ %
            \csssimpres{\psroot}{\idxi} %
            &=& %
            \begin{cases} %
                \ptrue  & \idxi = 0 %
                \\ %
                \pfalse & \text{otherwise} %
            \end{cases} %
            \\ %
            \forall \pcls \in \pclss \setminus \set{\psroot} \ .\ 
            \csssimpres{\pcls}{\idxi} %
            &=& %
            \pclsvar{\idxi}{\pcls} %
        \end{array} %
    \]
    and, finally, for the remaining selectors, we have
    \[
        \begin{array}{rcl}
            \csssimpres{\psnthchild{\coefa}{\coefb}}{0}
            &=&
            \pfalse
            \\
            \csssimpres{\psnthlastchild{\coefa}{\coefb}}{0}
            &=&
            \pfalse
            \\
            \csssimpres{\psnthoftype{\coefa}{\coefb}}{0}
            &=&
            \pfalse
            \\
            \csssimpres{\psnthlastoftype{\coefa}{\coefb}}{0}
            &=&
            \pfalse
            \\
            \csssimpres{\psonlychild}{0}
            &=&
            \pfalse
            \\
            \csssimpres{\psonlyoftype}{0}
            &=&
            \pfalse
        \end{array}
    \]
    and when $\idxi > 0$
    \[
        \begin{array}{rcl}
            \csssimpres{\psnthchild{\coefa}{\coefb}}{\idxi} %
            &=& %
            \exists \nvar . %
                \numvar{\idxi} = \coefa \nvar + \coefb %
            \\
            \csssimpres{\psnthlastchild{\coefa}{\coefb}}{\idxi} %
            &=& %
            \exists \nvar . %
                \lnumvar{\idxi} = \coefa \nvar + \coefb %
            \\
            \csssimpres{\psnthoftype{\coefa}{\coefb}}{\idxi} %
            &=&  %
            \bigvee\limits_{\substack{ %
                \ns \in \finof{\nspaces} %
                \\ %
                \ele \in \finof{\eles} %
            }}\brac{ %
                \begin{array}{c} %
                    \nsvar{\idxi} = \ns \land %
                    \elevar{\idxi} = \ele\ \land %
                    \\ %
                    \exists \nvar . %
                        \numtypevar{\idxi}{\qele{\ns}{\ele}} + 1 %
                        = %
                        \coefa \nvar + \coefb %
                \end{array} %
            } %
            \\
            \csssimpres{\psnthlastoftype{\coefa}{\coefb}}{\idxi} %
            &=& %
            \bigvee\limits_{\substack{ %
                \ns \in \finof{\nspaces} %
                \\ %
                \ele \in \finof{\eles} %
            }}\brac{ %
                \begin{array}{c} %
                    \nsvar{\idxi} = \ns \land %
                    \elevar{\idxi} = \ele\ \land %
                    \\ %
                    \exists \nvar . %
                        \lnumtypevar{\idxi}{\qele{\ns}{\ele}} + 1 %
                        = %
                        \coefa \nvar + \coefb %
                \end{array} %
            } %
            \\
            \csssimpres{\psonlychild}{\idxi} %
            &=& %
            \numvar{\idxi} = 1 %
            \land %
            \lnumvar{\idxi} = 1 %
            \\
            \csssimpres{\psonlyoftype}{\idxi} %
            &=& %
            \bigvee\limits_{\substack{ %
                \ns \in \finof{\nspaces} %
                \\ %
                \ele \in \finof{\eles} %
            }}\brac{ %
                \begin{array}{c} %
                    \nsvar{\idxi} = \ns \land %
                    \elevar{\idxi} = \ele\ \land %
                    \\ %
                    \numtypevar{\idxi}{\qele{\ns}{\ele}} = 0 %
                    \land %
                    \lnumtypevar{\idxi}{\qele{\ns}{\ele}} = 0 %
                \end{array} %
            } %
        \end{array} %
    \]
\end{definition}

We are now ready to move on to complete the encoding.

\paragraph{Encoding Runs of CSS Automata}

Finally, now that we are able to encode attribute and node selectors, we can make use of these to encode accepting runs of a CSS automaton.
Since we know that, if there is an accepting run, then there is a run of length at most $\numof$ where $\numof$ is the number of transitions in $\atrans$, we encode the possibility of an accepting run using the variables discussed above for all $0 \leq \idxi \leq \numof$.
The shape of the translation is given below and elaborated on afterwards.

\begin{definition}{$\presof{\cssaut}$}
    Given a CSS automaton $\cssaut$ we define
    \[
        \presof{\cssaut} = \brac{ %
            \brac{ %
                \begin{array}{c} %
                    \astatevar{0} = \ainitstate \\ %
                    \land \\ %
                    \astatevar{\numof} = \afinstate %
                \end{array} %
            } %
            \land %
            \bigwedge\limits_{0 \leq \idxi < \numof} %
            \brac{ %
                \begin{array}{c} %
                    \tranpres{\idxi} \\ %
                    \lor \\ %
                    \astatevar{\idxi} = \afinstate %
                \end{array} %
            } %
            \land %
            \consistent %
        } %
    \]
    where $\tranpres{\idxi}$ and $\consistent$ are defined below.
\end{definition}

Intuitively, the first two conjuncts asserts that a final state is reached from an initial state.
Next, we use $\tranpres{\idxi}$ to encode a single step of the transition relation, or allows the run to finish early.
Finally $\consistent$ asserts consistency constraints.

We define as a disjunction over all possible (single-step) transitions
$
    \tranpres{\idxi} = \bigvee\limits_{\atrant \in \atrans} \tranprest{\idxi}{\atrant}
$
where $\tranprest{\idxi}{\atrant}$ is defined below by cases.
There are four cases depending on whether the transition is labelled
$\arrchild$, $\arrneighbour$, $\arrsibling$, or $\arrlast$.
In most cases, we simply assert that the state changes as required by the transition, and that the variables
$\numvar{\idxi}$
and
$\numtypevar{\idxi}{\qele{\ns}{\ele}}$
are updated consistently with the number of nodes read by the transition.
Although the encodings look complex, they are essentially simple bookkeeping.

To ease presentation, we write
$\qele{\nsvar{\idxi}}{\elevar{\idxi}} = \qele{\ns}{\ele}$
as shorthand for
$\brac{
    \nsvar{\idxi} = \ns \land \elevar{\idxi} = \ele
}$
and
$\qele{\nsvar{\idxi}}{\elevar{\idxi}} \neq \qele{\ns}{\ele}$
as shorthand for
$\brac{
    \nsvar{\idxi} \neq \ns \lor \elevar{\idxi} \neq \ele
}$.

\begin{compactenum}
\item
    When
    $\atrant = \astate \atran{\arrchild}{\csssim} \astate'$
    we define $\tranprest{\idxi}{\atrant}$ to be
    \[
        \begin{array}{c} %
            \brac{\astatevar{\idxi} = \astate} \land %
            \brac{\astatevar{\idxi+1} = \astate'} \land %
            \neg{\pclsvar{\idxi}{\psempty}} \land %
            \csssimpres{\csssim}{\idxi} \ \land %
            \\ %
            \brac{\numvar{\idxi+1} = 1} \land %
            \bigwedge\limits_{\substack{ %
                \ns \in \finof{\nspaces} %
                \\ %
                \ele \in \finof{\eles} %
            }}\brac{ %
                \numtypevar{\idxi+1}{\qele{\ns}{\ele}} = 0 %
            } %
            \ . %
        \end{array} %
    \]

\item
    When
    $\atrant = \astate \atran{\arrneighbour}{\csssim} \astate'$
    we define $\tranprest{\idxi}{\atrant}$ to be false when $\idxi = 0$ (since the root has no siblings) and otherwise
    \[
        \begin{array}{c} %
            \brac{\astatevar{\idxi} = \astate} \land %
            \brac{\astatevar{\idxi+1} = \astate'} \land %
            \csssimpres{\csssim}{\idxi} \ \land %
            \\ %
            \brac{\numvar{\idxi+1} = \numvar{\idxi} + 1} \land %
            \brac{\lnumvar{\idxi+1} = \lnumvar{\idxi} - 1} \ \land %
            \\ %
            \bigwedge\limits_{\substack{ %
                \ns \in \finof{\nspaces} %
                \\ %
                \ele \in \finof{\eles} %
            }} \brac{ %
                \begin{array}{c} %
                    \brac{ %
                        \brac{ %
                            \qele{\nsvar{\idxi}}{\elevar{\idxi}} %
                            = %
                            \qele{\ns}{\ele} %
                        } %
                        \Rightarrow %
                        \brac{ %
                            \numtypevar{\idxi+1}{\qele{\ns}{\ele}} = %
                            \numtypevar{\idxi}{\qele{\ns}{\ele}} + 1 %
                        } %
                    } \ \land %
                    \\ %
                    \brac{ %
                        \brac{ %
                            \qele{\nsvar{\idxi}}{\elevar{\idxi}} %
                            \neq %
                            \qele{\ns}{\ele} %
                        } %
                        \Rightarrow %
                        \brac{ %
                            \numtypevar{\idxi+1}{\qele{\ns}{\ele}} = %
                            \numtypevar{\idxi}{\qele{\ns}{\ele}} %
                        } %
                    } \ \land %
                    \\ %
                    \brac{ %
                        \brac{ %
                            \qele{\nsvar{\idxi+1}}{\elevar{\idxi+1}} %
                            = %
                            \qele{\ns}{\ele} %
                        } %
                        \Rightarrow %
                        \brac{ %
                            \lnumtypevar{\idxi+1}{\qele{\ns}{\ele}} = %
                            \lnumtypevar{\idxi}{\qele{\ns}{\ele}} - 1 %
                        } %
                    }\ \land %
                    \\ %
                    \brac{ %
                        \brac{ %
                            \qele{\nsvar{\idxi+1}}{\elevar{\idxi+1}} %
                            \neq %
                            \qele{\ns}{\ele} %
                        } %
                        \Rightarrow %
                        \brac{ %
                            \lnumtypevar{\idxi+1}{\qele{\ns}{\ele}} = %
                            \lnumtypevar{\idxi}{\qele{\ns}{\ele}} %
                        } %
                    } %
                \end{array} %
            } %
            \ . %
        \end{array}
    \]

\item
    When
    $\atrant = \astate \atran{\arrsibling}{\isany} \astate$
    we define $\tranprest{\idxi}{\atrant}$ to be false when $\idxi = 0$ and otherwise
    \[
        \begin{array}{c} %
            \brac{\astatevar{\idxi} = \astate} \land %
            \brac{\astatevar{\idxi+1} = \astate} \ \land %
            \\ %
            \exists \shiftvar . \brac{ %
                \brac{\numvar{\idxi+1} = \numvar{\idxi} + \shiftvar} \land %
                \brac{\lnumvar{\idxi+1} = \lnumvar{\idxi} - \shiftvar} %
            } \ \land %
            \\ %
            \bigwedge\limits_{\substack{ %
                \ns \in \finof{\nspaces} %
                \\ %
                \ele \in \finof{\eles} %
            }} %
            \exists \shiftvartype{\qele{\ns}{\ele}} .  \brac{ %
                \begin{array}{c} %
                    \brac{ %
                        \begin{array}{l} %
                            \brac{ %
                                \qele{\nsvar{\idxi}}{\elevar{\idxi}} %
                                = %
                                \qele{\ns}{\ele} %
                            } %
                            \ \Rightarrow \\ %
                            \quad %
                            \brac{ %
                                \numtypevar{\idxi+1}{\qele{\ns}{\ele}} = %
                                \numtypevar{\idxi}{\qele{\ns}{\ele}} + %
                                \shiftvartype{\qele{\ns}{\ele}} + %
                                1 %
                            } %
                        \end{array} %
                    } \ \land %
                    \\ %
                    \brac{ %
                        \begin{array}{l} %
                            \brac{ %
                                \qele{\nsvar{\idxi}}{\elevar{\idxi}} %
                                \neq %
                                \qele{\ns}{\ele} %
                            } %
                            \ \Rightarrow \\ %
                            \quad %
                            \brac{ %
                                \numtypevar{\idxi+1}{\qele{\ns}{\ele}} = %
                                \numtypevar{\idxi}{\qele{\ns}{\ele}} + %
                                \shiftvartype{\qele{\ns}{\ele}} %
                            } %
                        \end{array} %
                    } \ \land %
                    \\ %
                    \brac{ %
                        \begin{array}{l} %
                            \brac{ %
                                \qele{\nsvar{\idxi+1}}{\elevar{\idxi+1}} %
                                = %
                                \qele{\ns}{\ele} %
                            } %
                            \ \Rightarrow \\ %
                            \quad %
                            \brac{ %
                                \lnumtypevar{\idxi+1}{\qele{\ns}{\ele}} = %
                                \lnumtypevar{\idxi}{\qele{\ns}{\ele}} - %
                                \shiftvartype{\qele{\ns}{\ele}} - %
                                1 %
                            } %
                        \end{array} %
                    }\ \land %
                    \\ %
                    \brac{ %
                        \begin{array}{l} %
                            \brac{ %
                                \qele{\nsvar{\idxi+1}}{\elevar{\idxi+1}} %
                                \neq %
                                \qele{\ns}{\ele} %
                            } %
                            \ \Rightarrow \\ %
                            \quad %
                            \brac{ %
                                \lnumtypevar{\idxi+1}{\qele{\ns}{\ele}} = %
                                \lnumtypevar{\idxi}{\qele{\ns}{\ele}} - %
                                \shiftvartype{\qele{\ns}{\ele}} %
                            } %
                        \end{array} %
                    } %
                \end{array} %
            } %
            \ . %
        \end{array} %
    \]

\item
    When
    $\atrant = \astate \atran{\arrlast}{\csssim} \astate'$
    we define $\tranprest{\idxi}{\atrant}$ to be
    \[
        \brac{\astatevar{\idxi} = \astate} \land
        \brac{\astatevar{\idxi+1} = \astate'}  \land
        \csssimpres{\csssim}{\idxi} \ .
    \]
\end{compactenum}

To ensure that the run is over a consistent tree, we assert the consistency constraint
\[
    \consistent = \consistentnums
                  \land
                  \consistentids
                  \land
                  \consistentpseudo
\]
where each conjunct is defined below.
\begin{compactitem}
\item
    The clause $\consistentnums$ asserts that the values of
    $\numvar{\idxi}$, $\lnumvar{\idxi}$, $\numtypevar{\idxi}{\qele{\ns}{\ele}}$, and $\lnumtypevar{\idxi}{\qele{\ns}{\ele}}$
    are consistent.
    That is
    \[
        \bigwedge\limits_{1 \leq \idxi \leq \numof} %
        \brac{ %
            \numvar{\idxi} = %
            1 + %
            \sum\limits_{\qele{\ns}{\ele} \in \eles} \numtypevar{\idxi}{\qele{\ns}{\ele}} %
        } %
        \land %
        \brac{ %
            \lnumvar{\idxi} = %
            1 + %
            \sum\limits_{\qele{\ns}{\ele} \in \eles} \lnumtypevar{\idxi}{\qele{\ns}{\ele}} %
        } \ . %
    \]

\item
    The clause $\consistentids$ asserts that ID values are unique.
    It is the conjunction of the following clauses.
    For each $\ns$ for which we have created variables of the form
    $\wordpos{\ns}{\idatt}{\idxi}{\idxj}$
    we assert
    \[
        \bigwedge\limits_{1 \leq \idxi \neq \idxi' \leq \numof}
            \bigvee\limits_{1 \leq \idxj \leq \attvalbound}
                \wordpos{\ns}{\idatt}{\idxi}{\idxj}
                \neq
                \wordpos{\ns}{\idatt}{\idxi'}{\idxj} \ .
    \]

\item
    Finally, $\consistentpseudo$ asserts the remaining consistency constraints on the pseudo-classes.
    We define $\consistentpseudo =$
    \[
        \bigwedge\limits_{0 \leq \idxi \leq \numof} %
        \brac{ %
            \begin{array}{c} %
                \neg\brac{\pclsvar{\idxi}{\pslink} \land \pclsvar{\idxi}{\psvisited}} %
                \ \land \\ %
                \bigwedge\limits_{0 \leq \idxj \neq \idxi \leq \numof} %
                \brac{ %
                    \begin{array}{c} %
                        \neg\brac{ %
                            \pclsvar{\idxi}{\pstarget} \land %
                            \pclsvar{\idxj}{\pstarget} %
                        } %
                    \end{array} %
                } %
                \ \land \\ %
                \neg\brac{\pclsvar{\idxi}{\psenabled} \land \pclsvar{\idxi}{\psdisabled}} %
            \end{array} %
        } %
        \ . %
    \]
    These conditions assert the mutual exclusivity of $\pslink$ and $\psvisited$, that at most one node in the document can be the target node, that nodes are not both enabled and disabled.
\end{compactitem}

\subsubsection{Correctness of the Encoding}

We have now completed the definition of the reduction from the emptiness problem of CSS automata to the satisfiability of existential Presburger arithmetic.
What remains is to show that this reduction is faithful:
that is, the CSS automaton has an empty language if and only if the formula is satisfiable.
The proof is quite routine, and presented in Lemma~\ref{lem:aut-enc-complete} and Lemma~\ref{lem:aut-enc-sound} in Appendix~\ref{sec:aut-emptiness-proof}.

\begin{namedlemma}{lem:aut-enc-correct}{Correctness of $\presof{\cssaut}$}
    For a CSS automaton $\cssaut$, we have
    \[
        \ap{\Lang}{\cssaut} \neq \emptyset
        \iff
        \presof{\cssaut}
        \text{ is satisfiable.}
    \]
\end{namedlemma}

We are thus able to decide the emptiness problem, and therefore the emptiness of intersection problem, for CSS automata by reducing the problem to satisfiability of existential Presburger arithmetic and using a fast solver such as Z3~\cite{Z3} to resolve the satisfiability.

\section{Rule-Merging to Max-SAT}
\label{sec:graph2maxsat}

In this section,
we provide a reduction from the rule-merging problem to partial weighted
MaxSAT. The input will be a valid covering $\covering = \incovering{1}{m}$ of
a CSS graph $\CSSgraph$.
We aim to find a rule $\bucket = \inbucket$ and a position $j$
that minimises the weight of $\Trim{\inSeq{\covering}{j}{\bucket}}$.

There is a fairly straightforward encoding of the rule-merging problem into Max-SAT using for each node
$w \in \selNodes \cup \propNodes$
a boolean variable $\vble{w}$ which is true iff the node is included in the new rule.
Unfortunately, early experiments showed that such an encoding does not perform well in practice, causing prohibitively long run times even on small examples.
The failure of the naive encoding may be due to the search space that includes
a large number of possible pairs $(\selsBucket,\propsBucket)$ that turn out
to be invalid rules (e.g. include edges not in the CSS-graph
$\CSSgraph$).
Hence, we will use a different Max-SAT encoding that, by means of syntax,
further restricts the search space of valid rule-merging opportunities.

The crux of our new encoding
is to explicitly say in the Max-SAT formula $\varphi$ that the rule $\bucket$
in a merging opportunity $(\bucket, j)$ is a ``valid'' sub-biclique of one
of the \emph{maximal} bicliques $\biclique = \inbiclique$
 --- maximal with respect to subset-of relations of
$\selsBucket$ and $\propsSet$ components of bicliques --- in the CSS-graph
$\CSSgraph$.
By insisting $\bucket$ is contained within a maximal biclique of the CSS-graph, we automatically ensure that $\bucket$ does not contain edges that are not in $\CSSgraph$.

The formula $\varphi$ will try to guess a maximal
biclique $\biclique$ and which nodes to omit from $\biclique$.
Since the number of maximal bicliques in a bipartite graph
is exponential (in the number of nodes) in the worst case, one concern with this idea is that the
constraint $\varphi$
might become prohibitively large. As we shall see, this turns out not to be
the case in practice. Intuitively, based on our experience, the number of
rules in a real-world CSS file is \emph{at most} a few thousand. Second, the number of maximal bicliques in a CSS-graph
that corresponds to a real-world CSS file also is typically of a similar size to the number of rules, and,
furthermore, can be enumerated using the algorithm from \cite{K10} (which runs 
in time polynomial in the size of the input and the size of the output).
To be more precise, the benchmarks in our experiments had between $31$ and $2907$ rules, and the mean number of rules was $730$.
The mean ratio of the number of maximal bicliques to the number of rules was 
1.25, and the maximum was 2.05.
As we shall see in Section~\ref{sec:experiments}, Z3 may solve the constraints via this encoding quite efficiently.

In the rest of the section, we will describe our encoding in detail. For
convenience, our encoding also allows bounded integer variables.  There are
standard ways to encode these in binary as booleans~(e.g.\ see~\cite{P15}) by bit-blasting
(using a logarithmic number of boolean variables).

\subsection{Orderable Bicliques}
Our description above of the crux of our encoding (by restricting to containment
in a maximal biclique) is a simplification. This is because \emph{not all}
sub-bicliques of a maximal biclique correspond to a valid rule $\bucket$ in
a merging opportunity $(\bucket,j)$ with respect to the covering
$\covering$. (A biclique $(\selsSet',\propsSet')$ is a
\defn{sub-biclique} of a biclique $\inbiclique$ if $\selsSet' \subseteq
\selsSet$ and $\propsSet' \subseteq \propsSet$.)
    To ensure that our constraint $\varphi$ chooses only valid rules,
it needs to ensure that the sub-biclique that is chosen is ``orderable''.
More precisely, a biclique $\biclique = \inbiclique$ is \defn{orderable at
position $j$} if it can be turned into a rule $\bucket = \inbucket$ (i.e.
turning the
set $\propsSet$ of declarations into a sequence $\propsBucket$ by assigning
an order) that can be inserted
at position $j$ in $\covering$ without violating the validity of the resulting
covering with respect to the order $\edgeOrder$ (from the CSS-graph
$\CSSgraph$). If there are $m$ rules in $\covering$, there are $m+1$ positions
(call these positions $0,\ldots,m$) where $\propsBucket$ may be inserted into
$\covering$.
We show below that the position $j$ is crucial to whether $\bucket$ is orderable.

Unorderable bicliques rarely arise in practice (in our benchmarks, the mean
percentage of maximal bicliques that were unorderable at some position was
$\meanbadbicliques\%$), but they have to be accounted for if our analysis is to find the
optimal rule-merging opportunity.
A biclique $\biclique = \inbiclique$ is unorderable when they have the same
property name (or two related
property names, e.g., shorthands) occuring multiple times with different values
in $\propsSet$. One reason having a CSS rule with the same property name
occuring multiple times with different values is to provide
``fallback options'' especially because old browsers may not support certain
values in some property names, e.g., the rule
\begin{center}
\begin{minted}{css}
    .c { color:#ccc; color:rgba(0, 0, 0, 0.5); }
\end{minted}
\end{center}
says that if \texttt{rgba()} is not supported (e.g. in IE8 or older browsers),
then use \texttt{\#ccc}. Using this idea, we can construct the simple
 example of an unorderable biclique in the CSS file in Figure
\ref{fig:CSSorder}.
The ordering constraints we can derive from this file include
\[
    (\texttt{.a},\texttt{color:blue}) \edgeOrder
    (\texttt{.a},\texttt{color:green})
\]
and
\[
    (\texttt{.b},\texttt{color:green}) \edgeOrder
    (\texttt{.b},\texttt{color:blue}) \ .
\]

\begin{figure}
    \begin{mdframed}
    \begin{minipage}{1.0\textwidth}
    \begin{minted}{css}
    .a { color:blue; color:green }
    .b { color:green; color:blue }
    \end{minted}
    \end{minipage}
    \end{mdframed}
    \caption{A CSS file with an unorderable sub-biclique.\label{fig:CSSorder}}
\end{figure}
The biclique $\biclique$
\[
    (\{\texttt{.a},\texttt{.b}\},\{\texttt{color:blue},\texttt{color:green}\})
\]
is easily seen to be orderable at position 0 and 1.
This is because the final rule in Figure~\ref{fig:CSSorder} will ensure the ordering
$(\texttt{.b},\texttt{color:green}) \edgeOrder (\texttt{.b},\texttt{color:blue})$
is satisfied, and since $\biclique$ will appear before this final rule, only
$(\texttt{.a},\texttt{color:blue}) \edgeOrder (\texttt{.a},\texttt{color:green})$
needs to be maintained by $\biclique$ (in fact, at position $0$, neither of the orderings need to be respected by $\biclique$).
However, at position 2, which is at the end of the file in Figure~\ref{fig:CSSorder}, both orderings will have to be respected by $\biclique$.
Unfortunately, one of these orderings will be violated regardless of how one may assign
an ordering to \texttt{blue} and \texttt{green}.
\OMIT{
Notice that $(\texttt{.a},\texttt{color:green}) \edgeOrder
(\texttt{.b},\texttt{color:green})$ because there is potentially an HTML element
in the DOM that has both classes \texttt{.a} and \texttt{b}.
}
This contrived example was made only for illustration, however, our technique should still be able to handle even contrived examples.

We mention that both checking orderability and ordering a given biclique can be done efficiently.
\begin{proposition}
\label{prop:poly-order}
    Given a biclique $\biclique$, a covering $\covering$ (with $m$ rules) of a
    CSS-graph $\CSSgraph$, and a number $j \in \{0,\ldots,m\}$, checking
    whether $\biclique$ is orderable at position $j$ in $\covering$ can be
    done in polynomial time.
    Moreover, if $\biclique$ is orderable an ordering can be calculated in polynomial time.
\end{proposition}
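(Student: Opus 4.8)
The plan is to show that orderability of the biclique $\biclique = \inbiclique$ at position $j$ is equivalent to the conjunction of (i) a condition on $j$ that does not mention the property order at all, and (ii) the acyclicity of an auxiliary digraph on $\propsSet$; both are testable in polynomial time, and a witnessing order is read off from a topological sort of that digraph. Throughout I use that $\covering$ is a \emph{valid} covering (the relevant case, since the rule-merging problem starts from one), so $e \edgeOrder e'$ implies $\Index(e) \le \Index(e')$.

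First I would precompute, in one pass over the $m$ rules of $\covering$, the map $e \mapsto \Index(e)$ for all $e \in \CSSedges$, exactly as in Section~\ref{sec:css2graph}. Next I would analyse how inserting $\biclique$ after the $j$-th rule (call the result $\covering' = \inSeq{\covering}{j}{\bucket}$, the new rule sitting at position $j+1$) changes indices. Writing $\phi_j(x) = x$ for $x \le j$ and $\phi_j(x) = x+1$ for $x > j$ — note $\phi_j$ is strictly monotone and never takes the value $j+1$ — one checks: if $e \notin \biclique$, or $e \in \biclique$ with $\Index(e) > j$, then the index of $e$ in $\covering'$ equals $\phi_j(\Index(e))$; whereas if $e \in \biclique$ with $\Index(e) \le j$, the last occurrence of $e$ in $\covering'$ is the inserted rule, so its index is $j+1$. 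Moreover every rule of $\covering'$ other than the new one keeps its internal property order. A case split on whether each of $e = (s,p)$ and $e' = (s',p')$ lies in $\biclique$ then shows that every constraint $e \edgeOrder e'$ is automatically satisfied in $\covering'$ except possibly in two situations: (a) $e \in \biclique$, $e' \notin \biclique$, where (since $\phi_j$ skips $j+1$) validity is possible at all iff $\Index(e') > j$, and then holds for \emph{every} order on $\propsSet$; and (b) $e, e'$ both in $\biclique$ with $p \neq p'$ and $\Index(e), \Index(e') \le j$, where $e$ and $e'$ both resolve at the new rule, so validity holds iff $p$ precedes $p'$ in the chosen order $\propOrder$.

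Hence the algorithm: (i) scan $\edgeOrder$ and fail if some $e \edgeOrder e'$ has $e \in \biclique$, $e' \notin \biclique$ and $\Index(e') \le j$; (ii) build the digraph $H$ on vertex set $\propsSet$ with an arc $p \to p'$ for each $e = (s,p) \edgeOrder e' = (s',p')$ with $s,s' \in \selsBucket$, $p,p' \in \propsSet$, $p \neq p'$ and $\Index(e'),\Index(e) \le j$, and fail if $H$ has a cycle. If neither fails, declare $\biclique$ orderable and return any total order $\propOrder$ on $\propsSet$ extending a topological sort of $H$: the case analysis shows the resulting $\covering'$ is valid. Conversely, (a) is necessary independently of $\propOrder$, and a cycle in $H$ cannot be respected by any total order, so these checks are also necessary. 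All steps — the index pass, two scans of $\edgeOrder$, and the topological sort — run in time polynomial in $\sizeof{\CSSgraph}$ and $m$, which also yields the ``moreover'' part.

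The main obstacle is getting the index bookkeeping for $\covering'$ exactly right: pinning down precisely which edges of $\biclique$ get ``promoted'' to index $j+1$ (those whose last occurrence in $\covering$ is at a position $\le j$) and noting that $\phi_j$ omits the value $j+1$, which is what makes all the ``equal-index'' subcases collapse so that the \emph{only} order-sensitive constraints are those of type (b). Once this is nailed down, both the soundness of the two polynomial checks and their necessity fall out of the same case split.
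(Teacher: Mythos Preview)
Your argument is correct, and the core mechanism---reducing the ordering question to acyclicity of a digraph on $\propsSet$ and reading off an order via topological sort---is the same as the paper's. The paper's treatment (in the appendix) is much terser: it simply \emph{defines} ``$\biclique$ is orderable at $j$'' to mean that the relation $\propord{\biclique}{j}$ on $\propsSet$ is acyclic, where $p_1 \propord{\biclique}{j} p_2$ iff there exist $(s_1,p_1),(s_2,p_2)\in\biclique$ with indices $\le j$ and $(s_1,p_1)\edgeOrder^\ast(s_2,p_2)$; the proposition then follows immediately from the fact that computing this relation, testing acyclicity, and topologically sorting are all polynomial.

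Two differences are worth flagging. First, you include the external check~(i) (pairs $e\in\biclique$, $e'\notin\biclique$) as part of orderability. The paper deliberately does \emph{not}: that constraint is enforced separately by the hard clause $\fmlaordering$ in the Max-SAT encoding, and the paper's narrower notion is what makes Lemma~\ref{lm:downward} (sub-bicliques of orderable bicliques are orderable) true---under your reading it fails, since dropping a property $p'$ from $\biclique$ can turn an internal pair $(s,p)\edgeOrder(s,p')$ into an external one violating your check~(i). Second, you build $H$ from the direct relation $\edgeOrder$, whereas the paper uses the transitive closure $\edgeOrder^\ast$; once your check~(i) passes, any $\edgeOrder$-chain between edges of $\edgeslast{\biclique}{j}$ stays inside $\edgeslast{\biclique}{j}$, so the two yield the same acyclicity test, but in the paper's setup (without check~(i)) the closure is what is stated. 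Your derivation from first principles is more informative about \emph{why} the acyclicity condition is the right one; the paper simply takes it as the definition.
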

The proof of the proposition is easy and is relegated into Appendix~\ref{sec:orderable-bicliques-appendix}.


\OMIT{
To explain the intuition behind orderable bicliques, let us take the
CSS file in Figure \ref{fig:CSSorder} as a simple example.
\begin{figure}
    \begin{mdframed}
    \begin{minipage}{1.0\textwidth}
    \begin{minted}{css}
    .a{ font-size:small }
    .b{ color:blue }
    .a{ color:green }
    \end{minted}
    \end{minipage}
    \end{mdframed}
    \caption{A CSS file with unorderable sub-biclique.\label{fig:CSSorder}}
\end{figure}
The CSS graph $\CSSgraph = \inCSSgraph$ corresponding to this file is depicted
below
\begin{center}
\begin{psmatrix}[rowsep=0.5ex]
    \rnode[r]{a}{\texttt{.a}}  &
            \rnode[l]{small}{\mintinline{css}|font-size:small|} \\
    \rnode[r]{b}{\texttt{.b}}  &
            \rnode[l]{green}{\mintinline{css}|color:green|} \\
            &
            \rnode[l]{blue}{\mintinline{css}|color:blue|}
    \ncline{-}{a}{green}
    \ncline{-}{a}{small}
    \ncline{-}{b}{blue}
\end{psmatrix}
\end{center}

with $(\texttt{.b},\texttt{blue}) \edgeOrder (\texttt{.a},\texttt{green})$,
where \texttt{blue} (resp. \texttt{green}) mean \texttt{color:blue}
(resp.~\texttt{color:green}). The bipartite graph has two maximal bicliques:
$(\{\texttt{.a}\},\{\texttt{green},\texttt{small}\})$ and
$(\{\texttt{.b}\},\{\texttt{blue}\})$. However,
}

\OMIT{
We first describe the notion of \emph{orderable bicliques} before giving the rest of the encoding.

We begin by discussing an important notion of \emph{orderable bicliques}, before giving the encoding in full.
We use \emph{orderable bicliques} as a starting point for finding a refactoring opportunity
$(\bucket, j)$.
Orderable bicliques will help to restrict the search space to valid refactorings.
The intuition is that the set of \emph{maximal} bicliques (maximal with respect
to the number of nodes) of a bipartite graph,
although exponential, is usually small and efficiently enumerable~\cite{K10}.

Since the edges in $\bucket$ will always be contained in some maximal biclique,
we can restrict our search to sub-bicliques of maximal bicliques.
This
restriction will bw expressed in our Max-SAT.
Note that by restricting the search in this way, we also automatically ensure
that we do not accidentally introduce new edges into the graph via the new rule $\bucket$.
However, we must also take the edge ordering into account.
}

\OMIT{
A biclique is a pair
$\biclique = (\selsBucket, \propsSet)$
where
$\selsBucket \subseteq \selNodes$
and
$\propsSet \subseteq \propNodes$
are sets.
Additionally, we require
$\selsBucket \times \propsSet \subseteq \CSSedges$.
Note, in a rule $\propsBucket$ is an ordered sequence.
Given a set $\propsSet$, there are exponentially many orderings of the elements of $\propsSet$.
In other words, exponentially many possible sequences $\propsBucket$.

To produce a rule, we need to order $\propsSet$.
The position $j$ where the rule will be inserted is crucial to orderability.
For example, suppose
$(s_1, p_1) \edgeOrder (s_2, p_2)$
and
$(s_2, p_2) \edgeOrder (s_3, p_1)$.
The biclique
$(\set{s_1, s_2, s_3}, \set{p_1, p_2})$
is not orderable when inserted at the end of a covering:
the order $p_1, p_2$ violates
$(s_2, p_2) \edgeOrder (s_3, p_1)$,
and $p_2, p_1$ violates
$(s_1, p_1) \edgeOrder (s_2, p_2)$.
However, if
$(s_3, p_1)$
appears later in the file, then the ordering $p_1, p_2$ does not violate
$(s_2, p_2) \edgeOrder (s_3, p_1)$.
This is because $(s_3, p_1)$ will still appear later than $(s_2, p_2)$ in the covering as a whole.

To insert a biclique into a file, we need to make sure the order of its edges respects the edge order.
We can only order the edges by ordering the properties in the biclique.
More precisely, if we insert the biclique at position $j$, we need all edges not in
$\incovering{j+1}{m}$
(i.e. later in the file) to respect the edge order.
This is because it is only the last occurrence of an edge that influences the semantics of the stylesheet.
Thus, let
\[
    \edgeslast{\biclique}{j} = \setcomp{e \in \biclique}{\Index(e) \leq j} \ .
\]
The edge ordering implies a required ordering of
$\edgeslast{\biclique}{j}$,
which implies an ordering on the properties in $\propsSet$.
This ordering is defined as follows.
For all
$p_1, p_2 \in \propsSet$
we have
\[
    p_1 \propord{\biclique}{j} p_2
    \iff
    \exists (s_1, p_1), (s_2, p_2) \in \edgeslast{\biclique}{j}\ .\ %
        (s_1, p_1) \edgeOrder^\ast (s_2, p_2) \ .
\]
That is, we require $p_1$ to appear before $p_2$ if there are two edges
$(s_1, p_1)$
and
$(s_2, p_2)$
in $\biclique$ that must be ordered according to the transitive closure of $\edgeOrder$.
A biclique is orderable iff its properties can be ordered in such a way to respect
$\propord{\biclique}{j}$.

\begin{definition}[Orderable Bicliques]
    The biclique $\biclique$ is orderable at $j$ if
    $\propord{\biclique}{j}$
    is acyclic.
    That is, there does not exist a sequence
    $(s_1, p_1), \ldots, (s_\numof, p_\numof)$
    such that
    $(s_\idxi, p_\idxi) \propord{\mbiclique}{j} (s_{\idxi+1}, p_{\idxi+1})$
    for all $1 \leq \idxi < \numof$
    and
    $(s_1, p_1) = (s_\numof, p_\numof)$.
\end{definition}
This can be easily checked in polynomial time.
Moreover, if a biclique is orderable at a given position, a suitable ordering can be found by computing
$\propord{\biclique}{j}$,
also in polynomial time.
}

\subsubsection*{Maximal Orderable Bicliques}
Our Max-SAT encoding $\varphi$ needs to ensure that we only pick a pair
$(\biclique,j)$ such that $\biclique$ is an orderable biclique at position $j$
in the given
covering $\covering$, i.e., $\biclique$ corresponds to a rule that can be
inserted at position $j$ in $\covering$.
Although the check of orderability can be \emph{declaratively} expressed as a
constraint in $\varphi$,
we found that this results in Max-SAT formulas that are rather difficult to
solve by existing Max-SAT solvers.
For this reason, we propose to express the check of orderability in a different
way. Intuitively, for each $j \in \{0,\ldots,m\}$, we enumerate all orderable
bicliques $\biclique = \inbiclique$ that are also maximal, i.e., it is
\emph{not} a (strict) sub-biclique of a different orderable biclique.
%
Since ``orderability is inherited by sub-bicliques'' (as the following
lemma, whose proof is immediate from the definition, states), the constraint
$\varphi$ needs to simply choose a sub-biclique
of a maximal orderable biclique that appears in our enumeration.
\begin{lemma}
    Every sub-biclique $\biclique' = (\selsSet,\propsSet)$ of an orderable
    biclique $\biclique = \inbiclique$ is orderable.
    \label{lm:downward}
\end{lemma}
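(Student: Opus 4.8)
The plan is to prove the lemma directly from the combinatorial characterisation of orderability underlying Proposition~\ref{prop:poly-order} (spelled out in Appendix~\ref{sec:orderable-bicliques-appendix}): a biclique $\biclique$ is orderable at a position $j$ exactly when a certain precedence relation $\propord{\biclique}{j}$ on its set of properties is acyclic. Concretely, writing $\edgeslast{\biclique}{j}$ for the set of edges of $\biclique$ whose last occurrence in $\covering$ lies at position at most $j$, one has $p \propord{\biclique}{j} p'$ iff there are edges $(s,p),(s',p') \in \edgeslast{\biclique}{j}$ with $(s,p)\,\edgeOrder^{*}\,(s',p')$; orderability at $j$ then amounts to acyclicity of $\propord{\biclique}{j}$, and a witnessing order on the properties is any linearisation of it. The lemma is understood with respect to a fixed position $j$: if $\biclique$ is orderable at $j$, then so is every sub-biclique $\biclique'$ of $\biclique$ at the same $j$.

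First I would fix such a $j$ and a sub-biclique $\biclique'$, so that the selector set and the property set of $\biclique'$ are contained in those of $\biclique$, and hence the edge set of $\biclique'$ is contained in the edge set of $\biclique$. The crucial point is a monotonicity observation: since the predicate ``the last occurrence of an edge in $\covering$ is at position $\le j$'' depends only on the edge (and on $\covering$), not on the biclique, we get $\edgeslast{\biclique'}{j} \subseteq \edgeslast{\biclique}{j}$. Unwinding the definition of the precedence relation, every witnessing pair of edges for a $\propord{\biclique'}{j}$-step is also a witnessing pair for the corresponding $\propord{\biclique}{j}$-step, so $\propord{\biclique'}{j}$ is a sub-relation of $\propord{\biclique}{j}$ restricted to the properties of $\biclique'$. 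A sub-relation of an acyclic relation is acyclic, so from acyclicity of $\propord{\biclique}{j}$ we conclude acyclicity of $\propord{\biclique'}{j}$, i.e.\ orderability of $\biclique'$ at $j$. As a by-product, restricting any property order that linearises $\propord{\biclique}{j}$ to the properties of $\biclique'$ yields an explicit order witnessing orderability of $\biclique'$.

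I do not expect a genuine obstacle here: once the characterisation via $\propord{\biclique}{j}$ is available, the argument is the one-line ``sub-relation of an acyclic relation is acyclic'', which is presumably why the statement is described as immediate from the definition. The one point that requires care is to argue via this relation rather than via the informal ``insert the rule at $j$ and check that the entire resulting covering is valid'' description: dropping selectors or properties when passing to $\biclique'$ can move the last occurrence of edges that $\biclique'$ no longer contains, so the latter formulation is not manifestly closed under taking sub-bicliques, whereas $\propord{\biclique}{j}$ tracks only the edges for which the inserted rule is the last occurrence and is therefore monotone. Keeping the position $j$ fixed throughout is equally important, since orderability genuinely depends on $j$.
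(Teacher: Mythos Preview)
Your proposal is correct and is exactly the argument the paper has in mind: the paper simply states the lemma's proof is ``immediate from the definition,'' and the relevant definition (Appendix~\ref{sec:orderable-bicliques-appendix}) is precisely the acyclicity of $\propord{\biclique}{j}$, so your monotonicity-of-sub-relations argument is the intended one. Your additional remarks about fixing $j$ and about why one should argue via $\propord{\biclique}{j}$ rather than via validity of the whole covering are accurate and more explicit than anything the paper records.
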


\OMIT{
An orderable biclique
$(\selsBucket, \propsSet)$
is \defn{maximal} if there does not exist an orderable biclique
$(\selsBucket', \propsSet')$
with
$\selsBucket \times \propsSet \subset \selsBucket' \times \propsSet'$.
Orderable bicliques may become unorderable as $j$ increases, but not vice-versa.
We will often write $\mbiclique$ to denote a maximal biclique.
}

The above enumeration of maximal orderable bicliques can be described as a
pair
$(\{\mbiclique_i\}_{i=1}^{\nummbicliques}, \forbidden)$
where
\begin{itemize}
\item
    $\{\mbiclique_i\}_{i=1}^{\nummbicliques}$
    is an enumeration of all bicliques that are orderable and maximal at some position $j$, and
\item
    $\forbidden$ \emph{forbids} certain bicliques at each position.
    I.e. it is a function from
    $[1, m]$
    to the set of bicliques in
    $\{\mbiclique_i\}_{i=1}^{\nummbicliques}$
    that are unorderable at position $j$.
\end{itemize}
Observe that the set of orderable bicliques at position $j$ in
$\covering$ is a subset of the set of orderable bicliques at position $j+1$
in $\covering$. This may be formally expressed as: $\forbidden(j) \subseteq
\forbidden(j+1)$ for all $j \in [1,m)$.

In the majority ($54\%$) of examples that we have
from real-world CSS, the function $\forbidden$ maps all values of $[1,m]$ to
$\emptyset$, i.e., all maximal bicliques are orderable at all positions.
The mean percentage of maximal bicliques that were unorderable at some position was $\meanbadbicliques\%$, with a maximum of $5.84\%$.

In our description of the Max-SAT encoding below, we assume that the pair
$(\{\mbiclique_i\}_{i=1}^{\nummbicliques}, \forbidden)$ has been computed
for the input $\covering$.

\subsection{The Max-SAT Encoding}

We present the full reduction of the rule-merging problem to Max-SAT.  In
particular, the constraints we produce are
\[
    (\hardcons, \softcons)
\]
where $\hardcons$ and $\softcons$ are, respectively, hard and soft constraints.
First, we describe the variables used in our encoding.
Note, our encoding will rely on the assumption that covering $\covering$ has already been trimmed.
Recall, the notion of trimming is defined in Section~\ref{sec:css2graph} and is the process of removing redundant nodes from a covering.
A node is redundant in a rule if all of its incident edges also appear later in the covering.

\subsubsection{Representing the rule-merging opportunity}
We need to represent a merging opportunity
$(\bucket, j)$.
We use a bounded integer variable $\inpos$ (with range $0 \leq \inpos \leq m$)
to encode $j$.

For $\bucket$ we select a biclique in
$\{\mbiclique_i\}_{i=1}^{\nummbicliques}$
and allow some nodes to be removed (i.e.\ to produce sub-bicliques of the $\mbiclique_i$).
We use a bounded integer variable
$\bicliquevble$ (with range $[1,\nummbicliques]$)
to select $\mbiclique_i$.
Next, we need to choose a sub-biclique of $\mbiclique_i$, which can be
achieved by choosing nodes $\mbiclique_i$ to be removed.
To minimise the number of variables used, we number the
nodes contained in each biclique in some (arbitrary) way, i.e., for each
$i \in [1,\nummbicliques]$ and biclique $\mbiclique_i = (\selsBucket,
\propsSet)$,
we define a bijection
$\selpropord{i} : \selsBucket \cup \propsSet \to [1,|\selsBucket \cup
\propsSet|]$.
Let $\numexcludes$ be the maximum number of nodes in a biclique $\mbiclique_i$
in the enumeration $\{\mbiclique_i\}_{i=1}^{\nummbicliques}$, i.e.,
the maximal integer $k$ such that $\ap{\selpropord{i}}{w} = k$ for some
$w \in \selNodes \cup \propNodes$ and $1 \leq i \leq \nummbicliques$.

We introduce boolean variables
$\excludevble{1}, \ldots, \excludevble{\numexcludes}$.
Once the maximal orderable biclique $\mbiclique_i$ is picked, for a node $w$ with $\ap{\selpropord{i}}{w} = k$,
the variable $\excludevble{k}$ is used to indicate
that $w$ is to be excluded from selected $\mbiclique_i$ (i.e. $\excludevble{k}$ is
true iff $w$ is to be excluded from $\mbiclique_i$).
More precisely, for an edge $e = (s,p)$ in $\mbiclique_i$,
we define a predicate $\hasedge{e}$ to be
\[
    \hasedge{(s, p)} =
    \bigvee\limits_{1 \leq i \leq \nummbicliques}
        \bicliquevble = i
        \land
        \neg \excludevble{\ap{\selpropord{i}}{s}}
        \land
        \neg \excludevble{\ap{\selpropord{i}}{p}} \ .
\]
Note, when
$\mbiclique_i = (\selsBucket, \propsSet)$
and
$w \notin \selsBucket \cup \propsSet$
we let
$\excludevble{\ap{\selpropord{i}}{w}}$
denote the formula ``true''.

\subsubsection{Hard Constraints}

We define
\[
    \hardcons = \set{\fmlavalid, \fmlaordering}
\]
where
$\fmlavalid$ and $\fmlaordering$ are described below.

We need to ensure the rule-merging opportunity is valid, i.e., it has not been forbidden at the chosen position and inserting it into the covering does not violate the edge order $\edgeOrder$.
For the former, we define $\forbiddenfst$, which is used to discover which of the $\mbiclique_i$ first become unorderable at position $j$.
That is $\mbiclique_i \in \ap{\forbiddenfst}{j}$ if $j$ is the smallest integer such that $\mbiclique_i \in \ap{\forbidden}{j}$.
\[
    \fmlavalid =
    \bigwedge\limits_{1 \leq j \leq m}
    \brac{
        \brac{\inpos >= j}
        \Rightarrow
        \bigwedge\limits_{\mbiclique_i \in \ap{\forbiddenfst}{j}}
            \brac{\bicliquevble \neq  i}
    } \ .
\]

We also need to ensure the edge ordering is respected by the rule-merging
opportunity, for which we define $\fmlaordering$.
If
$e_1 \edgeOrder e_2$,
and $e_1 = (s_1,p_1)$ is in the rule $\bucket$ in the guessed merging
opportunity $(\bucket,j)$, then we need to assert
$e_1 \edgeOrder e_2$ is respected.
It is respected either if $e_2 = (s_2,p_2)$ appears after the position where $j$ is to be inserted in $\covering$, or $e_2$ is also contained in the new rule $\bucket$ (in which case the ordering can still be respected since $\bucket$ is orderable).
That is,
\[
    \fmlaordering =
    \brac{
        \bigwedge\limits_{(s_1, p_1) \edgeOrder (s_2, p_2)}
        \hasedge{(s_1, p_1)}\ \Rightarrow
        \brac{
            \inpos < \Index((s_2, p_2))
            \lor
            \hasedge{(s_2, p_2)}
        }
    } \ .
\]
This is because only the last occurrence of an edge in a covering matters
(recall the definition of index of an edge in Section \ref{sec:css2graph}).
Also note that
our use of bicliques ensures that we do not introduce pairs $(s,p)$ that
are not edges in $\CSSedges$.

\subsubsection{Soft Constraints}

The soft constraints will calculate the weight of
$\Trim{\inSeq{\covering}{j}{\bucket}}$.
Since we want to minimise this weight, the optimal solution to the Max-SAT
problem will give an optimal rule-merging opportunity.
Our soft constraints are
\[
    \softcons = \softconsbucket \cup \softconssels \cup \softconsprops
\]
where $\softconsbucket$ counts the weight of the $\bucket$, and $\softconssels$ and $\softconsprops$ counts the weight of the remainder (not including $\bucket$) of the stylesheet by counting the remaining selectors and properties respectively after trimming.
These are defined below.

To count the weight of $\bucket$, we count the weight of the non-excluded nodes
of $\mbiclique_i = (\selsBucket_i, \propsSet_i)$.
We have
\[
    \softconsbucket =
    \setcomp{
        (\softfmlabucket{i}{w}, \nodeWeight(w))
    }{
        1 \leq i \leq \nummbicliques
        \land
        w \in \selsBucket_i \cup \propsSet_i
    }
\]
where
\[
    \softfmlabucket{i}{w} =
    \brac{\brac{\bicliquevble = i} \Rightarrow \excludevble{\ap{\selpropord{i}}{w}}} \ .
\]
Note that $\softfmlabucket{i}{w}$ is true iff, whenever $\mbiclique_i$ is
picked, the node $w$ is omitted from $\mbiclique_i$ in the guessed $\bucket$.
Furthermore, the cost of \emph{not} omitting $w$ from $\mbiclique_i$ is
$\nodeWeight(w)$.

Next, we count the remaining weight of
$\inSeq{\covering}{j}{\bucket}$ (i.e. excluding the weight of $\bucket$).
Assume that $\covering$ has already been trimmed, i.e., $\Trim{\covering}
= \covering$.
The intuition is that a node $v$ can be removed from $\bucket_i$ in $\covering$ if
all edges $e$ incident to $v$ appear in a later rule in $\covering$, i.e.,
$\Index(e) > i$.
In particular, let
$\bucket_i = (\selsBucket_i, \propsBucket_i)$
in
$\incovering{1}{m}$.
To count the weight of the untrimmed selectors we use the clauses
\[
    \softconssels =
    \setcomp{
        (\fmlanode{i}{s}, \nodeWeight(s))
    }{
        1 \leq i \leq m
        \land
        s \in \selsBucket_i
    }
\]
where
\[
    \fmlanode{i}{s} =
    \brac{
        i \leq \inpos \land
        \bigwedge\limits_{\substack{
            \Index((s, p)) = i \\
            p \in \propsBucket
        }}
            \hasedge{(s, p)}
    } \ .
\]
The idea is that $\fmlanode{i}{s}$ will be satisfied whenever $s$ can be removed from $\bucket_i$.
We assume $\covering$ has already been trimmed, so $s$ can only become removable
because of the application of rule-merging.
This explains the first conjunct which asserts that nodes can only be removed from rules appearing before $\inpos$.
Next, $s$ can be removed after rule-merging if none of its incident edges $(s, p)$ are the final occurrence of $(s, p)$ in
$\inSeq{\covering}{j}{\bucket}$.
The crucial edges in this check are those such that $\Index((s, p)) = i$, which means $\bucket_i$ contains the final occurrence of $(s, p)$ in $\covering$.
For these edges, if
$\hasedge{(s, p)}$
holds, then the new rule contains $(s, p)$ and $\bucket_i$ will no longer
contain the final occurrence of $(s, p)$ after rule-merging.

Similarly, to count the weight of the properties that cannot be removed we have
\[
    \softconsprops =
    \setcomp{
        (\fmlanode{i}{p}, \nodeWeight(p))
    }{
        1 \leq i \leq m
        \land
        p \in \propsBucket_i
    }
\]
where
\[
    \fmlanode{i}{p} =
    \brac{
        i \leq \inpos \land
        \bigwedge\limits_{\substack{
            \Index((s, p)) = i \\
            s \in \selsBucket
        }}
            \hasedge{(s, p)}
    } \ .
\]

\subsection{Generated Rule-Merging Opportunity}
The merging opportunity
$(\bucket, j)$
is built from an optimal satisfying assignment to
$(\hardcons, \softcons)$.
First, $j$ is the value assigned to $\inpos$.
Then
$\bucket = (\selsBucket, \propsBucket)$
where, letting $i_\mbiclique$ be the value of $\bicliquevble$, and letting
$\mbiclique_{i_\mbiclique} = (\selsBucket', \propsSet')$,
\begin{itemize}
\item
    $s \in \selsBucket$ iff
        $s \in \selsBucket'$ and
        $\excludevble{\ap{\selpropord{i_\mbiclique}}{s}}$
        is assigned the value false, and

\item
    $\propsBucket = \{p_i\}_{i=1}^m$, where $\{p_i\}_{i=1}^m$ is obtained by
        assigning an ordering to $\propsSet'$ such that $(\bucket,j)$
        is a valid merging opportunity.
\end{itemize}
That the ordering of $\propsSet'$ above exists is guaranteed by the fact that
$\mbiclique_{i_\mbiclique}$
is orderable at $j$, and Lemma \ref{lm:downward}.
We can compute the ordering in polynomial time via Proposition~\ref{prop:poly-order}.
We argue the following proposition in Appendix~\ref{sec:encoding-correct}.

\begin{proposition}
    \label{prop:encoding-correct}
    The merging opportunity $(\bucket, j)$ generated from the maximal solution to
    $(\hardcons, \softcons)$
    is the optimal merging opportunity of $\covering$.
\end{proposition}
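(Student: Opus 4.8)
The plan is to set up a cost‑preserving correspondence between satisfying assignments of the hard constraints $\hardcons = \{\fmlavalid,\fmlaordering\}$ and valid merging opportunities $(\bucket,j)$ of $\covering$, and then read off optimality from the soft‑constraint objective. Throughout I use that $\covering$ is already trimmed ($\Trim{\covering}=\covering$), as the encoding assumes.

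\emph{From assignments to valid merging opportunities.}
Given a satisfying assignment $\alpha$, put $j=\inpos$, $i=\bicliquevble$, and let $\bucket$ be the sub‑biclique of $\mbiclique_i=(\selsBucket_i,\propsSet_i)$ obtained by deleting each node $w$ with $\excludevble{\ap{\selpropord{i}}{w}}$ true. I would show $(\bucket,j)$ is a valid merging opportunity in three steps. (i) $\bucket$ introduces no edge outside $\CSSedges$: it lies inside the biclique $\mbiclique_i$ of $\CSSgraph$, and by construction $\hasedge{(s,p)}$ is true precisely when $(s,p)$ is an edge of the chosen $\bucket$. (ii) $\bucket$ is orderable at $j$: $\fmlavalid$, together with the monotonicity $\forbidden(j)\subseteq\forbidden(j+1)$ and the definition of $\forbiddenfst$, forces $\mbiclique_i\notin\forbidden(j)$, i.e.\ $\mbiclique_i$ is orderable at $j$; then Lemma~\ref{lm:downward} makes the sub‑biclique $\bucket$ orderable at $j$, so Proposition~\ref{prop:poly-order} yields a concrete order $\propsBucket$. (iii) Inserting $\bucket$ at $j$ respects $\edgeOrder$: for every $e_1=(s_1,p_1)\edgeOrder e_2=(s_2,p_2)$ with $e_1$ an edge of $\bucket$, the clause $\fmlaordering$ gives either $\inpos<\Index(e_2)$ — so the last occurrence of $e_2$ in $\covering$, hence in $\inSeq{\covering}{j}{\bucket}$, still comes after $\bucket$ — or $e_2$ is an edge of $\bucket$, and then the chosen order on $\propsBucket$ (which realises the acyclic relation $\propord{\bucket}{j}$) places $p_1$ before $p_2$.

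\emph{From valid merging opportunities to assignments.}
Conversely, let $(\bucket,j)$ be any valid merging opportunity. Then $\bucket$ is a biclique of $\CSSgraph$ orderable at $j$; among all bicliques orderable at $j$ that contain $\bucket$ pick a $\leq$‑maximal one $\mbiclique$ (any such $\mbiclique$ is in fact maximal among \emph{all} bicliques orderable at $j$, since a strictly larger orderable‑at‑$j$ biclique would still contain $\bucket$), so $\mbiclique$ occurs in the enumeration $\{\mbiclique_i\}_{i=1}^{\nummbicliques}$ and $\mbiclique\notin\forbidden(j)$. Setting $\bicliquevble$ to the index of $\mbiclique$, $\inpos=j$, and the $\excludevble{k}$ to carve $\bucket$ out of $\mbiclique$, one verifies $\fmlavalid$ (as $\mbiclique\notin\forbidden(j)$) and $\fmlaordering$ (as validity of $\inSeq{\covering}{j}{\bucket}$ says exactly that each $e_2$ ordered after an edge of $\bucket$ is either still later than $j$ in $\covering$ or lies in $\bucket$ with the right property order).

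\emph{Matching the objective.}
Under this correspondence I would compute the total weight of violated soft constraints of $\alpha$. The clause $\softfmlabucket{i}{w}$ in $\softconsbucket$ is violated exactly for the nodes $w\in\bucket$, contributing $\Weight(\bucket)$; and $\fmlanode{i}{v}$ in $\softconssels\cup\softconsprops$ is, by inspection, true precisely when node $v$ of rule $\bucket_i$ becomes removable after the merge (every edge $(s,p)$ with $\Index((s,p))=i$ now reappears inside $\bucket$), so these clauses are violated exactly for the nodes of $\covering$ that survive the trimming of $\inSeq{\covering}{j}{\bucket}$. Hence the Max‑SAT cost of $\alpha$ is $\Weight(\bucket)+\Weight(\text{surviving nodes of }\covering)\ \geq\ \Weight(\Trim{\inSeq{\covering}{j}{\bucket}})$, with equality iff $\bucket$ has no trimmable node. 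Since for the optimum we may restrict to merging opportunities whose $\bucket$ has no trimmable node — deleting such a node preserves validity and leaves $\Trim{\inSeq{\covering}{j}{\bucket}}$ unchanged — the minimum Max‑SAT cost equals $\min_{(\bucket,j)}\Weight(\Trim{\inSeq{\covering}{j}{\bucket}})$, and the assignment attaining it produces an optimal merging opportunity. This is the statement of Proposition~\ref{prop:encoding-correct}.

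\emph{Expected main obstacle.}
The routine parts are the soft‑constraint bookkeeping and the trimming subtlety above. The delicate part is the two‑directional match between $\hardcons$ and the validity predicate: showing $\fmlaordering$ (together with $\hasedge$ and $\fmlavalid$) captures the last‑occurrence semantics exactly, keeping straight the indices $\Index(\cdot)$ computed in $\covering$ versus in $\inSeq{\covering}{j}{\bucket}$, and the completeness argument that every biclique orderable at $j$ embeds into one of the enumerated maximal orderable bicliques that is still orderable at $j$.
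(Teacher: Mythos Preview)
Your proposal is correct and follows essentially the same three-step structure as the paper's proof in Appendix~\ref{sec:encoding-correct}: (1) every valid merging opportunity yields a satisfying assignment of $\hardcons$, (2) every satisfying assignment yields a valid merging opportunity, and (3) the Max-SAT cost of an assignment equals the weight of the resulting trimmed covering. Your treatment of step~(3) is in fact slightly more careful than the paper's, which tacitly assumes $\bucket$ itself has no trimmable node; your observation that one may without loss of generality restrict to such $\bucket$ closes that small gap.
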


\section{Experimental Results}
\label{sec:experiments}

We implemented a tool \satcss (in Python 2.7) for CSS minification via
rule-merging which can be found in our supplementary material~\cite{HHL18}.  The source code is also available at the following URL.
\begin{center}
    \url{https://github.com/matthewhague/sat-css-tool}
\end{center}
It constructs the edge order following Section~\ref{sec:edgeOrder} and discovers
a merging opportunity following Section~\ref{sec:graph2maxsat}.
We use Z3~\cite{Z3} as the back end SMT and Max-SAT solver.
As an additional contribution, our tool can also generate instances in the extended DIMACS format \cite{maxsat_benchmarks} allowing us to use any Max-SAT solvers.
This output may also provide a source of industrially inspired examples for Max-SAT competition benchmarks.

\begin{wrapfigure}{r}{0.2\textwidth}
  \begin{center}
    \includegraphics[width=0.2\textwidth]{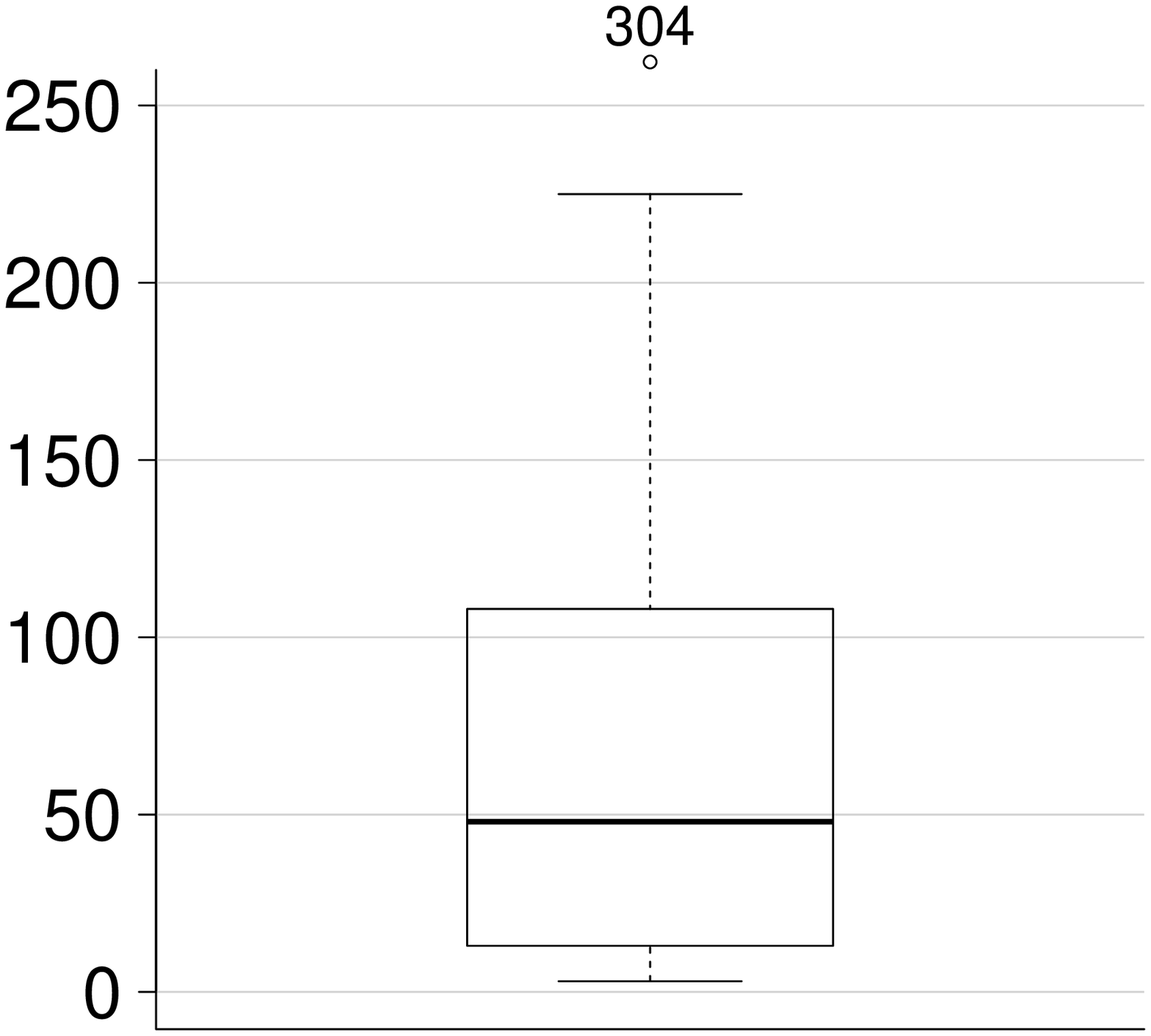}
  \end{center}
  \caption{Box plot of the file sizes in kilobytes}\label{fig:filesizes}
\end{wrapfigure}

Our benchmarks comprise \numbenchmarks\ CSS files coming from three different sources
(see Appendix~\ref{sec:benchmark-details} for a complete listing):
\begin{itemize}
\item
    We collected CSS files from each of the top 20 websites on a global ranking list~\cite{website-ranking}.
\item
    CSS files were taken from a further 12 websites selected from the top 20-65 websites from the listing above.
\item
    Finally, CSS files were taken from 11 smaller websites such as DBLP.
    These were examples used during the development of the tool.
\end{itemize}
This selection gives us a wide range of different types of websites, including large scale industrial websites and those developed by smaller teams.
Note, several websites contained more than one CSS file and in this case we took a selection of CSS files from the site.
Hence, we collected more examples than the number of websites used.
Figure~\ref{fig:filesizes} gives the file-size distribution of the CSS files we collected.

\OMIT{
As well as running our tool on the examples as downloaded from the websites above, we also ran our tool in conjunction with six popular CSS minifiers~\cite{minifier-list}.
This allows us to compare the savings of our refactoring technique to savings obtained by traditional minification.
Moreover, these experiments show that our techniques can be used to great effect \emph{as a part of} a minification suite.
That is, we identify savings that are different to those identified by popular minifiers.
}

In the following, we describe the optimisations implemented during the development of \satcss.
We then describe the particulars of the experimental setup and provide the results in Figure~\ref{fig:box-plots}.

\subsection{Optimisations}

We give an overview of the optimisations we used when implementing \satcss.
In section \ref{sec:eval_op} we provide a detailed evaluation of the proposed
optimisations.
\begin{itemize}
\item
    We introduce variables $\excludevble{k}$ only for nodes appearing in the edge order, rather than for all nodes in a biclique.
    This is enough to be able to define sub-bicliques of any $\mbiclique_i$ that
        can appear at any position in the covering, but means that the smallest
        rule-merging opportunity cannot always be constructed.
    However, this reduces the search space and leads to an improvement to run times.
    For example, given a biclique
    \begin{verbatim}
        .a { color: red; background: blue }
    \end{verbatim}
    \vspace{-2ex}
    where the property \texttt{color: red} appears in some pair of the edge ordering, but the property \texttt{background: blue} does not, we introduce a variable $\excludevble{k}$ which is true whenever the declaration $\texttt{color: red}$ is excluded from the biclique, but do not introduce a similar variable for \texttt{background: blue}.
    Since \texttt{background: blue} does not appear in the edge order, its presence can never cause a violation of the edge ordering.
    This is not the case for \texttt{color: red}, hence we still need the allow the possibility of removing it to satisfy the edge order.
\item
    For a rule-merging application to reduce the size of the file, it must remove at least two nodes from the covering.
    Hence, for each
    $\mbiclique_i$
    let
    $j^i_l$
    be the index of the \emph{second} rule in $\covering$ containing some edge in
    $\mbiclique_i$
    (not necessarily the same edge), and
    $j^i_h$
    be the index of the \emph{last} rule containing some edge in
    $\mbiclique_i$
    Without loss of generality we assert
    \[
        \bicliquevble = i \Rightarrow \brac{
            \inpos \geq j^i_l
            \land
            \inpos \leq j^i_h
        } \ .
    \]

\item
    We performed the following optimisation when calculating
    $({\{\mbiclique_i\}}_{i=1}^{\nummbicliques}, \forbidden)$.
    The majority ($\pctzerobadbicliques\%$) of benchmarks  all maximal bicliques are orderable at all positions, and the mean percentage of maximal bicliques that were unorderable at some position was $\meanbadbicliques\%$, with a maximum of $\maxpctbadbicliques\%$.
    However, there are one or two of the largest examples of our experiments where this enumeration took a minute or so.
    Since the number of maximal bicliques that are unorderable at some position is so small, we decided in our implementation to simply remove all such bicliques from the analysis.
    In this case
    $\{\mbiclique_i\}_{i=1}^{\nummbicliques}$
    is an enumeration of all maximal bicliques that are orderable at all positions $j$, and $\forbidden$ maps all values of $[1,m]$ to $\emptyset$.
    This means that we may not be able find the optimal rule-merging opportunity
        as not all bicliques are available, but since the amount of time
        required to find a rule-merging opportunity is reduced, we are able to
        find more merging opportunities to apply.

\item
    Finally, we allowed a multi-threaded partitioned search.
    This works as follows.
    The search space is divided across $n$ threads, and each thread partitions its search space into $m$ partitions.
    During iteration $j$, thread $k$ allows only those $\mbiclique_i$ where
    $i = k*m + (j \mod m)$.
    If the fastest thread finds a merging opportunity in $t$ seconds, we wait up to $0.1t$ seconds for further instances.
    We take the best merging opportunity of those that have completed.
    A thread reports ``no merging opportunities found'' only if none of its
        partitions contain a merging opportunity.

    For the experiments we implemented a simple heuristic to determine the number of threads and partitions to use.
    We describe this heuristic here, but first note that better results could likely be obtained with systematic parameter tuning rather than our ad-hoc settings.
    The heuristic used was to count the number of nodes in the CSS file;
    that is, the total number of selectors and property declarations (this total includes repetitions of the same node -- we do not identify repetitions of the same node).
    The tool creates enough partitions to give up to 750 nodes per partition.
    If only two partitions are needed, only one thread is used.
    Otherwise \satcss first creates new threads -- up to the total number of CPUs on the machine.
    Once this limit is reached, each thread is partitioned further until the following holds:
    $\mathit{number\ of\ threads} * \mathit{partitions\ per\ threads} * 750 \leq \mathit{number\ of\ nodes}$.

\end{itemize}
For the edge ordering:
\begin{itemize}
\item
    We do not support attribute selectors in full, but perform simple satisfiability checks on the most commonly occurring types of constraints.
    These cover all constraints in our benchmarks.
    However, we do support shorthand properties.

\item
    Instead of doing a full Existential Presburger encoding, we do a backwards emptiness check of the CSS automata.
    This backwards search collects smaller Existential Presburger constraints describing the relationship between siblings in the tree, and checks they are satisfiable before moving to a parent node.
    Global constraints such as ID constraints are also collected and checked at the root of the tree.
    This algorithm is described in Appendix~\ref{sec:optimised-aut-emp}.
\end{itemize}

\subsection{Results}

The experiments were run on a Dell Latitude e6320 laptop with 4Gb of RAM and four 2.7GHz Intel i7-2620M cores.
The Python interpreter used was PyPy 5.6~\cite{pypy} and the backend version of Z3 was 4.5.
Each experiment was run up to a timeout of 30 minutes.
In the case where CSS files used media queries (which our techniques do not yet
support), we used stripmq~\cite{stripmq} with default arguments to remove the
media queries from CSS files.
    Also, we removed whitespaces, comments, and invalid CSS from all the
CSS files before they were processed by the minifiers and our tool.

We used a timeout of 30 minutes because minification only needs to be applied once before deployment of the website.
We note that the tool finds many merging opportunities during this period and a
minified stylesheet can be produced after each application of rule-merging.
This means the user will be able to trade time against the amount of size reduction made.
Moreover, applications of rule-merging tend to show a ``long-tail'' where the
first applications found provide larger savings and the later applications show
diminishing returns (see Figure~\ref{fig:long-tail}).
The returns are diminishing because our MaxSAT encoding always searches for the rule merging opportunity with the largest saving.

\subsection{Main Results}

Table~\ref{tab:percentile-tables} and Figure~\ref{fig:box-plots} summarise the results.
We compared our tool with six popular minifiers in current usage~\cite{minifier-list}.
There are two groups of results:
the first set show the results when either our tool or one of the minifiers is used alone on each benchmark;
the second show the results when the CSS files are run first through a minifier and then through our tool.
This second batch of results shows a significant improvement over running single tools in isolation.
Thus, our tool complements and improves existing minification algorithms and our
techniques may be incorporated into a suite of minification techniques.

Table~\ref{tab:percentile-tables} shows
the savings, after whitespaces and comments are removed,
obtained by \satcss and the six minifiers when they are used alone and together.
The table presents the savings in seven percentile ranks that include the minimal (0th),
the median (50th), and the maximal values (100th).
The upper half of the table shows the savings obtained in bytes,
while the lower half shows the savings as percentages of the original file sizes.
The first seven columns in the table show the savings
when either our tool or one of the minifiers is used alone;
the rest of the columns show the results when the CSS files
are processed first by a minifier and then by our tool.
Figure~\ref{fig:box-plots} shows the same data in visual form.
It can be seen that \satcss tends to achieve greater savings
than each of the six minifiers on our benchmarks.
Furthermore, even greater savings can be obtained
when \satcss is used in conjunction with any of the six minification tools.
More precisely, when run individually, \satcss achieves savings
with a third quartile of \satcssthird\% and a median value of \satcssmedian\%,
while the six minifiers achieve savings with third quantiles and medians
up to \miniuptothird\% and \miniuptomedian\%, respectively.
When we run \satcss after running any one of these minifiers,
the third quartile of the savings can be increased to \afterminithird\%
and the median to \afterminimedian\%.
The additional gains obtained by \satcss on top of the six minifiers
(as a percentage of the original file size)
have a third quartile of \gainsthird\% and a median value of \gainsmedian\%.
Moreover, the ratios of the percentage of savings made by \satcss to
the percentage of savings made by the six minifiers have third quartiles of
at least \ratiothird\% and medians of at least \ratiomedian\%. These figures clearly indicate a
substantial proportion of extra space savings made by \satcss.
We comment in the next section on how our work may be integrated with existing tools.
\begin{wrapfigure}{r}{.4\linewidth}
    \centering
    \includegraphics[width=\linewidth]{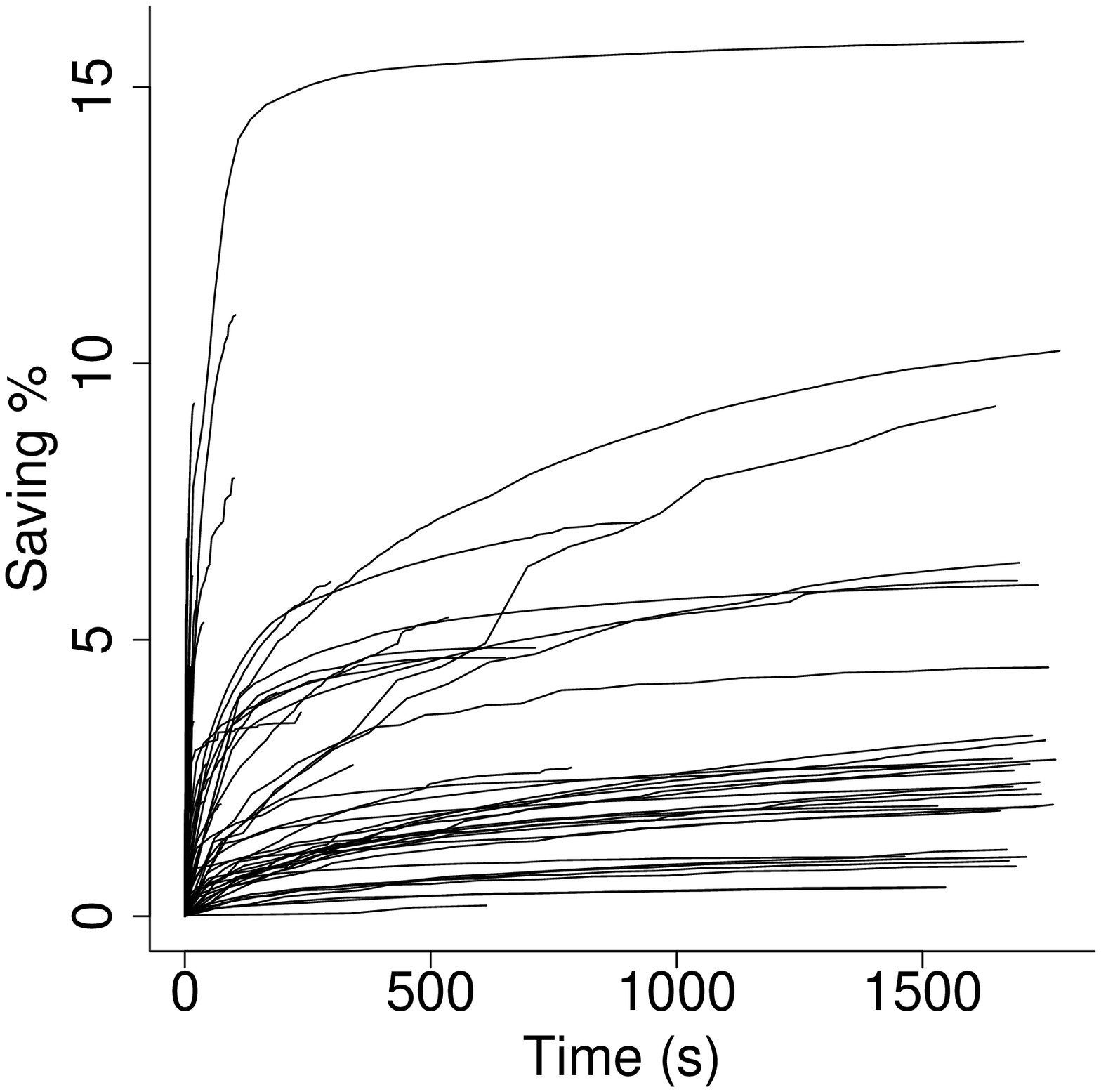}
    \caption{\label{fig:long-tail} Savings against time for \satcss on each benchmark}
\end{wrapfigure}

In Figure~\ref{fig:long-tail} we plot for each benchmark, the savings made as time progresses.
Each line represents one benchmark.
Our algorithm repeatedly searches for and applies merging opportunities.
Once one opportunity has been found, the search begins again for another.
Hence, the longer \satcss is run, the more opportunities can be found, and the more space saved.
Since we search for the optimal merging opportunities first, we can observe a long-tail, with the majority of the savings being made early in the run.
Hence, SatCSS could be stopped early while still obtaining a majority of the benefits.
We further note that in \numcompleted\ cases, \satcss terminated once no more merging opportunities could be found.
In the remaining \numtimedout\ cases, the timeout was reached.
If the timeout were extended, further savings may be found.

\begin{table}
    \centering
    \includegraphics[width=1\linewidth]{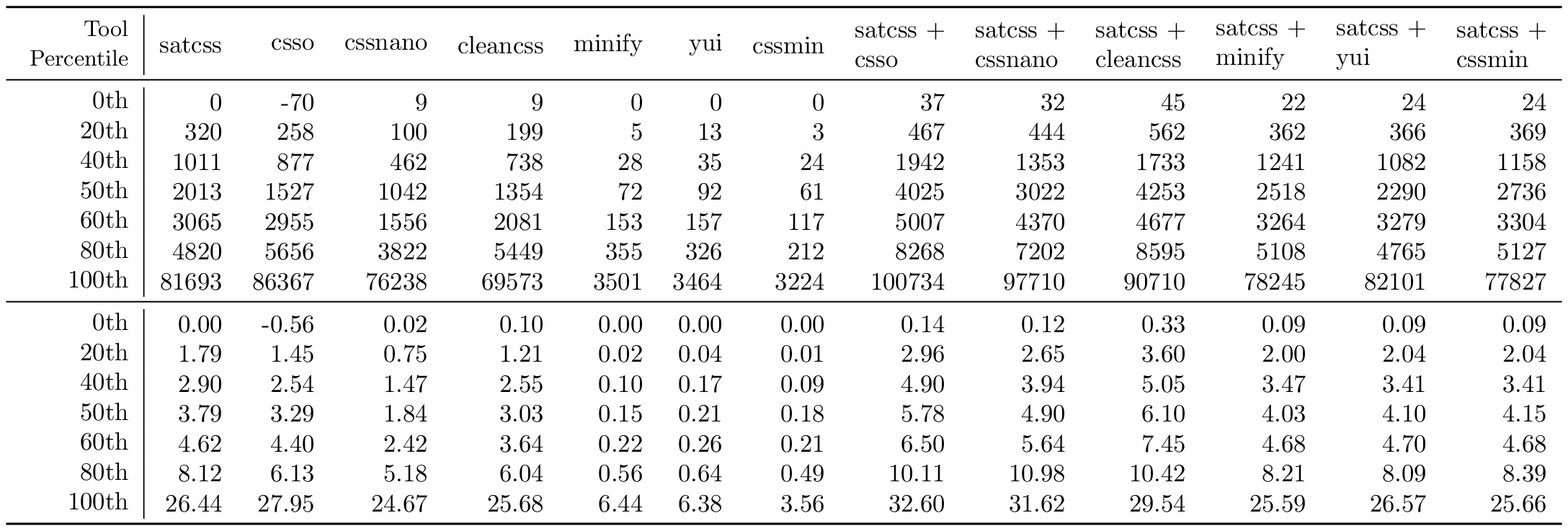}
  \caption{\label{tab:percentile-tables} Percentile ranks of the savings in bytes (above) and in percentages (below)}
\end{table}

\begin{figure}
    \centering
    \includegraphics[width=0.7\linewidth]{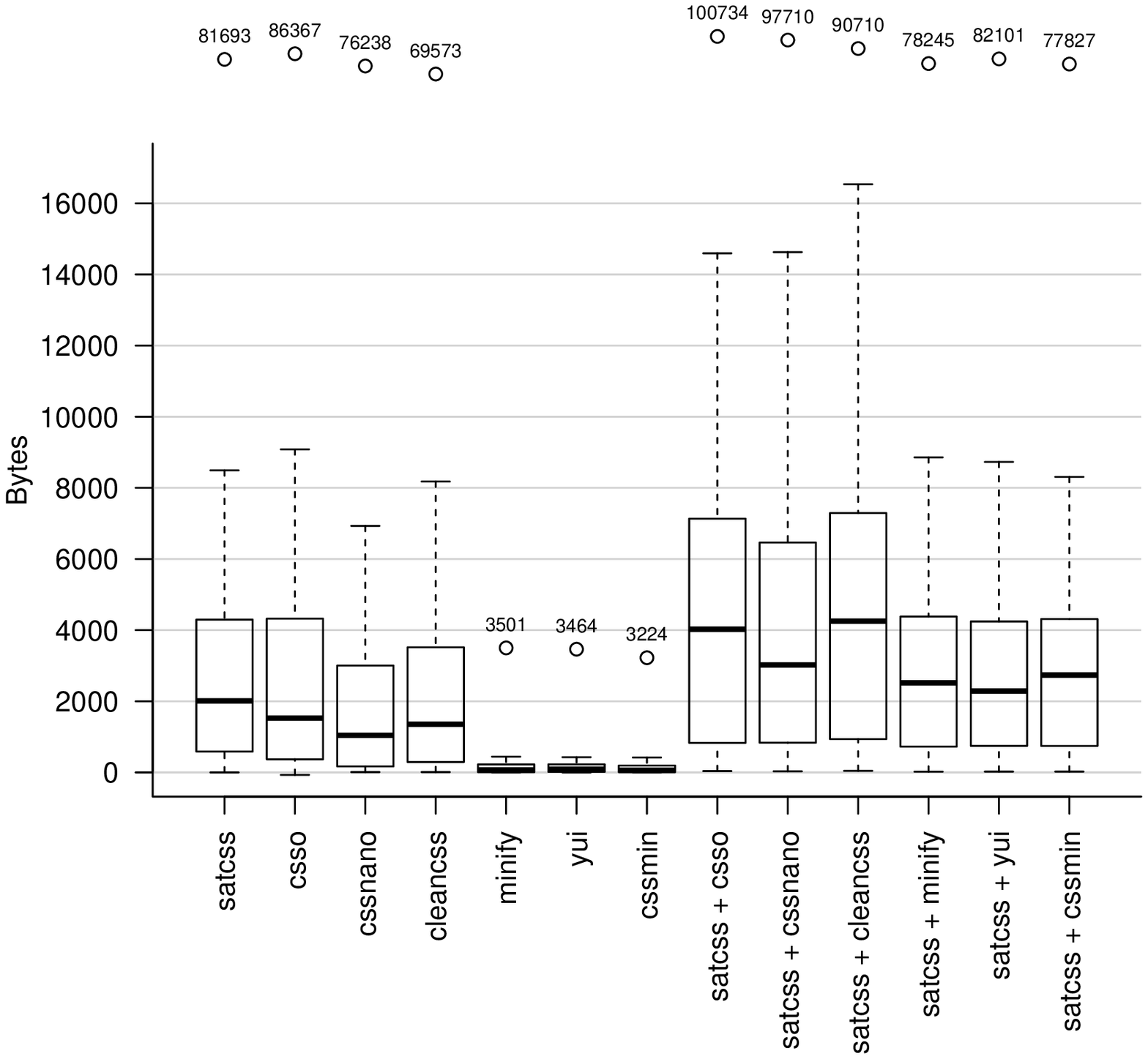}
	\hspace{.2em}
    \includegraphics[width=0.7\linewidth]{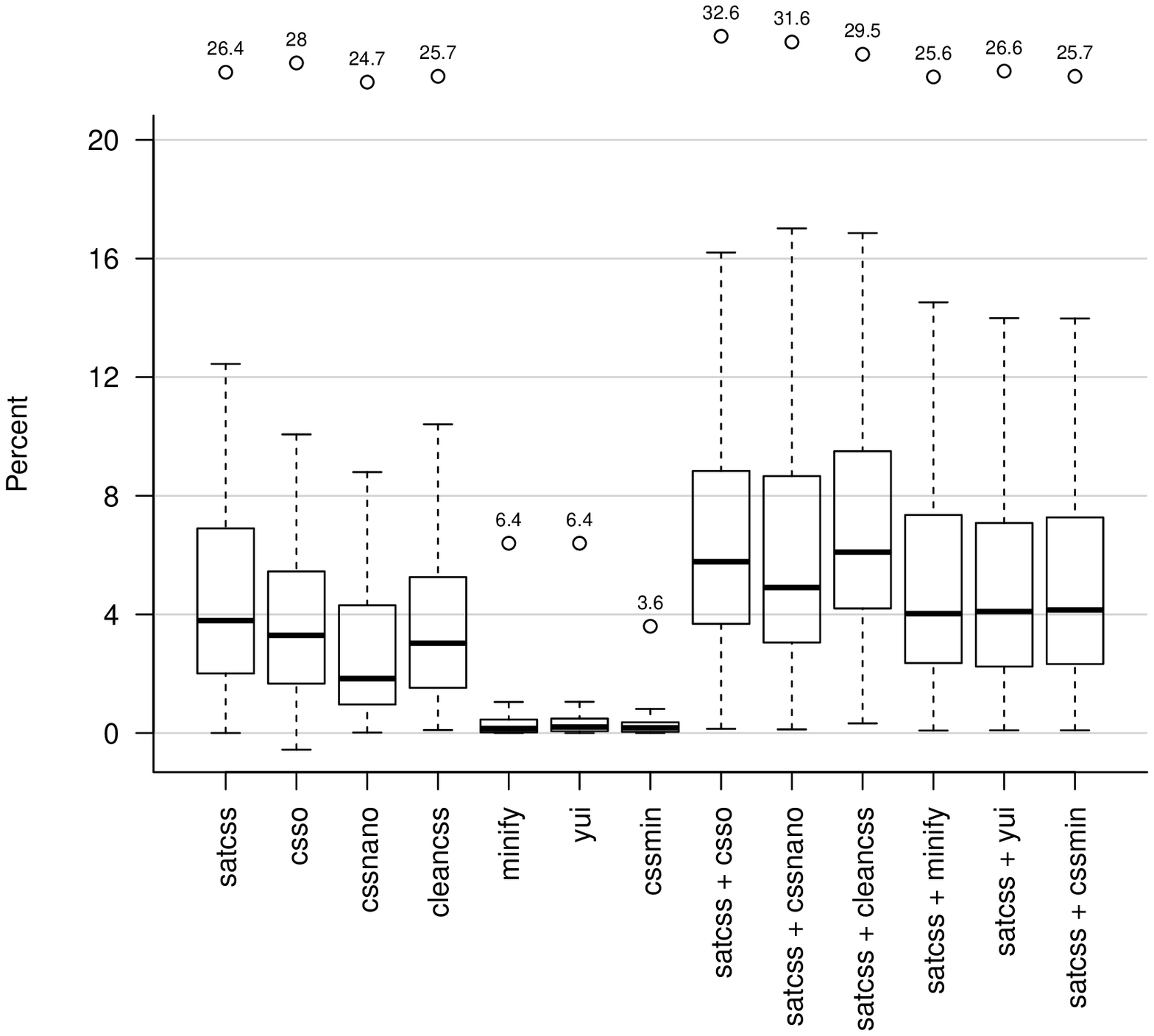}
	\caption{\label{fig:box-plots} Box plots of the savings in bytes (above) and in percentages (below)}
\end{figure}

Finally, we remark on the validation of our tool.
First, our model is built upon formal principles using techniques that are proven to maintain the CSS semantics.
Moreover, our tool verifies that the output CSS is semantically equivalent to the input CSS.
Thus we are confident that our techniques are also correct in practice.
A reliable method for truly comparing whether the rendering of webpages using the original CSS and the minified CSS is identical is a matter for future work, for which the recent Visual Logic of Panchekha \emph{et al.}~\cite{PGETK18} may prove useful.
In lieu of a systematic validation, for each of our experiment inputs, we have visually verified that the rendering remains unchanged by the minification.
Such a visual inspection can, of course, only be considered a sanity-check.

\subsection{Evaluations of Optimisations}
\label{sec:eval_op}

For each of our optimisations, we ran \satcss on the unminified \numbenchmarks\ benchmarks with the optimisation disabled.
The results are shown in Figure~\ref{fig:opt-emp} and Figure~\ref{fig:box-plots-opt}.
In Figure~\ref{fig:box-plots-opt}, the ``satcss'' column shows the performance of \satcss implemented as described above.
As described above, of the $\numbenchmarks$ benchmarks, \satcss completed within the timeout (no more merging-opportunities could be found) in \numcompleted\ cases, and was stopped early due to the timeout in the remaining \numtimedout.
The comparison with the effect of disabling optimisations is shown in Table~\ref{tbl:timeouts}.

Figure~\ref{fig:opt-emp} shows the time taken by \satcss to construct the edge order for each benchmark using the optimised emptiness of intersection algorithm presented in Appendix~\ref{sec:optimised-aut-emp} and the non-optimised encoding presented in Section~\ref{sec:intersection}.
The timeout was kept at 30 minutes.
The optimised algorithm completed the edge order construction on all benchmarks, while the timeout was reached in \unoptimisedtimeouts\ cases by the unoptimised algorithm.
A clear advantage can be seen for the optimised approach.
\begin{wrapfigure}{r}{.35\linewidth}
    \centering
    \includegraphics[width=\linewidth]{}
	\caption{\label{fig:opt-emp} Box plots of the time taken to construct the CSS edge order with the optimised and unoptimised emptiness of intersection tests}
\end{wrapfigure}

\begin{figure}
    \centering
    \includegraphics[width=0.7\linewidth]{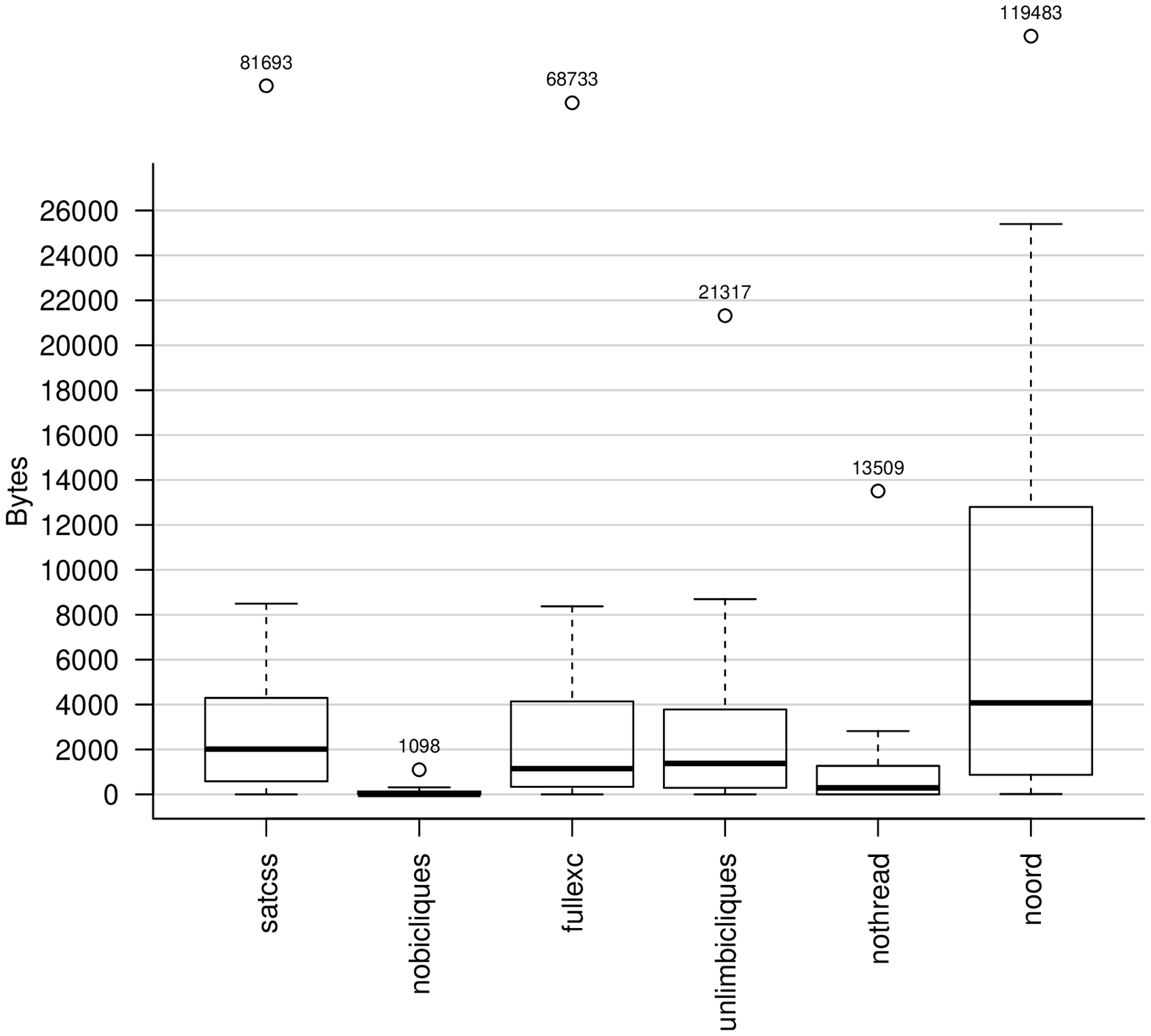}
	\hspace{.2em}
    \includegraphics[width=0.7\linewidth]{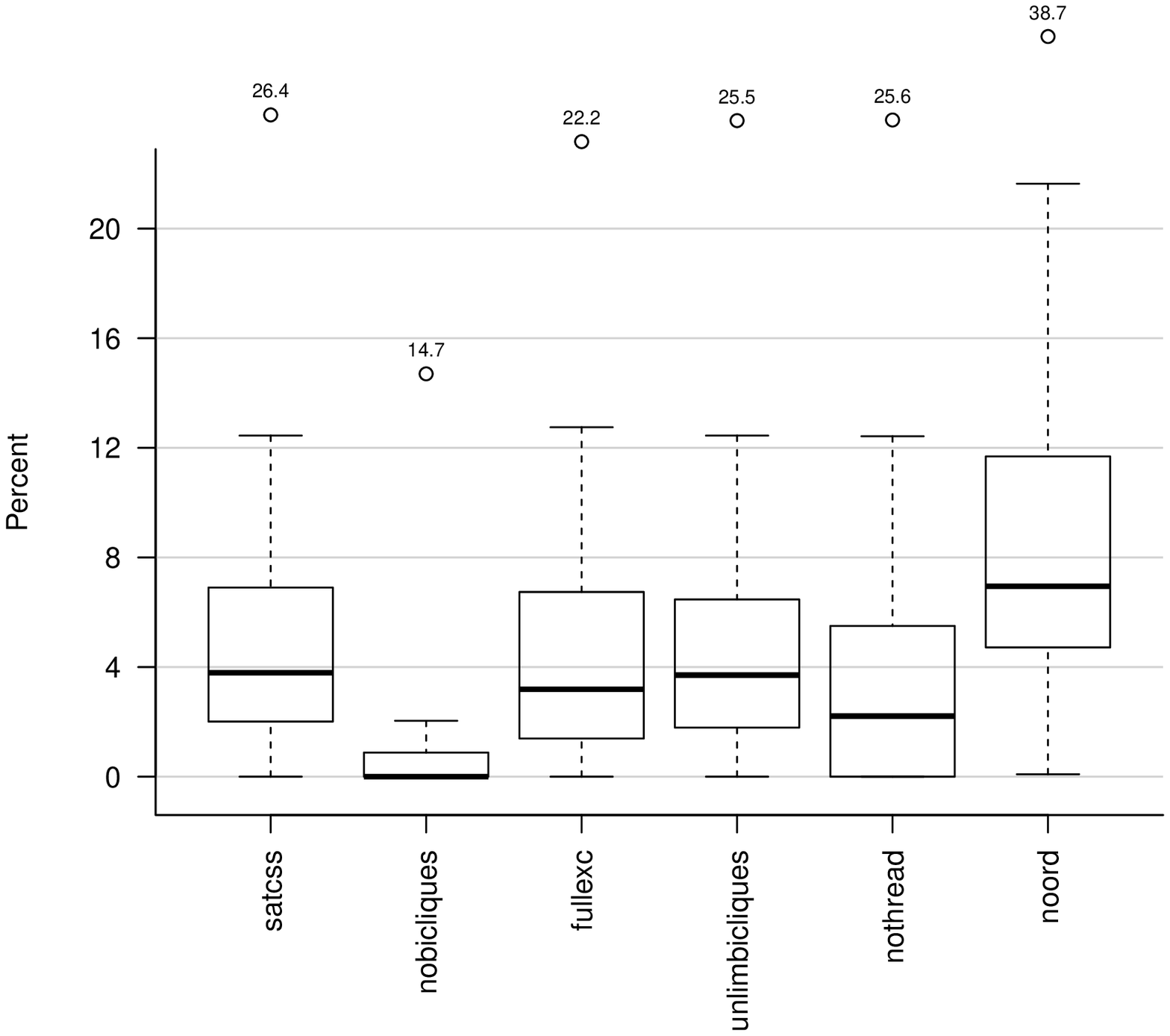}
	\caption{\label{fig:box-plots-opt} The savings in bytes (above) and in percentages (below) with certain optimisations disabled}
\end{figure}

Next we compared the straightforward encoding of the rule-merging problem into Max-SAT discussed at the start of Section~\ref{sec:graph2maxsat} with the biclique encoding which was the main topic of Section~\ref{sec:graph2maxsat}.
\satcss using the straightforward encoding appears as ``nobicliques'' in Figure~\ref{fig:box-plots-opt}.
With the straightforward encoding, the tool completed within the timeout in \nobicliquesnumcompleted\ cases, and reached the timeout in the remaining \nobicliquesnumtimedout.
It can been seen that there is a clear benefit to the biclique encoding.

We also consider the optimisation that introduces variables $\excludevble{k}$ only for nodes appearing in the edge order, rather than for all nodes in a biclique.
Disabling this optimisation appears as ``fullexc'' in Figure~\ref{fig:box-plots-opt}.
With this optimisation disabled, the tool completed within the timeout in \fullexcnumcompleted\ cases, and reached the timeout in the remaining \fullexcnumtimedout.
We can see a modest gain from this simple optimisation.

We then study the effect of removing all unorderable bicliques.
The ``unlimbicliques'' column in Figure~\ref{fig:box-plots-opt} shows the effect of allowing
$({\{\mbiclique_i\}}_{i=1}^{\nummbicliques}, \forbidden)$
to be calculated in full.
That is, unorderable bicliques are split into orderable sub-bicliques.
With full biclique enumeration, the tool completed within the timeout in \unlimbicliquesnumcompleted\ cases, and reached the timeout in the remaining \unlimbicliquesnumtimedout.
This had only a small effect on performance, which is expected.
The purpose of this limiting biclique generation in \satcss is to prevent the rare examples of unorderable bicliques from having a large effect on performance in some cases.

The next optimisation in Figure~\ref{fig:box-plots-opt}, ``nothread'', shows the performance of \satcss with multi-threading and partitioning disabled.
That is, the search space is not split up across several Max-SAT encodings.
Without multi-threading or partitioning, the tool completed within the timeout in \nothreadnumcompleted\ cases, and reached the timeout in the remaining \nothreadnumtimedout.
There is a noticeable degradation in performance when this optimisation is disabled.
The final column in Figure~\ref{fig:box-plots-opt} studies the following hypothetical scenario.
The reader may wonder whether CSS developers may be able to improve performance by specifying invariants of the documents to which the CSS will be applied.
For example, the developer may specify that a node with class \texttt{a} will never also have class \texttt{b}.
In this case pairs such as
$(\texttt{.a}, \texttt{.b})$
can be removed from the edge-order.
Thus, document-specific knowledge may improve performance by reducing the number of ordering constraints that need to be maintained.
The column ``noord'' shows the performance of \satcss when the edge-ordering is empty, i.e., there is no edge ordering.
Without the edge ordering, the tool completed within the timeout in \noordnumcompleted\ cases, and reached the timeout in the remaining \noordnumtimedout.
This represents a best-case scenario if document invariants were considered.
Thus, if we were to extend \satcss to also take a set of invariants, we could see a large improvement in the savings found.
However, we also note that these savings do not dwarf the performance of \satcss without invariants.
\begin{wraptable}{r}{.35\linewidth}
    \centering
    \includegraphics[width=\linewidth]{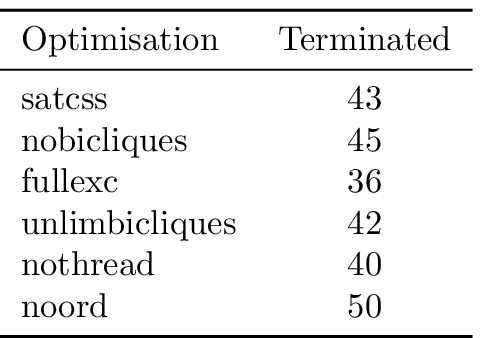}
    \caption{\label{tbl:timeouts} The number of examples terminating with no more discovered merging opportunities in standard mode and with certain optimisations disabled}
\end{wraptable}

\section{Related Work}
\label{sec:related}

CSS minification started to receive attention in the web programming 
community around the year 2000. 
To the best of our knowledge, the first major tools that could perform CSS
minification were Yahoo!~YUI Compressor \cite{yui} and Microsoft Ajax Minifier,
both of which were developed around 2004--2006. This is followed by the 
development of many other CSS minifiers including (in no particular order) 
\texttt{cssmin} \cite{cssmin}, \texttt{clean-css} \cite{cleancss}, \texttt{csso}
\cite{csso}, \texttt{cssnano} \cite{cssnano}, and \texttt{minify} \cite{minify}.
Such minifiers mostly apply syntactic transformations including removing 
whitespace characters, comments, and replacing strings by their abbreviations
(e.g. \verb+#f60+ by \verb+#ff6600+). More and more advanced optimisations are
also being developed. For example, \texttt{cssnano} provides a limited support
of our rule-merging transformations, wherein only \emph{adjacent} rules may be 
merged.
The lack of techniques for handling the order dependencies of
CSS rules \cite{Souders-book} was most likely one main reason why a more general
notion of rule-merging transformations (e.g. that can merge rules 
that are far away in the file) is not supported by CSS minifiers.

In our experiments, we ran \satcss after running the existing minifiers described above.
It is likely that the order of execution is important: the rewrites applied by existing minifiers will put the CSS into a more normalised format, which will improve the possibility of selectors sharing the same declarations.
Moreover, these minifiers implement ad-hoc techniques, such as the limited
rule-merging transformations of \texttt{cssnano} described above.
After an application of our tool it is possible that some of these ad-hoc techniques may become applicable, leading to further savings.
Thus, we posit that our techniques could be combined with the techniques of existing minifiers in a combined minification loop, which is run until a fixed point or timeout is reached.

Although the importance of CSS minification is understood in industry,
the problem received little attention in academia until very recently. 
Below we will mention the small number of existing work on formalisation of CSS 
selectors and CSS minification, and other relevant work.

The lack of theories for reasoning about CSS selectors was first mentioned
in the paper \cite{Geneves12}, wherein the authors developed a tree logic
for algorithmically reasoning about CSS selectors, i.e., by developing
algorithms for deciding satisfiability of formulas in the logic. This
formalisation does \emph{not} capture the whole class of CSS3 selectors; as 
remarked in their follow-up paper \cite{Geneves-style}, the logic captures 
70\% of the selectors in their benchmarks from real-world CSS files. 
In particular, they do not fully support attribute selectors
(e.g. \verb+[l*="bob"]+). 
Our paper provides a \emph{full} formalisation of CSS3 selectors.
In addition, their tree logic can express properties that are \emph{not}
expressible in CSS. Upon a closer inspection, their logic is at least as 
expressive as $\mu$-calculus, which was a well-known logic in database theory 
for formalising 
query languages over XML documents, e.g., see \cite{Libkin-survey,Neven-survey} 
for two wonderful surveys. As such, the complexity of satisfiability for
their tree logic is EXPTIME-hard. Our formalisation captures \emph{no more}
than the expressive power of CSS3 selectors,
which helps us obtain the much lower complexity NP and enables the use of 
highly-optimised SMT-solvers.
There is a plethora of other work on logics
and automata on unranked trees (with and without data), e.g., see
\cite{Marx05,BFK03,GK02,CM07,CM09,CLM10,MR05,Hidders03,Wenfei-PODS05,NS06,GF05,LS10,GL06,Diego-thesis,Counting-free,DLT12,GLSG15}
and the surveys \cite{Libkin-survey,Neven-survey,data-aut-survey}.
However, none of these formalisms can capture certain
aspects of CSS selectors (e.g.\ string constraints on attribute values), even
though they are much more powerful than CSS selectors in other aspects
(e.g.\ navigational). 

There are a handful of research results on CSS minification that 
appeared in recent years. Loosely speaking, these optimisations can be 
categorised into two: (a) \emph{document-independent}, and (b) 
\emph{document-dependent}. Document-independent optimisations are program
transformations that are performed completely independently of the web documents
(XML, HTML, etc.). On the other hand, document-dependent optimisations
are performed with respect to a (possibly infinite) set of documents.
Existing CSS minifiers \emph{only} perform document-independent
optimisations
since they are meant to preserve the semantics 
of the CSS file \emph{regardless} of the DOMs to which the CSS file
is applied. Our work in this paper falls within this category too.
Such optimisations are often the most sensible option \emph{in 
practice}
including (1) the case of \emph{generic} stylesheets as part of web templates 
(e.g.~WordPress), and (2) the case when the DOMs are generated by a program.
Case (2) requires some explanation. 
A typical case of DOMs being generated by programs occurs in HTML5 web pages.
An HTML5 application comes with  a finite set of HTML 
documents, JavaScript code, and CSS files. The presence of JavaScript means that
potentially \emph{infinitely} many possible DOM-trees could be generated and 
displayed by 
the browser. Therefore, a CSS minification should \emph{not}
affect the rendering of \emph{any} such tree by the browser. Although
a document-dependent optimisation (that take these infinitely many trees
into account) seems appropriate, this is far from realistic given the
long-standing difficulty of performing sound static analysis for JavaScript
especially in the presence of DOM-trees
\cite{SSDT13,SDCST12,AM14,JMT09,JMM11,HLO15}. 
This would make an interesting
long-term research direction with many more advances on static analysis
for JavaScript. 
\mhchanged{
    However, the problem is further compounded by the multitude of frameworks deployed on the server-side for HTML creation (e.g. PHP, Java Server Pages, etc.), for which individual tools will need to be developed.
}
Until then, a practical minifier for HTML5 applications will 
have to make do with document-independent optimisations for CSS files.

The authors of \cite{Cilla} developed a document-dependent dynamic analysis
technique for detecting and removing unused CSS selectors in a CSS file that
is part of an HTML5 application. A similar tool, called UnCSS \cite{UnCSS},
was also later developed. This is done by instrumenting the HTML5
application and removing CSS selectors that have not been used by the end
of the instrumentation. The drawback of this technique is that it cannot
test all possible behaviours of an HTML5 application and may
may accidentally delete selectors that can in reality be used by the
application. It was noted in \cite{HLO15} that such tools may accidentally
delete selectors, wherein the HTML5 application has event listeners that 
require user interactions. The same paper \cite{HLO15} develops a static 
analysis technique for overapproximating the set of generated DOM-trees by 
using tree rewriting for abstracting the dynamics of an HTML5 application. 
The technique, however, covers only a very small subset of JavaScript, and
is difficult to extend without first overcoming the hard problem of
static analysis of JavaScript.


The authors of \cite{Geneves-style} applied their earlier techniques 
\cite{Geneves12} to develop a document-independent CSS minification
technique that removes ``redundant'' property declarations, and 
merges two rules with semantically equivalent selectors. The optimisations that 
they considered are orthogonal to and can be used in conjunction with the 
optimisation that we consider in this paper. 
More precisely, they developed an algorithm for 
checking selector subsumption (given two selectors $S_1$ and $S_2$, whether the 
selector $S_1$ is subsumed by the selector $S_2$, written $S_1 \subseteq S_2$).
A redundant property declaration $p$ in a rule $R_1$ with a selector $S_1$ can,
then, 
be detected by finding a rule $R_2$ that also contains the declaration $p$ and 
has a selector $S_2$ with a higher specificity than $S_1$ and that 
$S_1 \subseteq S_2$. As another example, whenever we can show that the selectors
$S_1$ and $S_2$ of two rules $R_1$ and $R_2$ to be semantically equivalent
(i.e. $S_1 \subseteq S_2$ and $S_2 \subseteq S_1$), we may merge $R_1$ with
$R_2$ under certain conditions. The authors of \cite{Geneves-style} 
provided sufficient conditions for performing this merge by
relating the specificities of $S_1$ and $S_2$ with the specificities of
other related selectors in the file (but not accounting for the order of 
appearances of these rules in the file). In general, a CSS rule might have
multiple selectors (a.k.a. selector group), each with a different
specificity, and it is not clear from the presentation of the paper 
\cite{Geneves-style} how their optimisations extend to the general case.

\OMIT{
Their logic is akin to $\mu$-calculus,
which was a well-known logic in database theory for formalising query languages 
over XML documents, e.g., see \cite{Libkin-survey,Nevin-survey} for two
wonderful surveys. 
}

The authors of \cite{Mesbah-refactoring} developed a
document-dependent\footnote{More precisely, dependent on a given finite set
of HTML documents} CSS minification method with an advanced type of rule 
merging as one of their optimisations.
%
\mhchanged{
    This is an ambitious work utilising a number of techniques from areas such as data mining and constraint satisfaction.
    Although their work differs from ours because of its document-dependence,
    the use of rule-merging is closely related to our own, hence we will describe in detail some key differences with our approach.
}
The techniques presented in this paper can be
viewed as a substantial generalisation of their rule merging optimisation.
Loosely speaking, in terms of our graph-theoretic
framework, their technique first enumerates all maximal bicliques with at least
two selectors. This is done with the 
help of an association rule mining algorithm (from data mining) with a 
set of property declarations viewed as an \emph{itemset}. 
Second, for each such maximal biclique $B$, a value $n$ is computed that 
reflects how much saving will be obtained if $B$ could somehow be inserted into
the file and every occurrence of each property declaration in $B$ is erased
from the rest of the CSS file. Note that $n$ is independent of where $B$
is inserted into the CSS file. Third, for each such maximal biclique $B$ (ranked
according to their values in a non-increasing order), a solver for the
(finite-domain) constraint satisfaction problem is invoked
to check whether $B$ can be placed in the file (with every occurrence of 
each property declaration in $B$ is erased from the rest of the CSS file) while
preserving the order dependency. If this check fails, the solver will also
be invoked to check if one can insert sub-bicliques of $B$ (with a maximal set 
$S$ of selectors with $|S| \geq 2$) in the file. Possible positions in the file 
to place each selector of $B$ are encoded as variables constrained by
the edge order dependency that is \emph{relativised} to the provided
HTML documents. 
To test whether two edges should be ordered in this relativised edge order, the 
selectors are not subject to a full intersection test, but instead a 
\emph{relativised intersection} test that checks
whether there is some node in the given finite set of html documents that is 
matched by both selectors.
%
Their techniques do not work when the HTML documents are not given, which 
we handle in this paper.
Another major difference to our paper is that their algorithm sequentially
goes through every maximal biclique $B$ (ranked according to their values) and 
checks 
if it can be inserted into
the file, which is computationally too prohibitive especially when the
(unrelativised) edge order $\edgeOrder$ is used. 
Our algorithm, instead, fully relegates the search of an appropriate $B$ and
the position in the file to place it to a 
highly-optimised Max-SAT solver, which scales to real-world CSS files.
In addition, their type of rule merging is also more restricted than ours for
two other reasons. First, the new rule inserted into the file has
to contain a maximal set of selectors. This \mhchanged{prohibits many
rule-merging opportunities} and in fact
does not subsume the 
merging adjacent rule optimisation of \texttt{cssnano} \cite{cssnano} in
general. For example, consider the CSS file
\begin{center}
\begin{minted}{css}
.class1 { color:blue }
.class2 { color:blue }
.class3 { color:red }
.class4 { color:blue }
\end{minted}
\end{center}
Notice that we cannot group together the first, second, and fourth rules 
since this would change
the colour of a node associated with the classes \verb+class2+ and 
\verb+class3+, or with the classes \verb+.class3+ and \verb+class4+.
On the other hand, the first two rules can be merged resulting in the new file
\begin{center}
\begin{minted}{css}
.class1, .class2 { color:blue } 
.class3 { color:red }
.class4 { color:blue }
\end{minted}
\end{center}
However, this is not permitted by their merging rule since 
\verb+.class1,.class2{color:blue}+ does not contain a maximal set of 
selectors. Second, given a maximal biclique $B$, their merging operation 
erases \emph{every} occurrence of the declarations of $B$ everywhere else in 
the file. This further rules out certain rule-merging opportunities.
For example, consider the CSS file
\begin{center}
\begin{minted}{css}
.class1 { color:blue; font-size: large }
.class2 { color:blue; font-size: large }
.class4 { font-size: large }
.class3 { color:red }
.class4 { color:blue }
\end{minted}
\end{center}
and observe the following maximal biclique in the file.
\begin{center}
\begin{minted}{css}
.class1, .class2, .class4 { color:blue; font-size: large }
\end{minted}
\end{center}
Unfortunately, this is not a valid opportunity using their merging rule
since this CSS file is not equivalent to
\begin{center}
\begin{minted}{css}
.class1, .class2, .class4 { color:blue; font-size: large }
.class3 { color:red }
\end{minted}
\end{center}
nor to the following file.
\begin{center}
\begin{minted}{css}
.class3 { color:red }
.class1, .class2, .class4 { color:blue; font-size: large }
\end{minted}
\end{center}
In this paper, we permit duplicate declarations, and would insert this 
maximal biclique just before the fourth rule in the file (and perform trim)
resulting in the following equivalent file.
\begin{center}
\begin{minted}{css}
.class1, .class2, .class4 { color:blue; font-size: large }
.class3 { color:red }
.class4 { color: blue }
\end{minted}
\end{center}
Finally, each maximal biclique in the enumeration of \cite{Mesbah-refactoring} 
does \emph{not}
allow two property declarations with the same property name. As we 
explained in Section \ref{sec:graph2maxsat}, CSS rules satisfying this
property are rather common since they support fallback options. Handling
such bicliques (which we do in this paper) requires extra technicalities,
e.g., the notion of orderable bicliques, and adding an order to the 
declarations in a biclique.


\OMIT{
sets of property declarations that appear in the CSS file with at least two
support (i.e. subsumed in at least two rules in the CSS file),
for each of which a value $n$ is computed that reflects how much saving will
be obtained if a new rule $R$ is created that contains 
They proposed to enumerate all bicliques with
at least two selectors with the help of a (frequent pattern) association rule 
mining algorithm (with a set of property declarations viewed as an
\emph{itemset}). This enumeration rules out bicliques containing two property
declarations with the same property name; as we mentioned in Section
\ref{sec:graph2maxsat}, it is common in practice to create a CSS rule with 
multiple declarations with the same property name for fallback options.

, which we strictly generalise in this paper.

 but has a 
}

%

\OMIT{
Recent studies have shown that on average web pages have more than
tripled in size in the past six years (e.g.\ see~\cite{pagegrowth}). This large
increase in web page size not only raises the bandwidth requirement and
download time, but it also increases web browsers' processing time~\cite{MB10,souders} (including parsing, rendering, and CSS selectors).
This is especially bad news for users with limited data usage, slow internet
connection, and/or limited computing power (e.g.\ users in developing countries
and mobile users).

There are two standard techniques for reducing file sizes transmitted over the
web: (1) \defn{compression}, and (2) \defn{minification}. Compression means
applying a standard compression algorithm (typically, gzip) before a file is
transmitted to the user, and a decompression algorithm (typically, gunzip)
after a file is downloaded by the browser. Gzip is now supported by many
web hosts and almost all modern browsers, and can be applied to
all types of files (including images, JavaScript, HTML, and CSS). Minification,
on the other hand,
means that we apply a semantic-preserving transformation to the source code
(JavaScript, CSS, and HTML)
in a way that makes the resulting code (in the same language) smaller.
An obvious example of such a program transformation is to remove redundant
whitespace characters from the code. Both minification and
compression are often applied together to reduce the file sizes transmitted over
the
web, but occasionally compression should not be applied for security reasons
(e.g.\ when transferring files over HTTPS, due to potential side-channel attacks
like CRIME~\cite{crime} and BREACH~\cite{breachattack}).
}

\OMIT{
This paper concerns the problem of minifying CSS (Cascading Style Sheets) ---
the de facto language for styling web documents as developed and maintained by
World Wide Web Constortium (W3C)~\cite{CSS}.
}
\OMIT{
Traditionally, CSS minifiers often focus on
simple syntactic transformations (e.g.\ removing redundant whitespace
characters, or
changing specific texts in the files like \texttt{lightblue} to its shorter 
representation as a hex
code \texttt{\#add8e6}). However, recent years have witnessed
advanced CSS minification techniques being developed 
(e.g.~\cite{Geneves-style,Mesbah-refactoring,Geneves12,cssnano}),
each targetting different space saving opportunities.
Note that these techniques are by no means competing since several of
them are often simultaneously used to give a greater size reduction.
}

\OMIT{
\anthony{Change the following paragraph to make it fit related work section
better}
To ensure that the semantics of the CSS file is preserved,
tools like cssnano~\cite{cssnano} only apply this refactoring steps
on adjacent
rules. Hence, other refactoring opportunities that involve
rules that are farther apart in a file would be missed. A different approach
was taken in~\cite{Mesbah-refactoring} by viewing the refactoring problem
as a data mining (more precisely, pattern mining) problem, which allows
general refactoring opportunities to be identified. However, their approach
may identify refactoring opportunities that do not conform to the
order-dependencies in the original CSS files. To minimise the likelihood of
choosing bad refactoring opportunities, the authors of~\cite{Mesbah-refactoring}
propose to check equalities of the two CSS files with respect to \emph{specific}
HTML documents. Thus, for a page with JavaScript (where potentially infinitely
many DOM trees may be generated), the technique of~\cite{Mesbah-refactoring}
is unsound.
}

\OMIT{
The types of refactoring opportunities that we mentioned have been considered
in various restricted settings (e.g. in \cite{cssnano,Mesbah-refactoring}).
The tool cssnano considers refactoring opportunities only between
consecutive rules, which avoids performing intersection checks amongst
different selectors, but is less general (e.g. it will miss refactoring
opportunities described for the aforementioned toy CSS example).
Mazinanian \emph{et al.} \cite{Mesbah-refactoring} considers the
refactoring problem with respect to a given HTML document (as opposed to
\emph{all} possible documents), which allows one to restrict the required
selector intersection checks to nodes in the given HTML document, which
can be easily done by any standard CSS selector matching algorithm (e.g. see
\cite{MB10}). The problem with restricting the refactoring to an HTML
document (also mentioned in the threat of validity in \cite{Mesbah-refactoring})
is that a typical HTML5 application comes with a JavaScript, which can
generate infinitely many document trees.
}
\OMIT{
This effectively disallows restricting the refactoring problem to a finite
number of HTML documents if we are to achieve soundness.
}
\OMIT{
In this paper, we take a sound approach (taken by most CSS
minifiers, e.g., \cite{cssnano,cssmin}) to perform only program transformations
that preserve the semantics of CSS files with respect to \emph{all} document
trees.
}

\OMIT{
We now mention other logical formalisms that are related to (and, in some cases,
inspire) our formalisation of CSS3 selectors, together with their
basic expressivity and algorithmic results.
}
\OMIT{
Geneves \emph{et al.}~\cite{Geneves12,Geneves-style} proposed a tree logic,
which they showed to be sufficiently powerful to encode various CSS selectors
but not string
constraints
(e.g. \verb+[l*="bob"]+) since there are infinitely many
attribute value for \verb+l+ satisfying this substring-of constraint.
In addition, their tree logic is at least as expressive as $\mu$-calculus
leading to EXPTIME-hardness of satisfiability, in contrast to
the complexity NP for our formalisation of CSS selectors.
}

We also mention that there have been works \cite{MB10,PT16,HB15}
on solving web page layout using constraint solvers. These works are 
orthogonal to this paper.
For example, \cite{PT16} provides a mechanised formalisation of the semantics 
of CSS for web page layout (in quantifier-free linear arithmetic), which allows 
them to use an SMT-solver to automatically reason about layout. 
Our work provides a full formalisation of CSS selectors, which is not
especially relevant for layout. Conversely,
the layout semantics of various property declarations is not relevant in
our CSS minification problem.

\alchanged{
Finally, we also mention the potential application of rule-merging to CSS
refactoring. This was already argued in \cite{Mesbah-refactoring}, wherein the
metric of minimal file size is equated with minimal redundancies. More research
is required to discover other classes of CSS transformations and metrics that 
are more meaningful in the context of improving the design of stylesheets.
Constraint-based refactoring has also been studied in the broader context of 
programming languages, e.g., see \cite{Stein18,Tip11}. It would be interesting
to study how refactoring for CSS can be cast into the framework of
constraint-based refactoring as previously studied (e.g. in \cite{Stein18}).
}

\section{Conclusion and Future Work}
\label{sec:conclusion}

We have presented a new CSS minification technique via merging 
similar rules. Our techniques can handle stylesheets composed of CSS rules which contain a set of CSS Level 3 selectors and list of property declarations.
This technique exploits the fact that new rules may be introduced that render other parts of the document redundant.
After removing the redundant parts, the overall file size may be reduced.
Such a solution has required the development of a complete formalisation of CSS selectors and their intersection problem as well as a formalisation of the dependency ordering present in a stylesheet.
This intersection problem was solved by means of an efficient encoding to
quantifier-free integer linear arithmetic, for which there are highly-optimised
SMT solvers.
Moreover, we have formalised our CSS rule-merging problem and presented a
solution to this problem using an efficient encoding into MaxSAT formulas.
These techniques have been implemented in our tool \satcss which we have comprehensively compared with state-of-the-art minification tools.
Our results show clear benefits of our approach.

Both our formalisation and our tool strictly follow the W3C specifications.
In practice, web developers may not always follow these guidelines, and implement convenient abuses that do not trouble existing web browsers.
One particular example is the use of ID values that are not necessarily unique.
In this example case, it would be possible to treat ID values similarly to classes, and relax our analysis appropriately.
In general, one may wish to adapt our constraints to handle other common abuses.
However, this is beyond the scope of the current work.

CSS preprocessors such as Less~\cite{less} and Sass~\cite{sass} --- which 
extend the CSS language with useful features such as variables and partial 
rules --- are commonly used in web development. Since Less and Sass code is 
compiled into CSS before deployment, our techniques are still applicable.

\OMIT{
We may consider using our techniques alongside the additional information provided by the higher-level features of Less and Sass to produce an optimised compilation process.
}

There are many technologies involved in website development and deployment.
These technologies provide a variety of options for further research, some of which we briefly discuss here.

First, we may expand the scope of the CSS files we target.
For example, we may expand our definition of CSS selectors to include features
proposed in the CSS Selectors Level 4 working draft~\cite{CSS4} (i.e. still
not stable).
These features include extensions of the negation operator to allow arbitrary selectors to be negated.
It would be interesting to systematically investigate the impact of these 
features on the complexity of the intersection problem. 
We believe that such a systematic study would be informative in
determining the future standards of CSS Selectors.

Another related technology is that of \emph{media queries} that allow portions 
of a CSS file to only be applied if the host device has certain properties, 
such as a minimum screen size. 
This would involve 
defining semantics of media queries (not part of selectors), and extending our
rule-merging problem to include media queries and rules 
grouped under media queries.

Second, we could also consider additional techniques for stylesheet optimisation.
Currently we take a greedy approach where we search for the ``best'' merging
opportunity at each iteration.
Techniques such as simulated annealing allow a proportion of non-greedy steps to
be applied (i.e.\ choose a merging opportunity that does not provide the largest reduction in file size).
This allows the optimisation process to explore a larger search space, potentially leading to improved final results.
Another approach might be to search for multiple simultaneous rule-merging
opportunities.

Finally, our current optimisation metric is the raw file size.
We could also attempt to provide an encoding that seeks to find the best file size reduction after gzip compression.
[Gzip is now supported by many web hosts and most modern browsers (though not 
including old IE browsers).]
One simple technique that could help bring down the compressed file size is to sort the selectors and declarations in each rule after the minification process is done \cite{advanced-css-book}.

\subsubsection*{Acknowledgements}
    We are grateful for the support that we received from the Engineering and
    Physical Sciences Research Council [EP/K009907/1], Google
    (Faculty Research Award), and European Research Council (grant agreement no.
    759969).
    We also thank Davood Mazinanian for answering questions about his work, and
    anonymous reviewers for their helpful comments.


\appendix
\begin{center}
    \bfseries \huge Appendix
\end{center}

\section{Additional Material for Max-SAT Encoding}
\label{sec:graph2maxsat-appendix}

Recall, given a valid covering
$\covering = \incovering{1}{m}$
of a CSS graph $\CSSgraph$, we aim to find a rule
$\bucket = \inbucket$
and a position $j$ that minimises the weight of $\Trim{\inSeq{\covering}{j}{\bucket}}$.

In this section we give material omitted from Section~\ref{sec:graph2maxsat}.
We begin with a definition of orderability that will be useful for the remainder of the section.
Then we will discuss how to produce the pair
$(\{\mbiclique_i\}_{i=1}^{\nummbicliques}, \forbidden)$.
Finally we will show that our encoding is correct.

\subsection{Orderable Bicliques}
\label{sec:orderable-bicliques-appendix}

To insert a biclique
$\biclique = (\selsBucket, \propsSet)$
into the covering, we need to make sure the order of its edges respects the edge order.
We can only order the edges by ordering the properties in the biclique.
More precisely, if we insert the biclique at position $j$, we need all edges in $\biclique$ that do not appear later in the file
(i.e.\ in $\incovering{j+1}{m}$)
to respect the edge order.
This is because it is only the last occurrence of an edge that influences the semantics of the stylesheet.
Thus, let
\[
    \edgeslast{\biclique}{j} = \setcomp{e \in \biclique}{\Index(e) \leq j} \ .
\]
The edge ordering implies a required ordering of
$\edgeslast{\biclique}{j}$,
which implies an ordering on the properties in $\propsSet$.
This ordering is defined as follows.
For all
$p_1, p_2 \in \propsSet$
we have
\[
    p_1 \propord{\biclique}{j} p_2
    \iff
    \exists (s_1, p_1), (s_2, p_2) \in \edgeslast{\biclique}{j}\ .\ %
        (s_1, p_1) \edgeOrder^\ast (s_2, p_2) \ .
\]
That is, we require $p_1$ to appear before $p_2$ if there are two edges
$(s_1, p_1)$
and
$(s_2, p_2)$
in $\biclique$ that must be ordered according to the transitive closure of $\edgeOrder$.
A biclique is orderable iff its properties can be ordered in such a way to respect
$\propord{\biclique}{j}$.

\begin{definition}[Orderable Bicliques]
    The biclique $\biclique$ is orderable at $j$ if
    $\propord{\biclique}{j}$
    is acyclic.
    That is, there does not exist a sequence
    $(s_1, p_1), \ldots, (s_\numof, p_\numof)$
    such that
    $(s_\idxi, p_\idxi) \propord{\mbiclique}{j} (s_{\idxi+1}, p_{\idxi+1})$
    for all $1 \leq \idxi < \numof$
    and
    $(s_1, p_1) = (s_\numof, p_\numof)$.
\end{definition}
This can be easily checked in polynomial time.
Moreover, if a biclique is orderable at a given position, a suitable ordering can be found by computing
$\propord{\biclique}{j}$,
also in polynomial time.
Thus, this proves Proposition~\ref{prop:poly-order}.

\subsection{Enumerating Maximal Rules}

Fix a covering
$\covering = \incovering{1}{m}$
of a CSS graph
$\CSSgraph = \inCSSgraph$.
We show how to efficiently create the pair
$(\{\mbiclique_i\}_{i=1}^{\nummbicliques}, \forbidden)$.
In fact, we will create the pair
$(\{\mbiclique_i\}_{i=1}^{\nummbicliques}, \forbiddenfst)$
where $\ap{\forbiddenfst}{j}$ is set of $\mbiclique_i$ for which $j$ is the smallest position such that $\mbiclique_i$ is unorderable at position $j$.
Computing $\forbidden$ from this is straightforward (though unnecessary since our encoding uses $\forbiddenfst$ directly).

Our algorithm begins with an enumeration of the maximal bicliques that are orderable at position $0$, and iterates up to position $m$, extending the enumeration at each step and recording in $\forbiddenfst$ which maximal bicliques become unorderable at each position.
In fact, we will construct a pair
$(\mbicliqueset, \forbiddenfst)$
where $\mbicliqueset$ is a set of bicliques rather than a sequence.
To construct a sequence we apply any ordering to the elements of $\mbicliqueset$.

\subsubsection{Initialisation}

At position 0, all maximal bicliques
$(\selsBucket, \propsSet)$
with
$\selsBucket \times \propsSet \subseteq \CSSedges$
are orderable.
This is because all edges appearing in
$(\selsBucket, \propsSet)$
also appear in
$\incovering{1}{m}$,
hence, the semantics of
$\inSeq{\covering}{0}{\bucket}$
will be decided by edges already in $\covering$, and thus the ordering of the properties does not matter.
Hence, we begin with a set of all maximal bicliques
$\mbicliqueset_0$.
That is, all bicliques
$(\selsBucket, \propsSet)$
such that
$\selsBucket \times \propsSet\subseteq \CSSedges$
where there is no
$(\selsBucket', \propsSet')$
with
$\selsBucket' \times \propsSet' \subseteq \CSSedges$
and
$\selsBucket \times \propsSet
 \subset
 \selsBucket' \times \propsSet'$.
Algorithms for generating such an enumeration are known.
For example, we use the algorithm from Kayaaslan~\cite{K10}.

For an initial value $\forbiddenfst_0$ of $\forbiddenfst$, we can simply take the empty function $\emptyset$.

\subsubsection{Iteration}

Assume we have generated
$(\mbicliqueset_j, \forbiddenfst_j)$.
We show how to generate the extension
$(\mbicliqueset_{j+1}, \forbiddenfst_{j+1})$.

The idea is to find all elements of
$\mbicliqueset_j$
that are not orderable at position $j+1$.
Let $\unorderablebicliques$ be this set.
We first define
\[
    \forbiddenfst_{j+1} =
    \forbiddenfst_j \cup \set{(j+1, \unorderablebicliques)} \ .
\]
Then, for each biclique
$\mbiclique \in \unorderablebicliques$,
we search for smaller bicliques contained within $\mbiclique$ that are maximal and orderable at position $j+1$.
This results in the extension
$\{\mbiclique_i\}_{i=1}^{\nummbicliques_{j+1}}$
giving us
$(\{\mbiclique_i\}_{i=1}^{\nummbicliques_{j+1}}, \forbiddenfst_{j+1})$.

Thus, the problem reduces to finding bicliques contained within some $\mbiclique$ that are maximal and orderable at position $j+1$.
We describe a simple algorithm for this in the next section.

\subsubsection{Algorithm for $(\mbicliqueset, \forbiddenfst)$}

We write
$\orderable{\mbiclique}{j}$
to assert that a biclique $\mbiclique$ is orderable at position $j$.
For now, assume we have the subroutine
$\orderablesub{\mbiclique}{j}$
which returns a set
$\mbicliqueset'$
of all orderable maximal bicliques at position $j$ contained within $\mbiclique$.
We use the following algorithm to generate
$(\mbicliqueset, \forbiddenfst)$.

\begin{algorithmic}
    \State $\mbicliqueset := \mbicliqueset_0$
    \State $\forbiddenfst := \emptyset$

    \For {$j := 1$ to $m$}
        \State $\unorderablebicliques := \emptyset$
        \ForAll {$\mbiclique \in \mbicliqueset$}
            \If {$\neg \orderable{\mbiclique}{j}$}
                \State {$\unorderablebicliques := \unorderablebicliques \cup \set{\mbiclique}$}
                \State {$\mbicliqueset := \mbicliqueset
                                          \cup
                                          \orderablesub{\mbiclique}{j}$}
            \EndIf
        \EndFor
        \State $\forbiddenfst := \forbiddenfst \cup \set{(j+1, \unorderablebicliques)}$
    \EndFor
    \State \Return $(\mbicliqueset, \forbiddenfst)$
\end{algorithmic}

Note, we can improve the algorithm by restricting the nested for all loop over elements
$\mbiclique \in \mbicliqueset$
to only those $\mbiclique$ that are not orderable at $m$.
This is because an ordering at position $m$ is also an ordering at position
$j <= m$.
Hence, these bicliques will never be unorderable and do not need to be checked repeatedly.

\subsubsection{Generating Orderable Sub-Bicliques}

We now give an algorithm for implementing the subroutine
$\orderablesub{\mbiclique}{j}$.
Naively we can simply generate all sub-bicliques $\mbiclique'$ of $\mbiclique$ and check
$\orderable{\mbiclique'}{j}$.
However, to avoid the potentially high cost of such an iteration, we first determine which selectors and properties contribute to
$\propord{\mbiclique}{j}$.
Removing nodes outside of these sets will not affect the orderability, hence we do not need to try removing them.
Then we first attempt only removing one node from this set, computing all sub-bicliques that have one fewer element and are orderable.
Then, for all nodes for which this fails, we attempt to remove two nodes, and so on.
Note, if removing node $w$ renders $\mbiclique$ orderable, we do not need to test any bicliques obtained by removing $w$ and some other node $w'$, since this will not result in a maximal biclique.

Hence, we define the sets of candidate selectors and properties that may be removed to restore orderability.
These are all selectors and nodes that contribute to
$\propord{\mbiclique}{j}$.
That is
\[
    \candnodes
    =
    \setcomp{
        s_1, s_2, p_1, p_2
    }{
        \exists (s_1, p_1), (s_2, p_2) \in \edgeslast{\mbiclique}{j}\ .\ %
            (s_1, p_1) \edgeOrder^\ast (s_2, p_2)
    } \ .
\]
We define
$\orderablesub{\mbiclique}{j} =
 \orderablesubcand{\mbiclique}{j}{\candnodes}$
where
$\orderablesubcand{\mbiclique}{j}{\candnodes}$
generates a set
$\orderablebicliques$
of orderable sub-bicliques and is defined below.
When
$\mbiclique = (\selsBucket, \propsSet)$
we will abuse notation and write
$\mbiclique \setminus \set{w}$
for
$(\selsBucket \setminus \set{w}, \propsSet)$
when $w$ is a selector, and
$(\selsBucket, \propsSet \setminus \set{w})$
when $w$ is a property.
When defining the algorithm, we will collect all orderable bicliques in a set $\orderablebicliques$.
We will further collect in $\candnodes'$ the set of all nodes which fail to create an orderable biclique when removed by themselves.
We define
$\orderablesubcand{\mbiclique}{j}{\candnodes}$
recursively, where the recursive call attempts the removal of an increasing number of nodes.
It is
\begin{algorithmic}
    \State $\orderablebicliques := \emptyset$

    \State $\candnodes' := \emptyset$

    \ForAll {$w \in \candnodes$}
        \State $\mbiclique' := \mbiclique \setminus \set{w}$
        \If {$\orderable{\mbiclique'}{j}$}
            \State $\orderablebicliques := \orderablebicliques
                                           \cup
                                           \set{\mbiclique'}$
        \Else
            \State $\candnodes' := \candnodes' \cup \set{w}$
        \EndIf
    \EndFor

    \ForAll {$w \in \candnodes'$}
        \State $\orderablebicliques :=
                \orderablebicliques
                \cup
                \orderablesubcand{\mbiclique \setminus \set{w}}
                                 {j}
                                 {\candnodes' \setminus \set{w}}$
    \EndFor

    \State \Return $\orderablebicliques$
\end{algorithmic}

\subsection{Correctness of the Encoding}
\label{sec:encoding-correct}

We argue Proposition~\ref{prop:encoding-correct} which claims that the encoding
$(\hardcons, \softcons)$
is correct.
To prove this we need to establish three facts.
\begin{enumerate}
\item
    If $(\bucket, j)$ is a valid merging opportunity of $\covering$ and $\bucket = (\biclique, \propOrder)$ , then $(\biclique, j)$ is a solution to the hard constraints.

\item
    If $(\bucket, j)$ is generated from a solution to the hard constraints, it
        is a valid merging opportunity.

\item
    The weight of a solution generating $(\bucket, j)$ is the size of
    $\Trim{\inSeq{\covering}{j}{\bucket}}$.
\end{enumerate}

We argue these properties below.
\begin{enumerate}
\item
    Take a valid merging opportunity
    $(\bucket, j)$
    and let
    $\bucket = (\biclique, \propOrder)$.
    We construct a solution to the hard constraints.
    First, we assign $\inpos = j$.
    Next, since the merging opportunity is valid, we know $\biclique$ contains only edges in $\CSSedges$.
    That is, it is contained within a maximal biclique.
    Furthermore, since
    $\inSeq{\covering}{j}{\bucket}$
    is valid, we know that $\biclique$ is orderable at position $j$.
    Thus, it is a sub biclique of some
    $\mbiclique_i$
    in
    $(\{\mbiclique_i\}_{i=1}^{\nummbicliques}, \forbiddenfst)$,
    and, moreover, it is not the case that
    $\mbiclique_i \in \ap{\forbiddenfst}{j'}$
    for some $j' \leq j$.
    Thus, we assign
    $\bicliquevble = i$
    and we know that
    \[
        \bigwedge\limits_{1 \leq j \leq m + 1}
        \brac{
            \brac{\inpos >= j}
            \Rightarrow
            \bigwedge\limits_{\mbiclique_i \in \ap{\forbiddenfst}{j}}
                \brac{\bicliquevble \neq  i}
        }
    \]
    is satisfied.

    Additionally, for all $w$ appearing in $\biclique$ but not in $\mbiclique_i$, we set
    $\excludevble{\ap{\selpropord{i}}{w}}$ to false, otherwise we set it to true.
    Thus
    $\hasedge{(s, p)}$
    holds only if $(s, p)$ is an edge in $\biclique$.

    Next, we argue
    \[
        \brac{
            \bigwedge\limits_{(s_1, p_1) \edgeOrder (s_2, p_2)}
            \begin{array}{l}
                \hasedge{(s_1, p_1)}\ \Rightarrow \\
                \qquad
                \brac{
                    \inpos \leq \Index((s_2, p_2))
                    \lor
                    \hasedge{(s_2, p_2)}
                }
                \end{array}
        }
    \]
    is satisfied.
    This follows from
    $\inSeq{\covering}{j}{\bucket}$
    being valid.
    To see this, take some
    $(s_1, p_1) \edgeOrder (s_2, p_2)$.
    If $(s_1, p_1)$ does not appear in $\biclique$, then there is nothing to prove.
    If it does, we know $(s_2, p_2)$ must appear later in the file.
    There are two cases.
    If
    $j \leq \Index((s_2, p_2))$
    then the clause is satisfied.
    Otherwise we must have
    $(s_2, p_2)$
    in $\biclique$ or edge order would be violated.
    Thus the clause also holds in this case.

\item
    We need to prove that if the hard constraints are satisfied, then then
    generated merging opportunity
    $(\bucket, j)$
    is valid.
    Let
    $\bucket = (\biclique, \propOrder)$.
    For
    $\inSeq{\covering}{j}{\bucket}$
    to be valid, we first have to show that $\biclique$ introduces no new edges to the stylesheet.
    This is immediate since $\biclique$ is a sub biclique of some
    $\mbiclique_i$
    in
    $(\{\mbiclique_i\}_{i=1}^{\nummbicliques}, \forbiddenfst)$,
    which can only contain edges in $\CSSedges$.

    Next, we need to argue that we can create the ordering $\propOrder$ for the properties in $\biclique$.
    First note that $\mbiclique_i$ is orderable at position $j$.
    In particular, for any
    $(s_1, p_1) \edgeOrder^\ast (s_2, p_2)$
    with
    $(s_1, p_1)$
    and
    $(s_2, p_2)$
    appearing in $\mbiclique_i$, we have
    $p_1 \propord{\mbiclique_i}{j} p_2$.
    Since all edges in $\biclique$ also appear in $\mbiclique_i$, the existence of an ordering is immediate.

    Finally, we need to argue that
    $\inSeq{\covering}{j}{\bucket}$
    respects the edge order.
    Suppose
    $(s_1, p_2) \edgeOrder (s_2, p_2)$.
    To violate this ordering, we need to introduce a copy of $(s_1, p_1)$ after the last copy of $(s_2, p_2)$.
    Thus, we must have $(s_1, p_1)$ in $\biclique$.
    However, from
    \[
        \brac{
            \bigwedge\limits_{(s_1, p_1) \edgeOrder (s_2, p_2)}
            \begin{array}{l}
                \hasedge{(s_1, p_1)}\ \Rightarrow \\
                \qquad
                \brac{
                    \inpos \leq \Index((s_2, p_2))
                    \lor
                    \hasedge{(s_2, p_2)}
                }
                \end{array}
        }
    \]
    we are left with two cases.
    In the first
    $j \leq \Index((s_2, p_2))$
    and the edge order is maintained.
    In the second, we also have
    $(s_2, p_2)$
    in
    $\biclique$.
    However, the edge order is maintained because $\biclique$ is orderable.
    Thus we are done.

\item
    Finally, we argue that the weight of a satisfying assignment accurately reflects the size of
    $\Trim{\inSeq{\covering}{j}{\bucket}}$.
    This is fairly straightforward.
    The size of
    $\Trim{\inSeq{\covering}{j}{\bucket}}$.
    comprises two parts:
    the size of $\bucket$, and the size of $\covering$ after the trim operation.
    It is immediate to see that the size of $\bucket$ is equal to the size of all of its nodes.
    In particular, this is the size of all nodes of
    $\mbiclique_i$
    that appear in $\bucket$.
    That is, have not been excluded.
    Thus the clause with weight $\nodeWeight(w)$
    \[
        \brac{\bicliquevble = i} \Rightarrow \excludevble{\ap{\selpropord{i}}{w}} \ .
    \]
    for each $w$ appearing in $\mbiclique_i$ accurately computes the size of $\bucket$.

    For the size of $\covering$ after the trim operation, we first use the
    assumption that $\covering$ has already been trimmed before applying the
    merging opportunity. Thus, any further nodes removed in
    $\Trim{\inSeq{\covering}{j}{\bucket}}$
    from a rule
    $\bucket_{i'}$
    must be removed because some edge $e$ in $\bucket$ also appears in $\bucket_{i'}$
    and, moreover, it was the case
    $i' = \Index(e)$ and $i' \leq j$.
    In particular, we can only remove a node $w$ from $\bucket_{i'}$ if all edges $e$ incident to $w$ with
    $i' = \Index(e)$
    have $e$ appearing in $\bucket$
    (else there will still be some edge preventing $w$ from being trimmed after
    applying the merging opportunity).
    Thus, for each selector node $s$, we know it is not removed if the clause with weigth
    $\nodeWeight(s)$
    \[
        i \leq \inpos \land
        \bigwedge\limits_{\substack{
            \Index((s, p)) = i \\
            p \in \propsBucket
        }}
            \hasedge{(s, p)}
    \]
    is not satisfied.
    Similarly for property nodes $p$.
    Thus, these clauses accurately count the size of the covering after trimming.
\end{enumerate}

\def\refsecedgeOrder{\ref{sec:edgeOrder}}

\section{Additional Material for Section~\protect\refsecedgeOrder}
\label{sec:edge-order-appendix}

\OMIT{
CSS3 supports multiple namespaces (e.g. XML, SVG, and MathML),
despite the fact that HTML and CSS are most commonly used with only
one (implicit) namespace. For example, CSS3 can be applied to XML document
trees (not just HTML). Even an HTML file may contain multiple namespaces
(e.g. see \cite{html-namespace}). So, for completeness, we show how to extend
our formalisation to cover multiple namespaces. We next
point out how one can extend the definition from Section \ref{sec:edgeOrder}.

\subsubsection{DOM}
We assume a possibly infinite set of \defn{namespaces} $\nspaces$.
A \defn{document tree} is a $\ialphabet$-labelled tree
$\tup{\treedom, \treelab}$, where
\[
    \ialphabet := \brac{\nspaces \times
                  \eles \times
                  \finfuns{\nspaces \times \atts}
                          {\alphabet^\ast}
                  \times
                  2^{\pclss}} \ .
\]

We will write $\qele{\ns}{\ele}$ for an element $\ele$ with namespace $\ns$ in
$\nspaces \times \eles$. For convenience, when
$\ap{\treelab}{\node} = \tup{\ns, \ele, \attfun, \pclss}$
we write
\[
        \ap{\treelabns}{\node} = \ns,\ \ %
        \ap{\treelabele}{\node} = \ele,\ \ %
        \ap{\treelabatts}{\node} = \attfun,\ \ %
        \ap{\treelabpclss}{\node} = \pclss\ .
\]

The only difference in the consistency constraints on the pseudo-classes
labelling a node is:
\begin{compactitem}
\item
    For all $\ns \in \nspaces$ there are \emph{no} two nodes in the tree with the same value of $\qatt{\ns}{\idatt}$.
\end{compactitem}
We will write $\treeset{\nspaces}{\eles}{\atts}{\alphabet}$
for the set of all document trees.
}

\OMIT{
\subsubsection{CSS3 Syntax}
Recall that a node selector $\csssim$ has the form $\csstype\cssconds$
where $\csstype$ constrains the \emph{type} of the node.
With the namespace extension, $\csstype$ places restrictions on \emph{both} the
namespace and element labels of the node.

The next difference is that we can define add namespace in the attribute
selector $\opattns{\ns}{\att}{\attop}{\attval}$, for some namespace $\ns$,
although this is optional. We next provide the full syntax of CSS3 that
also contains \emph{pseudo-elements} (see the remark below).

We define $\selectors$ for the set of \defn{(CSS) selectors} and
$\nodeselectors$ for the set of \defn{node selectors}.
The set $\selectors$ is the set of formulas $\psi$ defined as:
\[
    \begin{array}{rcl} %
        \psi  %
        &::=&  %
        \css \synalt %
        \css\psfirstline \synalt 
        \css\psfirstletter \synalt 
        \\ 
        & & %
        \css\psbefore \synalt 
        \css\psafter 
    \end{array} %
\]
where
\[
    \css ::= \csssim \synalt %
           \css \cssdescendant \csssim \synalt %
           \css \csschild \csssim \synalt %
           \css \cssneighbour \csssim \synalt %
           \css \csssibling \csssim \synalt %
\]
where $\csssim \in \nodeselectors$ is a \emph{node selector} with syntax
$
    \csssim ::=
    \csstype
    \cssconds
$
with $\csstype$ having the form
\[
    \csstype ::= \isany \synalt
    \isanyns{\ns} \synalt
    \isele{\ele} \synalt
    \iselens{\ns}{\ele}
\]
where
$\ns \in \nspaces$
and
$\ele \in \eles$
and $\cssconds$ is a possibly empty set of conditions $\csscond$ with syntax
\[
    \csscond ::= \csscondnoneg \synalt
               \cssneg{\csssimnoneg}
\]
where $\csscondnoneg$ and $\csssimnoneg$ are conditions that do not contain
negation, i.e.:
\[
    \csssimnoneg ::= \isany \synalt
                   \isanyns{\ns} \synalt
                   \isele{\ele} \synalt
                   \iselens{\ns}{\ele} \synalt
                   \csscondnoneg
\]
and $\csscondnoneg =$
\[
    \begin{array}{l} %
        \isclass{\attval} \synalt %
        \isid{\attval} \synalt %
        \\ %
        \hasatt{\att} \synalt %
        \attis{\att}{\attval} \synalt %
        \atthas{\att}{\attval} \synalt %
        \attbegin{\att}{\attval} \synalt %
        \\ %
        \attstrbegin{\att}{\attval} \synalt %
        \attstrend{\att}{\attval} \synalt %
        \attstrsub{\att}{\attval} \synalt %
        \\ %
        \hasattns{\ns}{\att} \synalt %
        \attisns{\ns}{\att}{\attval} \synalt %
        \atthasns{\ns}{\att}{\attval} \synalt %
        \attbeginns{\ns}{\att}{\attval} \synalt %
        \\ %
        \attstrbeginns{\ns}{\att}{\attval} \synalt %
        \attstrendns{\ns}{\att}{\attval} \synalt %
        \attstrsubns{\ns}{\att}{\attval} \synalt %
        \\ %
        \pslink \synalt %
        \psvisited \synalt %
        \pshover \synalt %
        \psactive \synalt %
        \psfocus \synalt %
        \\ %
        \psenabled \synalt %
        \psdisabled \synalt %
        \pschecked \synalt %
        \\ %
        \psroot \synalt %
        \psempty \synalt %
        \pstarget \synalt %
        \\ %
        \psnthchild{\coefa}{\coefb} \synalt %
        \psnthlastchild{\coefa}{\coefb} \synalt %
        \\ %
        \psnthoftype{\coefa}{\coefb} \synalt %
        \psnthlastoftype{\coefa}{\coefb} %
        \\ %
        \psonlychild \synalt %
        \psonlyoftype %
    \end{array} %
\]
with $\ns \in \nspaces$,
     $\ele \in \eles$,
     $\att \in \atts$,
     $\attval \in \alphabet^\ast$, and
     $\coefa, \coefb \in \Z$.
Note, we will omit $\cssconds$ when is it empty.
}

\subsection{Handling Pseudo-Elements}

CSS selectors can also finish with a \emph{pseudo-element}.
For example
$\css\psbefore$.
These match nodes that are not formally part of a document tree.
In the case of $\css\psbefore$ the selector matches a phantom node appearing before the node matched by $\css$.
These can be used to insert content into the tree for stylistic purposes.
For example
\begin{center}
\begin{minted}{css}
    .a::before { content:">" }
\end{minted}
\end{center}
places a ``>'' symbol before the rendering of any node with class \texttt{a}.

We divide CSS selectors into five different types depending on the
pseudo-element appearing at the end of the selector.
We are interested here in the nodes matched by a selector.
The pseudo-elements
$\psfirstline$, $\psfirstletter$, $\psbefore$, and $\psafter$
essentially match nodes inserted into the DOM tree.
The CCS3 specification outlines how these nodes should be created.
For our purposes we only need to know that
    the five syntactic cases in the above grammar can never match the same
    inserted node, and
    the selectors $\psfirstletter$ and $\psfirstline$ require that the node matched by $\css$ is not empty.

Since we are interested here in the non-emptiness and non-emptiness-of-intersection problems,
we will omit pseudo-elements in the remainder of this article, under the assumptions that
\begin{compactitem}
\item
    selectors of the form $\css\psfirstline$ or $\css\psfirstletter$ are replaced by a selector
    $\css\cssneg{\psempty}$, and
\item
    selectors of the form $\css$, $\css\psbefore$, or $\css\psafter$ are replaced by $\css$, and
\item
    we \emph{never} take the intersection of two selectors $\css$ and $\css'$ such that it's not the case that either
    \begin{compactitem}
    \item
        $\css$ and $\css'$ were derived from selectors containing no pseudo-elements, or
    \item
        $\css$ and $\css'$ were derived from selectors ending with the same pseudo-element.
    \end{compactitem}
\end{compactitem}
In this way, we can test non-emptiness of a selector by testing its replacement.
For non-emptiness-of-intersection, we know if two selectors end with different pseudo-elements (or one does not contain a pseudo-element, and one does), their intersection is necessarily empty.
Thus, to check non-emptiness-of-intersection, we immediately return ``empty'' for any two selectors ending with different pseudo-elements.
To check two selectors ending with the same pseudo-element, the problem reduces to testing the intersection of their replacements.

\OMIT{
\subsubsection{CSS3 Semantics}
We will only provide the semantics of the cases that are not covered in
our definition from the main body:
\[
    \begin{array}{lcl} 
        \tree, \node \models \isanyns{\ns} %
        &\iffdef& %
        \ns = \ap{\treelabns}{\node} %
        \\ %
        \tree, \node \models \iselens{\ns}{\ele} %
        &\iffdef& %
        \ns = \ap{\treelabns}{\node} %
        \land %
        \ele = \ap{\treelabele}{\node} %
    \end{array} %
\]
The selector $\opattns{\ns}{\att}{\attop}{\attval}$ has
precisely the same semantics as $\opatt{\att}{\attop}{\attval}$ (as defined
in the main body), but we prefix the attribute the namespace $\ns$. For example,
\[
        \tree, \node \models \hasattns{\ns}{\att} %
        \iffdef
            \ap{\ap{\treelabatts}{\node}}{\ns, \att} \neq \attundef %
\]

\subsubsection{Divergences from full CSS}

Note, we diverge from the full CSS specification in a number of places.
However, we do not lose expressivity.
\begin{compactitem}
\item
    Since class and ID selectors can be expressed in terms of attribute selectors (as in the semantics above), we will omit them in the remainder of the article.
\item
    We assume each element name includes its namespace.
    In particular, we do not allow elements without a namespace. There is
        no loss of generality here since we can simply assume a ``null''
        namespace is used instead.
    Moreover, we do not support default name spaces and assume namespaces are explicitly given.

\item
    We did not include
    $\pslang{\csslang}$.
    Instead, we will assume (for convenience) that all nodes are labelled with a language attribute with some fixed namespace $\ns$.
    In this case,
    $\pslang{\csslang}$
    is equivalent\footnote{
        The CSS specification defines
        $\pslang{\csslang}$
        in this way.
        A restriction of the language values to standardised language codes is only a recommendation.
    } to
    $\attbeginns{\ns}{\langatt}{\csslang}$.

\item
    We did not include $\psindeterminate$ since it was not formally introduced to CSS3.

\item
    We omit the selectors $\psfirstchild$ and $\pslastchild$, as well as \\ $\psfirstoftype$ and $\pslastoftype$, since they are expressible using the other operators.

\item
    We omitted $\texttt{even}$ and $\texttt{odd}$ from the nth child operators since these are easily definable as $2n$ and $2n+1$.

\item
    We do not explicitly handle document fragments.
    These may be handled in a number of ways.
    For example, by adding a phantom root element (since the root of a document fragment does not match $\psroot$) with a fresh ID $\id$ and adjusting each node selector in the CSS selector to assert $\cssneg{\isid{\id}}$.
    Similarly, lists of document fragments can be modelled by adding several subtrees to the phantom root.

\item
    We define our DOM trees to use a finite alphabet $\alphabet$.
    Currently the CSS3 selectors specification uses Unicode as its alphabet for lexing.
    Although the CSS3 specification is not explicit about the finiteness of characters appearing in potential DOMs, since Unicode is finite~\cite{Unicode} (with a maximal possible codepoint) we feel it is reasonable to assume DOMs are also defined over a finite alphabet.
\end{compactitem}

}

\subsection{NP-hardness of Theorem \ref{thm:emptiness}}
\label{sec:np-hardness-proof}

\begin{lemma} \label{lem:emptiness-hard}
    Given a CSS selector $\css$, deciding
    $
        \exists \tree, \node \ .\ \tree, \node \models \css
    $
    is NP-hard.
\end{lemma}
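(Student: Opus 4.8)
The plan is to give a polynomial-time reduction from the \emph{non-universality problem for unions of arithmetic progressions}, shown NP-complete in \cite[Proof of Theorem~6.1]{SM73}; this establishes the NP-hardness claimed in Theorem~\ref{thm:emptiness} (and, since we may take the second selector to be $\isany$, also in Theorem~\ref{thm:intersection}). An instance consists of pairs $(\coefa_1,\coefb_1),\ldots,(\coefa_k,\coefb_k)$ of non-negative integers, written in binary, and one asks whether $\bigcup_{i=1}^{k}(\coefa_i\N+\coefb_i)\neq\N$, i.e.\ whether some natural number avoids every progression. (We may assume $\coefa_i,\coefb_i\geq 0$ by the usual normalisation of progressions.)

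Given such an instance, I would build the node selector $\css$ whose type is $\isany$ and whose set of conditions is
\[
    \cssconds = \set{\cssneg{\psroot}} \cup
        \setcomp{\cssneg{\psnthchild{\coefa_i}{\coefb_i+1}}}{1 \leq i \leq k} \ .
\]
This is a syntactically valid CSS selector, since only single, non-nested negations of simple selectors occur, and its size is linear in the size of the instance (each $\coefa_i$ and $\coefb_i+1$ is written in binary exactly as in the input); hence the construction runs in polynomial time. The one point requiring care is the off-by-one between the two conventions: $\psnthchild{\coefa}{\coefb}$ counts children starting from $1$, so a child at position $p\geq 1$ lies in $\coefa_i\N+(\coefb_i+1)$ precisely when $p-1\in\coefa_i\N+\coefb_i$; this is why the offset $\coefb_i+1$ is used rather than $\coefb_i$.

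Correctness follows from a routine case analysis on the matched node. For the forward direction, suppose $m\in\N$ is covered by no progression; let $\tree$ be the tree whose domain is the root $\eseq$ together with its first $m+1$ children (labelled arbitrarily, with only the root carrying $\psroot$), and let $\node$ be the child at position $m+1$. Then $\node$ is not the root, so $\tree,\node\models\cssneg{\psroot}$; and if $\tree,\node\models\psnthchild{\coefa_i}{\coefb_i+1}$ held, there would be $\numof\in\N$ with $m+1=\coefa_i\numof+\coefb_i+1$, i.e.\ $m=\coefa_i\numof+\coefb_i$, contradicting the choice of $m$. Hence $\tree,\node\models\css$. Conversely, if $\tree,\node\models\css$, then by $\cssneg{\psroot}$ the node $\node$ is not the root, so it is the $p$-th child of its parent for some $p\geq 1$; and for each $i$, since $\tree,\node\models\cssneg{\psnthchild{\coefa_i}{\coefb_i+1}}$, there is no $\numof$ with $p=\coefa_i\numof+\coefb_i+1$, i.e.\ $p-1\notin\coefa_i\N+\coefb_i$. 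Thus the natural number $p-1$ is covered by no progression, so the union is non-universal.

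There is no deep obstacle here; the work is bookkeeping. The two things to be careful about are the $+1$ offset just discussed, and the observation that by the semantics the root satisfies every $\cssneg{\psnthchild{\cdot}{\cdot}}$ (the root is ``the $k$-th child'' of nothing) --- which is exactly why the conjunct $\cssneg{\psroot}$ is needed, so that the matching of $\css$ is not trivially witnessed by the root of any tree. With these points handled, the equivalence ``$\sem{\css}\neq\emptyset$ iff the given union of arithmetic progressions is non-universal'' is immediate, and NP-hardness follows.
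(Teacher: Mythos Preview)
Your proof is correct and follows the same approach as the paper: reduction from non-universality of unions of arithmetic progressions via a conjunction of negated $\psnthchild{\coefa}{\coefb}$ selectors with the $+1$ offset correction. You are in fact slightly more careful than the paper's appendix proof in including and justifying the $\cssneg{\psroot}$ conjunct---without it the root trivially satisfies every negated nth-child condition---which the main-text proof sketch includes but the formal appendix version omits.
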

\begin{proof}
    We give a polynomial-time reduction from the NP-complete problem of
    non-universality of unions of arithmetic progressions~\cite[Proof of
    Theorem 6.1]{SM73}. To define this, we first fix some notation.
    Given a pair
    $\tup{\coefa, \coefb} \in \N \times \N$,
    we define $\sem{\tup{\coefa,\coefb}}$ to be the set of natural numbers of the
    form
    $\coefa \numof + \coefb$ for $\numof \in \N$. That is,
    $\sem{\tup{\coefa,\coefb}}$ represents an arithmetic progression, where
    $\coefa$ represents the \defn{period} and $\coefb$ represents the
    \defn{offset}.
    Let
    $E \subseteq \N \times \N$
    be a finite subset of pairs $\tup{\coefa, \coefb}$.
    We define
    $\sem{E} = \bigcup_{\tup{\coefa,\coefb}\in E} \sem{\tup{\coefa,\coefb}}$.
    The NP-complete problem is: given $E$ (where numbers may be represented
    in unary or in binary representation), is $\sem{E} \neq \N$?
    Observe that this problem is equivalent to checking whether
    $\sem{E+1} \neq \N_{>0}$
    where $E+1$ is defined by adding 1 to the offset
    $\coefb$ of each arithmetic progression $\tup{\coefa,\coefb}$ in $E$.
    By complementation, this last problem is equivalent to checking whether
    $\N_{>0} \setminus \sem{E+1} \neq \emptyset$.
    Since $\N_{>0} \setminus \sem{E+1} =
    \bigcap_{\tup{\coefa,\coefb} \in E} \overline{\sem{\coefa,\coefb+1}}$,
    the problem can be seen to be equivalent to testing the non-emptiness of
    \[
        \isany\setcomp{\cssneg{\psnthchild{\coefa}{(\coefb+1)}}}
                      {\tup{\coefa, \coefb} \in E} \ .
    \]
    Thus, non-emptiness is NP-hard.
\end{proof}

\OMIT{
\begin{lemma}
    Given two CSS selectors $\css_1$ and $\css_2$, deciding whether
    $
        \exists \tree, \node \ .\ %
            \tree, \node \models \css_1 \land \tree, \node \models \css_2
    $
    is NP-hard.
\end{lemma}
\begin{proof}
    Non-emptiness reduces to non-emptiness-of-intersection.
    That is, $\css$ is not empty iff the intersection of $\css$ and $\isany$ is non-empty.
    From Lemma~\ref{lem:emptiness-hard} we get the result.
\end{proof}
}

\subsection{Handling \texttt{!important} and shorthand property names}
\label{sec:important}

Our approach handles the \text{!important} keyword and shorthand property names.
In this section we explain the steps we take to account for them.

\subsubsection{The \text{!important} Keyword}

First, the keyword \verb+!important+
in property declaration as is used in the rule
\begin{minted}{css}
    div { color:red !important }
\end{minted}
can be
used to override the cascading behaviour of CSS, e.g., in our example, if
a node is matched by \verb+div+, as well as a later rule $R$ that assigns a
different color, then assign \verb+red+ to color (unless $R$ also has
the keyword \verb+!important+ next to its color property declaration).
To handle this, we can extend the notion of specificity of a selector
to the notion of specificity of a pair $(s,p)$ of selector and property
declaration, after which we may proceed as before (i.e. relating only two
edges with the same specificity).
Recall from \cite{CSS3sel} that the specificity of a selector is a
3-tuple $(a,b,c) \in \N \times \N \times \N$ where $a$, $b$, and $c$ can
be obtained by calculating the sum of the number of IDs, classes, tag
names, etc. in the selector. Since the lexicographic order is used to
rank the elements of $S := \N \times \N \times \N$, the specificity of
a pair $(s,p)$ can now be defined to be $(i,a,b,c)$, where $(a,b,c)$
is the specificity of $s$, and $i=1$ if \verb+!important+ can be found
in $p$ (otherwise, $i=0$). In particular, this also handles the case
where multiple occurrences of \verb+!important+ is found in the CSS file.

\subsubsection{Shorthand Property Names}

\emph{Shorthand property names} \cite{cssref} can be used to simultaneously set the values of related property names.
For example, \texttt{border: 5px solid red} is equivalent to
\begin{verbatim}
    border-width: 5px; border-style: solid; border-color:red
\end{verbatim}
In particular, this implies that $(s,p)$ and $(s,p')$ can be related in $\edgeOrder$ if $p$ defines \texttt{border}, while the other property $p'$ defines
\texttt{border-width}. One way to achieve this is to simply list all
pairs of comparable property names, which can be done since only
around 100 property names are currently officially related. [Incidentally, a close enough
approximation is that one property name is a \emph{prefix} of the other
property name (e.g., \texttt{border} is a prefix of \texttt{border-style}),
but this is not complete (e.g. \texttt{font} can be used to define
\texttt{line-height})]

\def\refsecintersection{\ref{sec:intersection}}
\section{Additional Material for Section \protect\refsecintersection}
\label{sec:intersection-appendix}

\def\refpropcsstoaut{\ref{prop:css-to-aut}}
\subsection{Correctness of $\selaut{\css}$ in Proposition
\protect\refpropcsstoaut}
\label{sec:css-automata-proof}

We show both soundness and completeness of $\selaut{\css}$.

\begin{lemma} \label{lem:aut-sound}
    For each CSS selector $\css$ and tree $\tree$, we have
    \[
        \aaccepts{\tree}{\node}{\selaut{\css}}
        \Rightarrow
        \tree, \node \models \css \ .
    \]
\end{lemma}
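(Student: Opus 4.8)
The plan is to extract a loop invariant from accepting runs of $\selaut{\css}$ and induct along the linear spine of the automaton. Write $\css$ uniquely as $\csssim_1\,\genop_1\,\csssim_2\,\genop_2\,\cdots\,\genop_{\numof-1}\,\csssim_\numof$, and for $1 \le \idxi \le \numof$ set $\css_\idxi := \csssim_1\,\genop_1\,\cdots\,\genop_{\idxi-1}\,\csssim_\idxi$, the $\idxi$-th prefix, so $\css_1 = \csssim_1$ and $\css_\numof = \css$. First I would record a structural fact: since every loop of a CSS automaton is a self-loop, the reachable fragment of $\selaut{\css}$ is the chain $\selstate{1} \to \selstate{2} \to \cdots \to \selstate{\numof} \to \afinstate$ with each $\midstate{\idxi}$ possibly inserted before $\selstate{\idxi+1}$, and with self-loops only at $\selstate{1}$ and at the $\midstate{\idxi}$. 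Since an accepting run is finite and ends at $\afinstate$, it passes through $\selstate{1}, \ldots, \selstate{\numof}$ in this order, each exactly once, and \emph{every} transition that leaves $\selstate{\idxi}$ carries the node-selector label $\csssim_\idxi$; this is read off directly from Figure~\ref{fig:aut-from-sel} together with the initial transition $\selstate{1} \atran{\arrchild,\arrsibling}{\isany}\selstate{1}$ and the final transition $\selstate{\numof}\atran{\arrlast}{\csssim_\numof}\afinstate$. Write $\eta_\idxi$ for the node at which the run takes the transition leaving $\selstate{\idxi}$; in particular $\eta_\numof$ is the node $\node$ of the lemma statement, since the $\arrlast$ step does not move.

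The induction claim is that $\tree, \eta_\idxi \models \css_\idxi$ for all $\idxi$; the lemma is then the case $\idxi = \numof$. The base case $\idxi = 1$ is immediate: the transition leaving $\selstate{1}$ checks $\csssim_1$, so $\tree,\eta_1 \models \csssim_1 = \css_1$ (the $\isany$-labelled self-loops preceding it impose no constraint). For the inductive step, the crucial sub-claim is that \emph{inside the gadget for $\genop_\idxi$, if the run passes from $\selstate{\idxi}$ at $\eta_\idxi$ to $\selstate{\idxi+1}$ at $\eta_{\idxi+1}$, then $\eta_\idxi$ and $\eta_{\idxi+1}$ stand in the tree relation named by $\genop_\idxi$}: $\eta_\idxi \treechild \eta_{\idxi+1}$, $\eta_\idxi \treedescendant \eta_{\idxi+1}$, $\eta_\idxi \treeneighbour \eta_{\idxi+1}$, or $\eta_\idxi \treesibling \eta_{\idxi+1}$ according as $\genop_\idxi$ is $\csschild$, $\cssdescendant$, $\cssneighbour$, or $\csssibling$. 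Granting it, the transition leaving $\selstate{\idxi+1}$ checks $\csssim_{\idxi+1}$, so $\tree,\eta_{\idxi+1}\models\csssim_{\idxi+1}$; combining this with the hypothesis $\tree,\eta_\idxi\models\css_\idxi$ and the defining clause of the semantics for the combinator $\genop_\idxi$ yields $\tree,\eta_{\idxi+1} \models \css_\idxi\,\genop_\idxi\,\csssim_{\idxi+1} = \css_{\idxi+1}$. The degenerate case $\numof = 1$ is subsumed by the base case.

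The only real work --- and the step I expect to be the main obstacle --- is the sub-claim for $\genop_\idxi \in \{\csschild,\cssdescendant,\csssibling\}$, where the run may route through $\midstate{\idxi}$ and iterate its self-loop. I would prove it by an inner induction on the number of loop iterations, using elementary closure properties of tree navigation: the first child of a strict descendant of a node $\mu$ is again a strict descendant of $\mu$; a sibling --- in particular the immediate next sibling --- of a strict descendant of $\mu$ is again a strict descendant of $\mu$; and $\treechild$, $\treesibling$, $\treeneighbour$ compose in the obvious way. Concretely, for $\csschild$ the run moves $\arrchild$ from $\eta_\idxi$ to its first child $\eta_\idxi 1$ and then, if it visited $\midstate{\idxi}$, moves rightward among the children of $\eta_\idxi$ (self-loops labelled $\arrsibling$, followed by one $\arrneighbour$), so $\eta_{\idxi+1}$ is a child of $\eta_\idxi$; for $\cssdescendant$ the loop additionally permits $\arrchild$ moves, and the closure facts keep $\eta_{\idxi+1}$ a strict descendant of $\eta_\idxi$; for $\csssibling$ the argument is as for $\csschild$ but begins with an $\arrneighbour$ move, so $\eta_{\idxi+1}$ is a later sibling of $\eta_\idxi$; and $\cssneighbour$ is the degenerate single-transition case where the run takes $\selstate{\idxi} \atran{\arrneighbour}{\csssim_\idxi} \selstate{\idxi+1}$ moving $\arrneighbour$ from $\eta_\idxi$ to its immediate next sibling $\eta_{\idxi+1}$, giving $\eta_\idxi \treeneighbour \eta_{\idxi+1}$ directly. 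None of this is subtle, but carefully tracking the loop iterations and the two routing options in each gadget is where the care goes.
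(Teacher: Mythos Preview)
Your proposal is correct and follows essentially the same approach as the paper's proof: both extract the nodes $\eta_\idxi$ at which the run leaves each $\selstate{\idxi}$, establish by induction that $\eta_\idxi$ satisfies the $\idxi$th prefix of $\css$, and discharge the inductive step via a case split on $\genop_\idxi$ showing that the gadget enforces the corresponding tree relation between $\eta_\idxi$ and $\eta_{\idxi+1}$. Your explicit inner induction on loop iterations is a slightly more formal rendering of what the paper argues informally (``cannot reach a node that is not a descendant of $\node_\idxi$''), but the structure is identical.
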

\begin{proof}
    Suppose
    $\aaccepts{\tree}{\node}{\selaut{\css}}$.
    By construction of $\selaut{\css}$ we know that the accepting run must pass through all states
    $\selstate{1}, \ldots, \selstate{\numof}$
    where
    $\css = \csssim_1\ \genop_1\ \cdots\ \genop_{\numof-1}\ \csssim_\numof$.
    Notice, in order to exit each state $\selstate{\idxi}$ a transition labelled by $\csssim_\idxi$ must be taken.
    Let $\node_\idxi$ be the node read by this transition, which necessarily satisfies $\csssim_\idxi$.
    Observe
    $\node_\numof = \node$.
    We proceed by induction.
    We have $\node_1$ satisfies $\csssim_1$.
    Hence, assume $\node_{\idxi}$ satisfies
    $\csssim_1\ \genop_1\ \cdots\ \genop_{\idxi-1}\ \csssim_{\idxi}$.
    We show $\node_{\idxi+1}$ satisfies
    $\csssim_1\ \genop_1\ \cdots\ \genop_{\idxi}\ \csssim_{\idxi+1}$.

    We case split on $\genop_{\idxi}$.
    \begin{compactitem}
    \item
        When $\genop_{\idxi} = \cssdescendant$ we need to show $\node_{\idxi+1}$ is a descendant of $\node_\idxi$.
        By construction of $\selaut{\css}$ the run reaches $\node_{\idxi+1}$ in one of two ways.
        If it is via a single transition
        $\selstate{\idxi} \atran{\arrchild}{\csssim_\idxi} \selstate{{\idxi+1}}$
        then $\node_{\idxi+1}$ is immediately a descendant of $\node_\idxi$.
        Otherwise the first transition is
        $\selstate{\idxi} \atran{\arrchild}{\csssim_\idxi} \midstate{\idxi}$.
        The reached node is necessarily a descendant of $\node_\idxi$.
        To reach $\node_{\idxi+1}$ a path is followed applying $\arrsibling$ and $\arrchild$ arbitrarily, which cannot reach a node that is not a descendant of $\node_\idxi$.
        Finally, the transition to $\node_{\idxi+1}$ is via $\arrneighbour$ or $\arrchild$ and hence $\node_{\idxi+1}$ must also be a descendant of $\node_\idxi$.

    \item
        When $\genop_{\idxi} = \csschild$ we need to show $\node_{\idxi+1}$ is a descendant of $\node_\idxi$.
        By construction of $\selaut{\css}$ the run reaches $\node_{\idxi+1}$ in one of two ways.
        If it is via a single transition
        $\selstate{\idxi} \atran{\arrchild}{\csssim_\idxi} \selstate{{\idxi+1}}$
        then $\node_{\idxi+1}$ is immediately a child of $\node_\idxi$.
        Otherwise the first transition is
        $\selstate{\idxi} \atran{\arrchild}{\csssim_\idxi} \midstate{\idxi}$.
        The reached node is necessarily a child of $\node_\idxi$.
        To reach $\node_{\idxi+1}$ only transitions labelled $\arrsibling$ and $\arrneighbour$ can be followed.
        Hence, the node reached must also be a child of $\node_\idxi$.

    \item
        When $\genop_{\idxi} = \cssneighbour$ we need to show $\node_{\idxi+1}$ is the next neighbour of $\node_\idxi$.
        Since the only path is a single transition labelled $\arrneighbour$ the result is immediate.

    \item
        When $\genop_{\idxi} = \csssibling$ we need to show $\node_{\idxi+1}$ is a sibling of $\node_\idxi$.
        By construction of $\selaut{\css}$ the run reaches $\node_{\idxi+1}$ in one of two ways.
        If it is via a single transition
        $\selstate{\idxi} \atran{\arrneighbour}{\csssim_\idxi} \selstate{{\idxi+1}}$
        then $\node_{\idxi+1}$ is immediately a sibling of $\node_\idxi$.
        Otherwise the first transition is
        $\selstate{\idxi} \atran{\arrneighbour}{\csssim_\idxi} \midstate{\idxi}$.
        The reached node is necessarily a sibling of $\node_\idxi$.
        To reach $\node_{\idxi+1}$ only transitions labelled $\arrsibling$ and $\arrneighbour$ can be followed.
        Hence, the node reached must also be a sibling of $\node_\idxi$.
    \end{compactitem}
    Thus, by induction, $\node_\numof = \node$ satisfies
    $\csssim_1\ \genop_1\ \cdots\ \genop_{\numof-1}\ \csssim_\numof = \css$.
\end{proof}

\begin{lemma} \label{lem:aut-comp}
    For each CSS selector $\css$ and tree $\tree$, we have
    \[
        \tree, \node \models \css \
        \Rightarrow
        \aaccepts{\tree}{\node}{\selaut{\css}}
    \]
\end{lemma}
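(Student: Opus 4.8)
The plan is to prove completeness by a direct induction that turns a satisfying ``witness chain'' in the tree into an accepting run of $\selaut{\css}$. Writing $\css = \csssim_1\ \genop_1\ \cdots\ \genop_{\numof-1}\ \csssim_\numof$ as in the construction, unfolding the semantics of $\css$ at $\node$ yields nodes $\node_1, \ldots, \node_\numof$ with $\node_\numof = \node$ such that $\tree, \node_\idxi \models \csssim_\idxi$ for every $\idxi$, and for each $\idxi < \numof$ the pair $\node_\idxi, \node_{\idxi+1}$ is related by the tree relation corresponding to $\genop_\idxi$ (i.e.\ $\treedescendant$, $\treechild$, $\treeneighbour$, or $\treesibling$). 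These $\node_\idxi$ are the ``anchor'' nodes that the run is forced to visit on leaving each state $\selstate{\idxi}$.

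First I would show that the initial self-loop at $\selstate{1}$ (labelled $\arrchild, \arrsibling$ with guard $\isany$) can move the automaton from the root $\eseq$ to $\node_1$: any node of a tree domain is reachable from the root by alternately descending to the first child and moving rightward among siblings, and both moves are available on that loop. Then, for the inductive step, assuming the run has reached $\selstate{\idxi}$ positioned at $\node_\idxi$, I would show the gadget of Figure~\ref{fig:aut-from-sel} for $\genop_\idxi$ admits a path that reads $\node_\idxi$ (with guard $\csssim_\idxi$, satisfied by $\node_\idxi$) and arrives at $\selstate{\idxi+1}$ positioned at $\node_{\idxi+1}$. This is a case analysis on $\genop_\idxi$: for $\cssneighbour$ the single $\arrneighbour$-transition does exactly what is needed; for $\csschild$ the first $\arrchild$-step lands on the first child of $\node_\idxi$, and the $\arrsibling$-loop at $\midstate{\idxi}$ together with the final $\arrneighbour$-step reaches an arbitrary child $\node_{\idxi+1}$ (the direct $\arrchild$-arc covers the case where $\node_{\idxi+1}$ is the first child); the $\csssibling$ case is identical with $\arrneighbour$ replacing the initial $\arrchild$; and for $\cssdescendant$ the loop additionally offers $\arrchild$, so any descendant is reachable by descending and moving right as needed. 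Crucially, in the $\csschild$ and $\csssibling$ gadgets only $\arrsibling$ and $\arrneighbour$ transitions follow the initial move, so the run cannot stray outside the children (resp.\ siblings) of $\node_\idxi$ --- this is what keeps the constructed run consistent with the semantics. Finally, once the run is at $\selstate{\numof}$ positioned at $\node_\numof = \node$, the transition $\selstate{\numof} \atran{\arrlast}{\csssim_\numof} \afinstate$ fires (again $\node$ satisfies $\csssim_\numof$), completing an accepting run, so $\aaccepts{\tree}{\node}{\selaut{\css}}$.

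The bulk of the work, and the only place needing genuine care, is verifying the reachability claims inside each navigation gadget: that ``first child, then arbitrarily many rightward steps'' enumerates exactly the children of a node, and that iterating this reaches every descendant for the $\cssdescendant$ gadget. These are elementary consequences of the prefix-closedness and preceding-sibling-closedness of tree domains from Section~\ref{sec:prelim-trees}, but they must be stated precisely so that the chosen path genuinely exists as a run and so that the $\isany$ guards on loop/bridge edges are trivially met at every intermediate node. I would also note the symmetry with Lemma~\ref{lem:aut-sound}: the same gadget-by-gadget analysis underlies both directions, so the completeness argument essentially reverses the soundness argument, reading a path off the witness chain rather than extracting a witness chain from a path.
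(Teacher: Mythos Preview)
Your proposal is correct and follows essentially the same approach as the paper's proof: both extract a witness chain $\node_1,\ldots,\node_\numof$ from the semantics, use the initial self-loop to navigate from the root to $\node_1$, and then proceed by induction with a case split on $\genop_\idxi$ showing each gadget in Figure~\ref{fig:aut-from-sel} admits a path from $\selstate{\idxi}$ at $\node_\idxi$ to $\selstate{\idxi+1}$ at $\node_{\idxi+1}$, finishing with the $\arrlast$ transition. The only cosmetic difference is that the paper phrases the witness chain via the prefix selectors $\csssim_1\ \genop_1\ \cdots\ \csssim_\idxi$ rather than directly stating the pairwise tree relations, but the content is the same.
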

\begin{proof}
    Assume
    $\tree, \node \models \css$.
    Thus, since
    $\css = \csssim_1\ \genop_1\ \cdots\ \genop_{\numof-1}\ \csssim_\numof$,
    we have a sequence of nodes
    $\node_1, \ldots, \node_\numof$
    such that for each $\idxi$ we have
    $\tree, \node_\idxi
     \models
     \csssim_1\ \genop_1\ \cdots\ \genop_{\idxi-1} \ \csssim_\idxi$.
    Note $\node_\numof = \node$.
    We build a run of $\selaut{\css}$ from $\csssim_1$ to $\selstate{\idxi}$ by induction.
    When $\idxi = 1$ we have the run constructed by taking the loops on the initial state $\selstate{1}$ labelled $\arrchild$ and $\arrsibling$ to navigate to $\node_1$.
    Assume we have a run to $\selstate{\idxi}$.
    We build a run to $\selstate{{\idxi+1}}$ we consider $\genop_\idxi$.
    \begin{compactitem}
    \item
        When $\genop_\idxi = \cssdescendant$ we know $\node_{\idxi+1}$ is a descendant of $\node_\idxi$.
        We consider the construction of $\selaut{\css}$.
        If $\node_{\idxi+1}$ is the first child of $\node_\idxi$, we construct the run to $\selstate{{\idxi+1}}$ via the transition
        $\selstate{\idxi} \atran{\arrchild}{\csssim_\idxi} \selstate{{\idxi+1}}$,
        noting that we know $\node_\idxi$ satisfies $\csssim_\idxi$.
        Otherwise we take
        $\selstate{\idxi} \atran{\arrchild}{\csssim_\idxi} \midstate{\idxi}$
        and arrive at either an ancestor or sibling of $\node_{\idxi+1}$.
        In the case of a neighbour, we can take the transition labelled $\arrneighbour$ to reach $\node_{\idxi+1}$.
        For an indirect sibling we can take the transition labelled $\arrsibling$ followed by the transition labelled $\arrneighbour$.
        For an ancestor, we take the transition labelled $\arrchild$ and arrive at another sibling or ancestor of $\node_{\idxi+1}$ that is closer.
        We continue in this way until we reach $\node_{\idxi+1}$ as needed.

    \item
        When $\genop_\idxi = \csschild$ we know $\node_{\idxi+1}$ is a child of $\node_\idxi$.
        We consider the construction of $\selaut{\css}$.
        If $\node_{\idxi+1}$ is the first child of $\node_\idxi$, we construct the run to $\selstate{{\idxi+1}}$ via the transition
        $\selstate{\idxi} \atran{\arrchild}{\csssim_\idxi} \selstate{{\idxi+1}}$,
        noting that we know $\node_\idxi$ satisfies $\csssim_\idxi$.
        Otherwise we take
        $\selstate{\idxi} \atran{\arrchild}{\csssim_\idxi} \midstate{\idxi}$
        and arrive at a preceding sibling of $\node_{\idxi+1}$.
        We can take the transition labelled $\arrsibling$ to reach the preceding neighbour of $\node_{\idxi+1}$ if required, and then the transition labelled $\arrneighbour$ to reach $\node_{\idxi+1}$ as required.

    \item
        When $\genop_\idxi = \cssneighbour$ we know $\node_{\idxi+1}$ is the neighbour of $\node_\idxi$.
        We consider the construction of $\selaut{\css}$ and take the only available transition
        $\selstate{\idxi} \atran{\arrneighbour}{\csssim_\idxi} \selstate{{\idxi+1}}$,
        noting that we know $\node_\idxi$ satisfies $\csssim_\idxi$.
        Thus, we reach $\node_{\idxi+1}$ as required.

    \item
        When $\genop_\idxi = \csssibling$ we know $\node_{\idxi+1}$ is a sibling of $\node_\idxi$.
        We consider the construction of $\selaut{\css}$.
        If $\node_{\idxi+1}$ is the neighbour of $\node_\idxi$, we construct the run to $\selstate{{\idxi+1}}$ via the transition
        $\selstate{\idxi} \atran{\arrchild}{\csssim_\idxi} \selstate{{\idxi+1}}$,
        noting that we know $\node_\idxi$ satisfies $\csssim_\idxi$.
        Otherwise we take
        $\selstate{\idxi} \atran{\arrneighbour}{\csssim_\idxi} \midstate{\idxi}$
        and arrive at a preceding sibling of $\node_{\idxi+1}$.
        We can take the transition labelled $\arrsibling$ to reach the preceding neighbour of $\node_{\idxi+1}$ is required, and then the transition labelled $\arrneighbour$ to reach $\node_{\idxi+1}$ as required.
    \end{compactitem}
    Thus, by induction, we construct a run to $\node_\numof$ ending in state $\selstate{\numof}$.
    We transform this to an accepting run by taking the transition
    $\selstate{\numof} \atran{\arrlast}{\csssim_\numof} \afinstate$,
    using the fact that $\node_\numof$ satisfies $\csssim_\numof$.
\end{proof}

\def\refpropautintersection{\ref{prop:aut-intersection}}
\subsection{Proof of Proposition  \protect\refpropautintersection}
\label{sec:intersection-proof}
    We show that
    \[
        \aaccepts{\tree}{\node}{\cssaut_1} \land
        \aaccepts{\tree}{\node}{\cssaut_2}
        \iff
        \aaccepts{\tree}{\node}{\cssaut_1 \cap \cssaut_2} \ .
    \]

    We begin by observing that that all runs of a CSS automaton showing acceptance of a node $\node$ in $\tree$ must follow a sequence of nodes
    $\node_1, \ldots, \node_\numof$
    such that
    \begin{compactitem}
    \item
        $\node_1$ is the root of $\tree$, and
    \item
        when
        $\node_\idxj = \node'\treedir$
        then either
        $\node_{\idxj+1} = \node'(\treedir+1)$
        or
        $\node_{\idxj+1} = \node_\idxj 1$ for all $\idxj$, and
    \item
        $\node_\numof = \node$
    \end{compactitem}
    that defines the path taken by the automaton.
    Each node is ``read'' by some transition on each run.
    Note a transition labelled $\arrsibling$ may read sequence nodes that is a factor of the path above.
    However, since these transitions are loops that do not check the nodes, without loss of generality we can assume each $\arrsibling$ in fact reads only a single node.
    That is, $\arrsibling$ behaves like $\arrneighbour$.
    Recall, $\arrsibling$ was only introduced to ensure the existence of ``short'' runs.

    Because of the above, any two runs accepting $\node$ in $\tree$ must follow the same sequence of nodes and be of the same length.

    We have
    $\aaccepts{\tree}{\node}{\cssaut_1} \land
     \aaccepts{\tree}{\node}{\cssaut_2}$
    iff there are accepting runs
    \[
        \astate^\idxi_1 \atran{\arrgen^\idxi_1}{\csssim^\idxi_1}
        \cdots
        \atran{\arrgen^\idxi_{\numof}}{\csssim^\idxi_{\numof}}
        \astate^\idxi_{\numof+1}
    \]
    of $\cssaut_\idxi$ over $\tree$ reaching node $\node$ for both
    $\idxi \in \set{1,2}$.
    We argue these two runs exist iff we have a run
    \[
        \tup{\astate^1_1, \astate^2_1}
        \atran{\arrgen_1}{\csssim_1}
        \cdots
        \atran{\arrgen_\numof}{\csssim_\numof}
        \tup{\astate^1_{\numof+1}, \astate^2_{\numof+1}}
    \]
    of $\cssaut_1 \cap \cssaut_2$
    where each $\arrgen_\idxj$ and $\csssim_\idxj$ depends on
    $\tup{\arrgen^1_\idxj, \arrgen^2_\idxj}$.
    \begin{compactitem}
    \item
        When $\tup{\arrchild, \arrchild}$ we have
        $\arrgen_\idxj = \arrchild$
        and
        $\csssim_\idxj = \csssim^1_\idxj \cap \csssim^2_\idxj$.

    \item
        When $\tup{\arrneighbour, \arrneighbour}$ we have
        $\arrgen_\idxj = \arrneighbour$
        and
        $\csssim_\idxj = \csssim^1_\idxj \cap \csssim^2_\idxj$.

    \item
        When $\tup{\arrneighbour, \arrsibling}$ we have
        $\arrgen_\idxj = \arrneighbour$
        and
        $\csssim_\idxj = \csssim^1_\idxj$.

    \item
        When $\tup{\arrsibling, \arrneighbour}$ we have
        $\arrgen_\idxj = \arrneighbour$
        and
        $\csssim_\idxj = \csssim^2_\idxj$.

    \item
        When $\tup{\arrsibling, \arrsibling}$ we have
        $\arrgen_\idxj = \arrsibling$
        and
        $\csssim_\idxj = \isany$.

    \item
        When $\tup{\arrlast, \arrlast}$ we have
        $\arrgen_\idxj = \arrlast$
        and
        $\csssim_\idxj = \csssim^1_\idxj \cap \csssim^2_\idxj$.

    \item
        The cases
        $\tup{\arrchild, \arrsibling}$,
        $\tup{\arrchild, \arrneighbour}$,
        $\tup{\arrchild, \arrlast}$,
        $\tup{\arrneighbour, \arrchild}$,
        $\tup{\arrneighbour, \arrlast}$,
        $\tup{\arrsibling, \arrchild}$,
        $\tup{\arrsibling, \arrlast}$,
        $\tup{\arrlast, \arrchild}$,
        $\tup{\arrlast, \arrneighbour}$, and
        $\tup{\arrlast, \arrsibling}$
        are not possible.
    \end{compactitem}
    The existence of the transitions comes from the definition of $\cssaut_1 \cap \cssaut_2$.
    We have to argue that $\node_\idxj$ satisfies both
    $\csssim^\idxi_\idxj$
    iff it also satisfies
    $\csssim_\idxj$.
    By observing
    $\csssim \cap \isany = \isany \cap \csssim = \csssim$
    we always have
    $\csssim_\idxj = \csssim^1_\idxj \cap \csssim^2_\idxj$.

    Let $\csssim^\idxi_\idxj = \csstype_\idxi\cssconds_\idxi$
    and
    $\csssim_\idxj = \csstype\cssconds$.
    It is immediate that $\node_\idxj$ satisfies
    $\cssconds = \cssconds_1 \cup \cssconds_2$
    iff it satisfies both $\cssconds_\idxi$.

    To complete the proof we need to show $\node_\idxj$ satisfies $\csstype$ iff it satisfies both $\csstype_\idxi$.
    Note, we must have some $\ns$ and $\ele$ such that
    $\csstype, \csstype_1, \csstype_2
     \in
     \set{\isany,\isanyns{\ns},\iselens{\ns}{\ele},\isele{\ele}}$
    else the type selectors cannot be satisfied (either
    $\csstype = \cssneg{\isany}$
    or
    $\csstype_1$ and $\csstype_2$ assert conflicting namespaces or elements).

    If some $\csstype_\idxi = \isany$ the property follows by definition.
    Otherwise, if $\csstype = \csstype_2$ then in all cases the conjunction of $\csstype_1$ and $\csstype_2$ is equivalent to $\csstype_2$ and we are done.
    The situation is similar when $\csstype = \csstype_1$.
    Otherwise
    $\csstype = \iselens{\ns}{\ele}$
    and
    $\csstype_1 = \isanyns{\ns}$
    and
    $\csstype_2 = \isele{\ele}$
    or vice versa, and it is easy to see $\csstype$ is equivalent to the intersection of $\csstype_1$ and $\csstype_2$.
    Thus, we are done.

\subsection{Proofs for Non-Emptiness of CSS Automata}
\label{sec:emptiness-proofs}

\subsubsection{Bounding Namespaces and Elements}
\label{sec:bounded-types-proof}

We show \refproposition{prop:boundedtypes}.
We need to define the finite sets $\finof{\eles}$ and $\finof{\ns}$.
To this end, we write
\begin{compactenum}
\item
    $\elesof{\cssaut}$
    to denote the set of namespaced elements
    $\qele{\ns}{\ele}$
    such that there is some transition
    $\astate \atran{\arrgen}{\csssim} \astate' \in \atrans$
    with
    $\csssim = \iselens{\ns}{\ele} \cssconds$
    for some $\ns$, $\ele$, and $\cssconds$,

\item
    $\selsof{\cssaut}$ is the set of transitions
    $\astate \atran{\arrgen}{\csssim} \astate' \in \atrans$
    with
    $\csssim \neq \isany$
    and $\numsels{\cssaut}$ denotes the cardinality of
    $\selsof{\cssaut}$.
\end{compactenum}

Let
$\set{\fakeele_1,
      \ldots,
      \fakeele_{\numsels{\cssaut}}}$
be a set of fresh namespaced elements and
\[
    \finelesof{\cssaut} = \elesof{\cssaut} \uplus
                          \set{\fakeele_1,
                               \ldots,
                               \fakeele_{\numsels{\cssaut}}} \uplus
                          \set{\nullele}
\]
where there is a bijection
$
    \elemap : \selsof{\cssaut}
               \rightarrow
               \set{\fakeele_1, \ldots, \fakeele_{\numsels{\cssaut}}}
$
such that for each
$\atrant \in \selsof{\cssaut}$
we have
$\ap{\elemap}{\atrant} = \fakeele$
and
\begin{compactenum}
\item
    $\fakeele = \qele{\ns}{\ele}$
    if $\csssim$ can only match elements $\qele{\ns}{\ele}$,
\item
    $\fakeele = \qele{\ns}{\ele}$
    for some fresh element $\ele$ if $\csssim$ can only match elements of the form
    $\qele{\ns}{\ele'}$
    for all elements $\ele'$, and
\item
    $\fakeele = \qele{\ns}{\ele}$
    for some fresh namespace $\ns$ if $\csssim$ can only match elements of the form
    $\qele{\ns'}{\ele}$
    for all namespaces $\ns'$, and
\item
    $\fakeele = \qele{\ns}{\ele}$
    for fresh $\ns$ and fresh $\ele$ if $\csssim$ places no restrictions on the element type.
\end{compactenum}

Thus, we can define bounded sets of namespaces and elements
\[
    \begin{array}{rcl} %
        \finof{\eles} %
        &=& %
        \setcomp{\ele} %
                {\exists \ns\ .\ %
                     \qele{\ns}{\ele} \in \finelesof{\cssaut}} %
        \\ %
        \finof{\nspaces} %
        &=& %
        \setcomp{\ns} %
                {\exists \ele\ .\ %
                     \qele{\ns}{\ele} \in \finelesof{\cssaut}} \ . %
    \end{array} %
\]

It remains to show $\finelesof{\cssaut}$ is sufficient.
That is, if some tree $\tree$ is accepted $\cssaut$, we can define another tree $\tree'$ that also is accepted by $\cssaut$ but only uses types in
$\finelesof{\cssaut}$.

We take
$\aaccepts{\tree}{\node}{\cssaut}$
with
$\tree = \tup{\treedom, \treelab}$
and we define
$\tree' = \tup{\treedom, \treelab'}$
satisfying the proposition.
Let
\[
    \astate_0, \node_0,
    \astate_1, \node_1,
    \ldots,
    \astate_\runlen, \node_\runlen,
    \astate_{\runlen+1}
\]
be the accepting run of $\cssaut$, by the sequence of transitions
$\atrant_0, \ldots, \atrant_\runlen$.
As noted above, we can assume each transition in $\atrans$ appears only once in this sequence.
Let
$\set{\csssim_1, \ldots, \csssim_{\numsels{\cssaut}}}$
be the set of selectors appearing in $\cssaut$.
We perform the following modifications to $\treelab$ to obtain $\treelab'$.

We obtain $\treelab'$ from $\treelab$ by changing the element labelling.
We first consider all
$0 \leq \idxi \leq \runlen$
such that $\node_\idxi$ is labelled by some element
$\qele{\ns}{\ele} \in \elesof{\cssaut}$.
Let
$\nodeset_{\qele{\ns}{\ele}}$
be the set of nodes labelled by
$\qele{\ns}{\ele}$
in $\treelab$.
In $\treelab'$ we label all nodes in
$\nodeset_{\qele{\ns}{\ele}}$
by $\qele{\ns}{\ele}$.
That is, we do not relabel nodes labelled by
$\qele{\ns}{\ele}$.
Let $\nodeset$ be the union of all such
$\nodeset_{\qele{\ns}{\ele}}$.

Next we consider all
$0 \leq \idxi \leq \runlen$
such that
$\node_\idxi \notin \nodeset$ (i.e. was not labelled in the previous case) and
$\atrant_\idxi = \astate_\idxi \atran{\arrgen}{\csssim} \astate_{\idxi+1}$
with
$\csssim \neq \isany$.
Let
$\qele{\ns}{\ele} \notin \elesof{\cssaut}$
be the element labelling of $\node_\idxi$ in $\treelab$.
Moreover, take $\fakeele$ such that $\ap{\elemap}{\atrant_\idxi} = \fakeele$.
In $\treelab'$ we label all nodes in
$\nodeset_{\qele{\ns}{\ele}}$
(i.e. labelled by $\qele{\ns}{\ele}$) in $\treelab$ by $\fakeele$.
That is, we globally replace $\qele{\ns}{\ele}$ by $\fakeele$.
Let $\nodeset'$ be $\nodeset$ union all such $\nodeset_\ele$.

Finally, we label all nodes not in $\nodeset'$ with the null element $\nullele$.

To see that
\[
    \astate_0, \node_0,
    \astate_1, \node_1,
    \ldots,
    \astate_\runlen, \node_\runlen,
    \astate_{\runlen+1}
\]
via
$\atrant_0, \ldots, \atrant_\runlen$
is an accepting run of
$\tup{\treedom, \treelab'}$
we only need to show that for each
$\atrant_\idxi = \astate_\idxi \atran{\arrgen}{\csssim} \astate_{\idxi+1}$
that $\node_\idxi$ satisfies $\csssim$.
This can be shown by induction over $\csssim$.
Most atomic cases are straightforward (e.g. the truth of $\pshover$ is not affected by our transformations).
The case of
$\isele{\ele}$,
$\isanyns{\ns}$, or
$\iselens{\ns}{\ele}$
appearing positively follows since in these cases the labelling remained the same or changed to some $\fakeele$ consistent with the selector.
When such selectors appear negatively, the result follows since we only changed elements and namespaces to fresh ones.
The truth of attribute selectors remains unchanged since we did not change the attribute labelling.
The cases of
$\psnthchild{\coefa}{\coefb}$
and
$\psnthlastchild{\coefa}{\coefb}$
follow since we did not change the number of nodes.
For the selectors
$\psnthoftype{\coefa}{\coefb}$
and
$\psnthlastoftype{\coefa}{\coefb}$
there are two cases.
If we did not change the element label $\qele{\ns}{\ele}$ of $\node_\idxi$, then we also did not change the label of its siblings.
Moreover, we did not add any $\qele{\ns}{\ele}$ labels elsewhere in the tree.
Hence the truth of the formulas remains the same.
If we did change the label from $\qele{\ns}{\ele}$ to $\fakeele$ for some $\fakeele$ then observe that we also relabelled all other nodes in the tree labelled by $\qele{\ns}{\ele}$.
In particular, all siblings of $\node_\idxi$.
Moreover, since $\elemap$ is a bijection and each transition appears only once in the run, we did not label any node not labelled $\qele{\ns}{\ele}$ with $\fakeele$.
Hence the truth of the formulas also remains the same.
Similar arguments hold for $\psonlychild$ and $\psonlyoftype$.

Thus, $\tup{\treedom, \treelab'}$ is accepted, and only uses elements in $\finelesof{\cssaut}$ as required.

\subsubsection{Proof of Polynomial Bound on Attribute Value Lengths}
\label{sec:poly-string-solution}

We prove \refproposition{prop:boundedatts}.
That is we argue the existence of a polynomial bound for
the solutions to any finite set $\consset$ of constraints of the
form $\opattns{\ns}{\att}{\attop}{\attval}$
or
$\cssneg{\opattns{\ns}{\att}{\attop}{\attval}}$,
for some fixed $\ns$ and $\att$.
We say that $\consset$ is a set of constraints over $\ns$ and $\att$.

In fact, the situation is a little more complicated because it may be the case that $\att$ is $\idatt$.
In this case we need to be able to enforce a global uniqueness constraint on the attribute values.
Thus, for constraints on an ID attribute, we need a bound that is large enough to allow to all constraints on the same ID appearing throughout the automaton to be satisfied by unique values.
Thus, for a given automaton, we might ask for a bound $\attvalbound$ such that \emph{if} there exists unique ID values for each transition, then there exist values of length bounded by $\attvalbound$.

However, the bound on the length must still work when we account for the fact that not all transitions in the automaton will be used during a run.
Consider the following illustrative example.
\begin{center}
    \begin{psmatrix}
        \\
        \rnode{N1}{$\ainitstate$} & &
        \rnode{N2}{$\astate$} &
        \rnode{N3}{$\afinstate$}
        \\
        \ncarc[arcangle=35]{->}{N1}{N2}\naput{$\arrneighbour$}\nbput{$\attisns{\ns}{\idatt}{\attval}$}
        \ncarc[arcangle=35]{<-}{N2}{N1}\nbput{$\arrchild$}\naput{$\attisns{\ns}{\idatt}{\attval}$}
        \ncline{->}{N2}{N3}\naput{$\arrlast$}\nbput{$\isany$}
    \end{psmatrix}
\end{center}
In this case we have two transitions with ID constraints, and hence two sets of constraints
$\consset_1 = \consset_2 = \set{\attisns{\ns}{\idatt}{\attval}}$.
Since these two sets of constraints cannot be satisfied simultaneously with unique values, even the bound
$\attvalbound = 0$
will satisfy our naive formulation of the required property (since the property had the existence of a solution as an antecedent).
However, it is easy to see that any run of the automaton does not use both sets of constraints, and that the bound
$\attvalbound = \sizeof{\attval}$
should suffice.
Hence, we formulate the property of our bound to hold for all \emph{sub-collections} of the collection of sets of constraints appearing in the automaton.

\begin{namedlemma}{lem:bound-atts}{Bounded Attribute Values}
    Given a collection of constraints
    $\consset_1, \ldots, \consset_\numof$
    over some $\ns$ and $\att$,
    there exists a bound $\attvalbound$ polynomial in the size of
    $\consset_1, \ldots, \consset_\numof$
    such that for any subsequence
    $\consset_{\idxi_1}, \ldots, \consset_{\idxi_\numofalt}$
    if there is a sequence of words
    $\attval_1, \ldots, \attval_\numofalt$
    such that all
    $\attval_\idxj$
    are unique and
    $\attval_\idxj$
    satisfies the constraints in
    $\consset_{\idxi_\idxj}$,
    then there is a sequence of words such that the length of each
    $\attval_\idxj$
    is bounded by $\attvalbound$,
    all
    $\attval_\idxj$
    are unique, and
    $\attval_\idxj$
    satisfies the constraints in
    $\consset_{\idxi_\idxj}$,
\end{namedlemma}

The proof uses ideas from Muscholl and Walukiewicz's NP fragment of LTL~\cite{MW05}.
We first, for each set of constraints $\consset$, construct a deterministic finite word automaton $\aut$ that accepts only words satisfying all constraints in $\consset$.
This automaton has a polynomial number of states and can easily be seen to have a short solution by a standard pumping argument.
Given automata
$\aut_1, \ldots, \aut_\numof$
with at most $\numstates$ states and $\numcons$ constraints in each set of constraints, we can again use pumping to show there is a sequence of distinct words
$\attval_1, \ldots, \attval_\numof$
such that each
$\attval_\idxi$
is accepted by
$\aut_\idxi$
and the length of
$\attval_\idxi$
is at most
$\numof\cdot\numstates\cdot\numcons$.

\paragraph{The Automata}

We define a type of word automata based on a model by Muscholl and Walukiewicz to show and NP upper bound for a variant of LTL.
These automata read words and keep track of which constraints in $\consset$ have been satisfied or violated.
They accept once all positive constraints have been satisfied and no negative constraints have been observed.

In the following, let
$\consprefixes{\consset}$
be the set of words $\attval'$ such that $\attval'$ is a prefix of some $\attval$ with
$\opattns{\ns}{\att}{\attop}{\attval} \in \consset$
or
$\cssneg{\opattns{\ns}{\att}{\attop}{\attval}} \in \consset$.
Moreover, let $\startchar$ and $\lastchar$ be characters not in $\alphabet$ that will mark the beginning and end of the word respectively.
Additionally, let $\eword$ denote the empty word.
Finally, we write
$\attval \factor \attval'$
if $\attval$ is a \defn{factor} of $\attval'$, i.e.,
$\attval' = \attval_1 \attval \attval_2$
for some $\attval_1$ and $\attval_2$.

\begin{nameddefinition}{def:consaut}{$\consaut{\consset}$}
    Given a set $\consset$ of constraints over $\ns$ and $\att$, we define
    $
        \consaut{\consset} = \tup{\astates, \atrans, \consset}
    $
    where
    \begin{compactitem}
    \item
        $\astates$ is the set of all words $\cha_1\attval\cha_2$ such that
        \begin{compactitem}
        \item
            $\attval \in \consprefixes{\consset}$, and
        \item
            $\cha_1, \cha_2 \in \alphabet \cup \set{\eword,
                                                    \startchar,
                                                    \lastchar}$.
        \end{compactitem}
    \item
        $\atrans \subseteq \astates
                           \times
                           \brac{\alphabet \cup \set{\startchar,
                                                     \lastchar}}
                           \times
                           \astates$
        is the set of transitions $\attval \atranch{\cha} \attval'$
        where $\attval'$ is the longest suffix of
        $\attval\cha$
        such that
        $\attval' \in \astates$.
    \end{compactitem}
\end{nameddefinition}
Observe that the size of the automaton $\consaut{\consset}$ is polynomial
in the size of $\consset$.

A \defn{run} of $\consaut{\consset}$
over a word with beginning and end marked
$\cha_1\ldots\cha_\numof \in \startchar \alphabet^\ast \lastchar$
is
\[
    \consrunst{\attval_0}{\conssat_0}{\consviol_0}
    \atranch{\cha_1}
    \consrunst{\attval_1}{\conssat_1}{\consviol_1}
    \atranch{\cha_2}
    \cdots
    \atranch{\cha_\numof}
    \consrunst{\attval_\numof}{\conssat_\numof}{\consviol_\numof}
\]
where
$\attval_0 = \eword$
and for all
$1 \leq \idxi \leq \numof$
we have
$\attval_{\idxi - 1} \atranch{\cha_\idxi} \attval_\idxi$
and
$\conssat_\idxi, \consviol_\idxi \subseteq \consset$
track the satisfied and violated constraints respectively.
That is
$\conssat_0 = \consviol_0 = \emptyset$,
and for all
$1 \leq \idxi \leq \numof$
we have
(noting
 $\startchar\attval \factor \attval_\idxi$
 implies
 $\startchar\attval$
 is a prefix of $\attval_\idxi$, and similar for
 $\attval\lastchar$)
$\conssat_\idxi =$
\[
    \begin{array}{l}
        \conssat_{\idxi-1} \cup
        \setcomp{\attisns{\ns}{\att}{\attval} \in \consset}
                {\startchar \attval \lastchar = \attval_\idxi}
        \ \cup
        \\
        \setcomp{\atthasns{\ns}{\att}{\attval} \in \consset}
                {\exists
                     \cha_1 \in \set{\startchar, \cspace},
                     \cha_2 \in \set{\cspace, \lastchar} \ .\ %
                 \cha_1 \attval \cha_2 \factor \attval_\idxi
                } \ \cup
        \\
        \setcomp{\attbeginns{\ns}{\att}{\attval} \in \consset}
                {\exists
                      \cha_2 \in \set{\lastchar, \mdash} \ .\ %
                 \startchar \attval \cha_2 \factor \attval_\idxi
                } \ \cup
        \\
        \setcomp{\attstrbeginns{\ns}{\att}{\attval} \in \consset}
                {\startchar \attval \factor \attval_\idxi} \cup
        \setcomp{\attstrendns{\ns}{\att}{\attval} \in \consset}
                {\attval \lastchar \factor \attval_\idxi} \ \cup
        \\
        \setcomp{\attstrsubns{\ns}{\att}{\attval} \in \consset}
                {\attval \factor \attval_\idxi}
    \end{array}
\]
and $\consviol_\idxi =$
\[
    \begin{array}{l} %
        \consviol_{\idxi-1} \cup %
        \setcomp{\cssneg{\attisns{\ns}{\att}{\attval}} \in \consset} %
                {\startchar\attval\lastchar = \attval_\idxi} %
        \ \cup %
        \\ %
        \setcomp{\cssneg{\atthasns{\ns}{\att}{\attval}} \in \consset} %
                {\exists %
                    \begin{array}{c} %
                        \cha_1 \in \set{\startchar, \cspace}, %
                        \\ %
                        \cha_2 \in \set{\cspace, \lastchar} %
                    \end{array} %
                 \ .\ 
                 \cha_1 \attval \cha_2 \factor \attval_\idxi %
                } \ \cup %
        \\ %
        \setcomp{\cssneg{\attbeginns{\ns}{\att}{\attval}} \in \consset} %
                {\exists %
                      \cha_2 \in \set{\lastchar, \mdash} \ .\ 
                 \startchar \attval \cha_2 \factor \attval_\idxi %
                } \ \cup %
        \\ %
        \setcomp{\cssneg{\attstrbeginns{\ns}{\att}{\attval}} \in \consset} %
                {\startchar \attval \factor \attval_\idxi} \ \cup %
        \\ %
        \setcomp{\cssneg{\attstrendns{\ns}{\att}{\attval}} \in \consset} %
                {\attval \lastchar \factor \attval_\idxi} \ \cup %
        \\ %
        \setcomp{\cssneg{\attstrsubns{\ns}{\att}{\attval}} \in \consset} %
                {\attval \factor \attval_\idxi} \ . %
    \end{array} %
\]
Such a run is \defn{accepting} if
$
    \conssat_\numof = \setcomp{\opattns{\ns}{\att}{\attop}{\attval}}
                              {\opattns{\ns}{\att}{\attop}{\attval}
                               \in
                               \consset}
$
and
$\consviol_\numof = \emptyset$.
That is, all positive constraints have been satisfied and no negative constraints have been violated.

\paragraph{Short Solutions}

We show the existence of short solutions via the following lemma.
The proof of this lemma is a simple pumping argument which appears below.
Intuitively, if a satisfying word is shorter than
$\numstates \cdot \numcons$
we do not change it.
If it is longer than
$\numstates \cdot \numcons$
any accepting run of the automaton on this word must contain a repeated
$\consrunst{\attval}{\conssat}{\consviol}$.
We can thus pump down this word to ensure that it is shorter than
$\numstates \cdot \numcons$.
Then, to ensure it is unique, we pump it up to some unique length of at most
$\numof \cdot \numstates \cdot \numcons$.

\begin{namedlemma}{lem:short-attvals}{Short Attribute Values}
    Given a sequence of sets of constraint automata
    $\consaut{\consset_1}, \ldots, \consaut{\consset_\numof}$
    each with at most $\numstates$ states and at most $\numcons$ constraints in each $\consset_\idxi$, if there is a sequence of pairwise unique words
    $\attval_1, \ldots, \attval_\numof$
    such that for all
    $1 \leq \idxi \leq \numof$
    there is an accepting run of
    $\consaut{\consset_\idxi}$
    over
    $\attval_\idxi$,
    then there exists such a sequence where the length of each
    $\attval_\idxi$
    is at most
    $\numof \cdot \numstates \cdot \numcons$.
\end{namedlemma}

To obtain
\reflemma{lem:bound-atts}
we observe that for any subsequence
$\consset_{\idxi_1}, \ldots, \consset{\idxi_\numofalt}$
we have
$\numofalt \cdot \numstates' \cdot \numcons'
 \leq
 \numof \cdot \numstates \cdot \numcons$
since
$\numofalt \leq \numof$
and the max number of states $\numstates'$ and constraints $\numcons'$ in the subsequence have
$\numstates' \leq \numstates$
and
$\numcons' \leq \numcons$.

We give the proof of Lemma~\ref{lem:short-attvals}.
That is, given a sequence of sets of constraint automata
$\consaut{\consset_1}, \ldots, \consaut{\consset_\numof}$
each with at most $\numstates$ states and at most $\numcons$ constraints in each $\consset_\idxi$, if there is a sequence of pairwise unique words
$\attval_1, \ldots, \attval_\numof$
such that for all
$1 \leq \idxi \leq \numof$
there is an accepting run of
$\consaut{\consset_\idxi}$
over
$\attval_\idxi$,
then there exists such a sequence where the length of each
$\attval_\idxi$
is at most
$\numof \cdot \numstates \cdot \numcons$.

To prove the lemma, take a sequence
$\attval_1, \ldots, \attval_\numof$
such that each $\attval_\idxi$ is unique and accepted by
$\consaut{\consset_\idxi}$.
We proceed by induction, constructing
$\attval'_1, \ldots, \attval'_\idxi$
such that each $\attval'_\idxj$ is unique, accepted by
$\consaut{\consset_\idxj}$,
and of length $\runlen$ such that either
\begin{compactitem}
\item
    $\runlen \leq \numstates \cdot \numcons$
    and
    $\attval'_\idxj = \attval_\idxj$, or
\item
    $\idxi \cdot \numstates \cdot \numcons
     \leq \runlen \leq
     (\idxi+1) \cdot \numstates \cdot \numcons$.
\end{compactitem}
When $\idxi = 0$ the result is vacuous.
For the induction there are two cases.

When the length $\runlen$ of $\attval_\idxi$ is such that
$\runlen \leq \numstates \cdot \numcons$
we set
$\attval'_\idxi = \attval_\idxi$.
We know $\attval'_\idxi$ is unique amongst
$\attval'_1, \ldots, \attval'_\idxi$
since for all $\idxj < \idxi$ either $\attval'_\idxj$ is longer than $\attval'_\idxi$ or is equal to $\attval_\idxj$ and thus distinct from $\attval_\idxi$.

When
$\runlen > \numstates \cdot \numcons$
we use a pumping argument to pick some $\attval'_\idxi$ of length $\runlen'$ such that
$\idxi \cdot \numstates \cdot \numcons
 \leq
 \runlen'
 \leq
 (\idxi + 1) \cdot \numstates \cdot \numcons$.
This ensures that $\attval'_\idxi$ is unique since it is the only word whose length lies within the bound.
We take the accepting run
\[
    \consrunst{\attvalalt_0}{\conssat_0}{\consviol_0}
    \atranch{\cha_1}
    \consrunst{\attvalalt_1}{\conssat_1}{\consviol_1}
    \atranch{\cha_2}
    \cdots
    \atranch{\cha_\numof}
    \consrunst{\attvalalt_\runlen}{\conssat_\numof}{\consviol_\runlen}
\]
of $\attval_\idxi$ and observe that the values of $\conssat_\idxj$ and $\consviol_\idxj$ are increasing by definition.
That is
$\conssat_\idxj \subseteq \conssat_{\idxj+1}$
and
$\consviol_\idxj \subseteq \consviol_{\idxj+1}$.
By a standard down pumping argument, we can construct a short accepting run containing only distinct configurations of length bound by
$\numstates \cdot \numcons$.
We construct this run by removing all cycles from the original run.
This maintains the acceptance condition.
Next we obtain an accepted word of length
$\idxi \cdot \numstates \cdot \numcons
 \leq
 \runlen'
 \leq
 (\idxi + 1) \cdot \numstates \cdot \numcons$.
Since
$\runlen > \numstates \cdot \numcons$
we know there exists at least one configuration
$\consrunst{\attvalalt}{\conssat}{\consviol}$
in the short run that appeared twice in the original run.
Thus there is a run of the automaton from
$\consrunst{\attvalalt}{\conssat}{\consviol}$
back to
$\consrunst{\attvalalt}{\conssat}{\consviol}$
which can be bounded by
$\numstates \cdot \numcons$
by the same downward pumping argument as before.
Thus, we insert this run into the short run the required number of times to obtain an accepted word
$\attval'_\idxi$
of the required length.

Thus, by induction, we are able to obtain the required short words
$\attval'_1, \ldots, \attval'_\numof$
as needed.

\subsubsection{Missing definitions for $\attspres{\csscond}{\wordposvec}$}
\label{sec:attspres-missing}

\[
    \begin{array}{rcl}
        \attspres{\atthasns{\ns}{\att}{\attval}}{\wordposvec} %
        &=& %
        \brac{
            \begin{array}{c}
                \bigwedge\limits_{1 \leq \idxj \leq \numofalt}\brac{ %
                    \begin{array}{c} %
                        \wordpos{\ns}{\att}{\idxi}{\idxj} = \cha_\idxj %
                        \ \land \\ %
                        \brac{ %
                            \begin{array}{c} %
                                \wordpos{\ns}{\att}{\idxi}{\numofalt+1} = \nullch %
                                \\ \lor \\ %
                                \wordpos{\ns}{\att}{\idxi}{\numofalt+1} = \cspace %
                            \end{array} %
                        } %
                    \end{array} %
                } \\ %
                \lor \\ %
                \bigvee\limits_{1 \leq \idxj \leq \attvalbound - \numofalt - 1} %
                \brac{ %
                    \begin{array}{c} %
                        \wordpos{\ns}{\att}{\idxi}{\idxj - 1} = \cspace %
                        \ \land \\ %
                        \bigwedge\limits_{1 \leq \idxj' \leq \numofalt} %
                        \wordpos{\ns}{\att}{\idxi}{\idxj + \idxj'} = \cha_\idxj %
                        \ \land \\ %
                        \brac{\wordpos{\ns}{\att}{\idxi}{\idxj + \numofalt + 1} = \nullch %
                          \lor %
                          \wordpos{\ns}{\att}{\idxi}{\idxj + \numofalt + 1} = \cspace} %
                    \end{array} %
                } %
            \end{array}
        }
        \\
        \attspres{\attbeginns{\ns}{\att}{\attval}}{\wordposvec} %
        &=& %
        \bigwedge\limits_{1 \leq \idxj \leq \numofalt} %
            \wordpos{\ns}{\att}{\idxi}{\idxj} = \cha_\idxj %
        \land %
        \brac{\wordpos{\ns}{\att}{\idxi}{\numofalt+1} = \nullch %
              \lor %
              \wordpos{\ns}{\att}{\idxi}{\numofalt+1} = \mdash} %
        \\
        \attspres{\attstrendns{\ns}{\att}{\attval}}{\wordposvec} %
        &=& %
        \bigvee\limits_{0 \leq \idxj \leq \attvalbound - \numofalt - 1} %
        \brac{ %
            \begin{array}{c} %
                \bigwedge\limits_{1 \leq \idxj' \leq \numofalt} %
                \wordpos{\ns}{\att}{\idxi}{\idxj + \idxj'} = \cha_\idxj %
                \ \land \\ %
                \wordpos{\ns}{\att}{\idxi}{\idxj + \numofalt + 1} = \nullch %
            \end{array} %
        } %
    \end{array} %
\]

\subsubsection{Negating Positional Formulas}
\label{sec:negating-pos-formulas}

We need to negate selectors like $\psnthchild{\coefa}{\coefb}$,
For completeness, we give the definition of the negation below.

We decompose $\coefb$ according to the period $\coefa$.
I.e.
$\coefb = \coefa \coefb_1 + \coefb_2$,
where $\coefb_1$ and $\coefb_2$ are the unique integers such that
$\abs{\coefb_2} < \abs{\coefa}$
and
$\coefb_1 \coefa < 0$ implies $\coefb_2 \leq 0$
and
$\coefb_1 \coefa > 0$ implies $\coefb_2 \geq 0$.

\begin{definition}[$\nomatch{\xvar}{\coefa}{\coefb}$]
    Given constants $\coefa, \coefb, \coefb_1$, and $\coefb_2$ as above, we define the formula
    $\nomatch{\xvar}{\coefa}{\coefb}$
    to be
    \[
        \begin{array}{c} %
            \brac{0 \geq \coefa \land \xvar < \coefb} %
            \lor %
            \brac{0 \leq \coefa \land \xvar > \coefb} %
            \ \lor \\ %
            \brac{ %
                \exists \nvar \ .\ \exists \denotevar{\coefb}'_2 . %
                \brac{ %
                    \begin{array}{c} %
                        \abs{\denotevar{\coefb}'_2} < \abs{\coefa} %
                        \ \land \\ %
                        \brac{\coefb_1\coefa < 0 %
                            \Rightarrow \denotevar{\coefb}'_2 \leq 0} %
                        \ \land \\ %
                        \brac{\coefb_1\coefa > 0 %
                            \Rightarrow \denotevar{\coefb}'_2 \geq 0} %
                        \ \land \\ %
                        \denotevar{\coefb}'_2 \neq \coefb_2 %
                        \ \land \\ %
                        \xvar = \coefa\nvar + %
                                \coefa\coefb_1 + %
                                \denotevar{\coefb}'_2 %
                    \end{array} %
                } %
            } %
        \end{array} %
    \]
\end{definition}
In the following, whenever we negate a formula of the form
$\neg \brac{\exists \nvar . \xvar = \coefa \nvar + \coefb}$
we will use \acmeasychair{}{the formula}
$\nomatch{\xvar}{\coefa}{\coefb}$.
One can verify that the resulting formula is existential Presburger.
We show that our negation of periodic constraints is correct.

\begin{namedproposition}{prop:nomatch}{Correctness of $\nomatch{\xvar}{\coefa}{\coefb}$}
    Given constants $\coefa$ and $\coefb$, we have
    \[
        \neg\brac{\exists \nvar . \xvar = \coefa \nvar + \coefb}
        \iff
        \nomatch{\xvar}{\coefa}{\coefb} \ .
    \]
\end{namedproposition}
\begin{proof}
    We first consider $\coefa = 0$.  Since there is no $\coefb'_2$ with
    $\abs{\coefb'_2} < 0$ we have to prove
    \[
        \neg\brac{\xvar = \coefb}
        \iff
        \brac{\xvar < \coefb} \lor \brac{\xvar > \coefb}
    \]
    which is immediate.

    In all other cases, the conditions on $\coefb_2$ and $\coefb'_2$ ensure that we always have
    $0 < \abs{\coefb_2 - \coefb'_2} < \abs{\coefa}$.

    If
    $\exists \nvar . \xvar = \coefa \nvar + \coefa \coefb_1 + \coefb_2$
    then if $\coefa > 0$ it is easy to verify that we don't have $\xvar < \beta$ (since $\nvar = 0$ gives $\xvar = \beta$ as the smallest value of $\xvar$) and similarly when $\coefa < 0$ we don't have $\xvar < \beta$.
    To disprove the final disjunct of of
    $\nomatch{\xvar}{\coefa}{\coefb}$
    we observe there can be no $\nvar'$ s.t.
    $\xvar = \coefa \nvar' + \coefa \coefb_1 + \coefb'_2$
    since
    $\xvar = \coefa \nvar + \coefa \coefb_1 + \coefb_2$
    and
    $0 < \abs{\coefb_2 - \coefb'_2} < \abs{\coefa}$.

    In the other direction, we have three cases depending on the satisfied disjunct of $\nomatch{\xvar}{\coefa}{\coefb}$.
    Consider $\coefa > 0$ and $\xvar < \coefb$.
    In this case there is no $\nvar$ such that
    $\xvar = \coefa \nvar + \coefa \coefb_1 + \coefb_2 = \coefa \nvar + \coefb$
    since $\nvar = 0$ gives the smallest value of $\xvar$, which is $\coefb$.
    The case is similar for the second disjunct with $\coefa < 0$.

    The final disjunct gives some $\nvar'$ such that
    $\xvar = \coefa \nvar' + \coefa \coefb_1 + \coefb'_2$
    with
    $0 < \abs{\coefb_2 - \coefb'_2} < \abs{\coefa}$.
    Hence, there can be no $\nvar$ with
    $\xvar = \coefa \nvar + \coefa \coefb_1 + \coefb_2$.
\end{proof}

\subsubsection{Correctness of Presburger Encoding}
\label{sec:aut-emptiness-proof}

We prove soundness and completeness of the Presburger encoding of CSS automata non-emptiness in the two lemmas below.

\begin{lemma}\label{lem:aut-enc-complete}
    For a CSS automaton $\cssaut$, we have
    \[
        \ap{\Lang}{\cssaut} \neq \emptyset
        \Rightarrow
        \presof{\cssaut}
        \text{ is satisfiable.}
    \]
\end{lemma}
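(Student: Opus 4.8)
The plan is to prove the forward direction of Lemma~\ref{lem:aut-enc-correct}: assuming $\ap{\Lang}{\cssaut} \neq \emptyset$, we construct a satisfying assignment for $\presof{\cssaut}$. The starting point is to invoke the three small model properties in sequence. By \refproposition{prop:boundedlen} there is some $\aaccepts{\tree}{\node}{\cssaut}$ with an accepting run of length at most $\numof = \sizeof{\atrans}$; by \refproposition{prop:boundedtypes} we may assume $\tree$ uses only element names from $\finof{\eles}$ and namespaces from $\finof{\nspaces}$; and by \refproposition{prop:boundedatts} we may assume all attribute values in $\tree$ have length bounded by $\attvalbound - 1$. Fix such a tree $\tree = \tup{\treedom, \treelab}$, node $\node$, and an accepting run $\astate_0, \node_0, \ldots, \astate_\runlen, \node_\runlen, \astate_{\runlen+1}$ with $\runlen \leq \numof$, realised by transitions $\atrant_0, \ldots, \atrant_\runlen$.

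Next I would read off the variable assignment directly from this run. Since $\runlen$ may be strictly less than $\numof$, I pad the run: set $\astatevar{\idxi} = \afinstate$ for all $\idxi$ with $\runlen < \idxi \leq \numof$ (this is consistent with $\afinstate$ being a sink and with the $\astatevar{\idxi} = \afinstate$ disjunct of each conjunct in $\presof{\cssaut}$). For $0 \leq \idxi \leq \runlen$, assign: $\astatevar{\idxi} = \astate_\idxi$; $\nsvar{\idxi} = \ap{\treelabns}{\node_\idxi}$ and $\elevar{\idxi} = \ap{\treelabele}{\node_\idxi}$; $\pclsvar{\idxi}{\pcls} = \top$ iff $\pcls \in \ap{\treelabpclss}{\node_\idxi}$; $\numvar{\idxi}$ to be the index of $\node_\idxi$ in its parent's child order (with $\numvar{0} = 1$ say, as the root's counting selectors all evaluate to $\pfalse$ anyway), and $\lnumvar{\idxi}$ the index counting from the right; $\numtypevar{\idxi}{\qele{\ns}{\ele}}$ the number of strictly preceding siblings of $\node_\idxi$ of type $\qele{\ns}{\ele}$, and $\lnumtypevar{\idxi}{\qele{\ns}{\ele}}$ the number of strictly following ones; and finally, for each $\ns, \att$ for which word-position variables were introduced, set $\wordpos{\ns}{\att}{\idxi}{\idxj}$ to the $\idxj$th character of $\ap{\ap{\treelabatts}{\node_\idxi}}{\ns, \att}$ (using the fresh-namespace trick from the attribute-selector encoding for unnamespaced selectors), padding the suffix with $\nullch$. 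For the padded positions $\idxi > \runlen$ assign arbitrary consistent values.

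The bulk of the verification is then a systematic check that $\presof{\cssaut}$ evaluates to true under this assignment, broken into the three conjuncts. For the first conjunct, $\astatevar{0} = \ainitstate$ holds since $\astate_0 = \ainitstate$, and $\astatevar{\numof} = \afinstate$ holds by the padding. For the middle conjunct, for each $\idxi < \runlen$ we show $\tranprest{\idxi}{\atrant_\idxi}$ holds (hence $\tranpres{\idxi}$ holds), and for $\idxi \geq \runlen$ the disjunct $\astatevar{\idxi} = \afinstate$ holds. Checking $\tranprest{\idxi}{\atrant_\idxi}$ requires a case split on whether $\atrant_\idxi$ is labelled $\arrchild$, $\arrneighbour$, $\arrsibling$, or $\arrlast$, and in each case amounts to verifying the bookkeeping relations between $\numvar{\idxi}$, $\numvar{\idxi+1}$, $\lnumvar{\cdot}$, $\numtypevar{\cdot}{\cdot}$, $\lnumtypevar{\cdot}{\cdot}$ against the actual sibling structure of $\node_\idxi$ and $\node_{\idxi+1}$ in $\tree$ (for $\arrsibling$, the existential $\shiftvar$ is witnessed by the number of siblings skipped), together with showing $\csssimpres{\csssim}{\idxi}$ holds — which itself is an induction on the structure of the node selector $\csssim$, using the semantics clauses and the correctness of $\attspres{\cdot}{\cdot}$ and of $\nomatch{\cdot}{\cdot}{\cdot}$ (\refproposition{prop:nomatch}). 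For the $\consistent$ conjunct: $\consistentnums$ holds because $\numvar{\idxi}$ genuinely equals one plus the total count of strictly preceding siblings (summed over types), similarly $\lnumvar{\idxi}$; $\consistentids$ holds because the original document tree satisfies the global ID-uniqueness consistency constraint, so distinct nodes with an ID attribute in the same namespace have distinct values, hence the encoded strings differ in some position; and $\consistentpseudo$ holds because $\tree$ satisfies the pseudo-class consistency constraints (no node both $\pslink$ and $\psvisited$, at most one $\pstarget$, no node both $\psenabled$ and $\psdisabled$).

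The main obstacle I anticipate is the node-selector induction for the counting selectors $\psnthoftype{\coefa}{\coefb}$, $\psnthlastoftype{\coefa}{\coefb}$, $\psonlyoftype$: one must be careful that the disjunction over $\ns \in \finof{\nspaces}$, $\ele \in \finof{\eles}$ in $\csssimpres{\cdot}{\idxi}$ is satisfiable, which relies precisely on \refproposition{prop:boundedtypes} guaranteeing that $\node_\idxi$'s type lies in the bounded sets, and that the "$+1$" offsets in those clauses correctly reconcile the "strictly preceding siblings of the same type" count with the semantic definition's "$\coefa\numof + \coefb - 1$ preceding siblings" phrasing. A secondary subtlety is ensuring the padding for indices $\idxi > \runlen$ does not accidentally falsify $\consistent$ (which quantifies over all $1 \leq \idxi \leq \numof$); this is handled by choosing the padded values to satisfy the arithmetic relations trivially (e.g.\ all type-counts zero, $\numvar{\idxi} = \lnumvar{\idxi} = 1$) and noting $\consistentids$ only constrains positions for which ID variables were created and these can be given values distinct from all genuine ones by the slack in the bound $\attvalbound$, exactly as the proof of \refproposition{prop:boundedatts} provides. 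Everything else is routine substitution.
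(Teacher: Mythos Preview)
Your proposal is correct and follows essentially the same approach as the paper's proof: invoke the small model properties, read off a variable assignment from an accepting run (padding with $\afinstate$ beyond the run), and verify each conjunct of $\presof{\cssaut}$ by a case split on the transition labels together with a structural induction on node selectors. One small bookkeeping slip: you write ``for each $\idxi < \runlen$ we show $\tranprest{\idxi}{\atrant_\idxi}$ holds, and for $\idxi \geq \runlen$ the disjunct $\astatevar{\idxi} = \afinstate$ holds,'' but $\astatevar{\runlen} = \astate_\runlen \neq \afinstate$ in general, so the case split should be $\idxi \leq \runlen$ (covering the final $\arrlast$ transition $\atrant_\runlen$) versus $\idxi > \runlen$.
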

\begin{proof}
    We take a run of $\cssaut$ and construct a satisfying assignment to the variables in $\presof{\cssaut}$.
    That is take a document tree
    $\tree = \tup{\treedom, \treelab}$,
    node
    $\node \in \treedom$,
    and sequence
    \[
        \astate_0, \node_0,
        \astate_1, \node_1,
        \ldots,
        \astate_\runlen, \node_\runlen,
        \astate_{\runlen+1}
        \in
        \brac{\astates \times \treedom}^\ast \times \set{\afinstate}
    \]
    that is an accepting run.
    We know from \refproposition{prop:boundedtypes} that $\tree$ can only use namespaces from $\finof{\nspaces}$ and elements from $\finof{\eles}$.
    Let $\atrant_0, \ldots, \atrant_\runlen$ be the sequence of transitions used in the accepting run.
    We assume (w.l.o.g.) that no transition is used twice.
    We construct a satisfying assignment to the variables as follows.
    \begin{compactitem}
    \item
        $\astatevar{\idxi} = \astate_\idxi$ for all $\idxi \leq \runlen + 1$ and $\astatevar{\idxi} = \afinstate$ for all $\idxi > \runlen + 1$.

    \item
        $\nsvar{\idxi} = \ap{\treelabns}{\node_\idxi}$ for all $\idxi \leq \runlen+1$ ($\nsvar{\idxi}$ can take any value for other values of $\idxi$).

    \item
         $\elevar{\idxi} = \ap{\treelabele}{\node_\idxi}$ for all $\idxi \leq \runlen+1$ ($\elevar{\idxi}$ can take any value for other values of $\idxi$).

    \item
        $\pclsvar{\idxi}{\pcls}
         =
         \brac{\pcls \in \ap{\treelabpclss}{\node_\idxi}}$,
        for each pseudo-class
        $\pcls \in \pclss \setminus \set{\psroot}$
        and
        $\idxi \leq \runlen+1$
        (these variables can take any value for $\idxi > \runlen+1$).

    \item
        $\numvar{\idxi} = \treedir$, when $\node_\idxi = \node'\treedir$ for some $\node'$ and $\treedir$ and $1 \leq \idxi \leq \runlen + 1$, otherwise $\numvar{\idxi}$ can take any value.

    \item
        $\numtypevar{\idxi}{\qele{\ns}{\ele}} = \idxj$,
        where $\idxj$ is the number of nodes of type $\qele{\ns}{\ele}$ preceding $\node_\idxi$ in the sibling order.
        That is
        $\node_\idxi = \node'\treedir$
        for some $\node'$ and $\treedir$ and
        \[
            \idxj = \sizeof{ %
                \setcomp{\node'\treedir'} %
                        {\begin{array}{c} %
                             \treedir' < \treedir %
                             \ \land %
                             \node'\treedir' \in \treedom %
                             \ \land \\ %
                             \ap{\treelabns}{\node'\treedir'} %
                             = %
                             \ap{\treelabns}{\node} %
                             \ \land \\ %
                             \ap{\treelabele}{\node'\treedir'} %
                             = %
                             \ap{\treelabele}{\node} %
                         \end{array}} %
            } \ . %
        \]
        When $\idxi = 0$ or $\idxi > \runlen + 1$ we can assign any value to $\numtypevar{\idxi}{\qele{\ns}{\ele}}$.

    \item
        $\lnumvar{\idxi} = \totnumof - \treedir$ where $\idxi \leq \runlen + 1$ and $\node = \node'\treedir$ for some $\node'$ and $\treedir$ and $\totnumof$ is the smallest number such that $\node'\totnumof \notin \treedom$.
        For $\idxi = 0$ or $\idxi > \runlen + 1$ the variable $\lnumvar{\idxi}$ can take any value.

    \item
        $\lnumtypevar{\idxi}{\qele{\ns}{\ele}} = \idxj$,
        where $\idxj$ is the number of nodes of type $\qele{\ns}{\ele}$ suceeding $\node_\idxi$ in the sibling order.
        That is
        $\node_\idxi = \node'\treedir$
        for some $\node'$ and $\treedir$ and
        \[
            \idxj = \sizeof{ %
                \setcomp{\node'\treedir'} %
                        {\begin{array}{c} %
                             \treedir' > \treedir %
                             \ \land %
                             \node'\treedir' \in \treedom %
                             \ \land \\ %
                             \ap{\treelabns}{\node'\treedir'} %
                             = %
                             \ap{\treelabns}{\node} %
                             \ \land \\ %
                             \ap{\treelabele}{\node'\treedir'} %
                             = %
                             \ap{\treelabele}{\node} %
                         \end{array}} %
            } \ . %
        \]
        When $\idxi = 0$ or $\idxi > \runlen + 1$ we can assign any value to $\lnumtypevar{\idxi}{\qele{\ns}{\ele}}$.

    \item
        Assignments to
        $\wordpos{\ns}{\att}{\idxi}{\idxj}$
        are discussed below.
    \end{compactitem}
    It remains to prove that the given assignment satisfies the formula.

    Recall
    \[
        \presof{\cssaut} = \brac{ %
            \begin{array}{c} %
                \astatevar{0} = \ainitstate %
                \land %
                \astatevar{\numof} = \afinstate %
                \land \\ %
                \bigwedge\limits_{0 \leq \idxi < \numof} %
                \brac{ %
                    \tranpres{\idxi} \lor %
                    \astatevar{\idxi} = \afinstate %
                } %
                \land \\ %
                \consistent %
            \end{array} %
        } \ . %
    \]
    The first two conjuncts follow immediately from our assignment to $\astatevar{\idxi}$ and that the chosen run was accepting.
    Next we look at the third conjunct and simultaneously prove $\consistentnums$.
    When $\idxi \geq \runlen + 1$ we assigned $\afinstate$ to $\astatevar{\idxi}$ and can choose any assignment that satisfies $\consistentnums$.
    Otherwise we show we satisfy $\tranpres{\idxi}$ by showing we satisfy $\tranprest{\idxi}{\atrant_\idxi}$.
    We also show $\consistentnums$ is satsfied by induction, noting it is immediate for $\idxi = 0$ and that for $\idxi = 1$ we must have either the first orlast case which do not depend on the induction hypothesis.
    Consider the form of $\atrant_\idxi$.
\begin{compactenum}
\item
    When
    $\atrant_\idxi = \astate_\idxi \atran{\arrchild}{\csssim} \astate_{\idxi+1}$
    we immediately confirm the values of $\astatevar{\idxi}$, $\astatevar{\idxi+1}$, $\numvar{\idxi+1}$, $\numtypevar{\idxi+1}{\qele{\ns}{\ele}}$ satisfy the constraint.
    Similarly for
    $\neg{\pclsvar{\idxi}{\psempty}}$
    since we know $\psempty \notin \ap{\treelabpclss}{\node_\idxi}$.
    We defer the argument for
    $\csssimpres{\csssim}{\idxi}$
    until after the case split.
    That $\consistentnums$ is satisfied can also be seen directly.

\item
    When
    $\atrant_\idxi = \astate_\idxi \atran{\arrneighbour}{\csssim} \astate_{\idxi+1}$
    we know $\node_\idxi = \node'\treedir$ and $\node_{\idxi+1} = \node'(\treedir+1)$ for some $\node'$ and $\treedir$.
    We can easily check the values of $\astatevar{\idxi}$, $\astatevar{\idxi+1}$, $\numvar{\idxi+1}$, $\lnumvar{\idxi}$, $\numtypevar{\idxi+1}{\qele{\ns}{\ele}}$, and $\lnumtypevar{\idxi+1}{\qele{\ns}{\ele}}$ satisfy the constraint.
    We defer the argument for
    $\csssimpres{\csssim}{\idxi}$
    until after the case split.
    To show $\consistentnums$ we observe $\numvar{\idxi+1}$ is increased by $1$ and only one $\numtypevar{\idxi+1}{\qele{\ns}{\ele}}$ is increased by $1$, the others being increased by $0$.
    Similarly for $\lnumvar{\idxi}$ and $\numtypevar{\idxi+1}{\qele{\ns}{\ele}}$.
    Hence the result follows from induction.

\item
    When
    $\atrant_\idxi = \astate_\idxi \atran{\arrsibling}{\isany} \astate_{\idxi+1}$
    we know $\node_\idxi = \node'\treedir$ and $\node_{\idxi+1} = \node'(\treedir')$ for some $\node'$, $\treedir$, and $\treedir < \treedir'$.
    We can easily check the values of $\astatevar{\idxi}$, $\astatevar{\idxi+1}$, $\numvar{\idxi+1}$, $\lnumvar{\idxi}$, $\numtypevar{\idxi+1}{\qele{\ns}{\ele}}$, and $\lnumtypevar{\idxi+1}{\qele{\ns}{\ele}}$ satisfy the constraint.
    We defer the argument for
    $\csssimpres{\csssim}{\idxi}$
    until after the case split.
    To satisfy the constraints over the position variables, we observe that values for
    $\shiftvar$
    and
    $\shiftvartype{\qele{\ns}{\ele}}$
    can be chosen easily for the specified assignment.
    Combined with induction this shows $\consistentnums$ as required.

\item
    When
    $\atrant_\idxi = \astate_\idxi \atran{\arrlast}{\csssim} \astate_{\idxi+1}$

    We can easily check the values of $\astatevar{\idxi}$ and $\astatevar{\idxi+1}$.
    We defer the argument for
    $\csssimpres{\csssim}{\idxi}$
    until after the case split.
    By induction we immediately obtain $\consistentnums$.
\end{compactenum}

We show
$\csssimpres{\csssim}{\idxi}$
is satisfied for each $\node_\idxi$ and $\csssim$ labelling $\atrant_\idxi$.
Take a node $\node$ and $\csssim = \csstype\cssconds$ from this sequence.
Note $\node$ satisfies $\csssim$ since thw run is accepting.
Recall
\[
    \csssimpres{\csstype\cssconds}{\idxi} %
    = %
    \brac{ %
        \begin{array}{c} %
            \csssimpres{\csstype}{\idxi} %
            \land \\ %
            \brac{ %
                \bigwedge\limits_{\csscond \in \noatts{\cssconds}} %
                    \csssimpres{\csscond}{\idxi} %
            } %
            \land \\ %
            \attspres{\csstype\cssconds}{\idxi} %
        \end{array} %
    } \ . %
\]
From the type information of $\node$ we immediately satisfy $\csssimpres{\csstype}{\idxi}$.

For a positive $\csscond \in \cssconds$ there are several cases.
If $\csscond = \psroot$ then we know we are in $\node_0$ and the encoding is $\ptrue$.
If $\csscond$ is some other pseudo class $\pcls$ then the encoding of $\csscond$ is $\pclsvar{\idxi}{\pcls}$ and we assigned true to this variable.
For
$\psnthchild{\coefa}{\coefb}$
and
$\psnthlastchild{\coefa}{\coefb}$
satisfaction of the encoding follows immediately from $\node$ satisfying $\csscond$ and our assignment to $\numvar{\idxi}$ and $\lnumvar{\idxi}$.
We satisfy the encodings of
$\psnthoftype{\coefa}{\coefb}$,
$\psnthlastoftype{\coefa}{\coefb}$,
$\psonlychild$,
and
$\psonlyoftype$
similarly.
The latter follow since an only child is position $1$ from the start and end, and an only of type node has $0$ strict predecessors or successors of the same type.

For a negative $\csscond \in \cssconds$ there are several cases.
If $\csscond = \cssneg{\psroot}$ then we know we are not in $\node_0$ and the encoding is $\ptrue$.
If $\csscond$ is the negation of some other pseudo class $\pcls$ then the encoding of $\csscond$ is $\neg\pclsvar{\idxi}{\pcls}$ and we assigned false to this variable.
For the selectors
$\cssneg{\psnthchild{\coefa}{\coefb}}$
and the opposite selector
$\cssneg{\psnthlastchild{\coefa}{\coefb}}$
satisfaction of the encoding follows immediately from $\node$ satisfying $\csscond$, our assignment to $\numvar{\idxi}$ and $\lnumvar{\idxi}$ as well as \refproposition{prop:nomatch}.
We satisfy encodings of
$\cssneg{\psnthoftype{\coefa}{\coefb}}$
and of
$\cssneg{\psnthlastoftype{\coefa}{\coefb}}$
in a likewise fashion.
For the remaining cases of
$\cssneg{\psonlychild}$,
and
$\cssneg{\psonlyoftype}$
the property follows since, for the former the node must either not be position $1$ from the start or end, and for the latter a not only of type node has more than $0$ strict predecessors or successors of the same type.

Next, to satisfy
$\attspres{\csstype\cssconds}{\idxi}$
we have to satisfy a number of conjuncts.
First, if we have a word
$\cha_1 \ldots \cha_\numof$
we assign it to the variables
$\wordpos{\ns}{\att}{\idxi}{\idxj}$
(where $\node$ is the $\idxi$th in the run and $\idxj$ ranges over all word positions within the cimputed bound)
ny assigning
$\wordpos{\ns}{\att}{\idxi}{\idxj} = \cha_\idxj$
when
$\idxj \leq \numof$
and
$\wordpos{\ns}{\att}{\idxi}{\idxj} = \nullch$
otherwise.

In all cases below, it is straightforward to observe that it a word (within the computed length bound) satisfies
$\opattns{\ns}{\att}{\attop}{\attval}$
or
$\cssneg{\opattns{\ns}{\att}{\attop}{\attval}}$
then the encoding
$\attspresns{\ns}{\att}{\opattns{\ns}{\att}{\attop}{\attval}}{\idxi}$
or
$\neg\attspresns{\ns}{\att}{\opattns{\ns}{\att}{\attop}{\attval}}{\idxi}$
is satisfied by our variable assignment.
Similarly $\attspresnulls{\wordposvec}$ is straightforwardly satisfied.
Hence, if a word satisfies $\consset$ then our assignment to the variables means
$\attspresns{\ns}{\att}{\consset}{\idxi}$
is also satisfied.

There are a number of cases of conjuncts for attribute selectors.
The simplest is for sets
$\cssconds^\ns_\att$
where we see immediately that all constraints are satisfied for
$\ap{\ap{\treelabatts}{\node}}{\ns, \att}$
and hence we assign this value to the appropriate variables and the conjuct is satisfied also.
For each
$\hasatt{\att}$
and
$\opatt{\att}{\attop}{\attval} \in \cssconds$
we have in the document some namespace $\ns$ such that
$\ap{\ap{\treelabatts}{\node}}{\ns, \att}$
satisfies the attribute selector and all negative selectors applying to all namespaces.
Let $\ns'$ be the fresh name space assigned to the selector during the encoding and $\consset$ be the full set of constraints belonging to the conjunct (i.e. including negative ones).
We assign to the variable
$\wordpos{\ns}{\att}{\idxi}{\idxj}$
the $\idxj$th character of
$\ap{\ap{\treelabatts}{\node}}{\ns, \att}$
(where $\node$ is the $\idxi$th in the run)
and satisfy the conjuct as above.
Note here that a single value of $\qatt{\ns}{\att}$ is assigned to several $\qatt{\ns'}{\att}$.
This is benign with respect to the global uniqueness required by ID attributes because each copy has a different namespace.

Finally, we have to satisfy the consistency constraints.
We showed $\consistentnums$ above.
The remaining consistency constraints are easily seen to be satisfied: $\consistentids$ because each ID is unique causing at least one pair of characters to differ in every value; $\consistentpseudo$ since it encodes basic consistency constraints on the appearence of pseudo elements in the tree.

Thus, we have satisfied the encoded formula, completing the first direction of the proof.
\end{proof}

\begin{lemma}\label{lem:aut-enc-sound}
    For a CSS automaton $\cssaut$, we have
    \[
        \presof{\cssaut}
        \text{ is satisfiable.}
        \Rightarrow
        \ap{\Lang}{\cssaut} \neq \emptyset
    \]
\end{lemma}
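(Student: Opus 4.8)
The plan is to reverse the construction used in Lemma~\ref{lem:aut-enc-complete}: starting from a satisfying assignment $\assignment$ of $\presof{\cssaut}$, build a document tree together with an accepting run of $\cssaut$. First I would extract the length of the run. Since $\assignment$ satisfies $\astatevar{0} = \ainitstate$, $\astatevar{\numof} = \afinstate$ and $\bigwedge_{0 \leq \idxi < \numof}(\tranpres{\idxi} \lor \astatevar{\idxi} = \afinstate)$, let $\runlen \leq \numof$ be the least index with $\astatevar{\runlen} = \afinstate$; for every $\idxi < \runlen$ the disjunct $\tranpres{\idxi}$ must hold, so there is a transition $\atrant_\idxi = \astate \atran{\arrgen}{\csssim} \astate' \in \atrans$ whose clause $\tranprest{\idxi}{\atrant_\idxi}$ is true. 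Because each such clause forces $\astatevar{\idxi} = \astate$ and $\astatevar{\idxi+1} = \astate'$, the sequence $\atrant_0,\dots,\atrant_{\runlen-1}$ is a path from $\ainitstate$ to $\afinstate$.

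Next I would lay out the \emph{spine} $\node_0,\dots,\node_\runlen$ of the tree: $\node_0 = \eseq$; an $\arrchild$ step moves to $\node_\idxi 1$, an $\arrneighbour$ step to the next sibling, an $\arrsibling$ step jumps right by the witness $\assignment(\shiftvar)$ of the existential in $\tranprest{\idxi}{\atrant_\idxi}$, and an $\arrlast$ step stays put. Then I would complete the spine to a full tree: for every $\node'$ occurring as the parent of some spine node, the values $\numvar{\idxi}$ fix the positions of its spine children, the $\numtypevar{\idxi}{\qele{\ns}{\ele}}$ (and the $\lnumtypevar{\idxi}{\qele{\ns}{\ele}}$ at its rightmost spine child) fix how many preceding/following siblings of each type $\qele{\ns}{\ele}$ with $\ns \in \finof{\nspaces}$, $\ele \in \finof{\eles}$ to install, and the per-type witnesses $\shiftvartype{\qele{\ns}{\ele}}$ fix the types of nodes skipped by $\arrsibling$ jumps; every remaining padding node gets the dummy type $\nullele$ and no ID. The clause $\consistentnums$ guarantees $\numvar{\idxi} - 1 = \sum_{\qele{\ns}{\ele}}\numtypevar{\idxi}{\qele{\ns}{\ele}}$ (and dually for $\lnumvar{\idxi}$), so these counts really are realisable as an ordered list of children, and the update conjuncts in the $\arrneighbour$/$\arrsibling$ cases of $\tranprest{}{}$ guarantee that the partial pictures contributed by consecutive spine siblings of $\node'$ agree. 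Finally I would label the spine: $\ap{\treelabns}{\node_\idxi} = \assignment(\nsvar{\idxi})$, $\ap{\treelabele}{\node_\idxi} = \assignment(\elevar{\idxi})$, pseudo-classes $\setcomp{\pcls}{\assignment(\pclsvar{\idxi}{\pcls}) = \top}$ together with $\psroot$ exactly when $\idxi = 0$, and for each $\qatt{\ns}{\att}$ for which $\wordpos{\ns}{\att}{\idxi}{\idxj}$ variables exist, the string read from those variables left to right and truncated at the first $\nullch$ (the fresh namespaces introduced in the encoding for unnamespaced attribute selectors are simply kept as distinct namespaces, which is harmless since the semantics of $\hasatt{\att}$ and $\opatt{\att}{\attop}{\attval}$ only ask for \emph{some} namespace).

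The last step is verification that this is an accepting run. The boundary and state conditions hold by the choice of $\runlen$ and $\atrant_\idxi$; the movement conditions hold by construction of the spine. The remaining obligation is that each $\node_\idxi$ satisfies the node selector $\csssim$ of $\atrant_\idxi$, which I would prove by structural induction on $\csssim$ showing that $\csssimpres{\csssim}{\idxi}$ being true under $\assignment$ implies $\tree,\node_\idxi \models \csssim$ --- essentially the mirror image of the corresponding argument in Lemma~\ref{lem:aut-enc-complete}. Type and pseudo-class atoms are read directly off the labelling; $\psnthchild{\coefa}{\coefb}$ and its last-variant hold because $\numvar{\idxi}$ (resp.\ $\lnumvar{\idxi}$) is the genuine (last-)position of $\node_\idxi$; $\psnthoftype{\coefa}{\coefb}$, $\psonlyoftype$ and their variants hold because the $\numtypevar{}{}$/$\lnumtypevar{}{}$ variables were realised as true type counts; negated periodic atoms are handled via Proposition~\ref{prop:nomatch}; and attribute atoms hold because the subformulas $\attspres{}{}$ exactly describe the guessed string while $\attspresnulls{\wordposvec}$ forces the $\nullch$-padding to be a genuine suffix. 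Global ID uniqueness follows from $\consistentids$ (distinct IDs along the spine, none on padding nodes), $\consistentpseudo$ gives mutual exclusion of $\pslink$/$\psvisited$ and of $\psenabled$/$\psdisabled$ and at most one $\pstarget$, $\psroot$ is placed only at the root, and the $\neg\pclsvar{\idxi}{\psempty}$ conjunct in every $\arrchild$ case ensures no node from which we descend carries $\psempty$; hence the constructed tree is a legal document tree and $\aaccepts{\tree}{\node_\runlen}{\cssaut}$.

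The main obstacle I expect is precisely the tree-construction step --- turning the bag of counting variables and their witnesses into an honest unranked ordered tree --- in particular checking that the several overlapping constraints on a shared parent's list of children (coming from the distinct spine nodes that happen to be siblings, and from the nodes skipped over by $\arrsibling$ jumps) are simultaneously satisfiable. This is exactly what the bookkeeping in $\tranprest{}{}$ and $\consistentnums$ was designed to underwrite, so the real work is the somewhat tedious confirmation that those clauses, taken together, suffice; every other part of the argument is routine.
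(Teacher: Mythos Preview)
Your proposal is correct and follows essentially the same approach as the paper's proof: extract the run prefix from the assignment to the $\astatevar{\idxi}$, lay down the spine nodes according to the chosen transitions' directions, pad each sibling list using the values of $\numtypevar{}{}$, $\lnumtypevar{}{}$ and the $\shiftvartype{}$ witnesses (with $\consistentnums$ guaranteeing the counts fit), label spine nodes from $\nsvar{}$, $\elevar{}$, $\pclsvar{}{}$ and the $\wordpos{}{}{}{}$ variables, and then verify each $\csssim_\idxi$ and the global tree constraints via $\consistentids$ and $\consistentpseudo$. The only cosmetic difference is that the paper builds the tree incrementally inside the case split on $\arrgen_\idxi$ (adding trailing siblings at each $\arrchild$/$\arrlast$ step and intermediate siblings at each $\arrsibling$ step) rather than in your spine-then-padding phrasing, and there is a harmless off-by-one in your indexing of the spine versus the transitions; neither affects the argument.
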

\begin{proof}
    Take a satisfying assignment $\assignment$ to the free variables of $\presof{\cssaut}$.
    We construct a tree and node $\tup{\tree, \node}$ as well as a run of $\cssaut$ accepting $\tup{\tree, \node}$.

    We begin by taking the sequence of states
    $\astate_0, \ldots, \astate_{\runlen+1}$
    which is the prefix of the assignment to
    $\astatevar{0}, \ldots, \astatevar{\numof}$
    where
    $\astate_{\runlen}$
    is the first occurrence of $\afinstate$.
    We will construct a series of transitions
    $\atrant_0, \ldots, \atrant_\runlen$
    with
    $\atrant_\idxi = \astate_\idxi \atran{\arrgen_\idxi}{\csssim_\idxi} \astate_{\idxi+1}$
    for all $0 \leq \idxi \leq \runlen$.
    We will define each $\arrgen_\idxi$ and $\csssim_\idxi$, as well as construct $\tree$ and $\node$ by induction.
    We construct the tree inductively, then show $\csssim_\idxi$ is satisfied for each $\idxi$.

    At first let $\tree_0$ contain only a root node.
    Thus $\node_0$ is necessarily this root.
    Throughout the proof we label each $\node_\idxi$ as follows.
    \begin{compactitem}
    \item
        $\ap{\treelabns}{\node_\idxi} = \ap{\assignment}{\nsvar{\idxi}}$ (i.e. we assign the value given to $\nsvar{\idxi}$ in the satisfying assignment).

    \item
        $\ap{\treelabele}{\node_\idxi} = \ap{\assignment}{\elevar{\idxi}}$.

    \item
        $\ap{\treelabpclss}{\node_\idxi}
         =
         \brac{
             \begin{array}{c}
                 \setcomp{\pcls}
                         {\pcls \in \pclss \setminus \set{\psroot}
                          \land
                          \ap{\assignment}{\pclsvar{\idxi}{\pcls}} = \ptrue}
                 \cup \\
                 \setcomp{\psroot}{\idxi = 0}
             \end{array}
         }$.

    \item
        $\ap{\ap{\treelabatts}{\node_\idxi}}
            {\ns, \att}
         =
         \ap{\assignment}{\wordpos{\ns}{\att}{\idxi}{1}
                          \ldots
                          \wordpos{\ns}{\att}{\idxi}{\attvalbound}}$
        where
        $\ap{\assignment}{\wordpos{\ns}{\att}{\idxi}{1}
                          \ldots
                          \wordpos{\ns}{\att}{\idxi}{\attvalbound}}$
        is the word obtained by stripping all of the null characters from
        $\ap{\assignment}{\wordpos{\ns}{\att}{\idxi}{1}}
         \ldots
         \ap{\assignment}{\wordpos{\ns}{\att}{\idxi}{\attvalbound}}$.
    \end{compactitem}
    We pick $\atrant_\idxi$ as the transition corresponding to a satisfied disjunct of $\tranpres{\idxi}$ (of which there is at least one since $\astate_\idxi \neq \afinstate$ when $\idxi \leq \runlen$).
    Thus, take $\atrant_\idxi = \astate_\idxi \atran{\arrgen_\idxi}{\csssim_\idxi} \astate_{\idxi}$.
    We proceed by a case split on $\arrgen_\idxi$.
    Note only cases $\arrgen_\idxi = \arrchild$ and $\arrgen_\idxi = \arrlast$ may apply when $\idxi = 0$.
    \begin{compactitem}
    \item
        When $\arrgen_\idxi = \arrchild$ we build $\tree_{\idxi+1}$ as follows.
        First we add the leaf node $\node_{\idxi+1} = \node_\idxi 1$.
        Then, if $\idxi > 0$, we add siblings appearing after $\node_\idxi$ with types required by the last of type information.
        That is, we add
        $\ap{\assignment}{\lnumvar{\idxi}} - 1
         =
         \sum\limits_{\qele{\ns}{\ele} \in \eles}
            \ap{\assignment}{\lnumtypevar{\idxi}{\qele{\ns}{\ele}}}$
        siblings appearing after $\node_\idxi$.
        In particular, for each $\ns$ and $\ele$ we add
        $\ap{\assignment}{\lnumtypevar{\idxi}{\qele{\ns}{\ele}}}$
        new nodes.
        Letting $\node_\idxi = \node \treedir$ each of these new nodes $\node'$ will have the form $\node\treedir'$ with $\treedir' > \treedir$.
        We set
        $\ap{\treelabns}{\node'} = \ns$,
        $\ap{\treelabele}{\node'} = \ele$,
        $\ap{\treelabpclss}{\node'} = \emptyset$,
        $\ap{\treelabatts}{\node'} = \emptyset$.

    \item
        When $\arrgen_\idxi = \arrneighbour$ we build $\tree_{\idxi+1}$ by adding a single node to $\tree_\idxi$.
        When $\node_\idxi = \node\treedir$ we add $\node_{\idxi+1} = \node \brac{\treedir+1}$ with the labelling as above.

    \item
        When $\arrgen_\idxi = \arrsibling$ we build $\tree_{\idxi+1}$ as follows.
        We add
        $\ap{\assignment}{\shiftvar} =
         \brac{\ap{\assignment}{\numvar{\idxi+1}}
               -
               \ap{\assignment}{\numvar{\idxi}}}$
        new nodes of the form $\node\treedir'$ where
        $\node_\idxi = \node\treedir$
        and
        $\ap{\assignment}{\numvar{\idxi}}
         = \treedir < \treedir' \leq
         \ap{\assignment}{\numvar{\idxi}}$.
        Let $\node_{\idxi+1}$ be $\node\ap{\assignment}{\numvar{\idxi}}$ labelled as above.
        For the remaining new nodes, for each $\ns$ and $\ele$, we label
        $\ap{\assignment}{\shiftvartype{\qele{\ns}{\ele}}}$
        of the new nodes $\node'$ with
        $\ap{\treelabns}{\node'} = \ns$,
        $\ap{\treelabele}{\node'} = \ele$,
        $\ap{\treelabpclss}{\node'} = \emptyset$,
        $\ap{\treelabatts}{\node'} = \emptyset$.
        Note $\consistentnums$ ensures we have enough new nodes to partition like this.

    \item
        When $\arrgen_\idxi = \arrlast$ and $\idxi = 0$ we have completed building the tree.
        If $\idxi > 0$, we add siblings appearing after $\node_\idxi$ with types required by the last of type information exactly as in the case of $\arrgen = \arrchild$ above.
    \end{compactitem}
    The tree and node we require are the tree and node obtained after reaching some $\arrgen_\idxi = \arrlast$, for which we necessairily have $\idxi = \runlen$ since $\arrlast$ must be and can only be used to reach $\afinstate$.
    In constructing this tree we have almost demonstrated an accepting run of $\cssaut$.
    To complete the proof we need to argue that all $\csssim_\idxi$ are satisfied by $\node_\idxi$ and that the obtained is valid.
    Let $\csstype\cssconds = \csssim_\idxi$.

    To check $\csstype$ we observe that $\csssimpres{\csstype}{\idxi}$ constrains $\nsvar{\idxi}$ and $\elevar{\idxi}$ to values, which when assigned to $\node_\idxi$ as above mean $\node_\idxi$ directly satisfies $\csstype$.

    Now, take some $\csscond \in \cssconds$.
    In each case we argue that $\csssimpres{\csstype}{\idxi}$ ensures the needed properties.
    Note this is straightforward for the attribute selectors due to the directness of the Presburger encoding.
    Consider the remaining selectors.

    First assume $\csscond$ is positive.
    If it is $\psroot$ then we must have $\idxi = 0$ and $\node_\idxi$ is the root node as required.
    For other pseudo classes $\pcls$ we asserted $\pclsvar{\idxi}{\pcls}$ hence we have $\pcls \in \ap{\treelabpclss}{\node_\idxi}$.
    The encoding of the remaining positive constraints can only be satisfied when $\idxi > 0$.
    That is, $\node_\idxi$ is not the root node.

    For
    $\psnthchild{\coefa}{\coefb}$
    observe we constructed $\tree$ such that $\node_\idxi = \node \ap{\assignment}{\numvar{\idxi}}$ for some $\node$.
    From the defined encoding of
    $\csssimpres{\psnthchild{\coefa}{\coefb}}{\idxi}$
    we directly obtain that $\node_\idxi$ satisfies \acmeasychair{}{the selector}
    $\psnthchild{\coefa}{\coefb}$.
    Similarly for
    $\psnthlastchild{\coefa}{\coefb}$ as we always pad the end of the sibling order to ensure the correct number of succeeding siblings.

    For
    $\psnthoftype{\coefa}{\coefb}$
    and
    $\psnthlastoftype{\coefa}{\coefb}$
    selectors, by similar arguments to the previous selectors, we have ensured that there are enough preceeding or succeeding nodes (along with the directness of their Presburger encoding) to ensure these selectors are satisfied by $\node_\idxi$ in $\tree$.

    For $\psonlychild$ we know there are no other children since
    $\ap{\assignment}{\numvar{\idxi}} =
     \ap{\assignment}{\lnumvar{\idxi}} =
     1$.
    Finally for the selector $\psonlyoftype$ we know there are no other children of the same type since
    $\ap{\assignment}{\numtypevar{\idxi}{\qele{\ns}{\ele}}} =
     \ap{\assignment}{\lnumtypevar{\idxi}{\qele{\ns}{\ele}}} =
     0$
    where $\node_\idxi$ has type
    $\qele{\ns}{\ele}$.

    When $\csscond$ is negative there are several cases.
    If it is $\cssneg{\psroot}$ then we must have $\idxi > 0$ and $\node_\idxi$ is not the root node.
    For other pseudo classes $\pcls$ we asserted $\neg\pclsvar{\idxi}{\pcls}$ hence we have $\pcls \notin \ap{\treelabpclss}{\node_\idxi}$.
    The encoding of the remaining positive constraints are always satisfied on the root node.  That is, $\idxi = 0$.
    When $\node_\idxi$ is not the root node we have $\idxi > 0$.

    For
    $\cssneg{\psnthchild{\coefa}{\coefb}}$
    observe we constructed $\tree$ such that $\node_\idxi = \node \ap{\assignment}{\numvar{\idxi}}$ for some $\node$.
    From the definition of
    $\csssimpres{\cssneg{\psnthchild{\coefa}{\coefb}}}{\idxi}$
    we obtain that $\node_\idxi$ does not satisfy the required selector
    $\psnthchild{\coefa}{\coefb}$ via \refproposition{prop:nomatch}.
    Similarly for the last child selector
    \[
        \cssneg{\psnthlastchild{\coefa}{\coefb}}.
    \]

    For
    $\cssneg{\psnthoftype{\coefa}{\coefb}}$
    and
    $\cssneg{\psnthlastoftype{\coefa}{\coefb}}$,
    by similar arguments to the previous selectors, we have ensured that there are enough preceeding or succeeding nodes (along with their Presburger encodings and \refproposition{prop:nomatch}) to ensure these selectors are satisfied by $\node_\idxi$ in $\tree$.

    For $\cssneg{\psonlychild}$ we know there are some other children since
    $\ap{\assignment}{\numvar{\idxi}} > 1$
    or
    $\ap{\assignment}{\lnumvar{\idxi}} > 1$.
    Finally for $\cssneg{\psonlyoftype}$ we know there are other children of the same type since
    $\ap{\assignment}{\numtypevar{\idxi}{\qele{\ns}{\ele}}} > 0$
    or
    $\ap{\assignment}{\lnumtypevar{\idxi}{\qele{\ns}{\ele}}} > 0$
    where $\node_\idxi$ has type
    $\qele{\ns}{\ele}$.

    Thus we have an accepting run of $\cssaut$ over some $\tup{\tree, \node}$.
    However, we finally have to argue that $\tree$ is a valid document tree.
    This is enforced by $\consistentids$ and $\consistentpseudo$.

    First, $\consistentids$ ensures all IDs satisfying the Presburger encoding are unique.
    Since we transferred these values directly to $\tree$ our tree also has unique IDs.

    Next, we have to ensure properties such as no node is both active and inactive.
    These are all directly taken care of by $\consistentpseudo$.
    Thus, we are done.
\end{proof}

\section{Additional Material for the Experiments Section}

\subsection{Optimised CSS Automata Emptiness Check}
\label{sec:optimised-aut-emp}

The reduction presented in Section~\ref{sec:edgeOrder} proves membership in NP.
However, the formula constructed is quite large even for the intersection of two relatively small selectors.
Moreover, selectors generally do no assert complex properties, so for most transitions, the full power of existential Presburger arithmetic is not needed.
Hence, only a small part of each formula requires complex reasoning, while the remainder of the problem is better and easily solved with direct knowledge of the automata.

In this section we present an alternative algorithm.
In essence it is a backwards reachability algorithm for deciding non-emptiness of a CSS automaton.
Instead of constructing a single large query that requires a non-trivial solve time, the backwards reachability algorithm only makes small queries to the SAT solver to enforce constraints that are not simply enforced by a standard automaton algorithm.

The idea is that the automaton collects constraints on the node positions required to satisfy the nth-child (sibling) constraints as it performs its backwards search.
It also tracks extra information to ensure it does not get stuck in a loop, at most one node is labelled $\pstarget$, and all $\idatt$s are unique.
Each time the automaton takes a transition labelled $\arrchild$ it checks whether the current set of sibling constraints is satisfiable.
If so, the automaton can move up to the parent node and begin with a fresh set of sibling constraints.
If not, the automaton cannot execute the transition.
Once the initial state has been reached, it just remains to check whether the $\idatt$ constraints are satisfiable.
If they are, a witness to non-emptiness has been found.

The algorithm is a worklist algorithm, where the worklist consists of tuples of the form
\[
    \tup{\astate,
         \broot,
         \bsibling,
         \btarget,
         \tset,
         \idcons,
         \poscons,
         \idxi}
\]
where
\begin{itemize}
\item
    $\astate$ is the state reached so far,
\item
    $\broot$ is a boolean indicating whether the current node has to be the root,
\item
    $\bsibling$ is a boolean indicating whether the current node has to have siblings,
\item
    $\btarget$ is a boolean indicating whether a node marked $\pstarget$ has been seen on the run so far,
\item
    $\tset$ is the set of transitions seen on the current state (recall all loops are self-loops, so cycle detection can be implemented using $\tset$),
\item
    $\idcons$ is the set of constraints on $\idatt$ attributes in the run so far,
\item
    $\poscons$ is the set of constraints on node positions on the current level of the tree (that is, for the assertion of nth-child constraints),
\item
    $\idxi$ is the index position in the run (akin to the use of indices in the Presburger encoding).
\end{itemize}

The initial worklist contains a single element
\[
    \tup{\afinstate, \pfalse, \pfalse, \pfalse, \emptyset, \emptyset, \emptyset, \numof}
\]
where $\numof$ is the number of transitions of the CSS automaton.
Note, the final element of the tuple will always range between $1$ and $\numof$ since we decrement this counter whenever we take a new transition, and each transition may only be visited once.

In the following, we partition sets of node selector elements into sets containing pseudo-classes, attribute selectors, and positional selectors.
That is, given a node selector
$\csstype\cssconds$
we write
\begin{itemize}
\item
    $\attsof{\cssconds}$
    for the elements of $\cssconds$ of the form $\csscond$ or
    $\cssneg{\csscond}$
    where $\csscond$ is of the form
    $\hasattns{\ns}{\att}$
    or
    $\opattns{\ns}{\att}{\attop}{\attval}$,
\item
    $\posof{\cssconds}$
    for the elements of $\cssconds$ of the form $\csscond$ or
    $\cssneg{\csscond}$
    where $\csscond$ is of the form
    \[
        \begin{array}{c}
            \psnthchild{\coefa}{\coefb},\\
            \psnthlastchild{\coefa}{\coefb},\\
            \psnthoftype{\coefa}{\coefb},\\
            \psnthlastoftype{\coefa}{\coefb}\\,
            \psonlychild, \text{or}\\
            \psonlyoftype,
        \end{array}
    \]
\item
    $\psof{\cssconds}$
    for the elements of $\cssconds$ of the form $\csscond$ or
    $\cssneg{\csscond}$
    where $\csscond$ is of the form
    \[
        \begin{array}{c}
            \pslink, %
            \psvisited, %
            \pshover, %
            \psactive, %
            \psfocus, %
            \pstarget, \\%
            \psenabled, %
            \psdisabled, %
            \pschecked, %
            \psroot, %
            \text{ or }
            \psempty \ .
        \end{array}
    \]
\end{itemize}

If the worklist is empty, we terminate, and return that the automaton is empty.

If it is not empty, we take an arbitrary element
\[
    \tup{\astate',
         \broot,
         \bsibling,
         \btarget,
         \tset,
         \idcons,
         \poscons,
         \idxi}
\]
and for each transition
\[
    \atrant = \astate \atran{\arrgen}{\csssim} \astate'
\]
with
$\csssim = \csstype\cssconds$
we add to the worklist
\[
    \tup{\astate,
         \broot',
         \bsibling',
         \btarget',
         \tset',
         \idcons',
         \poscons',
         \idxi'}
\]
where $\idxi' = \idxi - 1$ is a fresh index, and when certain conditions are satisfied.
We detail these conditions and the definition of the new tuple below.
We begin with general conditions and definitions, then describe those specific to the value of $\arrgen$.

In all cases, we can only add a new tuple if
\begin{enumerate}
\item
    $\atrant \notin \tset$,
\item
    $\attspres{\csstype\cssconds}{\idxi'}$ is satisfiable,
\item
    $\psof{\cssconds}$ is satisfiable -- that is, we do not have
    $\pcls \in \cssconds$
    and
    $\cssneg{\pcls} \in \cssconds$
    for some pseudo-class $\pcls$, and, moreover, we do not have
    \begin{itemize}
    \item
        $\pslink \in \cssconds$
        and
        $\psvisited \in \cssconds$, or
    \item
        $\psenabled \in \cssconds$
        and
        $\psdisabled \in \cssconds$, or
    \item
        $\psroot \in \cssconds$
        and
        $\bsibling = \ptrue$.
    \end{itemize}
\end{enumerate}
In all cases, we define
\begin{itemize}
\item
    $\broot' =
     \begin{cases}
        \ptrue & \psroot \in \cssconds
        \\
        \broot & \arrgen = \arrlast
        \\
        \pfalse & \text{otherwise,}
    \end{cases}$
\item
    $\bsibling' =
     \begin{cases}
        \ptrue & \arrgen \in \set{\arrneighbour, \arrsibling}
        \\
        \bsibling & \arrgen = \arrlast
        \\
        \pfalse & \text{otherwise,}
     \end{cases}$
\item
    $\btarget' = \btarget \lor \brac{\pstarget \in \cssconds}$,
\item
    $\tset' =
     \begin{cases}
        \tset \cup \set{\atrant} & \astate = \astate'
        \\
        \emptyset & \text{otherwise,}
     \end{cases}$
\item
    $\idcons' = \idcons \cup \idcons''$
    where
    $\idcons''$
    is the set of all clauses
    $\attspresns{\ns}{\idatt}{\consset}{\idxi'}$
    appearing in
    $\attspres{\csstype\cssconds}{\idxi'}$
    for some $\ns$ and $\consset$.
\end{itemize}

Next, we give the conditions and definitions dependent on $\arrgen$.
To do so we need to define the set of positional constraints derived from
$\posof{\cssconds}$.
We use a slightly different encoding to the previous section.
We use a variable
$\numvar{\idxi}$
encoding that the node is the
$\numvar{\idxi}$th
child of the parent,
and
$\numtypevar{\idxi}{\qele{\ns}{\att}}$
counting the number of nodes of type
$\qele{\ns}{\att}$
to the left of the current node (exclusive).
To encode ``last of'' constraints, we use the variable
$\totnumvar$
to encode the total number of siblings of the current node (inclusive),
and
$\totnumtypevar{\qele{\ns}{\att}}$
to encode the total number of siblings of the given type (inclusive).

That is, when
$\broot' = \ptrue$
let
\begin{itemize}
    \item
        $\poscons'' = \set{\pfalse}$
        if
        $\bsibling' = \ptrue$,
    \item
        $\poscons'' = \set{\pfalse}$
        if there is some
        $\csscond \in \posof{\cssconds}$
        that is not of the form
        $\cssneg{\csscond'}$
        for some $\csscond'$, and
    \item
        $\poscons'' = \set{\ptrue}$
        otherwise,
\end{itemize}
and when
$\broot' = \pfalse$
let
$\poscons'' =$
\[
    \begin{array}{l}
        \setcomp{\exists \nvar\ .\ %
                     \numvar{\idxi'} = \coefa \nvar + \coefb}
                {\psnthchild{\coefa}{\coefb} \in \cssconds} \ \cup
        \\
        \setcomp{\exists \nvar\ .\ %
                     \totnumvar - \numvar{\idxi'} - 1 = \coefa \nvar + \coefb}
                {\psnthlastchild{\coefa}{\coefb} \in \cssconds} \ \cup
        \\
        \setcomp{\begin{array}{c}
                     \qele{\nsvar{\idxi'}}{\elevar{\idxi'}}
                     =
                     \qele{\ns}{\ele} \Rightarrow
                     \\
                     \exists \nvar\ .\ %
                         \numtypevar{\idxi'}{\qele{\ns}{\ele}}
                         =
                         \coefa \nvar + \coefb
                 \end{array}}
                {\begin{array}{c}
                     \ns \in \finof{\nspaces}
                     \land
                     \ele \in \finof{\eles}
                     \ \land
                     \\
                     \psnthoftype{\coefa}{\coefb} \in \cssconds
                 \end{array}} \ \cup
        \\
        \setcomp{\begin{array}{c}
                     \qele{\nsvar{\idxi'}}{\elevar{\idxi'}}
                     =
                     \qele{\ns}{\ele} \Rightarrow
                     \\
                     \exists \nvar\ .\ %
                         \totnumtypevar{\qele{\ns}{\ele}}
                         - \numtypevar{\idxi'}{\qele{\ns}{\ele}}
                         - 1
                         =
                         \coefa \nvar + \coefb
                 \end{array}}
                {\begin{array}{c}
                     \ns \in \finof{\nspaces}
                     \land
                     \ele \in \finof{\eles}
                     \ \land
                     \\
                     \psnthlastoftype{\coefa}{\coefb} \in \cssconds
                 \end{array}} \ \cup
        \\
        \setcomp{\nomatch{\numvar{\idxi'}}{\coefa}{\coefb}}
                {\cssneg{\psnthchild{\coefa}{\coefb}} \in \cssconds} \ \cup
        \\
        \setcomp{\nomatch{\totnumvar - \numvar{\idxi'} - 1}{\coefa}{\coefb}}
                {\cssneg{\psnthlastchild{\coefa}{\coefb}} \in \cssconds} \ \cup
        \\
        \setcomp{\begin{array}{c}
                     \qele{\nsvar{\idxi'}}{\elevar{\idxi'}}
                     =
                     \qele{\ns}{\ele} \Rightarrow
                     \\
                     \nomatch{\numtypevar{\idxi'}{\qele{\ns}{\ele}}}{\coefa}{\coefb}
                 \end{array}}
                {\begin{array}{c}
                     \ns \in \finof{\nspaces}
                     \land
                     \ele \in \finof{\eles}
                     \ \land
                     \\
                     \cssneg{\psnthoftype{\coefa}{\coefb}} \in \cssconds
                 \end{array}} \ \cup
        \\
        \setcomp{\begin{array}{c}
                     \qele{\nsvar{\idxi'}}{\elevar{\idxi'}}
                     =
                     \qele{\ns}{\ele} \Rightarrow
                     \\
                     \nomatch{\totnumtypevar{\qele{\ns}{\ele}}
                              - \numtypevar{\idxi'}{\qele{\ns}{\ele}}
                              - 1}
                             {\coefa}
                             {\coefb}
                 \end{array}}
                {\begin{array}{c}
                     \ns \in \finof{\nspaces}
                     \land
                     \ele \in \finof{\eles}
                     \ \land
                     \\
                     \cssneg{\psnthlastoftype{\coefa}{\coefb}} \in \cssconds
                 \end{array}} \ \cup
        \\
        \set{\totnumvar > \numvar{\idxi'}}
        \cup
        \set{
            \numvar{\idxi'} =
            \sum\limits_{\substack{
                \ns \in \finof{\nspaces}
                \\
                \ele \in \finof{\eles}
            }}
            \numtypevar{\idxi'}{\qele{\ns}{\ele}}
        }
        \ \cup
        \\
        \setcomp{\qele{\nsvar{\idxi'}}{\elevar{\idxi'}}
                 =
                 \qele{\ns}{\ele}
                 \Rightarrow
                 \totnumtypevar{\qele{\ns}{\ele}} > \numtypevar{\idxi'}{\qele{\ns}{\ele}}}
                {\ns \in \finof{\nspaces} \land \ele \in \finof{\eles}}
        \ \cup
        \\
        \setcomp{\qele{\nsvar{\idxi'}}{\elevar{\idxi'}}
                 \neq
                 \qele{\ns}{\ele}
                 \Rightarrow
                 \totnumtypevar{\qele{\ns}{\ele}} \geq \numtypevar{\idxi'}{\qele{\ns}{\ele}}}
                {\ns \in \finof{\nspaces} \land \ele \in \finof{\eles}}
        \ .
    \end{array}
\]
Then we have the following.
\begin{itemize}
\item
    When $\arrgen = \arrlast$, we set
    \[
        \begin{array}{rcl}
            \poscons'
            &=&
            \poscons \cup \poscons''\ \cup
            \\
            & &
            \set{\numvar{\idxi'} = \numvar{\idxi}} \cup
            \set{\qele{\nsvar{\idxi'}}{\elevar{\idxi'}}
                 =
                 \qele{\nsvar{\idxi}}{\elevar{\idxi}}} \ \cup
            \\
            & &
            \setcomp{\numtypevar{\idxi'}{\qele{\ns}{\ele}} =
                          \numtypevar{\idxi}{\qele{\ns}{\ele}}}
                         {\ns \in \finof{\nspaces}
                          \land
                          \ele \in \finof{\eles}} \ .
        \end{array}
    \]

\item
    When $\arrgen = \arrchild$,
    \begin{itemize}
    \item
        we require $\poscons$ is satisfiable,
        $\neg \broot$,
        and
        $\psempty \notin \cssconds$,
    \item
        we set
        $\poscons' = \poscons''$.
    \end{itemize}

\item
    When $\arrgen = \arrneighbour$,
    \begin{itemize}
    \item
        we require
        $\neg \broot$
        and
        $\psroot \notin \cssconds$,
    \item
        we set
        \[
            \begin{array}{rcl}
                \poscons'
                &=&
                \poscons \cup \poscons'' \cup
                \set{\numvar{\idxi} = \numvar{\idxi'} + 1}
                \ \cup
                \\
                & &
                \setcomp{
                    \qele{\nsvar{\idxi'}}{\elevar{\idxi'}} = \qele{\ns}{\ele}
                    \Rightarrow
                    \numtypevar{\idxi}{\qele{\ns}{\ele}}
                    =
                    \numtypevar{\idxi'}{\qele{\ns}{\ele}} + 1
                }{
                    \ns \in \finof{\nspaces}
                    \land
                    \ele \in \finof{\eles}
                } \ \cup
                \\
                & &
                \setcomp{
                    \qele{\nsvar{\idxi'}}{\elevar{\idxi'}} \neq \qele{\ns}{\ele}
                    \Rightarrow
                    \numtypevar{\idxi}{\qele{\ns}{\ele}}
                    =
                    \numtypevar{\idxi'}{\qele{\ns}{\ele}}
                }{
                    \ns \in \finof{\nspaces}
                    \land
                    \ele \in \finof{\eles}
                } \ .
            \end{array}
        \]
    \end{itemize}

\item
    When $\arrgen = \arrsibling$
    \begin{itemize}
    \item
        we require
        $\neg \broot$
        and
        $\psroot \notin \cssconds$,
    \item
        we set
        \[
            \begin{array}{rcl}
                \poscons'
                &=&
                \poscons \cup \poscons''\ \cup
                \\
                & &
                \set{\exists \shiftvar,
                             \brac{\shiftvartype{\qele{\ns}{\ele}}}_{\substack{
                                 \ns \in \finof{\nspaces}
                                 \\
                                 \ele \in \finof{\eles}
                             }} \ .
                        \brac{\begin{array}{c}
                            \numvar{\idxi} = \numvar{\idxi'} + \shiftvar
                            \land
                            \shiftvar \geq 1
                            \ \land \\
                            \bigwedge\limits_{\substack{
                                \ns \in \finof{\nspaces}
                                \\
                                \ele \in \finof{\eles}
                            }}\brac{
                                \begin{array}{c}
                                    \numtypevar{\idxi}{\qele{\ns}{\ele}}
                                    =
                                    \numtypevar{\idxi}{\qele{\ns}{\ele}}
                                    +
                                    \shiftvartype{\qele{\ns}{\ele}}
                                    \ \land \\
                                    \brac{
                                        \qele{\nsvar{\idxi'}}{\elevar{\idxi'}}
                                        =
                                        \qele{\ns}{\ele}
                                    }
                                    \Rightarrow
                                    \shiftvartype{\qele{\ns}{\ele}} \geq 1
                                \end{array}
                            }
                            \ \land \\
                            \shiftvar = \sum\limits_{\substack{
                                \ns \in \finof{\nspaces}
                                \\
                                \ele \in \finof{\eles}
                            }}
                            \shiftvartype{\qele{\ns}{\ele}}
                        \end{array}}
                } \ .
            \end{array}
        \]
    \end{itemize}
\end{itemize}

Moreover, if we were able to add the tuple to the worklist, and we have
\begin{itemize}
\item
    $\astate = \ainitstate$,
\item
    $\poscons'$ is satisfiable, and
\item
    The ID constraints are satisfiable,
\end{itemize}
then the algorithm terminates, reporting that the automaton is non-empty.
To check the ID constraints are satisfiable, we test satisfiability of the following formula.
Let $\attvalbound$ be the bound on the length of ID values, as derived in the previous section.
We assert
\[
    \bigwedge\limits_{\csscond \in \idcons'} \csscond
    \land
    \bigwedge\limits_{\substack{
        \ns \in \nspaces
        \\
        1 \leq \idxi_1, \idxi_2 \leq \numof
    }}
        \bigvee\limits_{1 \leq \idxj \leq \attvalbound}
            \wordpos{\ns}{\idatt}{\idxi_1}{\idxj}
            \neq
            \wordpos{\ns}{\idatt}{\idxi_2}{\idxj} \ .
\]
That is, we assert all ID conditions are satisfied, and all IDs are unique.

\subsection{Sources of the CSS files used in our experiments}
\label{sec:benchmark-details}

We have collected \numbenchmarks\ CSS files from 41 global websites for our experiments. These websites cover the 20 most popular sites listed on Alexa~\cite{website-ranking}, which are Google, YouTube, Facebook, Baidu, Wikipedia, Yahoo!, Reddit, Google India, Tencent QQ, Taobao, Amazon, Tmall, Twitter, Google Japan, Sohu, Windows Live, VK, Instagram, Sina, and 360 Safeguard. Note that we have excluded Google India and Google Japan from our collection as we found the two sites share the same CSS files with Google. We have further collected CSS files from 12 well-known websites ranked between 21 and 100 on the same list, including LinkedIn, Yahoo!~Japan, Netflix, Imgur, eBay, WordPress, MSN, Bing, Tumblr, Microsoft, IMDb, and GitHub. Our examples also contain CSS files from several smaller websites, including Arch Linux, arXiv, CNN, DBLP, Google News, Londonist, The Guardian, New York Times, NetworkX, OpenStreetMap, and W3Schools. These examples were used in the testing and development of our tool.

\end{document}